\newtheorem{lemma}{Lemma}
\newtheorem{theorem}{Theorem}
\newtheorem{corollary}{Corollary}
\newtheorem{proposition}{Proposition}
\theoremstyle{definition}
\newtheorem{example}{Example}
\theoremstyle{remark}
\newtheorem{claim}{Claim}
\pgfplotsset{compat=1.16}
\newcommand{\citeapos}[1]{\citeauthor{#1}'s (\citeyear{#1})}
\newcommand{\co}{\mathrm{co}}
\newcommand{\ext}{\mathrm{ext}}
\newcommand{\dd}{\mathrm{d}}
\newcommand{\supp}{\mathrm{supp}}
\newcommand{\bbE}{\mathbb E}
\newcommand{\real}{\mathbb{R}}
\newcommand{\longsquiggly}{\xymatrix{{}\ar@{~>}[r]&{}}}
\newcommand{\cl}{\mathrm{cl}}
\newcommand{\icost}{C}
\newcommand{\func}{f}
\newcommand{\wt}{t}
\newcommand{\wtb}{s}
\newcommand{\const}{\lambda}
\newcommand{\constb}{\gamma}
\newcommand{\constc}{\eta}
\newcommand{\belief}{\mu}
\newcommand{\Paystate}{\Theta}
\newcommand{\paystate}{\theta}
\newcommand{\paystateb}{\theta'}
\newcommand{\paystatec}{\hat\theta}
\newcommand{\payprior}{\pi}
\newcommand{\Player}{I}
\newcommand{\Playernum}{n}
\newcommand{\player}{i}
\newcommand{\playerb}{j}
\newcommand{\util}{u}
\newcommand{\utilb}{u^\prime}
\newcommand{\Act}{A}
\newcommand{\act}{a}
\newcommand{\actb}{b}
\newcommand{\actc}{c}
\newcommand{\mact}{\alpha}
\newcommand{\mactb}{\beta}
\newcommand{\aplan}{\sigma}
\newcommand{\Aplan}{\Sigma}
\newcommand{\BGame}{\mathcal{G}}
\newcommand{\out}{p}
\newcommand{\outb}{q}
\newcommand{\outc}{r}
\newcommand{\Out}{\Delta_\payprior(\Act\times\Paystate)}
\newcommand{\payvector}{v}
\newcommand{\grossval}{\bar{\payvector}}
\newcommand{\noinfoval}{\underline{\payvector}}
\newcommand{\actval}{U}
\newcommand{\Signal}{X}
\newcommand{\signal}{x}
\newcommand{\signalb}{x^\prime}
\newcommand{\Corstate}{Z}
\newcommand{\corstate}{z}
\newcommand{\corprior}{\zeta}
\newcommand{\itech}{\mathcal{T}}
\newcommand{\IT}{\mathcal{T}}
\newcommand{\Exper}{\mathcal{E}}
\newcommand{\exper}{\xi}
\newcommand{\experb}{\xi'}
\newcommand{\experc}{\xi''}
\newcommand{\garb}{g}
\newcommand{\cord}{\chi}
\newcommand{\PartA}{\mathcal{A}}
\newcommand{\BCE}{BCE}
\newcommand{\sBCE}{sBCE}
\newcommand{\BCEset}{P}
\newcommand{\NE}{\text{NE}}
\newcommand{\Mact}{\mathcal{M}}
\newcommand{\BR}{BR}
\newcommand{\Jeop}{J}
\newcommand{\welfex}{\bar{w}}
\newcommand{\Welfex}{\bar{w}}
\newcommand{\welfen}{\underline{w}}
\newcommand{\Welfen}{\underline{w}}
\newcommand{\perm}{\phi}
\newcommand{\Outsym}{\Delta^{sy}_{\payprior}(\Act\times\Paystate)}
\newcommand{\noinfoBCE}{\BCE_0}
\newcommand{\BCEsym}{\BCE^{sy}}
\newcommand{\ppcdf}{F}
\newcommand{\brsout}{Q}
\newcommand{\brcost}{k}
\newcommand{\brext}{x}
\newcommand{\maxpaystate}{\bar{\paystate}}
\newcommand{\minpaystate}{\underline{\paystate}}
\newcommand{\RIval}{V_{R}}
\newcommand{\Exval}{V_{I}}
\newcommand{\acomment}[1]{}
  \renewcommand\@seccntformat[1]{\csname the#1\endcsname.{\hskip.7em\relax}} 
\DeclareMathOperator*{\argmax}{argmax}
\DeclareMathOperator*{\argmin}{argmin}
\title{{\bf Robust Predictions in Games with \\ Rational Inattention}\footnote{We thank Alex Bloedel, Benjamin Brooks, Modibo Camara, Marina Halac, Emir Kamenica, Ilia Krasikov, Elliot Lipnowski, Andrew McClellan, Stephen Morris, Luciano Pomatto,  Daniel Rappoport, Phil Reny, and Joe Root for helpful discussions.}

\author{
\begin{minipage}{0.3\textwidth}\centering  
Tommaso Denti\footnote{\texttt{tjd237@cornell.edu}} \\ \centering \it \small Cornell University
\end{minipage}                  
\begin{minipage}{0.3\textwidth}\centering 
Doron Ravid\footnote{\texttt{dravid@uchicago.edu}}  \\ \centering \it \small University of Chicago 
\end{minipage} 
}

\date{\vspace{0.8cm} \today}

} 
\begin{document}

\maketitle

\begin{spacing}{1}
\onehalfspacing

\begin{abstract}
We derive robust predictions in games involving flexible information acquisition, also known as \emph{rational inattention} \citep{Sims2003a}. These predictions remain accurate regardless of the specific methods players employ to gather information. Compared to scenarios where information is predetermined, rational inattention reduces welfare and introduces additional constraints on behavior. We show these constraints generically do not bind; the two knowledge regimes are behaviorally indistinguishable in most environments. Yet, we demonstrate the welfare difference they generate is substantial: optimal policy depends on whether one assumes information is given or acquired. We provide the necessary tools for policy analysis in this context.
\end{abstract}

\section{Introduction}

The economics discipline has long recognized that information matters for incentives.
Many studies have also noted the reverse, namely, that incentives shape information. The theory of rational inattention, initiated by \cite{Sims2003a}, is a case in point.
Motivated by people's limited cognition, this theory postulates that agents pay attention as if they flexibly acquire information in an optimal fashion, trading off costs and benefits.
This approach has been very successful, seeing a wide range of applications all across economics.\footnote{See, e.g., \citet*{mackowiak2023survey} for a recent review.} There is, however, a caveat: rational inattention models often require one to commit to difficult-to-test details of the information acquisition environment. This paper addresses this concern by developing a framework for making predictions under rational inattention that are robust to the exact structure of agents' learning technologies.

For concreteness, consider the following example.

\begin{example}\label{exa:intro}
Two investors, Ann and Bob, are choosing whether or not to invest in a new project, and if so, which. There are two new projects: project $A$ and project $B$. Each investor can either invest in one of these projects, or put their funds in the market. There are also two equally likely payoff states, $\paystate_A$ and $\paystate_B$, that describe which project is better. 
If both investors fund project $t \in \{A,B\}$, they both get a payoff of $2$ if $t$ is the better project (i.e., if the state is $\paystate_t$), and payoff of $1$ in the alternative state. If the two investors fund different projects, or one of the investors chooses to put their funds in the market, they both receive a payoff of $0$ regardless of the state. In this scenario, the only impediment for Ann and Bob to outperform the market is miscoordination. 
\end{example}

The above is an example of (what is commonly called) a \emph{base game}. A base game describes an economic environment of interest: the state of fundamentals and its prior distribution, agents in the role of players, actions, and preferences. Given a base game, we aim to make predictions about behavior and welfare.

Standard rational inattention models obtain predictions by combining a base game with what we term an \emph{information technology}, which specifies players' learning capabilities regarding the true payoff state.
In early studies, this technology assumes that information costs are proportional to entropy reduction \citep[e.g.,][]{Matejka2015a}, and that players can flexibly gather information, provided that it is independent of the other agents' information
\citep[e.g.,][]{mackowiak2009optimal, Yang2015coordination}. Recent papers consider more general information technologies, allowing for different cost functions and potentially correlated signals.\footnote{See, e.g., \citet*{morris2022coordination,ravid2022learning,denti2023unrestricted,hebert2023information}}

Together, a base game and an information technology  define \emph{an information acquisition game}. This game begins with the players covertly choosing what costly information to acquire. Subsequently, each player privately uses the acquired knowledge to take an action. Predictions are derived by solving for Nash equilibrium. 

We depart from the standard approach, and do not pair the base game with a fixed information technology. Instead, we simultaneously find all the predictions one can obtain with a technology that is consistent with rational inattention.
Specifically, we consider the
information technologies that satisfy three properties. 
First, no information comes at zero cost. Second, (strictly) more informative signals (in the sense of \citealt{Blackwell1951a,Blackwell1953}) cost (strictly) more. And third, information choice is flexible. These properties are satisfied by virtually all applications of rational inattention. We are agnostic whether signals are independent or correlated across players.

Theorem~\ref{thm:mon_tech} characterizes all behavioral and welfare implications of rational inattention in a given base game. Behavior is summarized by the joint distribution of the actions taken by the players and the payoff state drawn by nature; we call such distribution the \emph{outcome} of the game. Welfare is given by the \emph{value} each player obtains from the game, that is, each player's payoff, net any costs they pay for information. 

We show rational inattention can generate any outcome that satisfies two constraints. To describe these constraints, consider the classical metaphor that views outcomes as being generated by a mediator who provides players with private but potentially correlated action recommendations after viewing the state. 
The first constraint rational inattention generates is \emph{obedience},
which means no player can benefit by deviating from the mediator's recommendations. This constraint represents optimal behavior given a fixed signal structure. We follow the literature \citep{bergemann2013robust,bergemann2016bayes}, and use the term \emph{Bayes correlated equilibrium} (BCE) to refer to obedient outcomes.\footnote{\cite{forges1993five} refers to such outcomes as \emph{universal Bayesian solutions}.}

The second constraint is \emph{separation}: recommendations that result in different posterior beliefs about others' behavior and the state cannot share an optimal reply. In other words, distinct beliefs have separate best responses. The separation constraint originates from \cite{denti2021costly}, and stems from players' information acquisition incentives. Intuitively,   outcomes that violate separation involve the acquisition of non-instrumental information, a sub-optimal choice: by not learning this information, players can save on costs without impacting their gross payoffs from the base game.
Thus, Theorem~\ref{thm:mon_tech} says an outcome can be generated by rational inattention if and only if it is a separated BCE (sBCE).

Theorem~\ref{thm:mon_tech} also provides tight bounds on the net payoff each player can get from a given equilibrium outcome. One bound is the utility the player expects 
from the outcome; that is, the gross payoff the player obtains by always following the mediator's recommendation. We refer to this quantity as the outcome's \emph{gross value}. The second bound is the outcome's \emph{uninformed value}, which is given by the highest payoff the player can get by unilaterally deviating from all of the mediator's recommendation to a fixed, deterministic action. Note that for obedient outcomes, this deviation must be unprofitable, meaning the uninformed value is lower than the gross value. Theorem~\ref{thm:mon_tech} shows that, fixing a separated BCE, each player can attain any payoff that is above her uninformed value, and below her gross value.

Using Theorem~\ref{thm:mon_tech} we study the impact information costs have on welfare, and whether rationally inattentive agents naturally seek special kinds of information. For this purpose, we compare the predictions made by rational inattention to those generated by models where information is an exogenous variable. Under exogenous information, players do not choose what they know; instead, they are endowed with a fixed signal structure.  \cite{bergemann2013robust,bergemann2016bayes} show that an outcome can be induced by some signal structure if and only if the outcome is a BCE (i.e., it is obedient). Welfare is given by the outcome's gross payoff, since no learning costs are paid when information is predetermined. Thus, Theorem~\ref{thm:mon_tech} implies that, compared to exogenous information, rational inattention generates fewer outcomes, and lower payoffs per outcome.

In some cases, the separation constraint can dramatically shrink the BCE set. For a demonstration, consider Example~\ref{exa:intro}. \addtocounter{example}{-1}

\begin{example}[Continued]
Any outcome in which the two investors perfectly coordinate their decision is obedient: each investor is always happy to take the same action as the other investor. However, among these outcomes, the only ones that are consistent with rational inattention and satisfy the separation constraint are those that assign a probability of either zero or one to the event where both investors put their money in the market. 

For an explanation, note that for perfectly-coordinated outcomes, a ``market" recommendation and a ``project  $t$" recommendation yield different posterior beliefs regarding the other agent's choice: the former recommendation implies the other investor is not going to fund any project, whereas the latter recommendation means they will fund project $t$. 
Notice though, that funding $t$ is a best reply for both beliefs, because funding some project weakly dominates investing in the market. Hence, the two recommendations share a best response, but lead to different beliefs. It follows that any perfect-coordination outcome where sometimes neither investor funds any project and sometimes both investors choose the same project violates separation, and so is inconsistent with rational inattention.

In fact, one can show a stronger result: the only sBCE in which some investor forgoes funding any project with positive probability is the BCE in which both investors always keep their money in the market. Thus, the sBCE set is \emph{nowhere dense} in the BCE set. 
\end{example}

Turns out, however, that situations like Example~\ref{exa:intro}, where the separation constraint binds, are rare. Specifically, we show in Theorem~\ref{thm:genericity} that, for generic preferences, one can approximate every obedient outcome with outcomes that are both obedient and separated. In other words, perturbing the payoffs of any base game with a large set of non-separated BCEs turns it into a game where essentially all BCEs are separated. Thus, unless one commits to a highly structured economic setting or makes specialized assumptions on agents' information technologies, rational inattention is behaviorally indistinguishable from exogenous information.

Nevertheless, even when rational inattention and exogenous information yield the same behavioral predictions, they may have very different welfare implications. Intuitively, when information is exogenous, the value of information must go to the players. By contrast, in the worst case under rational inattention, players must pay this value to purchase their information. Theorem~\ref{thm:robust difference in welfare} shows information has positive value in the worst-case outcome under rational inattention for a generic set of base games. Thus, there is a robust class of environments where players' worst-case welfare under rational inattention is significantly lower than their minimal payoffs under exogenous information.

For a heuristic treatment of our results on genericity, consider again Example \ref{exa:intro}.\addtocounter{example}{-1}

\begin{example}[Continued] 
Suppose we perturb the economy so that the market's returns are reduced by $\epsilon>0$. Such a perturbation makes investing in the market strictly dominated. As a result, the game has no BCE in which the investors keep their money in the market.  Thus, consistent with Theorem~\ref{thm:genericity}, the perturbation pushes all of the non-separated BCEs we described above out of the BCE set.\footnote{The reader should not infer from Example~\ref{exa:intro} that the fragility of weakly dominated actions is the driving force behind Theorem~\ref{thm:genericity}; see Online Appendix \ref{sec:An Example Where the Separation constraint bind}.}

Next, we explain that for all $\epsilon>0$, the worst-case welfare for the investors is lower under rational inattention than under exogenous information. To find the worst case under rational inattention, one needs to find the sBCE that minimizes the uninformed value. The unique minimizer turns out to be the outcome where the two investors always coordinate on the inferior project. Whereas the gross payoff of this BCE is $1$, its uninformed value is $0.5$: the best an investor can do without seeing the mediator's recommendation is to blindly invest in one of the projects (e.g., project $A$), and so match the other investor's decision only when that project is the inferior one. By contrast, if the gross value were to determine players' payoffs, the worst BCE for the investors gives them a payoff of $0.6$.\footnote{The BCE that minimizes the gross value in the perturbed example is the one where, conditional on the state, both investors fund the inferior project with probability $0.6$. In the complementary event, the two investors miscoordinate in a symmetric fashion: both the event in which Ann invests in project $A$ and Bob invests in project $B$, and the event with the opposite investment pattern occur with probability $0.2$.} 
\end{example}

The above-mentioned differences in welfare can impact optimal policy. We give a proof of concept in Section~\ref{sec:welfare analysis}. We consider a utilitarian social planner who understands the economic environment (i.e., the base game), but does not know the source of players' information (i.e., the information technology). The planner takes a worst-cast approach, as is common in the robust mechanism and contract design literature.\footnote{See, e.g., \citet{carroll2019RobustnessSurvey} for a recent survey of this literature.}
We characterize the set of binary-action symmetric games in which this planner's welfare evaluations depend on whether she assumes players' information is given or acquired. We then apply this characterization to a regime change game, in which several investors choose whether or not to attack a distressed financial institution. The institution fails if the number of attackers surpasses a threshold, which represents the institution's fundamentals. We show that the distribution of the fundamentals impacts the worst-case welfare under exogenous information, but has no effect under rational inattention. Thus, whereas the planner may choose to bolster the institution's fundamentals if she thinks information is exogenous, she never does so if she believes players are rationally inattentive. 

We conclude the paper with two additional inquiries. The first is the study of the non-generic environments for which the separation constraint is binding. We obtain a tight characterization of these base games (Proposition~\ref{pro:BCE equals cl sBCE iff stuff}), and show that separation is an all-or-nothing refinement of BCE: the sBCE set is either dense or nowhere dense in the set of BCEs (Theorem~\ref{thm:Dense or nowhere dense}). The second line of inquiry characterizes the outcomes attainable under rational inattention as learning costs vanish. Such cheap-learning limits have been suggested in the literature as an equilibrium selection device for games with a commonly known state.\footnote{See, e.g., \citet{Yang2015coordination,Hoshino2018,denti2021costly,morris2022coordination,denti2023unrestricted}.} We provide a general answer that ranges across all cheap learning environments: a full-information Nash equilibrium is attainable as a limit equilibrium outcome for some sequence of information technologies if and only if it is in the closure of the sBCE set.  

In sum, our paper makes three main contributions. First, we show that in generic settings, the economic environment imposes no restrictions on the kinds of information that rationally inattentive agents may acquire. Consequently, studies that use rational inattention to investigate the nature of agents' information must either take a stand on agents' information technologies, or rely on non-generic aspects of the economic setting. Second, our work emphasizes the importance of recognizing the active acquisition of information when evaluating players' welfare. In particular, erroneously assuming information is given in situations where it is actually acquired may result in overstating the benefits players get from their information, and so can result in misleading welfare conclusions. Finally, we make a methodological contribution by developing a framework for obtaining robust predictions in games with rational inattention.

\paragraph{Related literature.}

By opening the door to robust analysis under rational inattention, our paper contributes to the literature on robustness in game-theoretic predictions, mechanism design, information design, and contracting. Our paper is especially pertinent to the work that uses  \citeauthor{bergemann2013robust}' \citeyearpar{bergemann2013robust,bergemann2016bayes} BCE solution concept to obtain reliable predictions in games with incomplete information \citep*[e.g.,][]{bergemann2017first,du2018robust,brooks2021optimal}.%
\footnote{Other related studies regarding robustness that lie outside the BCE framework are \cite{carroll2019InformationGames} and \cite{carroll2019RobustInfoAcquisition}. \cite{carroll2019InformationGames} studies the design of a socially efficient bilateral trade mechanism that is robust to agents' abilities to influence the information structure at a cost. His analysis deviates from the rational inattention framework by assuming agents already know their own valuation, and by allowing for actions that give information to other players. \cite{carroll2019RobustInfoAcquisition} considers a principal who contracts with an expert who can acquire information via a Blackwell experiment. The principal is uncertain about the expert's information technology, knowing only that it contains some fixed set of experiments. \citeapos{carroll2019RobustInfoAcquisition} analysis differs from ours in several ways, the most substantive of which being that he only derives the optimal contract under a worst-case objective, and does not span the set of attainable payoffs and outcomes from a fixed contract.  
} %
In particular, we show one can extend the robustness criterion of any such study to include endogenously determined information \`a la rational inattention, as long as the focus is only on the moments of the action-state distribution (e.g., the seller's revenue in an auction), and the separation constraint is not binding. Even when separation binds, our genericity result means such constraint should be taken seriously only if the analyst is confident about non-generic features of the economic environment. For studies that focus on welfare \citep*[e.g.,][]{bergemann2015limits}, our framework provides a road-map for understanding whether their conclusions continue to hold once one accounts for the cost of information.\footnote{The worst-case policy analysis of Section~\ref{sec:welfare analysis} is also related to the literature on robust Bayesian persuasion \citep*[e.g.,][]{dworczak2022preparing,kosterina2022persuasion,lipnowski2022perfect}. Particularly relevant are studies who take a worst-case approach to study optimal information provision in a binary action games \citep*{inostroza2021persuasion,halac2022addressing,morris2022implementation}. We do not consider information provision, whereas these studies do not consider information acquisition.
}  

We also contribute to the literature on rational inattention in games \citep*[e.g.,][]{Yang2015coordination,ravid2020ultimatum,denti2021costly, morris2022coordination,denti2023unrestricted,hebert2023information}. Within this literature, the closely related work of \cite{denti2021costly} is the first to consider the question of robustness. \cite{denti2021costly} studies a two-player signaling game where the receiver needs to pay a cost to monitor the sender's actions. He shows that, when costs are strictly monotone, off-path beliefs play no role in equilibrium. He then characterizes the set of behavioral predictions consistent with some strictly monotone cost function. As part of this characterization, he obtains what is essentially the single-player version of our Theorem~\ref{thm:mon_tech}: when the sender takes every action with positive probability, the receiver's behavior is consistent with some strictly monotone cost function if and only if it satisfies obedience and separation. Theorem~\ref{thm:mon_tech} expands on \citeapos{denti2021costly} result by allowing for multiple players, and by considering players' welfare from a given outcome \citep*[][does not discuss payoffs]{denti2021costly}. In addition, our analysis of generic games and the comparison to exogenous information have no parallel in \cite{denti2021costly}.

One can interpret some of our results as providing a test for rational inattention in a strategic, multi-agent setting: simply check whether the observed outcome satisfies obedience and separation. Moreover, our genericity result suggests that testing separation requires a highly structured and controlled environment. Thus, our paper can be seen as adding to the growing literature on the testable implications of rational inattention \citep*[e.g.,][]{caplin2015revealed, caplin2022rationally,denti2022posterior,lipnowski2022predicting}. Within this literature, the most closely related paper is \cite{caplin2015revealed}.\footnote{Another related paper is \cite{deClippel2020communication}. They  consider a two-staged game where the first stage player chooses how much to obfuscate the state, and the second stage player chooses what to learn about that state whenever it is obfuscated. In their setting, the second player essentially faces a single agent decision problem. Using this fact, \cite{deClippel2020communication} apply \citeapos{caplin2015revealed} results to obtain testable prediction that are valid across all cost functions. They then test these predictions in a lab experiment.} They develop a test for whether a single agent's choices in multiple menus are consistent with costly information acquisition, but do not require costs to be strictly monotone in information. Their
characterization includes obedience, as well as another condition called \emph{no improving attention cycles} (NIAC), which restricts the agent's behavior across decision problems. Since we consider a fixed base game, the NIAC restriction does not apply in our setting. By contrast, \citeapos{caplin2015revealed} characterization does not require separation, because they allow for costs that are not strictly monotone. Hence, their characterization differs from ours in that it only considers a single agent, does not require costs to be strictly monotone, and accounts for multiple decision problems. 

Our work also expands on the uses of correlated equilibrium and its cousins for spanning the set of predictions attainable across various ways of extending a base game \citep*[e.g.,][]{aumann1974subjectivity, aumann1987correlated, myerson1982optimal, forges1993five, bergemann2016bayes,doval2020sequential}. An early and closely related paper within this literature is \cite{lipman1990computation}. They consider base games without payoff uncertainty, and ask which obedient outcomes (i.e., correlated equilibria) can be attained by extending the game to allow players to acquire costly information about a common payoff irrelevant state space. They maintain two assumptions on players' information technology: costs are (ordinally) symmetric across players, and players must use partitional information. They show an obedient outcome can be generated from an information technology satisfying their assumptions if and only if it satisfies a cyclical monotonicity condition across players that is similar to \citeapos{caplin2015revealed} NIAC. Our work differs from theirs in that we allow for payoff uncertainty, asymmetric costs, and non-partitional information, and that we require costs to be strictly increasing in informativeness. In addition, \cite{lipman1990computation} have no analog of our payoff bounds, which play a central role in our paper. 

\section{Setup}

A finite number of players have to engage in a game with uncertain payoffs; we call such a game the \textit{base game}. Before playing the base game, the players can covertly acquire costly information to reduce the uncertainty they face. We call \textit{information technology} the description of learning resources. Together, a base game and an information technology define an \textit{information acquisition game}, our main object of study. We focus on two predictions: the \textit{equilibrium outcome}, which details the distribution of the players' actions in the base game, and the \textit{equilibrium value}, which is the players' expected payoffs net of information costs. We are interested in the outcomes and values that can be generated as we range over all information technologies that represent \textit{rational inattention}, i.e., when information choice is flexible and more information is more costly. In this section, we precisely define all these objects and provide the associated notation.

\paragraph*{Base Game.} 
Let $\Player$ be a finite set of players, with typical element $\player$. Each player $\player$ has to choose an action $\act_\player$ from a finite set $\Act_{\player}$. As usual, we define $\Act_{-\player}=\prod_{\playerb\neq \player} \Act_{\playerb}$ and $\Act=\Act_\player \times \Act_{-\player}$. Accordingly, we use  $\act_{-\player}=(\act_\playerb)_{\playerb\neq \player}$ to denote the action profile of all players other than $\player$, and $\act=(\act_\player,\act_{-\player})$ to denote the entire action profile. Throughout the paper we adopt the same notational conventions for all Cartesian products indexed by $I\setminus\{i\}$ and $I$. 

Players are expected utility maximizers who care about each other's actions  as well as an exogenous  variable $\paystate$, which is drawn from  a finite set $\Theta$ according to a full-support probability measure $\payprior \in \Delta (\Paystate)$. We denote by $\util_{\player}: \Act \times \Paystate \rightarrow \real$ player $\player$'s von Neumann-Morgenstern utility function. 

We call a tuple
$
\BGame=\left(\Player,\Paystate,\payprior,(\Act_{\player},\util_{\player})_{\player\in \Player}\right)
$
 a \textbf{base game}. 

\paragraph*{Information Technologies.}
Before taking an action in the base game, each player has the opportunity to acquire information  about $\theta$ as well as other exogenous quantities of potential interests (e.g., sunspots, noisy public information). We succinctly represent them by a single variable $\corstate$ taking values in a finite set $\Corstate$; a Markov kernel $\corprior:\Paystate \rightarrow \Delta(\Corstate)$ details the conditional distribution of $\corstate$ given $\paystate$.

Following \cite{Blackwell1951a}, we model the acquisition of information using experiments. An experiment for player $\player$ is a Markov kernel $\exper_{\player}: \Corstate\times \Paystate \rightarrow \Delta (\Signal_\player)$,
where $\Signal_\player$ is a finite space of signal realizations privately observable by player $\player$. The functions $\exper_{\player}$ details how the distribution of $\player$'s signal $\signal_\player$ depends on $\paystate$ and $\corstate$.
To simplify the exposition, we assume the signal space is rich; specifically, we assume $\Signal_\player$ has more elements than $\Act_\player$ and $\Corstate\times\Paystate$.
 
By construction, the players' signals are conditionally independent given $\paystate$ and $\corstate$. However, they may be correlated given $\paystate$ only. Thus, our framework incorporates a form of correlated information acquisition, as in, among others, \cite{Hellwig2009b}, \cite{Myatt2012a}, \cite{hebert2023information}, and \cite{denti2023unrestricted}. Indeed, one can simply view $\corstate$ as a modeling device for situations in which players have access to information sources with correlated noise (e.g., newspapers with similar slants, consultants from the same firm).

The acquisition of information faces two kinds of frictions. First, each player is constrained in the kind of experiments she can use: player $\player$ can only choose experiments that lie in a given set $\Exper_\player$. Second, experiments come at a cost, where $\icost_\player: \Exper_\player \rightarrow \mathbb{R}_{+}$ denotes $\player$'s cost function. As a normalization, we assume the existence of an experiment that costs zero; that is, $\icost_\player(\exper_\player) =0$ for some $\exper_\player\in\Exper_\player$.

We call a tuple 
$
\itech = \left(\Corstate, \corprior, (\Signal_\player,\Exper_\player,\icost_\player)_{\player\in \Player}\right)
$ 
an \textbf{information technology}. 

\paragraph*{Information Acquisition Games.}
Together, a base game $\BGame$ and an information technology $\itech$ define an \textbf{information acquisition game}. The game  begins with the realization of the state of the world $\omega=(\corstate,\paystate)$. Without observing the state, the players simultaneously and covertly choose experiments, and pay their costs. Then, each player privately observes the outcome of their own experiment and takes an action. We use $\aplan_\player: \Signal_{\player}\rightarrow \Delta(\Act_\player)$ to denote player $\player$'s action plan in this game, and let $\Aplan_\player$ be the set of  $\player$'s action plans.

The solution concept we adopt is Nash equilibrium. A strategy for player $\player$ consists of an experiment $\exper_\player\in \Exper_\player$ and an action plan $\aplan_\player\in\Aplan$. A strategy profile $(\exper^{*}_{\player},\aplan^{*}_{\player})_{\player \in \Player}$ is an equilibrium if for all players $\player$, $(\exper^*_\player,\aplan^*_\player)$ maximizes
\[
\left[\sum_{\act,\signal,\corstate,\paystate} \util_\player(\act,\paystate) \aplan_{\player}(\act_\player|\signal_\player)\exper_\player(\signal_{\player}|\corstate,\paystate)\prod_{\playerb\neq \player} \aplan^{*}_{\playerb}(\act_{\playerb}|\signal_\playerb)\exper_{\playerb}^*(\signal_{\playerb}|\corstate,\paystate)\corprior(\corstate|\paystate)\payprior(\paystate)\right] - \icost_{\player}(\exper_\player).
\]
over all $\exper_\player\in\Exper_\player$ and $\aplan_\player\in \Aplan_\player$. The objective function consists of two terms: the value of information (in square brackets) and the cost of information. As common in applications, value and cost are additively separable.

\paragraph{Equilibrium Predictions.}

We summarize the equilibria of information acquisition games using two statistics of the players' behavior: the outcome and the value. 

The \textbf{outcome} is the joint distribution $\out \in \Delta(\Act \times \Paystate)$ of the players' actions and the payoff-relevant state. Note that the marginal distribution of $\theta$ must be the prior $\payprior$; we denote by $\Out$ the set of probability measures over $\Act\times\Paystate$ whose marginal on $\Theta$ is $\payprior$.

The \textbf{value} is the vector $\payvector=(\payvector_\player)_{\player \in \Player} \in \mathbb{R}^{\Player}$ assigning each player the expected payoff in the game. Observe that this payoff includes players' information acquisition costs and so $\payvector_\player$ may differ from the expectation of $\util_\player$ under $\out$. 

\paragraph{Blackwell order.} 
To give a precise definition of ``more information,'' we build on the classic ranking of experiments due to  \cite{Blackwell1951a, Blackwell1953}. Given a pair of experiments $\exper_\player$ and $\experb_\player$,  we say $\exper_i$ \textbf{Blackwell dominates} $\experb_i$ (denoted $\exper_\player \succsim \experb_\player$) if there exists a Markov kernel $\garb:\Signal_\player \rightarrow \Delta(\Signal_\player)$ such that for every $\signal_\player \in \Signal_\player$, $\paystate\in\Paystate$, and $\corstate\in \Corstate$ with $\corprior(\corstate\vert \paystate)>0$,
\[
\experb_\player(\signal_\player|\corstate,\paystate) = \sum_{\signalb_\player \in \Signal_\player} \garb(\signal_\player|\signalb_\player)\exper(\signalb_\player|\corstate,\paystate).
\]
Intuitively, $\exper_\player$ Blackwell dominates $\experb_\player$ if one can generate $\experb_\player$ by ``garbling" the output of $\exper_\player$. As shown by Blackwell, $\exper_\player \succsim \experb_\player$ if and only if player $\player$ is better off observing the output of $\exper_\player$ rather than the output of $\experb_\player$ (holding fixed other players' behavior). In this sense, $\exper_\player$ is more informative than $\experb_\player$. We write $\exper_\player \succ \experb_\player$ whenever $\exper_\player \succsim \experb_\player$ and $\experb_\player \not\succsim \exper_\player$. 

\paragraph{Rational Inattention.}

Our aim is to study the outcomes and values can be generated in equilibrium as we fix the base game and range over all information technologies that represent rational inattention. Consistent with the literature, we interpret rational inattention as information technologies where the set of feasible experiments is flexible, and where it is costly to acquire more information.

We say a set of feasible experiments $\Exper_\player$ is \textbf{flexible} if, whenever a given amount of information is feasible, a lower amount of information is feasible as well: whenever $\exper_\player \in \Exper_\player$ and $\exper_\player \succsim \experb_\player$, then $\experb_\player \in \Exper_\player$. We say a cost function $\icost_\player$ is \textbf{monotone} if  less informative experiments are cheaper to acquire: for all $\exper_\player, \experb_\player \in \Exper_{\player}$ such that $\exper_\player \succsim \experb_\player$ (resp., $\exper_\player \succ \experb_\player$), we have $\icost_\player(\exper_\player) \geq \icost_\player(\experb_\player)$ (resp., $\icost_\player(\exper_\player) > \icost_\player(\experb_\player)$). We say a technology $\IT$ represents \textbf{rational inattention} if for every player, the set of feasible experiments is flexible and the cost function is monotone.
 
Many authors regard flexibility as the key difference between rational inattention and traditional information-acquisition models (see, e.g., \citealp*{mackowiak2023survey}, Section 2). Applications of rational inattention often assume all experiments are feasible: in the language of this paper, $\Exper_\player=\Delta(\Signal_\player)^{\Corstate\times\Paystate}$. Our results are unchanged if we adopt this more extreme definition of flexibility. The reason is that there are no observable differences between experiments that are unfeasible and experiments that are excessively costly.

By pairing flexibility with monotonicity, we postulate that players can save on costs by only acquiring the information they actually use in making decisions. \cite{Matejka2015a} provide a standard example of monotone cost function: $\icost_\player(\exper_\player)$ equals the expected reduction in uncertainty about $\paystate$ and $\corstate$, as measured by Shannon entropy, from observing the output of $\exper_\player$. More broadly, one could substitute Shannon's entropy with any other strictly concave measure of uncertainty \citep*{caplin2022rationally}. One could also consider increasing transformations of these costs (\citealp{denti2022posterior}; \citealp{zhong2022optimal}), or any differentiable cost function whose derivative is strictly convex \citep{lipnowski2022predicting}.\footnote{In each information acquisition game, prior beliefs are exogenously determined by $\payprior$ and $\corprior$. Thus, the issue of experiment-based vs.\ posterior-based information costs  that sometimes arise in applications of rational inattention (see, e.g., \citealp*{ravid2020ultimatum,denti2022experimental}) is irrelevant here.}

\section{A Characterization of Rational Inattention}\label{sec:rational_i}

In this section we characterize the outcomes and values that can be generated via rational inattention. To provide a benchmark, we first review the case of exogenous information.

The class of information acquisition games includes situations in which players' information is predetermined. One can obtain them by considering technologies in which each player has only \textit{one} feasible experiment (whose cost is zero by our normalization). Such technologies are identified by a tuple 
$
\mathcal{S} = \left(\Corstate, \corprior, (\Signal_\player,\exper_\player)_{\player\in \Player}\right)
$
that we call \textbf{information structure}. Of course, an information acquisition game $(\BGame,\mathcal{S})$ is simply a conventional game of incomplete information \`a la Harsanyi.

Among other results, \cite{bergemann2016bayes} characterize the equilibrium outcomes that can arise in a game of incomplete information as we fix the base game and range over all information structures: they call them Bayes correlated equilibria. A \textbf{Bayes correlated equilibrium} (BCE) of a base game $\BGame$ is an outcome $\out\in \Out$ that satisfies the following constraint: for all $\player\in\Player$ and $\act_\player,\actb_\player \in \Act_\player$,
\begin{equation}\label{eq:OC}
\sum_{\act_{-\player},\paystate}\left(\util_{\player}(\act_{\player},\act_{-\player},\paystate) - \util_{\player}(\actb_{\player},\act_{-\player},\paystate) \right)\out(\act_{\player},\act_{-\player},\paystate) \geq 0. 
\end{equation}
Following standard terminology, we name \eqref{eq:OC} the \textbf{obedience constraint}.
The standard way of viewing this constraint is through the lens of a mediator who generates $\out$ by observing the state and stochastically sending an action recommendation to each player. 
The players are willing to follow these recommendations if and only if the obedience constraint is satisfied.\footnote{Our definition of BCE corresponds to the specialization of \citeauthor{bergemann2016bayes}'s (\citeyear{bergemann2016bayes}) definition to the case in which players' original type spaces are degenerate.
} 

Calculating players' values under exogenous information is straightforward. 
If outcome $\out$ arises under exogenous information in base game $\BGame$, player $\player$'s expected payoff is 
\[
\grossval_\player (\out) := \sum_{\act,\paystate}\util_\player(\act,\paystate)\out(\act,\paystate).
\]
We call $\grossval_\player (\out)$ the \textbf{gross value} for player $\player$, since it ignores information costs. Let $\grossval(\out):=(\grossval_\player (\out))_{\player\in\Player}$ be the vector of gross values.

What changes when information is endogenous? Our analysis highlights two main differences between rational inattention and exogenous information. 

The first difference is that outcomes generated by costly information acquisition must satisfy an additional ``separation constraint.'' To present this constraint, we require a few definitions. Given an outcome $\out\in \Out$, a player  $\player\in \Player$, and an action $\act_{\player}\in \Act_\player$, let 
\(
\out(\act_{\player}):= \sum_{\act_{-\player},\paystate}\out(\act_{\player}, \act_{-\player},\paystate)
\)
be the probability of player $\player$ taking action $\act_{\player}$ under $\out$, and let
\[
\supp_{\player} (\out) := \{\act_{\player} \in \Act_{\player}: \out(\act_{\player}) >0 \}
\]
be the set of $\player$'s actions that have positive probability. For each $\act_\player \in \supp_{\player}(\out)$, let $\out_{\act_{\player}} \in \Delta(\Act_{-\player}\times \Paystate)$ be the conditional distribution of the actions of the players other than $\player$ and the payoff-relevant state: for all $\act_{-\player}\in \Act_{-\player}$ and $\paystate\in\Paystate$,
\[
\out_{\act_{\player}}(\act_{-\player},\paystate):=\frac{\out (\act_\player,\act_{-\player},\paystate)}{\out(\act_\player)}.
\]
Finally, denote the set of player $\player$'s best responses to $\out_{\act_\player}$ via 
\[
\BR(\out_{\act_\player}):=\argmax_{\actb_{\player} \in \Act_{\player}} \sum_{ \act_{-\player},\paystate} \util_\player(\actb_{\player},\act_{-\player},\paystate)\out_{\act_{\player}}(\act_{-\player},\paystate).
\]
Then an outcome $\out \in \Out$ satisfies the \textbf{separation constraint} if  for all $\player\in\Player$ and $\act_\player,\actb_\player\in\supp_\player(\out)$,
\begin{equation}\label{eq:SC}
\out_{\act_\player}\neq\out_{\actb_\player}\quad\text{implies}\quad\BR(\out_{\act_\player})\cap \BR(\out_{\actb_\player})= \varnothing.
\end{equation}
In other terms, an outcome satisfies the separation constraint if distinct beliefs have separate best responses. We refer to a BCE that satisfies the separation constraint as a \textbf{separated BCE} (sBCE).

The second difference between rational inattention and exogenous information is in the set of payoffs players obtain from a given outcome. When information is given, each player $\player$'s value is completely determined by the outcome $\out$. By contrast, under rational inattention the same outcome can arise under multiple cost functions, and so is consistent with a \emph{set} of values. As we demonstrate next, this set of values is convex, with an upper bound given by the gross value $\grossval_{\player}(\out)$. The lower bound is given by (what we call) the outcome's ``uninformed value,'' which is $\player$'s maximal value if she receives no information, and others' actions and $\paystate$ are distributed according to that outcome. Formally, the \textbf{uninformed value} of an outcome $\out$ in a base game $\BGame$ for player $\player$ is given by 
\[
\noinfoval_{\player}(\out) := \max_{\actb_{\player}\in \Act_{\player}} \sum_{\act,\paystate} \util_\player(\actb_\player,\act_{-\player},\paystate)\out(\act,\paystate).
\]
 Let $\noinfoval(\out):=(\noinfoval_\player (\out))_{\player\in\Player}$ be the vector of uninformed values.\footnote{The uninformed value is related to the notion of \emph{coarse correlated equilibrium} (see, e.g., \citealp{roughgarden2016twenty}, Definition 13.5). Coarse correlated equilibrium relaxes the obedience constraint, requiring instead that no player can benefit from deviating to a single action from all of the mediator's recommendation; that is, $\grossval_\player(\out) \geq \noinfoval_\player(\out)$ for every $\player$.}

The next result summarizes our characterization of rational inattention in games:

\begin{samepage}
\begin{theorem}\label{thm:mon_tech}
Fix a base game $\BGame$. A rational-inattention technology $\IT$ exists that induces the outcome-value pair $(\out,\payvector)$ in an equilibrium of $(\BGame,\IT)$ if and only if
\begin{enumerate}[(i)]
\item the outcome $\out$ is a separated BCE, and \label{thm:mon_tech_sBCE}
\item for every $\player\in \Player$, $\payvector_\player=\noinfoval_\player(\out)=\grossval_\player(\out)$ or $\payvector_\player\in [\noinfoval_\player(\out),\grossval_\player(\out))$.
\label{thm:mon_tech_value}
\end{enumerate}
\end{theorem}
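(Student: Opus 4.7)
The plan splits naturally into necessity and sufficiency. For necessity, fix any equilibrium $(\exper^*_\player,\aplan^*_\player)_{\player\in\Player}$ of $(\BGame,\IT)$ inducing $(\out,\payvector)$. Obedience is the standard Nash argument: each action $\aplan^*_\player$ charges with positive probability must be a best reply to the posterior induced by the triggering signal, which aggregates to \eqref{eq:OC}. For separation, first observe (WLOG) that the action experiment $\tilde\exper^*_\player(\act_\player\mid\corstate,\paystate):=\sum_{\signal_\player}\aplan^*_\player(\act_\player\mid\signal_\player)\exper^*_\player(\signal_\player\mid\corstate,\paystate)$ is a Blackwell-garbling of $\exper^*_\player$ inducing identical behavior, so monotonicity and equilibrium force cost equality, and strict monotonicity then gives $\tilde\exper^*_\player\sim\exper^*_\player$. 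If separation fails at some $(\act_\player,\actb_\player,\actc_\player)$ with $\out_{\act_\player}\neq\out_{\actb_\player}$ and $\actc_\player\in\BR(\out_{\act_\player})\cap\BR(\out_{\actb_\player})$, player $\player$ can pool the signals recommending $\act_\player$ or $\actb_\player$ and play $\actc_\player$ on the pool: this is a strict Blackwell-garbling of $\tilde\exper^*_\player$ (since the pooled posteriors genuinely differ), delivers identical gross payoff (since $\actc_\player$ is a best reply to both posteriors), and by strict monotonicity costs strictly less---contradicting equilibrium. For (ii), $\payvector_\player\le\grossval_\player(\out)$ follows from $\icost_\player\ge 0$, while flexibility makes the uninformative experiment feasible with zero cost, and deviating to it yields $\payvector_\player\ge\noinfoval_\player(\out)$. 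Finally, $\payvector_\player=\grossval_\player$ forces $\icost_\player(\exper^*_\player)=0$; combined with the zero-cost uninformative experiment and strict monotonicity, this forces $\exper^*_\player$ to be Blackwell-equivalent to uninformative, making $\act_\player$ independent of $(\act_{-\player},\paystate)$, so $\payvector_\player=\noinfoval_\player$.

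For sufficiency, given an sBCE $\out$ and values $\payvector$ satisfying (ii), I construct $\IT$ as follows. Take $\Corstate=\Act\times\Paystate$ with $\corprior((\act,\paystateb)\mid\paystate)=\mathbb{1}[\paystateb=\paystate]\,\out(\act\mid\paystate)$, so $\corstate$ encodes a mediator's full recommendation drawn from $\out$. For each $\player$, let $\Signal_\player$ be a sufficiently rich finite set, let $\Exper_\player$ consist of all experiments (trivially flexible), and designate $\exper^*_\player$ as the ``read off component $\player$'' experiment with $\aplan^*_\player$ the identity on the relevant signals; this profile induces $\out$ by construction. For the boundary case $\payvector_\player=\grossval_\player=\noinfoval_\player$, the separation constraint combined with $\grossval_\player=\noinfoval_\player$ forces $\act_\player$ to be independent of $(\act_{-\player},\paystate)$ under $\out$ (because otherwise the common best reply to the marginal would appear in multiple best-response sets), so a variant construction where $\player$ uses an uninformative experiment with action plan equal to the $\Act_\player$-marginal of $\out$ works under any strictly monotone cost.

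The main technical obstacle is the cost construction in the generic case $\payvector_\player\in[\noinfoval_\player,\grossval_\player)$. I plan to use a posterior-separable cost $\icost_\player(\exper)=\bbE_\exper[\phi_\player(\belief)]-\phi_\player(\belief_0)$ with $\phi_\player$ strictly convex on $\Delta(\Act_{-\player}\times\Corstate\times\Paystate)$ (ensuring strict Blackwell-monotonicity of $\icost_\player$), $\phi_\player(\belief_0)=0$, and prescribed values $\{\phi_\player(\belief_{\act_\player})\}$ at the posteriors $\{\belief_{\act_\player}\}_{\act_\player\in\supp_\player(\out)}$ induced by $\exper^*_\player$ summing (weighted by $\out(\act_\player)$) to $\grossval_\player-\payvector_\player$. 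Separation is the essential ingredient: it guarantees the $\{\belief_{\act_\player}\}$ are distinct with pairwise disjoint best-response sets, so one can pick a linear function $L_\player$ with $L_\player(\belief_0)=\payvector_\player$ (Bayes-plausibly consistent with the prescribed values) such that the concave closure of $f_\player(\belief)-\phi_\player(\belief):=\max_{\actb_\player}\bbE_\belief[\util_\player(\actb_\player,\cdot)]-\phi_\player(\belief)$ at $\belief_0$ equals $\payvector_\player$ and is supported exactly on $\{\belief_{\act_\player}\}$ with weights $\{\out(\act_\player)\}$, which by the concavification characterization makes $\exper^*_\player$ optimal. The delicate step---which I expect to be the main obstacle---is simultaneously achieving strict convexity of $\phi_\player$, the tangency condition $\phi_\player(\belief)\ge f_\player(\belief)-L_\player(\belief)$ with equality at the (possibly multiple) points $\{\belief_{\act_\player}\}$, and the boundary condition $\phi_\player(\belief_0)=0$. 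My plan is to resolve it by an explicit construction that combines a strictly convex base (e.g., negative Shannon entropy) with an affine correction matching the prescribed values, exploiting the freedom in choosing $\phi_\player$ outside the convex hull of the target posteriors; separation places these posteriors in general enough position that such a calibration exists.
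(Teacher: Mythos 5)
Your necessity argument is sound and matches the paper's in substance: the paper formalizes the pooling deviation through a reduction to a single-agent problem, but the logic---pool two recommendations with distinct beliefs and a common best reply, note the pooled experiment is a strict garbling, invoke strict monotonicity---is the same, and your treatment of the payoff bounds and of the endpoint case $\payvector_\player=\grossval_\player(\out)$ is correct. The gap is in sufficiency. Your construction sets $\Corstate=\Act\times\Paystate$ and has player $\player$'s designated experiment reveal the component of $\corstate$ encoding her own recommendation, with the identity action plan. This fails whenever the sBCE $\out$ puts positive probability on two actions $\act_\player\neq\actb_\player$ with $\out_{\act_\player}=\out_{\actb_\player}$---a case the separation constraint explicitly permits (it only restricts pairs with \emph{distinct} beliefs) and which arises in essentially every mixed Nash equilibrium, e.g., Matching Pennies, whose unique sBCE has both actions of each player inducing the same belief. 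There, distinguishing $\act_\player$ from $\actb_\player$ is non-instrumental information: the player can pool the two signals and randomize between $\act_\player$ and $\actb_\player$ herself, producing the same outcome with a strictly less Blackwell-informative experiment, so under \emph{any} strictly monotone cost your designated strategy is strictly suboptimal. Your own concavification machinery detects this: the induced posteriors $\mu_{\act_\player}$ and $\mu_{\actb_\player}$ differ only in the belief about the player's own recommendation, the gross-value function $f_\player$ depends only on the marginal over $(\act_{-\player},\paystate)$ and is therefore constant on the segment between them, so $f_\player-L_\player$ is affine there and no strictly convex $\phi_\player$ can majorize it while touching both endpoints. The fix is exactly the paper's canonical representation: the experiment reveals only the belief-equivalence class of the recommendation (the cell of the partition grouping actions with identical $\out_{\act_\player}$), and the action plan randomizes within the cell with probabilities proportional to $\out(\act_\player)$.

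Beyond this structural issue, your route through posterior-separable costs is genuinely different from the paper's, which builds the cost directly as a scalar multiple of the value of information on the set of garblings of the designated experiment (following \citet{denti2021costly}) and extends it elsewhere by a bounded monotone term. Your approach, if completed, would be a nice alternative, and you correctly identify why separation makes the calibration feasible: it guarantees that $f_\player-L_\player$ has a genuine kink between any two target posteriors with distinct beliefs, so each target is exposed by the affine function $\mu\mapsto\mathbb{E}_\mu[\util_\player(\act_\player,\cdot)]-L_\player(\mu)$ and a strictly convex majorant touching all of them is not ruled out. But the simultaneous existence of a $\phi_\player$ that is strictly convex, majorizes $f_\player-L_\player$ globally, touches it at every target posterior, and vanishes at the prior is asserted rather than proved; the ``strictly convex base plus affine correction'' sketch does not obviously deliver the multi-point tangency. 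That step needs an explicit construction before the sufficiency direction is complete.
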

\end{samepage}

Thus, an outcome-value pair is consistent with rational inattention if and only if two conditions hold. First, the outcome satisfies both the obedience constraint and the separation constraint. And second, each player's value is weakly above her uninformed value, but strictly below her gross value.

We now explain why the theorem's conditions are necessary. That every outcome generated by rational inattention must be obedient follows from the necessity of the obedience constraint under exogenous information. The reason is that any outcome one can attain when players choose their information must also be attainable if players where exogenously endowed with their chosen experiment. To understand why the separation constraint is necessary, consider a player $i$ who takes with positive probability a pair of actions $\act_\player$ and $\actb_\player$ such that $\out_{\act_\player}\neq \out_{\actb_\player}$. As in BCE, we can interpret $\act_\player$ and $\actb_\player$ as signals. With rational inattention, informative signals are costly. To save on information costs, the player could substitute $\act_\player$ and $\actb_\player$ with a \emph{single} action recommendation $\actc_\player$. For this not to be profitable, it must be that either $\actc_\player\notin \BR(\out_{\act_\player})$ or $\actc_\player\notin \BR(\out_{\actb_\player})$. Since the choice of $\actc_\player$ is arbitrary, it must be that $\BR(\out_{\act_\player})\cap \BR(\out_{\actb_\player}) =\varnothing$. 

To conclude our discussion of necessity, we now describe the origin of the theorem's payoff bounds. To see why $\noinfoval_{\player}(\out)$ is a lower bound on player $\player$'s payoff, suppose we have an information technology and an equilibrium that induces the outcome $\out$. By assumption, player $\player$ always has the option of remaining uninformed at no cost. Therefore, $\player$'s optimal payoff must be higher than $\noinfoval_{\player}(\out)$. That $\grossval_{\player}(\out)$ is the maximal payoff player $\player$ can attain in an equilibrium that induces the outcome $\out$ follows from information acquisition costs being non-negative. Moreover, these costs must be strictly positive whenever $\noinfoval_\player(\out)\neq \grossval_\player(\out)$: to generate $\out$ in this case player $\player$ must acquire \emph{some} information. Therefore, player $\player$ can actually attain her gross value from an outcome only if that value coincides with the outcome's uninformed value. 

We now briefly review our proof that conditions \eqref{thm:mon_tech_sBCE} and \eqref{thm:mon_tech_value} of Theorem~\ref{thm:mon_tech} are sufficient. The proof is constructive, and is based on a result by \citet{denti2021costly}. \citet{denti2021costly} studies a signaling game where the receiver has to pay a cost to monitor the sender's action. Among other results, \citet{denti2021costly} shows that, when the sender takes every action with positive probability, any receiver behavior satisfying (what we call in this paper) the obedience constraint and the separation constraint can be justified via rational inattention. 

To prove the ``if'' direction of Theorem~\ref{thm:mon_tech}, we first modify \citeauthor{denti2021costly}'s (\citeyear{denti2021costly}) single-agent construction so that it generates any payoff between the uninformed and the gross value. We emphasize \citet{denti2021costly} focuses on equilibrium outcomes and does not discuss achievable payoffs. Then, we use our modification of \citeauthor{denti2021costly}'s (\citeyear{denti2021costly}) single-agent result to construct a flexible and monotone technology that enables players to simultaneously acquire information about the same state space (in \citealp{denti2021costly}, only \emph{one} player endogenously acquires information).

Theorem~\ref{thm:mon_tech} shows that any BCE that arises from rational inattention must satisfy the separation constraint. Next, we record that the separation constraint never eliminates all of a game's Bayes correlated equilibria; that is, the set of separated BCEs is non-empty.
\begin{corollary}\label{cor:existence}
Every base game $\BGame$ admits a separated BCE.
\end{corollary}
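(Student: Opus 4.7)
The plan is to construct a separated BCE directly from a Nash equilibrium of the base game played under the prior, bypassing the full force of Theorem~\ref{thm:mon_tech}. Specifically, I would consider the complete-information finite game in which each player $\player$ chooses a mixed action $\mact_\player \in \Delta(\Act_\player)$ to maximize
\[
\sum_{\act,\paystate} \util_\player(\act,\paystate) \mact_\player(\act_\player) \prod_{\playerb \neq \player} \mact_\playerb(\act_\playerb) \payprior(\paystate).
\]
This is a standard finite normal-form game, so Nash's theorem delivers a mixed equilibrium $(\mact^*_\player)_{\player \in \Player}$. The candidate outcome is the product distribution $\out(\act,\paystate) := \payprior(\paystate)\prod_{\player \in \Player} \mact^*_\player(\act_\player)$, whose marginal on $\Paystate$ is $\payprior$ by construction, so $\out \in \Out$.

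Next, I would check obedience and separation. For obedience, fix a player $\player$ and $\act_\player \in \supp_\player(\out) = \supp(\mact^*_\player)$; then $\mact^*_\player(\act_\player) > 0$, so \eqref{eq:OC} reduces, after dividing out this positive scalar, to the statement that $\act_\player$ is a best reply to $\mact^*_{-\player}$ under the prior, which is exactly the Nash property. For $\act_\player \notin \supp_\player(\out)$ the obedience inequality holds trivially because $\out(\act_\player,\act_{-\player},\paystate) = 0$. Separation is even cleaner: because $\out$ is a product of $\payprior$ and the marginals $\mact^*_\playerb$, conditioning on any $\act_\player \in \supp_\player(\out)$ gives the same conditional
\[
\out_{\act_\player}(\act_{-\player},\paystate) = \payprior(\paystate) \prod_{\playerb \neq \player} \mact^*_\playerb(\act_\playerb),
\]
independent of $\act_\player$. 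Hence the premise $\out_{\act_\player} \neq \out_{\actb_\player}$ of \eqref{eq:SC} never fires, and the separation constraint holds vacuously.

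I do not anticipate any real obstacle beyond invoking Nash's theorem: the only thing to verify is that product outcomes trivially satisfy separation, which they do because all conditional beliefs along the support of a single player coincide. In particular, this argument gives a conceptual answer to the corollary, namely that the existence of a separated BCE is a direct consequence of the existence of Nash equilibrium in the uninformed base game.
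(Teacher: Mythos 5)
Your proof is correct and is in substance the same argument the paper has in mind: the paper's (omitted) proof also reduces the claim to Nash existence, the only cosmetic difference being that it phrases the construction as an information-acquisition game with a free, uninformative technology and then invokes Theorem~\ref{thm:mon_tech}, whereas you verify obedience and separation of the product outcome $\out(\act,\paystate)=\payprior(\paystate)\prod_{\player}\mact^*_\player(\act_\player)$ directly. Your direct verification is, if anything, cleaner, since it does not lean on the ``only if'' direction of Theorem~\ref{thm:mon_tech} and makes explicit why separation holds vacuously for product outcomes.
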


The proof is straightforward (details omitted): A technology where all feasible experiments are free and uninformative is flexible and monotone. The corresponding game with information acquisition admits an equilibrium by standard arguments (information is de facto exogenous). It follows from Theorem \ref{thm:mon_tech} that the outcome of such an equilibrium is a separated BCE.\footnote{Another immediate consequence of this argument is that in the special case in which $\Theta$ is a singleton, all (pure or mixed) Nash equilibria of the base game are separated BCEs.} 

We conclude the section with a brief discussion of what happens if we allow for non-flexible and non-monotone technologies. To accommodate such technologies, we need to adjust Theorem~\ref{thm:mon_tech}'s statement in two ways. First, the separation constraint is no longer necessary. And second, players can attain their gross value even when it differs from their uninformed value. We refer the reader to Online Appendix \ref{sec:Arbitrary information technologies} for the precise details.

\section{Generic Environments}\label{sec:generic}

Theorem~\ref{thm:mon_tech} shows rational inattention differs from exogenous information in two ways. First, it  reduces the set of equilibrium outcomes: an additional separation constraint must be satisfied. Second, it expands the set of achievable payoffs for any given equilibrium outcome. In this section, we prove \emph{only} the second difference is meaningful in generic environments.

We adopt the following notion of genericity: We fix a finite set of players $\Player$, a finite set of payoff states $\Paystate$, a full-support prior $\payprior\in \Delta(\Paystate)$, and a finite set of actions $\Act_\player$ for each player $\player\in\Player$. To obtain a base game, it remains to specify a profile of utility function $\util=(\util_\player)_{\player\in\Player}$. We identify $\util$ with an element of the Euclidean space $\mathbb{R}^{\Player\times\Act\times\Paystate}$, and say a statement is true for \textbf{generic} $\util$ if the closure of the subset in $\mathbb{R}^{\Player\times\Act\times\Paystate}$ for which it is false has Lebesgue measure zero. We denote by $\BCE(\util)$ and $\sBCE(\util)$ the sets of BCEs and separated BCEs for the base game corresponding to $\util$.

\begin{theorem}\label{thm:genericity}
For generic $\util$, the set $\sBCE(\util)$ is dense in the set $\BCE(\util)$.
\end{theorem}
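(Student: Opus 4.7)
The plan is to show that, for generic $\util$, the relative interior of the polytope $\BCE(\util)$ is contained in $\sBCE(\util)$. Since a convex polytope equals the closure of its relative interior, this will immediately yield $\overline{\sBCE(\util)} \supseteq \BCE(\util)$, as desired.

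The first step is a clean sufficient condition for separation: if $\out \in \BCE(\util)$ is such that, for every player $\player$ and every $\act_{\player} \in \supp_{\player}(\out)$, the conditional $\out_{\act_{\player}}$ admits a unique best response, then $\out \in \sBCE(\util)$. Indeed, obedience already guarantees $\act_{\player} \in \BR(\out_{\act_{\player}})$, so uniqueness gives $\BR(\out_{\act_{\player}}) = \{\act_{\player}\}$; hence for distinct $\act_{\player},\actb_{\player} \in \supp_{\player}(\out)$, the sets $\BR(\out_{\act_{\player}})$ and $\BR(\out_{\actb_{\player}})$ are disjoint singletons, verifying separation regardless of whether the conditionals coincide.

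Write $\phi_{\player,\act_{\player},\actb_{\player}}(\out) := \sum_{\act_{-\player},\paystate}(\util_{\player}(\act_{\player},\act_{-\player},\paystate) - \util_{\player}(\actb_{\player},\act_{-\player},\paystate))\out(\act_{\player},\act_{-\player},\paystate)$ for the obedience functional, so that $\BCE(\util)$ is cut out from $\Out$ by $\out \geq 0$ and $\phi_{\player,\act_{\player},\actb_{\player}}(\out) \geq 0$. A standard fact about polytopes is that, at a point in the relative interior of $\BCE(\util)$, a defining inequality is binding if and only if it is an equality throughout $\BCE(\util)$. The argument therefore reduces to the following generic claim: for generic $\util$, if $\phi_{\player,\act_{\player},\actb_{\player}}$ vanishes on $\BCE(\util)$, then $\out(\act_{\player}) = 0$ for all $\out \in \BCE(\util)$. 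Granting this, at any $\out$ in the relative interior of $\BCE(\util)$, each $\act_{\player} \in \supp_{\player}(\out)$ has all the obedience constraints $\phi_{\player,\act_{\player},\actb_{\player}}$ ($\actb_{\player}\neq\act_{\player}$) strict, so $\act_{\player}$ is the unique best response to $\out_{\act_{\player}}$, and the first step applies.

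The main obstacle is the generic claim, which I would approach through a transversality argument on $\util$-space. Stratify $\mathbb{R}^{\Player\times\Act\times\Paystate}$ by the combinatorial type $(Z,B)$ of $\BCE(\util)$---where $Z$ lists the entries $(\act,\paystate)$ with $\out(\act,\paystate)\equiv 0$ on $\BCE(\util)$ and $B$ lists the obedience triples along which $\phi$ vanishes on $\BCE(\util)$---of which there are only finitely many. A ``bad'' type is one containing some $(\player,\act_{\player},\actb_{\player}) \in B$ for which $(\act_{\player},\act_{-\player},\paystate) \notin Z$ for some $(\act_{-\player},\paystate)$. On a bad stratum, the vanishing of $\phi_{\player,\act_{\player},\actb_{\player}}$ on $\BCE(\util)$ expresses $\phi_{\player,\act_{\player},\actb_{\player}}$ as a linear combination of the other defining equalities of $\mathrm{aff}(\BCE(\util))$, which in turn imposes a polynomial equation on the entries of $\util$. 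The crux is to verify that this equation is non-trivial: one can perturb the utility entries $\util_{\player}(\act_{\player},\cdot,\cdot)$ and $\util_{\player}(\actb_{\player},\cdot,\cdot)$ so as to alter $\phi_{\player,\act_{\player},\actb_{\player}}$ while keeping all $\phi_{\playerb,\cdot,\cdot}$ ($\playerb\neq\player$) unchanged and without disturbing the support pattern of $\BCE$; a careful accounting of how the remaining $\phi_{\player,\cdot,\cdot}$ constraints co-move then shows that generic perturbations of these entries break the required linear dependence. Consequently each bad stratum is contained in a semialgebraic subset of positive codimension, and the union of bad strata has Lebesgue-measure-zero closure.
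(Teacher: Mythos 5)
Your reduction has a fatal flaw at its core: the ``generic claim'' that, for generic $\util$, an obedience functional $\phi_{\player,\act_\player,\actb_\player}$ vanishing on all of $\BCE(\util)$ forces $\out(\act_\player)=0$ throughout $\BCE(\util)$ is false on an \emph{open} set of games, so no transversality or stratification argument can rescue it. The paper flags exactly this trap: for games in a neighborhood of Matching Pennies, the BCE set is the singleton consisting of the fully mixed Nash equilibrium, at which every player puts positive probability on both actions and is exactly indifferent between them --- every obedience constraint binds identically on $\BCE(\util)$ while no action has zero probability. Your transversality sketch fails because the polytope $\BCE(\util)$ moves with the perturbation of $\util$: perturbing $\util_\player(\act_\player,\cdot,\cdot)$ relocates the mixed equilibrium, and the obedience constraints simply bind at the new point, so the ``linear dependence'' you hope to break is restored. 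Consequently your strategy of showing that relative-interior points are \emph{strict} BCEs (unique best responses everywhere) cannot work beyond the single-agent case; note also that in the Matching Pennies example the unique BCE \emph{is} separated --- because the conditionals $\out_{\act_\player}$ coincide across recommendations, so separation is vacuous --- which shows that strictness is a much stronger property than what the theorem needs, and that any correct proof must exploit the ``$\out_{\act_\player}\neq\out_{\actb_\player}$'' hypothesis in the separation constraint rather than route around it.

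The paper's actual argument is structured quite differently and is worth internalizing. First, a perturbation lemma: for \emph{every} $\util$ and every $\out\in\BCE(\util)$, one can perturb the utilities by an arbitrarily small amount (adding, for each recommendation $\act_\player$, a carefully scaled separating functional $g_{\act_\player}$ built from hyperplanes that separate the distinct conditionals $\out_{\act_\player}$) so that $\out$ becomes a separated BCE of the perturbed game; this handles indifferences by tilting payoffs in a direction adapted to $\out$ itself rather than trying to eliminate them. Second, a regularity lemma: the correspondence $\util\mapsto\cl(\sBCE(\util))$ is semi-algebraic (via the Tarski--Seidenberg theorem), hence continuous off a closed Lebesgue-null set. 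Combining the two, at any continuity point $\util$, every $\out\in\BCE(\util)$ is a limit of points in $\sBCE(\util^n)$ for $\util^n\to\util$, and continuity of $\util\mapsto\cl(\sBCE(\util))$ pulls $\out$ into $\cl(\sBCE(\util))$. If you want to repair your write-up, you would need to replace your generic claim with something along these lines; the purely polyhedral/transversality route does not go through.
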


Thus, the environments in which rational inattention predict different outcomes than incomplete information are knife edge.\footnote{It is easy to construct generic examples where the set of separated BCEs is not close. In particular, the statement ``$\sBCE(\util)=\BCE(\util)$ for generic $\util$'' is false. It is also not true that ``$\cl(\sBCE(\util))=\BCE(\util)$ for all $\util$.''} An important caveat to this result is the notion of genericity we use: it is the most common for the static games we study in this paper, but also the most permissive. For example, according to this notion of genericity, many important economic applications---such as auctions or oligopolies---are non-generic. The notion of genericity is also not appropriate if the base game is not actually static but represents the strategic form of a primitive dynamic game. We discuss sBCE in non-generic environments in Section \ref{sec:non_generic}.

One might be tempted to think that  Theorem~\ref{thm:genericity} follows from indifferences being ``fragile.'' This intuition works, but only for the single agent case. When $I$ has one element, the set of \emph{strict} BCEs---i.e., the set of outcomes $\out$ such that $\BR(\out_{\act_\player})=\{\act_\player\}$ for every player $\player$ and $\act_\player\in \supp_\player(\out)$---is generically dense in the BCE set. Theorem~\ref{thm:genericity} then follows from noting that every strict BCE is separated.\footnote{See Online Appendix \ref{sec:generic_singla_agent} for a detailed argument.} 

The situation radically changes  when there are at least two players. The reason is that, with two or more players, indifferences emerge in equilibrium in generic fashion: when $I$ has more than one element, there exists an open set of games where \emph{no} BCE is strict. For example, consider the games in a neighborhood of Matching Pennies: such games have a unique correlated equilibrium (the fully-mixed Nash equilibrium) is which both players are indifferent between both actions.
Hence, to prove Theorem~\ref{thm:genericity} beyond the single-agent case, one cannot hope to show that indifferences are somewhat fragile. 

Our proof of Theorem~\ref{thm:genericity} combines two independent lemmas. The first lemma shows that for any BCE $\out$ and utility profile $\util$, there exists a perturbation of $\util$ that makes $\out$ separated. In other words, every BCE of a given game is the limit of separated BCEs of nearby games. 

The second lemma shows that for generic games, any BCE that is a limit of separated BCE in nearby games is also a separated BCE when the game is held fixed. Specifically, the second lemma shows that the correspondence 
$
\util\mapsto \cl(\sBCE(\util)),
$
which takes utilities to the closure of the sBCE set, 
generically is upper hemicontinuous (in fact, continuous). 

To prove this second lemma, we employ ideas from \cite{blume1994algebraic}, who study the algebraic geometry of Nash, perfect, and sequential equilibria. In particular, we use the Tarsky-Siedenberg Theorem to show that $\util\mapsto \cl(\sBCE(\util))$ has semi-algebraic graph, and so must be continuous for all utility profiles outside a closed low-dimensional manifold. Combined, these two lemmas imply that the set of games where the sBCE set is not dense in the BCE set is contained in a closed low-dimensional manifold. The theorem follows. 

Theorem~\ref{thm:genericity} implies that, in generic games, rational inattention and exogenous information are outcome equivalent. Next, we show this equivalence does not extend to players' welfare. In other words, even though the two knowledge regimes generically yield the same behavioral predictions, they can have very different welfare implications.

For $\util\in\mathbb{R}^{\Player\times\Act\times\Paystate}$, let $\RIval(\util)$ be the closure of set of attainable value vectors under rational inattention. By Theorem~\ref{thm:mon_tech}, these are given by the set of value vectors that lie between the uninformed value and the gross value of some limit of separated BCEs:\footnote{For two vector of real numbers $\const,\constb \in \mathbb{R}^{K}$ such that $\const \leq \constb$, we use the box, or ordered interval notation $[\const,\constb]=\{\constc \in \mathbb{R}^{K}: \const\leq \constc \leq \constb\}$ to denote the set of all vectors between $\const$ and $\constb$.}
\begin{equation}\label{eq:def_rival}
\RIval(\util) = \{v\in \left[\noinfoval(\out,\util),\grossval(\out,\util)\right]: \out\in\cl(\sBCE(\util))\},
\end{equation}
where $\noinfoval(\out,\util)$ and $\grossval(\out,\util)$ make explicit the dependence of uninformed and gross values on the players' utility functions.

We also denote by $\Exval(\util)$ the set of attainable value vectors under exogenous information. \citeapos{bergemann2016bayes} analysis implies this is given by the set of gross value vectors attainable in some BCE,
\begin{equation}\label{eq:def_exval}
\Exval(\util) =\left\{\grossval(\out,\util):\out\in\BCE(\util)\right\}.
\end{equation}

For an arbitrary $\util$, there is no simple relationship between $\RIval(\util)$ and $\Exval(\util)$: $\sBCE(\util)$ is a subset of $\BCE(\util)$, but $\grossval(\out,\util)$ is an element of $\left[\noinfoval(\out,\util),\grossval(\out,\util)\right]$, so one cannot easily conclude that $\RIval(\util)$ contains or is contained by $\Exval(\util)$, or neither.

In generic environments, the comparison is simpler: 

\begin{samepage}
\begin{theorem}\label{thm:robust difference in welfare}
    For generic $\util$,
    $
\Exval(\util) \subseteq \RIval(\util).
$
In addition, if $|\Player|\geq 2$, $|\Paystate| \geq 2$, and $|\Act_\player|\geq 2$ for all players $\player$, then the set of $\util$ for which 
    $
    \Exval(\util) \subset \RIval(\util)
    $
    has non-empty interior.
\end{theorem}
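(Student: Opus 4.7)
The plan is to handle the two parts in turn. For the inclusion claim, I would use Theorem~\ref{thm:genericity} together with the observation that $\BCE(\util)$ is closed (cut out by finitely many weak linear inequalities in $\out$). Generically, then, $\cl(\sBCE(\util))=\BCE(\util)$, and for any $\out\in\BCE(\util)$ the trivial inclusion $\grossval(\out,\util)\in[\noinfoval(\out,\util),\grossval(\out,\util)]$ places $\grossval(\out,\util)\in\RIval(\util)$, giving $\Exval(\util)\subseteq\RIval(\util)$.

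For the strict-inclusion claim, I would exhibit a concrete base game $\util^*$ with $\RIval(\util^*)\setminus\Exval(\util^*)\neq\varnothing$ and then lift to an open neighborhood by continuity. The natural candidate is the $2\times 2\times 2$ coordination game from Example~\ref{exa:intro} with the market action removed; when $|\Player|$, $|\Paystate|$, or $|\Act_\player|$ exceeds $2$, I would embed this example by adding strictly dominated actions, dummy players with a strictly dominant action, or payoff-irrelevant states. In this game, ``coordination on the inferior project'' (both players play $B$ in $\paystate_A$ and $A$ in $\paystate_B$) is a strict sBCE $\out^*$: each recommended action is the \emph{unique} best response to the corresponding conditional belief, so obedience and separation hold with strict inequalities. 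A direct calculation gives $\noinfoval_\player(\out^*,\util^*)=1/2<1=\grossval_\player(\out^*,\util^*)$, and a short argument (any BCE attaining $\grossval_1=1/2$ would have to support miscoordination events that violate player 2's obedience in state $\paystate_B$) establishes the key strict inequality $\min_{\out\in\BCE(\util^*)}\grossval_1(\out,\util^*)>\noinfoval_1(\out^*,\util^*)$. Choosing $v_1^*$ in the open interval $(\noinfoval_1(\out^*,\util^*),\min_{\out\in\BCE(\util^*)}\grossval_1(\out,\util^*))$ and any $v_2^*\in[\noinfoval_2(\out^*,\util^*),\grossval_2(\out^*,\util^*)]$, the vector $v^*=(v_1^*,v_2^*)$ is in $\RIval(\util^*)$ (witnessed by $\out^*$) but not in $\Exval(\util^*)$.

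To promote this pointwise statement to an open neighborhood, I would invoke three continuity facts. First, since $\out^*$ is a strict sBCE, the strict inequalities for obedience, uniqueness of best responses, and separation are preserved under small perturbations of $\util$; continuity of $\noinfoval(\out^*,\cdot)$ and $\grossval(\out^*,\cdot)$ then gives $v^*\in\RIval(\util)$ nearby. Second, the BCE correspondence is compact-valued and upper hemicontinuous (being the intersection of $\Out$ with finitely many closed affine half-spaces that move continuously with $\util$), so by Berge's theorem $\util\mapsto\min_{\out\in\BCE(\util)}\grossval_1(\out,\util)$ is continuous; hence the strict inequality $v_1^*<\min\grossval_1$ survives perturbation and $v^*\notin\Exval(\util)$ nearby. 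Third, intersecting the resulting open neighborhood with the open dense set produced by the proof of Theorem~\ref{thm:genericity} (on which $\Exval(\util)\subseteq\RIval(\util)$ holds) yields a non-empty open set on which $\Exval(\util)\subsetneq\RIval(\util)$.

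The main obstacle is establishing the strict inequality $\min_{\out\in\BCE(\util^*)}\grossval_1(\out,\util^*)>\noinfoval_1(\out^*,\util^*)$: the inequality is tight in the sense that both sides equal $1/2$ if one ignores obedience, so the argument must extract a strict gap directly from the obedience structure of the coordination game. Once this inequality is secured, the remaining steps are standard continuity arguments combined with an appeal to the open dense generic set supplied by the proof of Theorem~\ref{thm:genericity} to ensure that Part 1's inclusion is inherited throughout the neighborhood.
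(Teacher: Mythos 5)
Your proof of the first inclusion is exactly the paper's: invoke Theorem~\ref{thm:genericity} and note that $\grossval(\out,\util)\in[\noinfoval(\out,\util),\grossval(\out,\util)]$. For the strict inclusion, however, you take a genuinely different route. The paper compares \emph{sums} of values: it builds a designated game (a safe action plus a risky action with state-dependent externalities, all other actions strictly dominated) and a strict BCE $\out^*$ that is the \emph{unique} minimizer of $\sum_\player\noinfoval_\player(\cdot,\util^*)$ over $\BCE(\util^*)$, then shows $\inf_{\out\in\sBCE(\util^*)}\sum_\player\noinfoval_\player(\out,\util^*)<\min_{\out\in\BCE(\util^*)}\sum_\player\grossval_\player(\out,\util^*)$ and propagates this by a lower-semicontinuity/contradiction argument (Lemma~\ref{lem:robust worst case difference sufficiency}). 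You instead work coordinate-by-coordinate in the two-action coordination game of Example~\ref{exa:intro} with the market action deleted: you exhibit a single vector $v^*$ with $v_1^*$ strictly between $\noinfoval_1(\out^*,\util^*)=1/2$ and $\min_{\out\in\BCE(\util^*)}\grossval_1(\out,\util^*)$, which lands $v^*$ in $\RIval\setminus\Exval$ directly. Your approach is more concrete and avoids the uniqueness-of-the-minimizer condition, at the cost of (a) the embedding step for general $|\Player|,|\Paystate|,|\Act_\player|$ (the paper's construction handles arbitrary cardinalities in one stroke), and (b) carrying the full weight of the argument on the inequality $\min_{\out\in\BCE(\util^*)}\grossval_1(\out,\util^*)>1/2$, which you only sketch. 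That inequality is true but requires the actual linear-programming argument: summing player 1's two obedience constraints gives $\grossval_1(\out)\geq$ the miscoordination probability while trivially $\grossval_1(\out)\geq$ the coordination probability, so $\grossval_1(\out)\geq 1/2$, and equality forces, via \emph{both} players' obedience constraints, a system ($c\geq 2a$ and $a\geq 2c$ with $a+b=1/2$) that has no solution; compactness of $\BCE(\util^*)$ then yields the strict minimum. Two small repairs: choose $v_2^*$ in the \emph{open} interval $(\noinfoval_2(\out^*,\util^*),\grossval_2(\out^*,\util^*))$ so that box membership survives perturbation of $\util$; and your appeal to Berge's theorem overreaches, since the BCE correspondence is upper hemicontinuous and compact-valued but need not be lower hemicontinuous---what you actually need, and what holds, is only lower semicontinuity of $\util\mapsto\min_{\out\in\BCE(\util)}\grossval_1(\out,\util)$, which is exactly the property the paper uses. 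With those adjustments your argument is complete and correct.
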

\end{samepage}
Thus, in generic environments, rational inattention expands the set of achievable payoffs, and it does so in a non-trivial way (i.e., with strict inclusion) for a class of environments of positive measure. The result suggests that rational inattention and exogenous information may have different welfare implications even when they are outcome equivalent. In the next section, we make this idea concrete in an application to robust policy analysis.

We conclude this section by sketching the proof of Theorem~\ref{thm:robust difference in welfare}. The first part of the theorem immediately follows from Theorem~\ref{thm:genericity}, together with (\ref{eq:def_rival}) and (\ref{eq:def_exval}): for generic $\util$, $\cl(\sBCE(\util))=\BCE(\util)$, and therefore 
\[
\Exval(\util) \subseteq \{v\in \left[\noinfoval(\out,\util),\grossval(\out,\util)\right]: \out\in\BCE(\util)\}=\RIval(\util).
\]

To show the strict inclusion, the key step involves finding a utility profile $\util^*$ and a BCE $\out^*$ of the corresponding base game with two properties. First, each player has exactly two positive probability actions, both of which are strict best responses; thus, in particular, $\out^*$ is separated. And second, for every player $\player$, $\out^*$ is the unique minimizer of $i$'s uniformed value across all BCEs of $\util^*$. The first property implies that $\noinfoval_\player(\out^*,\util^*)<\grossval_\player(\out^*,\util^*)$ for all players $\player$---the equality $\noinfoval_\player(\out^*,\util^*)=\grossval_\player(\out^*,\util^*)$ holds if and only if $\player$ has a single action that is a best response to all action recommendations. With the aid of the second property, then we show that
\[
\inf_{\out\in \sBCE(\util^*)}\sum_\player \noinfoval_\player(\out,\util^*)=\sum_\player\noinfoval_\player (\out^*, \util^*) < \min_{\out\in \BCE(\util^*)}\sum_\player \grossval_\player(\out,\util^*).
\]
This allows us to deduce that $\Exval(\util^*) \subset \RIval(\util^*)$. Finally, we use a continuity argument to show that $\Exval(\util) \subset \RIval(\util)$ for all games $\util$ in a neighborhood of $\util^*$.

\section{Application: Robust Welfare Analysis}\label{sec:welfare analysis}

In this section we demonstrate how one can use our results to conduct robust welfare analysis in an economy where agents exhibit rational inattention.

We consider an economy that consists of a fixed set of agents, $\Player$, who play an information acquisition game, $(\BGame,\IT)$. The structure of the game depends on the policy enacted by a utilitarian social planner. The planner has a good understanding of the policy's material implications, i.e., she knows a given policy leads to a given $\BGame$. However, the planner is unsure about the accompanying information technology $\IT$.

We focus on two cases regarding the source of the agents' information. The planner either postulates that information is exogenously given, or that it is generated by rational inattention. In both cases, the planner identifies a policy with the corresponding base game $\BGame$ and employs a robust criterion that evaluates it according to the worst-case utilitarian welfare across all relevant information technologies $\IT$ and ensuing equilibria. 

For the exogenous information case, \citeapos{bergemann2016bayes} results imply one can find the  social value of a policy $\BGame$ by minimizing the sum of the agents' gross payoffs across all BCEs. Specifically, let $\welfex(\out)$ be the utilitarian welfare of an outcome $\out$ assuming the players'
payoffs are given by their gross value,
\[
\welfex(\out):=\sum_{\player} \grossval_\player(\out). 
\]
Then the planner's value of a base game $\BGame$ under exogenous information is
\[
\Welfex := \min_{\out \in \BCE} \welfex (\out),
\]
where $\BCE$ is the set of Bayes correlated equilibria of $\BGame$. Since the BCE set is defined by linear inequalities, and $\welfex (\out)$ is linear in $\out$, one can compute $\Welfex$ via linear programming.

What about a planner who postulates that agents are rationally inattentive? Theorem~\ref{thm:mon_tech} provides the answer: such a planner evaluates each base game according to the lowest sum of \emph{uninformed} values that is attainable in some \emph{separated} BCE. More precisely, for an outcome $\out$, let $\welfen(\out)$ be the utilitarian welfare implied by $\out$ in the base game $\BGame$ if players' payoffs are given by their uninformed value,
\[
\welfen(\out) := \sum_{\player} \noinfoval_\player(\out).
\]
Then Theorem~\ref{thm:mon_tech} suggests that a planner who assumes information is endogenous would evaluate each game according to the minimum of $\welfen(\out)$ across all separated BCEs $\out$,
\[
\inf_{\out \in \sBCE}\welfen(\out).
\]

Recall, however, that Theorem~\ref{thm:genericity} says that the separation constraint does not bind for generic games. As such, imposing the separation constraint is only appropriate if the planner is absolutely certain of the structure of the base game. Whereas such certainty might be justifiable in certain cases, here we take the perspective of a cautious planner who, in an economy where agents are rationally inattentive, evaluates $\BGame$ according to the worst-case value of $\welfen(\out)$ across all BCE,
\[
\Welfen := \min_{\out \in \BCE} \welfen (\out).
\]
Since the BCE set is defined by linear inequalities, and $\welfen (\out)$ is convex in $\out$, one can compute $\Welfen$ using convex programming.

Clearly, the planner's decisions under rational inattention and exogenous information can differ only if the two generate different values, that is, if $\Welfen \neq \Welfex$. Because each players' uninformed value is always lower than her gross value, we have $\Welfen \leq \Welfex$; that is, welfare under rational inattention is always lower than the welfare under exogenous information. Below we characterize when the inequality is strict in symmetric binary-action games.

A base game $\BGame=(\Player,\Paystate,\payprior,(\Act_\player,\util_\player)_{\player\in\Player})$  has \textbf{binary actions} if for every player $\player$, $\Act_\player$ contains two elements. It is \textbf{symmetric} if $\Act_\player = \Act_\playerb$ for all $\player,\playerb \in \Player$, and if for every permutation $\perm: \Player \rightarrow \Player$, player $\player$, action profile $\act$, and payoff state $\paystate$,
\[
\util_\player(\act_\perm,\paystate) = \util_{\perm(\player)}(\act,\paystate),
\]
where $\act_\perm:=(\act_{\perm(\playerb)})_{\playerb\in\Player}$ is the action profile such that each player $\playerb$ takes action $\act_{\perm(\playerb)}$. Given a symmetric base game, an outcome $\out \in \Delta(\Act\times \Paystate)$ is \textbf{symmetric} if for every permutation $\perm$, action profile $\act$, and payoff state $\paystate$, 
\[\out(\act,\paystate) = \out(\act_\perm,\paystate).\] 
We denote the set of symmetric outcomes by $\Outsym$. As one might hope, focusing on symmetric outcomes is without loss for welfare analysis in symmetric games:

\begin{lemma}\label{lem:symmetric worst case}
    If $\BGame$ is symmetric, then both $\min_{\out \in \BCE}\welfen(\out)$ and $\min_{\out \in \BCE}\welfex(\out)$ admit symmetric optimal solutions.
\end{lemma}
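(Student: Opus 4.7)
My plan is to take any optimizer of either minimization problem and symmetrize it by averaging over the permutation group, then verify that the resulting outcome is feasible, symmetric, and achieves a weakly better value of the objective.

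For each permutation $\perm:\Player\to\Player$, define the permuted outcome $\out^\perm\in\Out$ by $\out^\perm(\act,\paystate):=\out(\act_\perm,\paystate)$, and let $\tilde{\out}:=\frac{1}{|\Perm|}\sum_\perm \out^\perm$. The first step is to show that if $\out$ is a BCE, then so is every $\out^\perm$. A direct change of variable in \eqref{eq:OC} combined with the symmetry identity $\util_\player(\act_\perm,\paystate)=\util_{\perm(\player)}(\act,\paystate)$ shows that the obedience constraint for player $\player$ at the deviation $(\act_\player,\actb_\player)$ under $\out^\perm$ coincides with the obedience constraint for player $\perm^{-1}(\player)$ at the same action pair under $\out$. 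Because \eqref{eq:OC} is a linear inequality, $\BCE$ is convex, and hence $\tilde{\out}$ is also a BCE. Moreover $\tilde{\out}$ is symmetric by construction: for any $\perm'$, $\tilde{\out}^{\perm'}$ just relabels the terms of the average, so $\tilde{\out}^{\perm'}=\tilde{\out}$.

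The second step compares objective values. The same change-of-variable argument yields $\grossval_\player(\out^\perm)=\grossval_{\perm^{-1}(\player)}(\out)$ and $\noinfoval_\player(\out^\perm)=\noinfoval_{\perm^{-1}(\player)}(\out)$, so summing over $\player$ gives
\[
\welfex(\out^\perm)=\welfex(\out)\quad\text{and}\quad\welfen(\out^\perm)=\welfen(\out)\quad\text{for every }\perm.
\]
Since $\welfex$ is linear in $\out$, this forces $\welfex(\tilde{\out})=\welfex(\out)$. Since each $\noinfoval_\player$ is a maximum of finitely many linear functionals of $\out$, $\welfen$ is convex, and Jensen's inequality delivers
\[
\welfen(\tilde{\out})\leq\frac{1}{|\Perm|}\sum_\perm \welfen(\out^\perm)=\welfen(\out).
\]
Thus, starting from any optimizer $\out^*$ of either program, $\tilde{\out}^*\in\BCE$ is symmetric and attains the same value (for $\welfex$) or a weakly smaller value (for $\welfen$), and hence is itself an optimal symmetric solution.

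The only delicate point I expect is the bookkeeping in the first step: one must carefully track how $(\act,\paystate)\mapsto(\act_\perm,\paystate)$ interacts with the ``hold $\act_\player$ fixed, sum over $\act_{-\player}$'' structure of \eqref{eq:OC} and of $\noinfoval_\player$. Once the relations $\noinfoval_\player(\out^\perm)=\noinfoval_{\perm^{-1}(\player)}(\out)$ and $\grossval_\player(\out^\perm)=\grossval_{\perm^{-1}(\player)}(\out)$ are verified, the remainder of the argument is standard convexity plus Jensen.
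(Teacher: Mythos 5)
Your proposal is correct and follows essentially the same route as the paper: symmetrize by averaging $\out$ over the permutation group, use the symmetry of the game to show each $\out^\perm$ is a BCE with $\welfex(\out^\perm)=\welfex(\out)$ and $\welfen(\out^\perm)=\welfen(\out)$, then invoke convexity of the BCE set, affineness of $\welfex$, and convexity of $\welfen$ (Jensen). The paper phrases it as a slightly stronger claim (every BCE admits a symmetric BCE with equal $\welfex$ and weakly smaller $\welfen$) rather than starting from an optimizer, but the argument is the same.
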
 

The next result characterizes the binary-action symmetric games for which rational inattention yields strictly lower worst-case welfare than exogenous information. 

\begin{proposition}\label{pro:welfare_binary_symmetric}
    Let $\BGame$ be a symmetric, binary-action base game. Then, $\Welfen<\Welfex$ if and only if all $\out^* \in \argmin_{\out \in \Outsym} \welfen(\out)$ satisfy the following condition:
    \begin{equation}\label{eq:welfare_binary_symmetric}
    \act_\player\in\supp_\player(\out^*)\text{ and }\BR(\out_{\act_\player}^*) = \{\act_\player\}\quad\quad \text{for all }\player\in \Player \text{ and } \act_\player \in \Act_\player.
    \end{equation}
    Moreover, in this case, 
    \[
    \argmin_{\out \in \Outsym} \welfen(\out) \subseteq \argmin_{\out \in \BCE}\welfen(\out).
    \]
\end{proposition}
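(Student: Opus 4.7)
The plan is to apply Lemma~\ref{lem:symmetric worst case} to reduce both welfare minimizations to symmetric outcomes, and then to establish a key equivalence using symmetry plus the binary action structure: for a symmetric binary-action BCE $\out^*$, one has $\welfen(\out^*) = \welfex(\out^*)$ if and only if $\out^*$ fails condition \eqref{eq:welfare_binary_symmetric}. The ``only if'' direction uses that either some action is outside the common support (by symmetry all players play a single action, and $\noinfoval_\player = \grossval_\player$ trivially) or some recommendation induces shared indifference across players (by symmetry), in which case a direct computation exploiting the indifference shows that the uninformed deviation to the alternative action achieves the gross value. The ``if'' direction uses that strict best responses imply strict obedience, so each uninformed single-action deviation is strictly worse than the gross payoff, yielding $\noinfoval_\player(\out^*)<\grossval_\player(\out^*)$ for every $\player$.

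With this equivalence in hand, the reverse direction is direct. Assume every $\out^* \in \argmin_{\out \in \Outsym} \welfen(\out)$ satisfies \eqref{eq:welfare_binary_symmetric}. Each such $\out^*$ is then a strict-best-response BCE with $\welfen(\out^*) < \welfex(\out^*)$, and because all minimizers lie in $\BCE$, $\min_{\out \in \Outsym}\welfen(\out) = \min_{\out \in \Outsym \cap \BCE}\welfen(\out) = \Welfen$ by Lemma~\ref{lem:symmetric worst case}. If one supposes $\Welfen = \Welfex$ and picks a symmetric $\welfex$-minimizing BCE $\out^{**}$ (Lemma~\ref{lem:symmetric worst case}), the chain $\welfen(\out^{**}) \leq \welfex(\out^{**}) = \Welfex = \Welfen$ forces $\welfen(\out^{**}) = \Welfen$. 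Then $\out^{**}$ is itself a $\welfen$-minimizer over $\Outsym$ with $\welfen(\out^{**}) = \welfex(\out^{**})$, which by the equivalence violates \eqref{eq:welfare_binary_symmetric}, a contradiction. Hence $\Welfen < \Welfex$. The ``moreover'' clause then follows immediately: under $\Welfen < \Welfex$, every minimizer over $\Outsym$ lies in $\BCE$ and attains $\welfen$ equal to $\Welfen$, so it belongs to $\argmin_{\out \in \BCE}\welfen(\out)$.

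For the forward direction I argue the contrapositive: suppose some $\out^* \in \argmin_{\out \in \Outsym}\welfen(\out)$ fails \eqref{eq:welfare_binary_symmetric}; the goal is to deduce $\Welfen = \Welfex$. If $\out^* \in \BCE$, the equivalence yields $\welfen(\out^*) = \welfex(\out^*) \geq \Welfex$, while minimality gives $\welfen(\out^*) \leq \min_{\out \in \Outsym \cap \BCE}\welfen(\out) = \Welfen$; combined with the trivial $\Welfen \leq \Welfex$, we obtain $\Welfen = \Welfex$. The remaining sub-case, $\out^* \notin \BCE$, is the technical crux. The plan is to exhibit a BCE minimizer that also fails \eqref{eq:welfare_binary_symmetric}, reducing back to the previous sub-case. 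Consider the minimizer face $F = \argmin_{\out \in \Outsym}\welfen(\out)$, a convex polytope. If $F \cap \BCE \neq \varnothing$, travel along the segment in $F$ from $\out^*$ to any BCE in $F$: at the first value of the parameter where every previously violated obedience inequality becomes satisfied, at least one such inequality binds, giving a BCE in $F$ with a tight obedience constraint, hence shared indifference across players by symmetry, so \eqref{eq:welfare_binary_symmetric} fails and the BCE sub-case applies. If instead $F \cap \BCE = \varnothing$, one argues directly that every symmetric BCE must fail \eqref{eq:welfare_binary_symmetric}, so by the equivalence $\welfen = \welfex$ on all symmetric BCEs, giving $\Welfen = \Welfex$ via Lemma~\ref{lem:symmetric worst case}. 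This last step is the main obstacle: it requires leveraging the fact that $\welfen$ depends only on the players' marginals over $(\act_{-\player},\paystate)$, combined with the binary-action polyhedral structure of $\Outsym$, to rule out the existence of any strict-best-response symmetric BCE disjoint from $F$.
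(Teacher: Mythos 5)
Your overall architecture matches the paper's: the same key observation that, for a symmetric binary-action BCE, $\welfen(\out)=\welfex(\out)$ holds exactly when \eqref{eq:welfare_binary_symmetric} fails, combined with Lemma~\ref{lem:symmetric worst case} and a line-segment argument linking the constrained and unconstrained minimizations. Your reverse direction and the ``moreover'' clause are correct. But there is a genuine gap in the forward direction, precisely where you flag ``the main obstacle'': the sub-case in which $F=\argmin_{\out\in\Outsym}\welfen(\out)$ contains no BCE. Your proposed target there --- that \emph{every} symmetric BCE fails \eqref{eq:welfare_binary_symmetric} --- is both stronger than needed and not something you can expect to prove: nothing prevents the game from having a strict, full-support symmetric BCE far away from $F$ (the regime-change example has plenty of strict BCEs), so ``$\welfen=\welfex$ on all symmetric BCEs'' is the wrong claim. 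What you actually need is only that \emph{some minimizer of $\welfen$ over $\BCEsym$} fails \eqref{eq:welfare_binary_symmetric}, since that already yields $\Welfex\leq\welfex(\out)=\welfen(\out)=\Welfen\leq\Welfex$.

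The fix is to reorient your segment. Instead of moving within $F$ (which requires $F\cap\BCE\neq\varnothing$), anchor the segment at a minimizer $\out$ of $\welfen$ over $\BCEsym$ --- which always exists --- and travel toward your unconstrained minimizer $\out^*$ that fails \eqref{eq:welfare_binary_symmetric}. Set $\out^t=(1-t)\out+t\out^*$ and $s=\max\{t:\out^t\in\BCEsym\}$. Convexity of $\welfen$ (a sum of maxima of linear functionals) gives
\[
\welfen(\out^s)\leq(1-s)\welfen(\out)+s\welfen(\out^*)\leq\welfen(\out),
\]
so $\out^s$ is still a minimizer over $\BCEsym$ even though the segment may leave $F$. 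If $s=1$ then $\out^*$ itself is such a minimizer and fails \eqref{eq:welfare_binary_symmetric} by assumption. If $s<1$, then $\out^s$ cannot satisfy \eqref{eq:welfare_binary_symmetric}: full support plus unique best responses would make every obedience inequality strict at $\out^s$, and since these inequalities are affine in $t$ one could extend to $\out^{s+\epsilon}\in\BCEsym$, contradicting maximality of $s$; hence some obedience constraint binds at $\out^s$, which in a binary-action game means either an action outside the support or a non-unique best response. Either way a constrained minimizer fails \eqref{eq:welfare_binary_symmetric}, and your BCE sub-case argument closes the proof. This single argument subsumes both of your cases B and C, which is exactly how the paper handles it.
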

Thus, one can check whether $\Welfen<\Welfex$ by examining the minimizers of $\welfen(\out)$ among \emph{all} symmetric outcomes $\out$, ignoring the players' obedience constraints. In particular, one needs to check whether all these minimizers recommend both actions to every player, and only send recommendations that induce unique best responses. An immediate implication of the above proposition is that, in a symmetric binary-action game, if $\Welfen<\Welfex$, then
    \[
    \Welfen = \min_{\out \in \Outsym} \welfen(\out).
    \]

We now briefly explain the proposition's proof. The key observation is that in a binary action game, a BCE $\out$ satisfies \eqref{eq:welfare_binary_symmetric} if and only if $\grossval_\player(\out) > \noinfoval_\player(\out)$ holds for every player $\player$. To get intuition for the ``if" direction, note that having \emph{two} recommendations that lead to strict best responses means players get a strictly positive benefit from following them. This benefit creates a wedge between the gross value $\grossval_\player(\out)$, which accounts for the value of information, and the uninformed value $\noinfoval_\player(\out)$, which does not. For the converse direction, note that a violation of \eqref{eq:welfare_binary_symmetric} means some player $\player$ has an action that is optimal across all of the mediator's recommendations. As such, player $\player$ loses nothing by ignoring the mediator's recommendations and taking that action. In other words, player $\player$'s gross value equals her uninformed value. 

Armed with the above observation, we prove the proposition in two steps. The first step shows $\Welfen<\Welfex$ holds if and only if all optimal solutions of $\min_{\out \in \BCEsym} \welfen(\out)$ satisfy (\ref{eq:welfare_binary_symmetric}). This step follows from applying the above-mentioned observation to symmetric outcomes. The second step shows that all optimal solutions of $\min_{\out \in \BCEsym} \welfen(\out)$ satisfy (\ref{eq:welfare_binary_symmetric}) if and only if all optimal solutions of $ \min_{\out \in \Outsym} \welfen(\out)$ satisfy (\ref{eq:welfare_binary_symmetric}). Loosely speaking, the reason is as follows: if the obedience constraint does not bind at the optimum---as (\ref{eq:welfare_binary_symmetric}) dictates---then it can be relaxed, and therefore minimizing over $\out\in \BCEsym$ is the same as minimizing over $\out\in \Outsym$.

Proposition~\ref{pro:welfare_binary_symmetric} enables us to find circumstances where the planner's optimal policy depends on whether she believes players' information is exogenously given, or generated by rational inattention. We demonstrate this fact below in a regime change game. 

\begin{example}\label{exa:regime change}
We consider a regime change game whereby a status quo is abandoned if a sufficiently large number of players take an action against it. Such games are well-studied and have been used to model a variety of social phenomena, including currency crises, bank runs, debt crises, and political revolts.\footnote{See, e.g., \cite*{obstfeld1996models,Morris1998unique,goldstein2005demand,morris2023inspiring}.} 

In our application, there are $\Playernum$ identical investors, $\player \in \Player =\{1,\ldots,\Playernum\}$, each of which decides whether to speculate against (i.e., attack) a distressed financial institution ($\act_\player =1$) or not ($\act_\player = 0$). Speculating costs $\brcost\in (0,1)$. If enough investors speculate, the institution fails (i.e., the attack succeeds), generating a profit of $1$ to the speculators, and an externality of $-\brext$ to all passive investors, where $\brext\in (0,\infty)$. The payoff state $\paystate \in \Paystate \subseteq \{1,\ldots,\Playernum\}$, determines the number of speculators required for the attack to succeed. We assume that $\min \Paystate > 1$, and $\max \Paystate < \Playernum - 1$, meaning no single investor can go against the will of all the others.\footnote{A fortiori, $n>3$ because $2\leq \min \Paystate\leq\max \Paystate < \Playernum - 1$.} We summarize these payoffs below:

\begin{center}
\begin{game}{2}{2}
        &     $\sum_\playerb \act_\playerb \geq \paystate $   &    $\sum_\playerb \act_\playerb < \paystate $\\
$\util_\player(1,\act_{-\player},\paystate)$  &    $1-\brcost$  &  $-\brcost$\\
$\util_\player(0,\act_{-\player},\paystate)$  &    $-\brext$ &   $0$
\end{game}
\end{center}

\medskip

We focus on finding conditions under which there is a difference between the worst-case welfare under rational inattention and exogenous information.\footnote{It is easy to verify that, in this example, rational inattention and exogenous information are outcome equivalent, that is, the sBCE set is dense in the BCE set. Indeed, the regime change game admits a strict BCE $\out$ where all players take both actions with positive probability (e.g., a convex combinations of the pure Nash equilibria in which everyone speculates or no-one speculates). Thus, any BCE $\outb$ can be approximated by a separated BCE $(1-\epsilon)\outb+\epsilon \out$ as $\epsilon\rightarrow 0$.} 
By Proposition~\ref{pro:welfare_binary_symmetric}, answering this question requires us to minimize the sum of the uninformed values across all symmetric outcomes, ignoring obedience constraints. The following claim characterizes the solutions to this problem.
\begin{claim}\label{claim:br unconstrained min}
    In the regime change game, a symmetric $\out^*$ minimizes $\welfen(\out)$ across all outcomes $\out \in \Outsym$ if and only if the following conditions hold: for all payoff state $\paystate$,
    \begin{align}
    \out^* \left( \left\{(\act,\paystate) : \paystate-1\leq\sum_\player\act_\player\leq\paystate \right\} \right) & = 0, \label{eq:br not pivotal} \\ 
    \out^* \left( \left\{ (\act,\paystate) : \sum_{\player}\act_\player > \paystate \right\} \right)& =  \frac{\brcost}{1+\brext}.\label{eq:br ex-ante fail prob}
    \end{align}    
\end{claim}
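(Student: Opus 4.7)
By symmetry, all players have the same uninformed value under any $\out \in \Outsym$, so $\welfen(\out) = \Playernum\,\noinfoval_1(\out)$ and $\noinfoval_1(\out) = \max(U_1(\out), U_0(\out))$, where $U_\actb(\out) := \sum_{\act,\paystate} \util_1(\actb,\act_{-1},\paystate)\,\out(\act,\paystate)$ is player~$1$'s expected payoff under the unilateral deviation to the constant action $\actb\in\{0,1\}$. The central computation is to rewrite $U_1$ and $U_0$ in closed form using the exchangeability of $(\act_\playerb)_\playerb$ given $\paystate$: letting $S:=\sum_\playerb \act_\playerb$, exchangeability yields $P(\act_1 = 1 \mid S = s,\paystate) = s/\Playernum$, and a direct expansion of each $U_\actb$ gives
\[
U_1(\out) = \alpha + \beta - \brcost, \qquad U_0(\out) = -\brext(\alpha - \gamma),
\]
with $\alpha := \out(\{S\geq\paystate\})$, $\beta := \sum_\paystate \payprior(\paystate)\bigl(1 - (\paystate-1)/\Playernum\bigr) P_\paystate(S=\paystate-1)$, and $\gamma := \sum_\paystate \payprior(\paystate)(\paystate/\Playernum) P_\paystate(S=\paystate)$, where $P_\paystate$ is the conditional law of $S$ given $\paystate$. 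The assumptions $\min\Paystate\geq 2$ and $\max\Paystate\leq \Playernum-2$ make the weights multiplying $P_\paystate(\cdot)$ strictly positive, so $\beta,\gamma\geq 0$, and $\beta=\gamma=0$ is equivalent to condition~\eqref{eq:br not pivotal}. Under~\eqref{eq:br not pivotal} one has $P(S>\paystate)=P(S\geq\paystate)=\alpha$, so condition~\eqref{eq:br ex-ante fail prob} becomes $\alpha=\brcost/(1+\brext)$.

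\paragraph*{Optimization.} Since $\beta,\gamma\geq 0$,
\[
\noinfoval_1(\out) \ \geq\ \max\bigl(\alpha - \brcost,\ -\brext\,\alpha\bigr),
\]
and the right-hand side, viewed as a function of $\alpha\in[0,1]$, is minimized at the crossing point $\alpha=\brcost/(1+\brext)$ with value $-\brcost\brext/(1+\brext)$. Hence $\noinfoval_1(\out)\geq -\brcost\brext/(1+\brext)$ for every $\out\in\Outsym$. For equality, both $\alpha+\beta\leq \brcost/(1+\brext)$ and $\alpha-\gamma\geq \brcost/(1+\brext)$ must hold; adding these gives $\beta+\gamma\leq 0$, which, combined with $\beta,\gamma\geq 0$, forces $\beta=\gamma=0$ and $\alpha=\brcost/(1+\brext)$. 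Conversely, substituting these values into the formulas for $U_1$ and $U_0$ yields $U_1(\out)=U_0(\out)=-\brcost\brext/(1+\brext)$, so any $\out^*$ satisfying \eqref{eq:br not pivotal} and \eqref{eq:br ex-ante fail prob} attains the lower bound and is a minimizer; together with the necessity argument above, this establishes both directions of the claim.

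\paragraph*{Existence and main obstacle.} To see that the set of symmetric minimizers is nonempty, take the $\out^*$ whose conditional law of $S$ given each $\paystate$ places mass $1-\brcost/(1+\brext)$ on $S=0$ and mass $\brcost/(1+\brext)$ on $S=\Playernum$, independently of $\paystate$; the bounds $\min\Paystate\geq 2$ and $\max\Paystate\leq \Playernum-2$ ensure that $\{0,\Playernum\}\cap\{\paystate-1,\paystate\}=\varnothing$ for all $\paystate$ in the support, so \eqref{eq:br not pivotal} holds, and \eqref{eq:br ex-ante fail prob} holds by construction. The main delicate step is the symmetry-based bookkeeping in the first paragraph: one must correctly marginalize player~$1$ out of $\out$ and use the exchangeability identity $P(\act_1=1\mid S=s,\paystate)=s/\Playernum$ to express the deviator's ``held-out'' events $\{\sum_{\playerb\neq 1}\act_\playerb\geq\paystate-1\}$ and $\{\sum_{\playerb\neq 1}\act_\playerb\geq\paystate\}$ in terms of the marginal law of $(S,\paystate)$. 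Once this reduction to the three scalars $(\alpha,\beta,\gamma)$ is in place, the remainder is an elementary one-variable optimization.
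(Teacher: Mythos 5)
Your proof is correct, and it takes a genuinely different route from the paper's. The paper proceeds in three steps: a perturbation argument (mixing toward the all-attack or no-attack outcome) showing that any minimizer must make each investor indifferent between the two constant deviations; an explicit solution of the resulting constrained problem $\min \actval_\player(0,\out)$ subject to $\actval_\player(0,\out)=\actval_\player(1,\out)$, phrased in terms of the held-out sums $\sum_{\playerb\neq\player}\act_\playerb$; and a separate claim converting those conditions into the statement's conditions on $\sum_\player\act_\player$. You instead reduce both deviation payoffs in closed form to the three scalars $(\alpha,\beta,\gamma)$ via the exchangeability identity $P(\act_1=1\mid S=s,\paystate)=s/\Playernum$, and then run a two-line min-max argument: the lower bound $\max(\alpha-\brcost,-\brext\alpha)\geq -\brcost\brext/(1+\brext)$, with equality forcing $\beta+\gamma\leq 0$ and hence $\beta=\gamma=0$, $\alpha=\brcost/(1+\brext)$. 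This collapses the paper's indifference lemma (which falls out of your characterization rather than being assumed), its constrained optimization, and its notational-conversion claim into a single computation; the exchangeability bookkeeping that the paper isolates as Claim~\ref{claim:equiv_cond_boc} is absorbed into your derivation of $\beta$ and $\gamma$. The paper's route has the virtue of making the economic content (ex-ante indifference of a never-pivotal investor) an explicit intermediate result, which it then reuses in the discussion of Claim~\ref{claim:br different welfare char}; your route is shorter and makes the uniqueness of the optimal $\alpha$ and the role of the support assumptions $\min\Paystate>1$, $\max\Paystate<\Playernum-1$ (strict positivity of the weights on $\beta$ and $\gamma$, and feasibility of the all-or-nothing outcome) completely transparent. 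Both arguments are complete and reach the same characterization.
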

The two optimality conditions have simple interpretations. First, no investor is ever pivotal. And second, the attack succeeds with probability $\brcost/(1+\brext)$. To obtain these conditions, we show that $\out^*$ minimizes $\welfen(\out)$ across all $\out \in \Outsym$ only if each individual investor $\player$ is \emph{indifferent} between never speculating and always speculating when other investors play according to $\out^*$. Minimizing across all symmetric outcomes satisfying this condition then delivers the result.

Finding a symmetric $\out^*$ that satisfies the obedience constraints in addition to the claim's conditions is easy: have all investors attack together with probability $\brcost/(1+\brext)$ regardless of the state, and no one speculates otherwise. Calculating the sum of the uninformed values from  $\out^*$ immediately give the worst-case welfare under rational inattention,
\begin{equation}\label{eq:invariance_ex}
\Welfen = \welfen(\out^*) = -\frac{\Playernum\brext\brcost}{1+\brext}.
\end{equation}
Hence, under rational inattention, the planner's value decreases in the size of the externality $\brext$ and the cost of betting on the financial institution's demise $\brcost$, but does not depend on $\payprior$; that is, the planner's value does not depend on the institution's fundamentals. 

Next we argue a planner who views information as exogenously given may adopt different policies than a planner who thinks investors are rationally inattentive. Towards this goal, we first specialize Proposition~\ref{pro:welfare_binary_symmetric} to the current setting. 

\begin{claim}\label{claim:br different welfare char}
In a regime change game $\BGame$ one has $\Welfen(\BGame) < \Welfex(\BGame)$ if and only if every symmetric $\out$ that satisfies \eqref{eq:br not pivotal} and \eqref{eq:br ex-ante fail prob} must also satisfy
\begin{equation}\label{eq:br large conditional failure prob}
    \out_{\act_\player = 1}\bigg\{\sum_\playerb \act_\playerb \geq \paystate \bigg\} > \frac{\brcost}{1+\brext}.
\end{equation}    
\end{claim}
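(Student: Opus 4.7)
The plan is to combine Proposition~\ref{pro:welfare_binary_symmetric} with Claim~\ref{claim:br unconstrained min} and then to translate condition \eqref{eq:welfare_binary_symmetric} into the form of \eqref{eq:br large conditional failure prob}. Claim~\ref{claim:br unconstrained min} identifies $\argmin_{\out\in\Outsym}\welfen(\out)$ with the set of symmetric outcomes obeying \eqref{eq:br not pivotal} and \eqref{eq:br ex-ante fail prob}. Hence Proposition~\ref{pro:welfare_binary_symmetric} tells us that $\Welfen<\Welfex$ holds if and only if every such $\out^*$ also satisfies \eqref{eq:welfare_binary_symmetric}. It will therefore suffice to show that, for any symmetric $\out^*$ satisfying \eqref{eq:br not pivotal} and \eqref{eq:br ex-ante fail prob}, condition \eqref{eq:welfare_binary_symmetric} is equivalent to \eqref{eq:br large conditional failure prob}.

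Fix such an $\out^*$, and write $\mu:=\out^*(\act_\player=1)$, which is independent of $\player$ by symmetry. I would first rule out $\mu\in\{0,1\}$. If $\mu=0$, then $\sum_\playerb \act_\playerb=0$ $\out^*$-a.s., so $\out^*\{\sum_\playerb \act_\playerb\geq\paystate\}=0$, contradicting \eqref{eq:br ex-ante fail prob} since $\brcost/(1+\brext)>0$. If $\mu=1$, then $\sum_\playerb \act_\playerb=\Playernum>\max\Paystate$ a.s., so $\out^*\{\sum_\playerb \act_\playerb\geq\paystate\}=1$, again incompatible with \eqref{eq:br ex-ante fail prob} since $\brcost/(1+\brext)<1$. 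Hence $\mu\in(0,1)$, and both actions lie in $\supp_\player(\out^*)$.

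Next I would leverage the non-pivotal condition \eqref{eq:br not pivotal}: because $\sum_\playerb \act_\playerb\notin\{\paystate-1,\paystate\}$ $\out^*$-almost surely, a unilateral deviation by player~$\player$ cannot flip the attack-success event $\{\sum_\playerb \act_\playerb\geq\paystate\}$. Setting $A:=\out^*_{\act_\player=1}\{\sum_\playerb \act_\playerb\geq\paystate\}$, a direct computation then gives expected payoffs of $A-\brcost$ from $\act_\player=1$ and $-\brext A$ from $\act_\player=0$ under $\out^*_{\act_\player=1}$. Hence $\BR(\out^*_{\act_\player=1})=\{1\}$ iff $(1+\brext)A>\brcost$, which is exactly \eqref{eq:br large conditional failure prob}. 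A symmetric argument shows $\BR(\out^*_{\act_\player=0})=\{0\}$ iff $\out^*_{\act_\player=0}\{\sum_\playerb \act_\playerb\geq\paystate\}<\brcost/(1+\brext)$. Since $\mu\in(0,1)$, the total-probability identity
\[
\mu A + (1-\mu)\,\out^*_{\act_\player=0}\{\sum_\playerb \act_\playerb\geq\paystate\}=\brcost/(1+\brext),
\]
which follows from \eqref{eq:br ex-ante fail prob}, makes the two strict-best-response inequalities equivalent. Combined with the support conclusion from the previous paragraph, this yields \eqref{eq:welfare_binary_symmetric}$\Leftrightarrow$\eqref{eq:br large conditional failure prob} for $\out^*$, which completes the proof.

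The main subtlety is the careful use of \eqref{eq:br not pivotal} in the best-response calculation. Because the support of $\out^*$ avoids both $\sum_\playerb \act_\playerb=\paystate$ and $\sum_\playerb \act_\playerb=\paystate-1$, a single-player deviation cannot push the realized configuration across the regime-change threshold; this is what collapses the two strict-best-response inequalities in \eqref{eq:welfare_binary_symmetric} to the single conditional-probability inequality \eqref{eq:br large conditional failure prob}. The remaining steps are elementary algebra and the boundary-case check in the second paragraph.
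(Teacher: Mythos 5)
Your proposal is correct and follows essentially the same route as the paper: reduce via Proposition~\ref{pro:welfare_binary_symmetric} and Claim~\ref{claim:br unconstrained min}, verify both actions are in the support using \eqref{eq:br ex-ante fail prob}, use the non-pivotality condition \eqref{eq:br not pivotal} to collapse the two strict obedience inequalities, and invoke the law of total probability to show they are equivalent to the single condition \eqref{eq:br large conditional failure prob}. The only cosmetic difference is that the paper phrases the best-response computation in terms of $\sum_{\playerb\neq\player}\act_\playerb$ and cites Claim~\ref{claim:equiv_cond_boc}, whereas you argue directly that a unilateral deviation cannot flip the success event; the substance is identical.
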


For intuition, we first explain why the claim's conditions are sufficient; that is, why \eqref{eq:br large conditional failure prob} holding for all symmetric outcomes satisfying \eqref{eq:br not pivotal} and \eqref{eq:br ex-ante fail prob} implies that $\Welfen < \Welfex$. 
Note that an investor that is never pivotal is indifferent between not speculating and speculating if and only if she believes the institution is going to fail with a probability of exactly $\brcost/(1+\brext)$. Therefore, whenever \eqref{eq:br not pivotal}, \eqref{eq:br ex-ante fail prob}, and \eqref{eq:br large conditional failure prob} all hold, investors are ex-ante indifferent between speculating and not, but strictly prefer to speculate conditional on getting a "speculate" recommendation. Moreover, by Bayes rule, the probability the institution fails conditional on the mediator telling an investor not to speculate is strictly below $\brcost/(1+\brext)$. So, \eqref{eq:br not pivotal}, \eqref{eq:br ex-ante fail prob} and \eqref{eq:br large conditional failure prob} together imply that both the "speculate" and the "do not speculate" recommendations lead investors to have a strict best response. Since \eqref{eq:br not pivotal} and \eqref{eq:br ex-ante fail prob} are equivalent to minimizing $\welfen$ over $\Outsym$, we obtain that the claim's condition implies Proposition~\ref{pro:welfare_binary_symmetric}-\eqref{pro:welfare_binary_symmetric:symmetric max condition}, which in turn delivers $\Welfen < \Welfex$. For the converse direction, suppose one can find a symmetric outcome $\out$ that satisfies \eqref{eq:br not pivotal} and \eqref{eq:br ex-ante fail prob}, but violates \eqref{eq:br large conditional failure prob}. Since \eqref{eq:br large conditional failure prob} fails for $\out$, not speculating must be a best response following a "speculate" recommendation; that is, $\out$ violates condition \eqref{eq:welfare_binary_symmetric:unique best response}. Proposition~\ref{pro:welfare_binary_symmetric} then implies $\Welfen(\BGame) = \Welfex(\BGame)$.

Next we argue a planner who views information as exogenous may adopt different policies than a planner who thinks investors are rationally inattentive. To do so, we characterize when these two knowledge environments attain the same worst-case welfare. The appendix provides a general characterization (see Claim \ref{claim:br general welfare difference inequality}); the claim below specializes this characterization to the case where $\Paystate$ is binary. 

\begin{claim}\label{claim:br two states char}
    Suppose $\Paystate=\{\minpaystate,\maxpaystate\}$, where $\maxpaystate\geq\minpaystate$. Then $\Welfen = \Welfex$ if and only if $\maxpaystate-\minpaystate \geq 3$ and 
    \begin{equation}\label{eq:br binary welfare diff condition}
         1 - \frac{1}{3}\left(\maxpaystate-\minpaystate\right)\payprior(\maxpaystate) \leq \frac{\brcost}{1+\brext} \leq \frac{1}{3}\left(\maxpaystate-\minpaystate\right)\left(1-\payprior(\maxpaystate)\right).
    \end{equation}
\end{claim}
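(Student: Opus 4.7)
My plan is to use the contrapositive of Claim~\ref{claim:br different welfare char}: $\Welfen = \Welfex$ holds iff some symmetric outcome $\out$ satisfies \eqref{eq:br not pivotal} and \eqref{eq:br ex-ante fail prob} but violates \eqref{eq:br large conditional failure prob}. Writing $\hyp := \brcost/(1+\brext)\in(0,1)$, I parameterize such a symmetric outcome state-by-state through the triple $(\rho_\paystate,\mu^S_\paystate,\mu^F_\paystate)$: here $\rho_\paystate$ is the conditional probability of a successful attack in state $\paystate$, and $\mu^S_\paystate$ (resp.\ $\mu^F_\paystate$) is the conditional expected number of speculators given success (resp.\ failure). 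No-pivotality together with ``success $\Leftrightarrow k>\paystate$'' force $\mu^S_\paystate \geq \paystate+1$ and $\mu^F_\paystate \leq \paystate-2$, with both extremes lying in $\{0,\dots,\Playernum\}$ because $\minpaystate \geq 2$ and $\maxpaystate \leq \Playernum - 2$.

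Exploiting symmetry (each of the $k$ speculators is equally likely to be player $\player$), the failure of \eqref{eq:br large conditional failure prob} is equivalent to
\[
\sum_\paystate \payprior(\paystate)\rho_\paystate \mu^S_\paystate \;\leq\; \hyp\sum_\paystate \payprior(\paystate)\bigl[\rho_\paystate\mu^S_\paystate + (1-\rho_\paystate)\mu^F_\paystate\bigr].
\]
Since $\hyp<1$, raising $\mu^S_\paystate$ increases the LHS strictly more than the RHS, while raising $\mu^F_\paystate$ increases only the RHS; so the inequality is easiest to satisfy at the extremes $\mu^S_\paystate=\paystate+1$ and $\mu^F_\paystate=\paystate-2$. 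Plugging these in, using \eqref{eq:br ex-ante fail prob} to substitute $\sum_\paystate \payprior(\paystate)\rho_\paystate = \hyp$, and simplifying, the inequality reduces to
\[
\payprior(\maxpaystate)(\maxpaystate-\minpaystate)(\hyp-\rho_\maxpaystate) \;\geq\; 3\hyp(1-\hyp).
\]

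What remains is a one-dimensional feasibility check in $\rho_\maxpaystate \in [0,1]$, subject to $\rho_\minpaystate = (\hyp - \payprior(\maxpaystate)\rho_\maxpaystate)/\payprior(\minpaystate) \in [0,1]$. The LHS above is decreasing in $\rho_\maxpaystate$, so I minimize it; the feasible minimum is $\max\{0,\, 1 - (1-\hyp)/\payprior(\maxpaystate)\}$. If $\hyp \leq 1 - \payprior(\maxpaystate)$ this minimum is $0$, and the reduced inequality rearranges to $\hyp \geq 1 - \frac{1}{3}(\maxpaystate-\minpaystate)\payprior(\maxpaystate)$; if $\hyp > 1 - \payprior(\maxpaystate)$ the minimum gives $\hyp - \rho_\maxpaystate = (1-\hyp)(1-\payprior(\maxpaystate))/\payprior(\maxpaystate)$, and the inequality rearranges to $\hyp \leq \frac{1}{3}(\maxpaystate-\minpaystate)(1-\payprior(\maxpaystate))$. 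In each case, combining the surviving bound with the case's defining range forces $\maxpaystate-\minpaystate\geq 3$ and implies the other bound; conversely, whenever both bounds of \eqref{eq:br binary welfare diff condition} hold (which requires $\maxpaystate-\minpaystate\geq 3$ for the interval to be non-empty), exactly one of the two cases applies and supplies a feasible $\rho_\maxpaystate$. I expect the main difficulty to be precisely this final bookkeeping step, where the two cases must be reconciled into the single pair of bounds of \eqref{eq:br binary welfare diff condition}; by comparison, the reduction to the extremal $k$-values $\paystate+1$ and $\paystate-2$ is immediate from the sign of $1-\hyp$, and Claim~\ref{claim:br different welfare char} does the heavy lifting of converting the welfare question into feasibility of a linear inequality in $(\rho_\minpaystate,\rho_\maxpaystate)$.
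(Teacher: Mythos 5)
Your proof is correct, and it reaches the claim by a genuinely different route than the paper. The paper first establishes a general-$\Paystate$ characterization (Claim~\ref{claim:br general welfare difference inequality}): it recasts symmetric outcomes as kernels $\brsout:\Paystate\to\Delta(\{0,\dots,\Playernum\})$, solves $\min_{\brsout}\func(\brsout)$ subject to \eqref{eq:br not pivotal}--\eqref{eq:br ex-ante fail prob} via local perturbation arguments that force the minimizer onto $\{\paystate-2,\paystate+1\}$ with a cutoff structure at $\paystate^*$, and then specializes the resulting inequality to two states by case analysis on whether $\paystate^*=\minpaystate$ or $\maxpaystate$. You instead work directly from the contrapositive of Claim~\ref{claim:br different welfare char}, observe that the failure of \eqref{eq:br large conditional failure prob} depends on the outcome only through the conditional first moments $(\rho_\paystate,\mu^S_\paystate,\mu^F_\paystate)$, and use a one-line monotonicity argument (the sign of $1-\hyp$) to pin these to the extremes $\paystate+1$ and $\paystate-2$ --- replacing the paper's perturbation machinery --- before reducing to a one-dimensional feasibility problem in $\rho_{\maxpaystate}$. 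Your two cases, $\hyp\lessgtr 1-\payprior(\maxpaystate)$, correspond exactly to the paper's $\paystate^*=\minpaystate$ versus $\paystate^*=\maxpaystate$, and the reduced inequality $\payprior(\maxpaystate)(\maxpaystate-\minpaystate)(\hyp-\rho_{\maxpaystate})\geq 3\hyp(1-\hyp)$ checks out algebraically, as does the final reconciliation with \eqref{eq:br binary welfare diff condition} (in each case the surviving bound plus the case's defining range implies $\maxpaystate-\minpaystate\geq 3$ and the other bound). What the paper's approach buys is the general formula of Claim~\ref{claim:br general welfare difference inequality}, valid for any finite $\Paystate$; what yours buys is a shorter, more elementary two-state argument that avoids characterizing the full minimizing kernel by exploiting that only conditional means enter the relevant inequality.
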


In other terms, with binary $\paystate$, worst-case welfare under rational inattention is the same as under exogenous information if and only if the state is sufficiently uncertain in the sense that $\maxpaystate-\minpaystate \geq 3$, and the probability of $\maxpaystate$ is not too extreme compared to $\brcost/(1+\brext)$---e.g., (\ref{eq:br binary welfare diff condition}) fails if $\payprior(\maxpaystate)$ goes to zero or one. In particular, the worst-case welfare under rational inattention is always strictly below the welfare under exogenous information when there's certainty about the institution's fundamentals (i.e., when $\paystate$ is deterministic). 

The proof of Claim~\ref{claim:br two states char} is rather detailed; here we provide only a rough intuition. Combining  Proposition~\ref{pro:welfare_binary_symmetric} and Claim~\ref{claim:br unconstrained min}, one can show the worst-case welfare under exogenous information coincides with the  worst-case welfare under rational inattention only if a symmetric outcome exists that satisfies two conditions. First, an investor who does not see the mediator's recommendation is indifferent between attacking and not attacking. And second, for an investor who does see the mediator's recommendation, either not speculating is a best response to a ``speculate" recommendation, or vice versa. Appealing to Bayes rule, one can show that an outcome satisfies these conditions if and only if there is limited overlap between the event where many investors are attacking and the event in which the attack succeeds. Claim~\ref{claim:br two states char} follows from showing a sufficient disconnect between these two events is attainable if and only if there is enough uncertainty about $\paystate$. Intuitively, disconnecting the two events is easy when $\paystate$ obtains both high and low values with large probability: one can have the number of speculators just come short of a successful attack when $\paystate$ is high, and come just above the threshold when $\paystate$ is low. The same cannot be done when $\paystate$ is deterministic. In that case, successful attacks necessarily involve more speculating investors than failed ones. 

A takeaway is that, unlike under rational inattention, a planner who views information as exogenous may adopt policies that change the institution's fundamentals. For a concrete illustration, consider two policies $\BGame$ and $\BGame^\prime$ that differ only in the institution's fundamentals, that is, in the set of states $\Paystate$ and their distribution $\payprior\in\Delta(\Paystate)$. Suppose $\BGame$ satisfies the conditions of Claim~\ref{claim:br two states char}, but $\BGame^\prime$ does not. Worst-case welfare under rational inattention is the same for $\BGame$ and $\BGame^\prime$:  
$
\Welfen =\welfen^\prime
$
by (\ref{eq:invariance_ex}).
By contrast, $\BGame$ generates lower welfare under exogenous information:
$
\Welfex<\welfex^\prime
$
by (\ref{eq:invariance_ex}) and Claim \ref{claim:br two states char}.
Consequently, a planner who views information as being exogenous would pay some amount to change the institution's fundamentals. 
\end{example}

\section{Further Results}\label{sec:further_results}

In this section, we discuss additional results regarding non-generic environments and equilibria with almost-free information. All related proofs are in the Online Appendix. 

\subsection{Non-generic Environments}\label{sec:non_generic}
Theorem \ref{thm:genericity} shows the separation constraint has no bite for generic games. However, many economic environments, such as auctions, are non-generic. In this section we characterize the (non-generic) environments where the separation constraint has substantive impact. We also show that the impact of separation has an all-or-nothing flavor: if the sBCE set is not dense in the BCE set, it is in fact \emph{nowhere dense}. 

Fix a base game $\BGame$. For the separation constraint to bite, players must have weak incentives to follow the mediator's action recommendations. A special case is one where, whenever the mediator recommends player $\player$ action $\actb_\player$ in a BCE of the game, the player would be equally happy to take action $\act_\player$. \cite{myerson1997dual} calls this scenario ``jeopardization'': action $\act_\player$ \textbf{jeopardizes} action $\actb_\player$ if, for every BCE $\out$ such that $\actb_\player\in\supp_\player(\out)$, $\act_\player\in BR(\out_{\actb_\player})$.\footnote{Myerson defines jeopardization for games without payoff uncertainty; here we give the obvious extension to games where $\Theta$ is not a singleton.}  We denote by $J(\actb_\player)$ the set of actions that jeopardizes $\actb_\player$. 

Every action jeopardizes itself by the obedience constraint; hence, $J(\actb_\player)$ is not empty. A sufficient condition for jeopardization is weak domination: if $\util_\player(\act_\player,\act_{-\player},\paystate)\geq \util_\player(\actb_\player,\act_{-\player},\paystate)$ for all $\act_{-\player}\in\Act_{-\player}$ and $\paystate\in\Paystate$, then $\act_\player$ jeopardizes $\actb_\player$. But the concept of jeopardization is broader than weak domination. For example, in Matching Pennies, Heads and Tails jeopardize each other, even if neither action is weakly dominant.

Weak incentives are necessary but not sufficient for the separation constraint to be binding: it must also be that different action recommendations induce distinct posterior beliefs. Next we introduce a class of BCE in which such separation is most pronounced. Say a BCE $\out$ is \textbf{maximally supported} if the support of every other BCE is contained by the support of $\out$. A maximally-supported BCE $\out$ is \textbf{minimally mixed} if 
\[
\outb_{\act_\player}\neq\outb_{\actb_\player}\quad\text{implies}\quad\out_{\act_\player}\neq\out_{\actb_\player}
\]
for every BCE $\outb$, $\player\in \Player$, and $\act_\player,\actb_\player\in\supp_\player(\outb)$.

For an interpretation of minimal mixing, take the perspective of a mediator who wants to implement a BCE $\out$. When $\out_{\act_\player}=\out_{\actb_\player}$, the mediator can replace the distinct recommendations of playing $\act_\player$ and $\actb_\player$ with a single recommendation of mixing between the two actions with probabilities $\out(\act_\player)/(\out(\act_\player)+\out(\actb_\player))$ and $\out(\actb_\player)/(\out(\act_\player)+\out(\actb_\player))$. A BCE $\out$ is minimally mixed if a mediator has the least amount of opportunities to implement $\out$ recommending mixed actions. Whereas minimally mixed BCEs seem esoteric at first, they are in fact, ubiquitous: the set of minimally mixed BCEs is open and dense in the BCE set (see Lemma \ref{lem:B-BCEs are open and dense in B} in the Online Appendix).

Our next result uses the concepts of jeopardization and minimally mixed BCEs to characterize when the BCE set and the closure of the sBCE set coincide. 
\begin{samepage}
\begin{proposition}\label{pro:BCE equals cl sBCE iff stuff}
The following statements are equivalent:
\begin{enumerate}[(i)]
\item \label{pro:BCE equals cl sBCE iff stuff: closure} The sBCE set is dense in the BCE set. 
\item \label{pro:BCE equals cl sBCE iff stuff: min mixed} A minimally mixed sBCE exists. 
\item \label{pro:BCE equals cl sBCE iff stuff: jeopardization} For every BCE $p$, $\player\in \Player$, $\act_\player,\actb_\player\in \supp_\player(\out)$,
\[
\out_{\act_\player}\neq\out_{\actb_\player}\quad\text{implies}\quad \Jeop(\act_\player)\cap \Jeop(\actb_\player)=\varnothing.
\]
\end{enumerate}
\end{proposition}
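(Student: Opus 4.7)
The plan is to establish a cyclic chain of implications (i) $\Rightarrow$ (ii) $\Rightarrow$ (iii) $\Rightarrow$ (i), leaning throughout on Lemma~\ref{lem:B-BCEs are open and dense in B}: the set $\mathcal{M}$ of minimally mixed BCEs is open and dense in the BCE set.

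For (i) $\Rightarrow$ (ii), I use that the intersection of a dense set with an open dense set is itself dense. Under (i), the $\sBCE$ set is dense in the BCE set, so $\sBCE\cap \mathcal{M}$ is dense---hence non-empty---yielding a minimally mixed sBCE. For (ii) $\Rightarrow$ (iii), I fix a minimally mixed sBCE $\hat p$; given any BCE $q$ and $a_i,b_i\in\supp_i(q)$ with $q_{a_i}\neq q_{b_i}$, maximal support of $\hat p$ puts $a_i,b_i\in\supp_i(\hat p)$, the minimally mixed property of $\hat p$ forces $\hat p_{a_i}\neq\hat p_{b_i}$, separation yields $\BR(\hat p_{a_i})\cap\BR(\hat p_{b_i})=\varnothing$, and since $\Jeop(a_i)\subseteq\BR(\hat p_{a_i})$ and $\Jeop(b_i)\subseteq\BR(\hat p_{b_i})$ by definition of jeopardization, $\Jeop(a_i)\cap\Jeop(b_i)=\varnothing$ as required.

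The main work lies in (iii) $\Rightarrow$ (i). The plan is to show that under (iii), every minimally mixed BCE is an sBCE; density of $\mathcal{M}$ then yields density of $\sBCE$. Fix $\hat p\in\mathcal{M}$ and $a_i,b_i\in\supp_i(\hat p)$ with $\hat p_{a_i}\neq\hat p_{b_i}$. Applying (iii) to $\hat p$ gives $\Jeop(a_i)\cap \Jeop(b_i)=\varnothing$, so it suffices to prove $\BR(\hat p_{a_i})\cap\BR(\hat p_{b_i})\subseteq \Jeop(a_i)\cap\Jeop(b_i)$. I would reduce to the sharper statement $\BR(\hat p_{a_i})\subseteq \Jeop(a_i)$ for each $a_i\in\supp_i(\hat p)$; the reverse inclusion is immediate from the definition of $\Jeop(a_i)$.

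To prove $\BR(\hat p_{a_i})\subseteq \Jeop(a_i)$, I would argue by contradiction. If $c_i\in\BR(\hat p_{a_i})\setminus \Jeop(a_i)$, then some BCE $q$ with $a_i\in\supp_i(q)$ and some $d_i$ satisfy $E_{q_{a_i}}[u_i(d_i)-u_i(c_i)]>0\geq E_{\hat p_{a_i}}[u_i(d_i)-u_i(c_i)]$. Moving along the segment $p_\lambda=(1-\lambda)\hat p+\lambda q$ of BCEs, openness of $\mathcal{M}$ keeps $p_\lambda$ minimally mixed for small $\lambda>0$, so the conditional-equivalence partition of the action set is preserved along the path. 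Tracking the continuous scalar $\lambda\mapsto E_{(p_\lambda)_{a_i}}[u_i(d_i)-u_i(c_i)]$ through its sign change somewhere in $(0,1]$ should produce at the transition a BCE that, when confronted with (iii) and the partition-preservation inside $\mathcal{M}$, yields a contradiction. The hard part will be making this perturbation argument airtight: minimally mixed BCEs share a common conditional-equivalence partition, yet the best-response correspondences at individual conditionals are only upper hemi-continuous and can shrink along the path, so the contradiction must be extracted from the tension between partition-preservation and the fact that $c_i$ flips in and out of best response as $\lambda$ varies.
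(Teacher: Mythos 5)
Your implications (i)$\Rightarrow$(ii) and (ii)$\Rightarrow$(iii) are correct and coincide with the paper's argument (the paper proves the statement in the localized form of Proposition~\ref{pro:B-BCE equals cl B-sBCE iff stuff}, but the logic is the same). The gap is in (iii)$\Rightarrow$(i): the intermediate claim you target---that under (iii) every minimally mixed BCE is separated, equivalently that $\BR(\out_{\act_\player})\subseteq\Jeop(\act_\player)$ at every minimally mixed $\out$---is false, so no perturbation argument along the segment $p_\lambda$ can rescue it. Minimal mixing only constrains \emph{which} recommendations induce equal posteriors; it says nothing about how large the best-response sets are at the distinct posteriors. Concretely: take a single agent, $\Paystate=\{1,2,3\}$ with uniform prior, two actions with $\util(\act,\cdot)=(1,1,0)$ and $\util(\actb,\cdot)=(0,1,1)$. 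Then $\Jeop(\act)=\{\act\}$ and $\Jeop(\actb)=\{\actb\}$ (witnessed by the BCEs that recommend $\act$ exactly on $\{\paystate=1\}$ and $\actb$ exactly on $\{\paystate=3\}$, respectively), so (iii) holds. Yet the BCE $\out$ with $\out(\act,\cdot)=(1/3,1/6,1/3)$ and $\out(\actb,\cdot)=(0,1/6,0)$ is minimally mixed (full support and $\out_{\act}\neq\out_{\actb}$) while both posteriors leave the agent exactly indifferent, so $\BR(\out_{\act})\cap\BR(\out_{\actb})=\{\act,\actb\}$ and $\out$ is not separated. Your contradiction scheme would have to derive a contradiction from $\actb\in\BR(\out_{\act})\setminus\Jeop(\act)$, which is here a perfectly consistent configuration.

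The repair is not to work with an arbitrary minimally mixed BCE but to manufacture one special BCE at which the best-response sets are already squeezed down to the jeopardization sets. Lemma~\ref{lem:BR inclusion under convex combinations} shows that $\BR$ can only shrink under convex combinations of BCEs; since $\Act$ is finite, a suitable maximal-support convex combination $\outb$ of finitely many witnessing BCEs attains $\BR(\outb_{\act_\player})=\Jeop(\act_\player)$ for every $\act_\player\in\supp_\player(\outb)$. Then for an arbitrary BCE $\out$ and $\wt\in(0,1)$, the mixture $\out^{\wt}=\wt\out+(1-\wt)\outb$ is a BCE, and for any $\act_\player,\actb_\player\in\supp_\player(\out^{\wt})$ with $\out^{\wt}_{\act_\player}\neq\out^{\wt}_{\actb_\player}$, Lemma~\ref{lem:BR inclusion under convex combinations} gives $\BR(\out^{\wt}_{\act_\player})\cap\BR(\out^{\wt}_{\actb_\player})\subseteq\Jeop(\act_\player)\cap\Jeop(\actb_\player)$, which is empty by (iii) applied to the BCE $\out^{\wt}$ itself. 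Letting $\wt\rightarrow 1$ shows $\out\in\cl(\sBCE)$. The separated approximants are thus the mixtures toward $\outb$, not the nearby minimally mixed BCEs.
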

\end{samepage}

The result shows how jeopardization and minimal mixing can be used in applications to study sBCE. To verify that the sBCE set is dense in the BCE set, it is enough to produce a minimally mixed sBCE. To verify that the sBCE set is not dense in the BCE set, it is enough to produce a BCE in which two actions induce distinct beliefs and share a common jeopardizing action. As shown by \cite{myerson1997dual}, the jeopardizing actions can be easily computed from the dual of the system of linear inequalities that defines BCE. In Online Appendix \ref{sec:Checking for Equal Beliefs}, we simplify the analysis of minimally mixed BCE by describing how to find the actions that induce different beliefs for some BCE.

Next, we build on Proposition \ref{pro:BCE equals cl sBCE iff stuff} and obtain that sBCE is an all-or-nothing refinement of BCE.
\begin{theorem}\label{thm:Dense or nowhere dense}
    The sBCE set is either dense or nowhere dense in the BCE set.
\end{theorem}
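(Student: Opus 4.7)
\bigskip

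\noindent\textbf{Proof plan for Theorem~\ref{thm:Dense or nowhere dense}.} The plan is to contrapose: I will show that if the sBCE set fails to be dense in the BCE set, then in fact sBCE misses every minimally mixed BCE, so sBCE is contained in the complement of an open dense subset of BCE, and is therefore nowhere dense. The key leverage comes from Proposition~\ref{pro:BCE equals cl sBCE iff stuff}, which translates non-density into a jeopardization witness, combined with Lemma \ref{lem:B-BCEs are open and dense in B} (cited in the excerpt), which says the set of minimally mixed BCEs is open and dense in the BCE set.

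So suppose $\sBCE$ is not dense in $\BCE$. By the equivalence \eqref{pro:BCE equals cl sBCE iff stuff: closure}$\Leftrightarrow$\eqref{pro:BCE equals cl sBCE iff stuff: jeopardization} of Proposition~\ref{pro:BCE equals cl sBCE iff stuff}, condition \eqref{pro:BCE equals cl sBCE iff stuff: jeopardization} fails, so there exist a BCE $\out^{0}$, a player $\player$, actions $\act_{\player},\actb_{\player}\in \supp_{\player}(\out^{0})$, and a common jeopardizer $\actc_{\player}\in \Jeop(\act_{\player})\cap \Jeop(\actb_{\player})$ such that $\out^{0}_{\act_{\player}}\neq \out^{0}_{\actb_{\player}}$. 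Let $\out^{*}$ be any minimally mixed BCE. Because minimal mixing implies maximal support, $\supp_{\player}(\out^{0})\subseteq \supp_{\player}(\out^{*})$, so $\act_{\player},\actb_{\player}\in \supp_{\player}(\out^{*})$; by the minimal mixing property, the strict inequality $\out^{0}_{\act_{\player}}\neq \out^{0}_{\actb_{\player}}$ forces $\out^{*}_{\act_{\player}}\neq \out^{*}_{\actb_{\player}}$; and by the very definition of jeopardization, $\actc_{\player}\in \BR(\out^{*}_{\act_{\player}})\cap \BR(\out^{*}_{\actb_{\player}})$. Thus $\out^{*}$ violates the separation constraint \eqref{eq:SC}, and in particular $\out^{*}\notin \sBCE$. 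Consequently, every minimally mixed BCE is non-separated.

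Now let $\Mact$ denote the set of minimally mixed BCEs. By the cited lemma, $\Mact$ is open and dense in $\BCE$, and the previous step establishes $\sBCE\cap \Mact=\varnothing$, hence $\sBCE\subseteq \BCE\setminus \Mact$. Since $\BCE\setminus \Mact$ is closed in $\BCE$, we get $\cl(\sBCE)\subseteq \BCE\setminus \Mact$, which is a closed set with empty interior in $\BCE$ because its complement $\Mact$ is dense. Therefore $\cl(\sBCE)$ has empty interior in $\BCE$; i.e., $\sBCE$ is nowhere dense in $\BCE$. I do not expect any serious obstacle: the whole argument is a transfer statement, and the one potentially delicate point is that the jeopardization witness extracted from $\out^{0}$ transports verbatim to every minimally mixed BCE, which is exactly what maximal support and the minimal-mixing separating-beliefs condition are designed to deliver.
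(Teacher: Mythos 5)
Your proof is correct and is essentially the paper's own argument read in the contrapositive: both hinge on the fact that the minimally mixed BCEs form an open dense subset of the BCE set (Lemma~\ref{lem:B-BCEs are open and dense in B}) together with Proposition~\ref{pro:BCE equals cl sBCE iff stuff}. The only difference is cosmetic: you route through condition~\eqref{pro:BCE equals cl sBCE iff stuff: jeopardization} and re-derive by hand that no minimally mixed BCE is separated, whereas the equivalence \eqref{pro:BCE equals cl sBCE iff stuff: closure}$\,\Leftrightarrow\,$\eqref{pro:BCE equals cl sBCE iff stuff: min mixed} already delivers in one step that the sBCE set is disjoint from the set of minimally mixed BCEs.
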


For a rough explanation, consider first the case in which a minimally mixed sBCE exists. Then, by Proposition \ref{pro:BCE equals cl sBCE iff stuff}, the sBCE set is dense in the BCE set. Consider now the case in which a minimally sBCE does not exist. By Proposition \ref{pro:BCE equals cl sBCE iff stuff}, the sBCE set is not dense in the BCE set. To reach the stronger conclusion that the sBCE set is nowhere dense in the BCE set, we use the fact that the set of minimally mixed BCE is open and dense in the BCE set.

Thus, whereas the separation constraint does not bind in most circumstances, whenever it does bind, it significantly restricts the set of attainable outcomes. 

\subsection{Vanishing Cost Equilibrium}\label{sec:vanishing_costs}

The case in which the cost of information is very small can be interpreted as a perturbation of complete information. As such, it can serve as a device for selecting equilibria of games where the state of the environment is commonly known.\footnote{See, e.g., \cite*{Yang2015coordination, Hoshino2018, denti2021costly, denti2023unrestricted,  morris2022coordination}. \cite{ravid2022learning} conduct a similar exercise in which only one player can acquire information.} Existing works study this limit scenario under various restrictions on the information technology; next we obtain a robust characterization by leveraging the connection between rational inattention and separated BCE.

Fix a base game $\BGame$. An outcome $\out\in\Delta_\payprior(\Act\times\Paystate)$ is a \textbf{complete-information Nash equilibrium} if for every $\paystate\in\Paystate$ there is a mixed action profile $\mact_{\paystate}=(\mact_{\paystate,\player})_{\player\in\Player}$ such that
\begin{align*}
\frac{\out(\act,\paystate)}{\payprior(\paystate)}&=\prod_{\player\in\Player}\mact_{\paystate,\player}(\act_\player),&&\text{for all }\act\in\Act,\\
\mact_{\paystate,\player} &\in\arg\max_{\mactb_\player\in\Delta(\Act_\player)}\sum_{\act}\util_\player(\act,\paystate)\mactb_\player(\act_\player)\prod_{\playerb\neq\player}\mact_{\paystate,\playerb}(\act_\playerb),&&\text{for all }\player\in\Player. 
\end{align*}
Notice that every complete-information Nash equilibrium is a BCE. Such BCE is generated by an information structure in which the payoff state $\paystate$ is commonly known and the correlation state $\corstate$ is degenerate. 

Under what conditions a complete-information Nash equilibrium is the result of almost-frictionless information acquisition? To answer this question, we first make precise what we mean by ``almost frictionless'' information acquisition. 

A technology $\IT$ represents \textbf{unconstrained rational inattention} if for every player $\player$, all experiments are feasible (i.e., $\Exper_\player=\Delta(\Signal_\player)^{\Corstate\times\Paystate}$), and the cost function $\icost_\player$ is monotone. An outcome $\out\in\Delta_\payprior(\Act\times\Paystate)$ is a \textbf{vanishing cost equilibrium} if for every $\epsilon>0$, there exist an unconstrained rational-inattention technology $\itech$ and an equilibrium $(\exper,\aplan)$ of the information-acquisition game $(\BGame,\itech)$ such that, denoting by $\outb$ the outcome of $(\exper,\aplan)$, 
\begin{equation*}
\max \icost_\player(\Exper_\player) \leq \epsilon
\quad \text{and} \quad 
\left\vert \out(\act,\paystate) - \outb(\act,\paystate) \right\vert  
\leq \epsilon
\end{equation*}
for all $\player \in \Player$, $\act \in \Act$, and $\paystate \in \Paystate$.

The next theorem characterizes what complete-information Nash equilibria are vanishing cost equilibria.

\begin{theorem}\label{thm:complete_info_limit}
A complete-information Nash equilibrium $\out$ is a vanishing cost equilibrium if and only if it belongs to the closure of the sBCE set.
\end{theorem}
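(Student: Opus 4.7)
The proof has two directions. The necessity (``only if'') direction is a direct application of Theorem \ref{thm:mon_tech}. The sufficiency (``if'') direction requires constructing, for each sBCE close to $\out$, an unconstrained rational-inattention technology with small maximum cost that implements it.

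\textbf{Necessity.} Suppose $\out$ is a complete-information Nash equilibrium and a vanishing cost equilibrium. By definition, there exist unconstrained rational-inattention technologies $\IT^{(n)}$ and equilibria of $(\BGame, \IT^{(n)})$ inducing outcomes $\outb^{(n)}\to\out$, with $\max_\player\max_{\exper\in\Exper_\player^{(n)}}\icost_\player^{(n)}(\exper)\to 0$. Since every unconstrained rational-inattention technology is in particular a rational-inattention technology, Theorem \ref{thm:mon_tech} ensures each $\outb^{(n)}$ is an sBCE. Therefore $\out\in\cl(\sBCE)$.

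\textbf{Sufficiency.} Suppose $\out$ is a complete-information Nash equilibrium in $\cl(\sBCE)$, and pick sBCEs $\outb^{(n)}\to \out$. The theorem reduces to the following claim: for every sBCE $\outb$ and every $\epsilon>0$, there exists an unconstrained rational-inattention technology $\IT$ with $\max_\player\max_{\exper\in\Exper_\player}\icost_\player(\exper)\leq \epsilon$ and an equilibrium of $(\BGame,\IT)$ inducing $\outb$. Applying this claim with $\epsilon=1/n$ to each $\outb^{(n)}$ yields a sequence of information-acquisition games witnessing that $\out$ is a vanishing cost equilibrium.

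To prove the claim, I would adapt the construction behind Theorem \ref{thm:mon_tech} and scale the resulting cost function. Concretely, choose a correlation state $\corstate\in\Corstate$ whose conditional distribution given $\paystate$ encodes the mediator randomization generating $\outb$, in such a way that each player $\player$ has a ``private recommendation'' coordinate of $\corstate$. Let $\exper_\player^*$ be the experiment that reveals only this coordinate; then the signal realizations are in bijection with $\supp_\player(\outb)$, and by obedience the recommended action is a best reply. Since by separation distinct recommendations have disjoint best-reply sets, a strictly Blackwell-monotone cost function can penalize any ``merging'' garbling. The technology's cost is then set to $\icost_\player(\exper)=\epsilon\cdot h_\player(\exper)$, with $h_\player:\Exper_\player\to[0,1]$ strictly Blackwell-monotone; the max-cost bound is immediate.

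\textbf{Main obstacle.} The delicate step is calibrating $h_\player$ so that $\exper_\player^*$ remains a best response in the information-acquisition game under the scaled cost. Deviations to strictly less informative experiments are controlled by separation: any garbling collapses recommendations with disjoint best-reply sets, producing a strictly positive and bounded-below loss in gross value that $\epsilon\cdot h_\player$ cannot compensate for once $h_\player$ is tuned to match the shape of the value function near $\exper_\player^*$. The subtle case is deviations to strictly more informative experiments, where additional information about $(\corstate,\paystate)$ could in principle raise the gross value by more than the small cost saves. The resolution, following the construction in Theorem \ref{thm:mon_tech}, is to enrich $\corstate$ so that the $\player$-th coordinate is effectively a sufficient statistic for $\player$'s decision problem: conditional on that coordinate, the joint distribution of $(\act_{-\player},\paystate)$ is exactly $\outb_{\act_\player}$ regardless of what other fine-grained information about $\corstate$ the player might acquire. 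With this sufficiency built in, the gross value is constant above $\exper_\player^*$, and any strictly monotone $h_\player$ suffices. Ensuring that such a sufficient $\corstate$ can be constructed while keeping all other players' strategies consistent is the technical crux; it is handled by the multi-agent extension of Denti's construction underlying the proof of Theorem \ref{thm:mon_tech}.
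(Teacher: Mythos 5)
Your ``only if'' direction is correct and matches the paper: Theorem~\ref{thm:mon_tech} makes every outcome of a rational-inattention game a separated BCE, so the approximating outcomes $\outb^{(n)}$ are sBCEs and $\out\in\cl(\sBCE)$.

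The ``if'' direction has a genuine gap. You reduce the theorem to the claim that \emph{every} sBCE $\outb$ can be implemented by an unconstrained rational-inattention technology whose maximum cost over all experiments is at most $\epsilon$, for every $\epsilon>0$. This claim is false, and it would prove far too much: applied with $\outb=\out$ it would make every sBCE a vanishing cost equilibrium. But a non-degenerate correlated equilibrium of a complete-information game that is not a convex combination of Nash equilibria (e.g., the $1/3$--$1/3$--$1/3$ correlated equilibrium of Chicken) is a separated BCE, yet it cannot arise with vanishing costs: since the technology is unconstrained and every experiment costs at most $\epsilon$, a player can pay at most $\epsilon$ to learn $(\corstate,\paystate)$ perfectly and best respond to the opponents' realized mixed actions, so in any equilibrium her recommended action must be an (approximate) best response \emph{conditional on the full state}. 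That forces the outcome to decompose into complete-information Nash equilibria (this is exactly Lemma~\ref{lem:derby_day} and Theorem~\ref{thm:vanishing_costs_all} in the Online Appendix). Your proposed fix---enriching $\corstate$ so that player $\player$'s coordinate is a sufficient statistic, making the gross value ``constant above $\exper_\player^*$''---cannot be arranged for a general sBCE: the joint law of $(\act,\paystate)$ is pinned down by $\outb$, so conditioning on the full $(\corstate,\paystate)$ genuinely refines the belief $\outb_{\act_\player}$ whenever $\outb_{\act_\player}$ is non-degenerate in $\paystate$ or in the opponents' realized mixtures. The tell-tale sign is that your sufficiency argument never uses the hypothesis that $\out$ is a complete-information Nash equilibrium, whereas the paper flags this as the essential ingredient.

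What the paper actually does is implement each $\out^n$ via a canonical representation whose correlation state includes the Nash mixed actions $\bar{\mact}_{\paystate,\player}$ of $\out$, with the kernel $\corprior^n$ placing weight $t^n\rightarrow 1$ on that Nash component. Lemma~\ref{lem:p-canonical} then bounds the maximum cost by $\lambda_\player+\hat{v}_\player(\mathcal{S}^n,\aplan^n)-[(1-\lambda_\player)\grossval_\player(\out^n)+\lambda_\player\noinfoval_\player(\out^n)]$; taking $\lambda_\player=1/n$, the bound converges to the value of perfect information about $(\corstate,\paystate)$ at the limit representation, which equals zero precisely because, under the Nash decomposition, each recommended action is already optimal against the opponents' realized mixed action in each state. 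So the costs vanish along the sequence even though no individual $\out^n$ need be implementable at cost $1/n$; that asymptotic cancellation, driven by the Nash property of $\out$, is the step your argument is missing.
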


The ``only if'' side of Theorem \ref{thm:complete_info_limit} is an immediate consequence of Theorem \ref{thm:mon_tech}: since outcomes of rational inattention games are separated BCEs, any vanishing cost equilibrium $\out$ must be the limit of a sequence of separated BCEs. The ``if'' side of Theorem \ref{thm:complete_info_limit} requires more work. Theorem \ref{thm:mon_tech} only guarantees the existence of a sequence of rational-inattention games whose outcomes converge to $\out$: it does not give that information costs go to zero along the sequence. To prove this additional property, we use the fact that $\out$ is a complete-information Nash equilibrium.

Recall that, by Theorem \ref{thm:genericity}, the sBCE set generically is dense in the BCE set. Thus, a corollary of Theorem \ref{thm:complete_info_limit} is that, generically, all complete-information Nash equilibria are vanishing cost equilibria. 
 
\begin{corollary}\label{cor:generic_limit}
For generic $\util$, every complete-information Nash equilibrium is a vanishing cost equilibrium.
\end{corollary}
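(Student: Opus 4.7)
The plan is to chain together the two preceding results, namely Theorem~\ref{thm:complete_info_limit} and Theorem~\ref{thm:genericity}. The generic set on which the corollary holds will be exactly the one identified by Theorem~\ref{thm:genericity}; no further perturbation argument is needed, so the proof should be very short.

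First, I would fix a utility profile $\util$ belonging to the generic set given by Theorem~\ref{thm:genericity}, so that $\sBCE(\util)$ is dense in $\BCE(\util)$, i.e., $\BCE(\util)\subseteq\cl(\sBCE(\util))$. Second, I would take any complete-information Nash equilibrium $\out$ of the base game induced by $\util$, and invoke the observation recorded immediately after the definition of complete-information Nash equilibrium in Section~\ref{sec:vanishing_costs}: such an outcome is a BCE (it is the outcome of the information structure that reveals $\paystate$ and keeps $\corstate$ degenerate). Therefore $\out\in\BCE(\util)\subseteq\cl(\sBCE(\util))$.

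Finally, I would apply the ``if'' direction of Theorem~\ref{thm:complete_info_limit}: since $\out$ is a complete-information Nash equilibrium and belongs to the closure of the sBCE set, it is a vanishing cost equilibrium. Because $\out$ was arbitrary, every complete-information Nash equilibrium of the base game associated with $\util$ is a vanishing cost equilibrium, which is exactly what the corollary asserts.

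There is essentially no main obstacle here: all the work has been done in Theorem~\ref{thm:complete_info_limit} (which handles the non-trivial task of ensuring that not only do the outcomes converge, but information costs vanish along the approximating sequence) and in Theorem~\ref{thm:genericity} (which provides the density of sBCE in BCE). The only thing to check carefully in writing up the proof is that the generic set inherited from Theorem~\ref{thm:genericity} is the same notion of genericity used in the corollary, which it is by construction.
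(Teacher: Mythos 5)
Your proposal is correct and matches the paper's own argument exactly: the paper derives the corollary by noting that every complete-information Nash equilibrium is a BCE, that Theorem~\ref{thm:genericity} makes the sBCE set generically dense in the BCE set, and then applying the ``if'' direction of Theorem~\ref{thm:complete_info_limit}. Nothing further is needed.
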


As a consequence, if one intends to use rational inattention to select equilibria of complete information games, one needs either to focus on specialized economic environments or take a stance on the nature of the information acquisition technology.  

Finally, we remark that not all vanishing cost equilibria are complete-information Nash equilibria. We provide a complete characterization of vanishing cost equilibria in Online Appendix \ref{sec:proof_vanishing_cost_gen}. In particular, we show that \emph{convex combinations} of complete-information Nash equilibria can also be the result of almost-frictionless information acquisition.

\section{Conclusion}
Since its introduction by \cite{Sims2003a}, rational inattention has emerged as a powerful theory, with a wide range of applications throughout economics. This theory serves both as a portable model of limited cognition, and as a way of anchoring information to incentives. But for all its merits, the theory's current formalization has a significant drawback: researchers must make hard-to-verify assumptions on players' information acquisition capabilities. The current paper addresses this concern by developing a framework for spanning all predictions consistent with rational inattention in a given economic setting. 

We also use our framework to make two conceptual contributions. First, we show that in generic settings, the economic environment imposes no restrictions on the types of information that may emerge under rational inattention. Hence, studies employing rational inattention to understand the shape of agents' information must either make specialized assumptions about the underlying information technology, or rely on highly-structured characteristics of the economic environment. Second, our work underscores the significance of considering the costs of information acquisition when assessing players' welfare. Specifically, mistakenly assuming that information is given when it is actually acquired can lead one to overestimate the benefits players derive from their information, and so may result in misleading welfare conclusions.

\newpage
\appendix

\begin{center}
\huge{Appendix}
\end{center}

\section{Preliminary results}

The next sections introduce lemmas we will use to prove the results from the main text. The lemmas may be of independent interest, so we collect them here, in individual sections.

\subsection{Single-agent lemmas on rational inattention}\label{subsec:singla_agent}

This section presents single-agent results on costly information acquisition. We take the perspective of an individual $\player$ who has to choose an action $\act_\player\in \Act_\player$ whose utility $w_\player(\act_\player,\omega)$ depends on an uncertain state of nature $\omega\in \Omega$. Both $\Act_\player$ and $\Omega$ are finite. Let $\rho\in \Delta(\Omega)$ be the prior distribution of the state; $\rho$ may not have full support.

Before choosing an action, the decision maker can run an experiment $\exper_\player:\Omega\rightarrow\Delta(\Signal_i)$ at a cost $C_i(\exper_\player)\in\mathbb{R}_+$. Let $\Exper_\player\subseteq \Delta(\Signal_\player)^\Omega$ be the set of feasible experiments. The signal space $\Signal_i$ is finite and contains more elements than the sets $\Omega$ and $\Act_\player$. 

Overall, the decision maker faces the following information-acquisition problem:
\begin{equation}\label{eq:single_agent}
\max_{\exper_\player\in\Exper_\player,\aplan_\player\in\Aplan_\player} \left[\sum_{\omega,\signal_\player,\act_\player} w_\player(\act_\player,\omega) \aplan_{\player}(\act_\player|\signal_\player)\exper_\player(\signal_{\player}|\omega)\rho(\omega)\right] - \icost_{\player}(\exper_\player)
\end{equation}
where $\Aplan_\player$ is the set of all action plans $\aplan_\player:\Signal_\player\rightarrow\Delta(\Act_\player)$. 

In accordance with the terminology used in the main text, we write $\exper_\player\succsim \experb_\player$ if there is a Markov kernel $\garb:\Signal_\player \rightarrow \Delta(\Signal_\player)$ such that for every $\signal_\player \in \Signal_\player$ and $\omega\in\Omega$ with $\rho(\omega)>0$,
\begin{equation}\label{eq:BOrder_as}
\experb_\player(\signal_\player|\omega) = \sum_{\signalb_\player \in \Signal_\player} \garb(\signal_\player|\signalb_\player)\exper(\signalb_\player|\omega).
\end{equation}
We say that $\Exper_\player$ is \textbf{flexible} if, whenever $\exper_\player \succsim \experb_\player$ and $\exper_\player \in \Exper_\player$, then $\experb_\player \in \Exper_\player$. We also say that $\icost_\player:\Exper_\player\rightarrow\mathbb{R}_+$ is \textbf{monotone} if, whenever $\exper_\player, \experb_\player \in \Exper_{\player}$ are such that $\exper_\player \succsim \experb_\player$ (resp., $\exper_\player \succ \experb_\player$), then $\icost_\player(\exper_\player) \geq \icost_\player(\experb_\player)$ (resp., $\icost_\player(\exper_\player) > \icost_\player(\experb_\player)$).

Next we characterize the pairs $(\exper_\player,\aplan_\player)$ that are optimal solutions of (\ref{eq:single_agent}) for some flexible $\Exper_\player$ and monotone $\icost_\player$. We will use the following notation. Given a pair $(\exper_\player,\aplan_\player)$, we denote by $\mu_\player\in\Delta(\Act_\player\times\Signal_\player\times\Omega)$ the induced probability measure over actions, signals, and states:
\[
\mu_\player(\act_\player,\signal_\player,\omega)=\aplan_{\player}(\act_\player|\signal_\player)\exper_\player(\signal_{\player}|\omega)\rho(\omega).
\]
We also denote by $\mu_\player(\signal_\player)$ the unconditional probability of signal $\signal_\player$:
\[
\mu_\player(\signal_\player)=\sum_{\act_\player,\omega}\mu(\act_\player,\omega).
\]
For all $\signal_\player$ such that $\mu_\player(\signal_\player)>0$, let $\mu_{\signal_\player}\in \Delta (\Omega)$ be the conditional distribution of the state:
\[
\mu_{\signal_\player}(\omega)=
\frac{\sum_{\act_{\player}}\mu_\player(\act_\player,\signal_\player,\omega)}
{\mu_\player(\signal_\player)}.
\]
Let $\BR\left(\mu_{\signal_\player}\right)$ be the corresponding set of best responses:
\[
\BR\left(\mu_{\signal_\player}\right) = \argmax_{\act_\player \in \Act_\player} \left[ 
\sum_{\omega}\util_{\player}(\act_\player,\omega)\mu_{\signal_\player}(\omega)
\right].
\] 
\begin{lemma}\label{lemma:single_agent_plus}
A flexible $\Exper_\player\subseteq \Delta(\Signal_\player)^\Omega$ and a monotone $\icost_\player:\Exper_\player\rightarrow\mathbb{R}_+$ exist such that $(\exper_\player,\aplan_\player)$ is an optimal solution of (\ref{eq:single_agent}) if and only if the following conditions hold:
\begin{itemize}
\item[(i)] For all signals $\signal_\player$ such that $\mu_\player(\signal_\player)>0$, 
    \[
    \aplan_\player \left(\BR\left(\mu_{\signal_\player}\right)\vert \signal_\player\right)=1.
    \]
\item[(ii)] For all signals $\signal_\player$ and $\signal_\player^\prime$ such that $\mu_\player(\signal_\player)>0$ and $\mu_\player(\signalb_\player)>0$,
    \[
\mu_{\signal_i}\neq \mu_{\signal^\prime_i}\quad\text{implies}\quad \BR(\mu_{\signal_\player})\cap \BR(\mu_{\signalb_\player})=\varnothing.
    \]
\end{itemize}
In addition, for every $\lambda_\player\in(0,1]$, one can choose $\icost_\player$ so that
\[
\icost_\player(\exper_\player)=\lambda_\player \left(\sum_{\act_\player, \signal_\player,\omega} w_\player(\act_\player,\omega) \mu_\player(\act_\player,\signal_\player,\omega)-
\max_{\act_\player\in\Act_\player}\sum_{\omega} w_\player(\act_\player,\omega) \rho(\omega)\right).
\]
\end{lemma}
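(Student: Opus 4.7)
My plan is to prove both directions, adapting and extending the single-agent construction of \citet{denti2021costly}.

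For necessity, condition (i) is standard: if $\aplan_\player$ fails to concentrate on best responses at some positive-probability signal, replacing it pointwise with an optimal response strictly improves gross value at no extra cost, contradicting optimality. For condition (ii), suppose toward a contradiction that two positive-probability signals $\signal_\player,\signalb_\player$ induce distinct posteriors $\mu_{\signal_\player}\neq\mu_{\signalb_\player}$ yet share a best response $\act^*\in\BR(\mu_{\signal_\player})\cap\BR(\mu_{\signalb_\player})$. Consider the modified strategy that merges $\signal_\player$ and $\signalb_\player$ into a single signal $\hat\signal$, plays $\act^*$ at $\hat\signal$, and agrees with $\aplan_\player$ elsewhere. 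Because $\act^*$ is a best response at both posteriors, the gross value is unchanged; because the posteriors differ, the merge produces a strict Blackwell garbling of $\exper_\player$, so strict monotonicity of $\icost_\player$ delivers a strictly lower cost---a contradiction.

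For sufficiency, write $V(\exper)$ for the maximum gross value achievable at $\exper$ over action plans, and $U$ for the uninformed value. Set $\Exper_\player=\{\exper:\exper_\player\succsim\exper\}$, which is flexible by transitivity of Blackwell dominance. I first handle $\lambda_\player=1$ by constructing a strictly-posterior-convex monotone cost $\icost^D_\player$ that makes $(\exper_\player,\aplan_\player)$ optimal and yields net payoff exactly $U$ at the target, so $\icost^D_\player(\exper_\player)=V(\exper_\player)-U$. This step is essentially \citeauthor{denti2021costly}'s single-agent theorem and exploits both hypotheses: obedience pins down $\aplan_\player$ as an optimal action plan at $\exper_\player$, and separation guarantees that the decision-maker's value function $u(\mu):=\max_{\act_\player}\sum_\omega w_\player(\act_\player,\omega)\mu(\omega)$ is not linear on any segment connecting distinct posteriors in the support of the target posterior distribution, leaving room for a strictly convex $\psi$ such that $u-\psi$ attains its concave envelope on that support. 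For general $\lambda_\player\in(0,1]$, I rescale: $\icost^{\lambda_\player}_\player:=\lambda_\player\icost^D_\player$. The rescaled cost delivers the target formula, inherits flexibility, monotonicity, and non-negativity, and preserves target optimality via a short linear calculation: from $\icost^D_\player(\exper)\geq V(\exper)-U$ on $\Exper_\player$ (a consequence of $\lambda_\player=1$ optimality) together with $V(\exper)\leq V(\exper_\player)$, the net payoff under the rescaled cost satisfies $V(\exper)-\lambda_\player\icost^D_\player(\exper)\leq(1-\lambda_\player)V(\exper_\player)+\lambda_\player U$, with equality at the target.

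The main obstacle is the $\lambda_\player=1$ construction of $\icost^D_\player$: finding a strictly convex $\psi$ on $\Delta(\Omega)$, together with an affine $\ell$, such that $u-\psi\leq\ell$ everywhere and equality holds on the target posterior support, while delivering the correct target cost. This is where separation is essential---if two distinct target posteriors shared a best response, $u$ would be linear along the segment between them, obstructing the existence of a strictly convex $\psi$ that matches $u-\ell$ at both endpoints. Once $\icost^D_\player$ is in place, the rescaling argument extends to all $\lambda_\player\in(0,1]$, continuously shifting the agent's net payoff between $U$ (at $\lambda_\player=1$) and $V(\exper_\player)$ (as $\lambda_\player\to 0$).
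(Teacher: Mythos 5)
Your necessity arguments and your rescaling step are sound; indeed the rescaling is exactly the paper's own device, phrased there as the observation that $(\exper_\player,\aplan_\player)$ solves both the cost-free problem on $\Exper_\player=\{\experb_\player:\exper_\player\succsim\experb_\player\}$ and the problem with cost $\icost_\player/\lambda_\player$, hence every convex combination of the two objectives. The gap is in your $\lambda_\player=1$ construction. A posterior-separable cost with \emph{strictly} convex $\psi$ cannot deliver $\icost^D_\player(\exper_\player)=V(\exper_\player)-U$. In your own formulation, optimality of the target requires $u-\psi\leq\ell$ for some affine $\ell$ with equality at every target posterior; the exact cost value additionally forces the agent to be indifferent between $\exper_\player$ and the free uninformative experiment, which (since the target's net payoff is $\ell(\rho)+\psi(\rho)$ and $U=u(\rho)$) forces $(u-\psi)(\rho)=\ell(\rho)$ as well, so $\rho$ must be a further touch point. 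Now suppose the uninformed-optimal action is also a best response to some target posterior $\mu_{\signal_\player}\neq\rho$ --- a situation that conditions (i) and (ii) do nothing to exclude, because separation only restricts pairs of \emph{target} posteriors (e.g.\ binary state, posteriors $0.3$ and $0.7$, best-response threshold at $0.4$, prior $0.5$). Then $u$ is affine on the segment $[\rho,\mu_{\signal_\player}]$, so $u-\psi$ is strictly concave there; a strictly concave function that equals the affine $\ell$ at both endpoints of a segment lies strictly \emph{above} $\ell$ in its interior, contradicting $u-\psi\leq\ell$. So you must give up either strict convexity of $\psi$ (losing strict monotonicity of the cost) or the exact cost formula (losing the ``in addition'' clause, which is precisely what Theorem~1's payoff bounds rest on). Your appeal to separation covers only segments between distinct target posteriors and does not touch this obstruction; the construction that works cannot be posterior-separable.

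For comparison, the paper sidesteps the construction entirely: it imports \citet{denti2021costly}'s full-support result (Lemma~2 in the appendix), including the exact cost value, as a black box, and the content of its proof of this lemma is (a) a restriction-to-$\supp(\rho)$ argument translating between the order $\succsim$ used in the statement and the stronger order $\succsim^*$ (and the extra no-null-signal condition) appearing in Denti's result, plus (b) the $\lambda_\player$ rescaling. If you intend to cite Denti rather than rebuild his construction, you need step (a), which your proposal omits --- the lemma explicitly allows $\rho$ without full support, and Denti's monotonicity and flexibility notions are with respect to a different order. If you intend a self-contained construction, it has to leave the posterior-separable class; the paper's Lemma~3, with its piecewise definition on and off the lower contour set of $\exper_\player$, indicates the kind of object required.
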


Lemma \ref{lemma:single_agent_plus} extends a result in \cite{denti2021costly}. \cite{denti2021costly} uses a stronger version of Blackwell's informativeness:  $\exper_\player\succsim^* \experb_\player$ if there is a Markov kernel $\garb:\Signal_\player \rightarrow \Delta(\Signal_\player)$ such that for every $\signal_\player \in \Signal_\player$ and $\omega\in\Omega$,
\begin{equation}\label{eq:black_there}
\experb_\player(\signal_\player|\omega) = \sum_{\signalb_\player \in \Signal_\player} \garb(\signal_\player|\signalb_\player)\exper(\signalb_\player|\omega).
\end{equation}
The difference between $\succsim$ and $\succsim^*$ is that (\ref{eq:BOrder_as}) holds on the support of $\rho$ while (\ref{eq:black_there}) on $\Omega$. 

We say that $\Exper_\player$ is \textbf{flexible*} if, whenever $\exper_\player \succsim^* \experb_\player$ and $\exper_\player \in \Exper_\player$, then $\experb_\player \in \Exper_\player$. We also say that $\icost_\player:\Exper_\player\rightarrow\mathbb{R}_+$ is \textbf{monotone*} if, whenever $\exper_\player, \experb_\player \in \Exper_{\player}$ are such that $\exper_\player \succsim^* \experb_\player$ (resp., $\exper_\player \succ^* \experb_\player$), then $\icost_\player(\exper_\player) \geq \icost_\player(\experb_\player)$ (resp., $\icost_\player(\exper_\player) > \icost_\player(\experb_\player)$).

\begin{lemma}[\citealp{denti2021costly}]\label{lemma:single_agent}
A flexible* $\Exper_\player\subseteq \Delta(\Signal_\player)^\Omega$ and a monotone* $\icost_\player:\Exper_\player\rightarrow\mathbb{R}_+$ exist such that $(\exper_\player,\aplan_\player)$ is an optimal solution of (\ref{eq:single_agent}) if and only if the following conditions hold:
\begin{itemize}
\item[(i)] For all signals $\signal_\player$ such that $\mu_\player(\signal_\player)>0$, 
    \[
    \aplan_\player \left(\BR\left(\mu_{\signal_\player}\right)\vert \signal_\player\right)=1.
    \]
\item[(ii)] For all signals $\signal_\player$ and $\signal_\player^\prime$ such that $\mu_\player(\signal_\player)>0$ and $\mu_\player(\signalb_\player)>0$,
    \[
\mu_{\signal_i}\neq \mu_{\signal^\prime_i}\quad\text{implies}\quad \BR(\mu_{\signal_\player})\cap \BR(\mu_{\signalb_\player})=\varnothing.
    \]
\item[(iii)] For all signals $\signal_\player$ and states $\omega$, if $\exper_\player(\signal_\player\vert\omega)>0$ then $\mu_\player(\signal_\player)>0$.
\end{itemize}
In addition, one can choose $\icost_\player$ so that
\[
\icost_\player(\exper_\player)=\sum_{\omega,\signal_\player,\act_\player} w_\player(\act_\player,\omega) \mu_\player(\act_\player,\signal_\player,\omega)-
\max_{\act_\player\in\Act_\player}\sum_{\omega} w_\player(\act_\player,\omega) \rho(\omega).
\]
\end{lemma}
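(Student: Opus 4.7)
The plan is to adapt \citeapos{denti2021costly} argument, which this lemma attributes to him. The biconditional splits into necessity and sufficiency, with the ``In addition'' clause handled as part of the sufficiency construction.

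For the necessity direction, suppose $(\exper_\player,\aplan_\player)$ solves~(\ref{eq:single_agent}) for some flexible* $\Exper_\player$ and monotone* $\icost_\player$. Condition (i) is standard: for any positive-probability signal $\signal_\player$, the action plan must be conditionally optimal, since any deviation leaves the sunk cost $\icost_\player(\exper_\player)$ unchanged. Conditions (ii) and (iii) both follow from a common template---identify a ``wasteful'' feature of $\exper_\player$, construct a strictly coarser replacement in the $\succ^*$ order, invoke strict monotonicity* to reduce the cost, and verify that gross utility is unchanged. For (ii): if $\mu_{\signal_\player}\neq\mu_{\signalb_\player}$ shared a best response $\actc_\player$, merging the two signals into one (with modified action plan playing $\actc_\player$ on the merged signal) yields $\experb_\player\in\Exper_\player$ with $\exper_\player\succ^*\experb_\player$; gross utility is preserved because $\actc_\player$ best-replies to both posteriors and hence to any mixture, so strict monotonicity* gives $\icost_\player(\experb_\player)<\icost_\player(\exper_\player)$, a contradiction. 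For (iii): if $\exper_\player(\signal_\player|\omega_0)>0$ while $\mu_\player(\signal_\player)=0$, then necessarily $\rho(\omega_0)=0$, and rerouting the $\signal_\player$-mass at $\omega_0$ to some other signal yields an experiment that is strictly $\succ^*$-coarser---strict precisely because $\succ^*$ is sensitive to behavior at zero-prior states---without changing gross utility, again contradicting optimality.

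For sufficiency, given $(\exper_\player,\aplan_\player)$ satisfying (i)--(iii), I will exhibit a flexible* $\Exper_\player$ and monotone* $\icost_\player$ under which $(\exper_\player,\aplan_\player)$ is optimal. Take $\Exper_\player$ to be the $\succsim^*$-downward closure of $\exper_\player$, i.e., all $\experb_\player$ with $\exper_\player\succsim^*\experb_\player$; transitivity of $\succsim^*$ ensures flexibility*. Define $\icost_\player(\experb_\player):=V(\experb_\player)-V_0$, the value of information, where $V(\experb_\player)$ is the gross utility from using $\experb_\player$ with an optimally chosen action plan and $V_0:=\max_{\act_\player}\sum_\omega w_\player(\act_\player,\omega)\rho(\omega)$. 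Blackwell's theorem gives $\icost_\player\geq 0$. Condition (i) ensures $\aplan_\player$ realizes $V(\exper_\player)$, so the cost formula in the ``In addition'' clause holds at $\exper_\player$. Under this cost, the net payoff from any $(\experb_\player,\aplan_\player')$ is bounded above by $V(\experb_\player)-\icost_\player(\experb_\player)=V_0$, and $(\exper_\player,\aplan_\player)$ attains this bound.

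The main obstacle is verifying \emph{strict} monotonicity* of $\icost_\player$. Blackwell's theorem only gives weak monotonicity of $V$ under $\succsim^*$, and strict informativeness need not translate into strict value gains in every decision problem. Two potential failures must be addressed: $V$ can be flat across distinct posteriors that share a best response, and $V$ is insensitive to changes of the experiment on zero-prior states of $\rho$. Condition (ii) rules out the first failure on the relevant portion of $\Exper_\player$, since garblings of $\exper_\player$ induce posteriors formed from convex combinations of $\exper_\player$'s own, whose best-response sets are pairwise disjoint. Condition (iii) rules out the second, since any $\succ^*$-strict reduction within $\Exper_\player$ must then correspond to a genuine coarsening on $\supp(\rho)$. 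A small strictly $\succ^*$-increasing perturbation of the cost, anchored to vanish at $\exper_\player$---calibrated so the cost formula at $\exper_\player$ is preserved and nonnegativity on $\Exper_\player$ is maintained, possibly after further pruning $\Exper_\player$---closes the argument; the precise calibration tracks that of \cite{denti2021costly}.
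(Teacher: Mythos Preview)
The paper does not prove this lemma: it is stated with attribution to \cite{denti2021costly} and invoked as a black box in the proof of Lemma~\ref{lemma:single_agent_plus}. There is thus no in-paper argument to compare against. Your necessity direction is sound---the merge-two-signals argument for (ii) and the reroute-at-zero-prior-states argument for (iii) are the right ideas.

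Your sufficiency construction, however, has a genuine gap. The value-of-information cost $\icost_\player(\experb_\player)=V(\experb_\player)-V_0$ on the $\succsim^*$-downward closure of $\exper_\player$ is \emph{not} strictly monotone*, even with full-support $\rho$ and (i)--(iii) in force. Take $\Omega=\{1,2,3\}$ with uniform prior, $\Act_\player=\{a_1,a_2,a_3\}$, payoffs $w_\player(a_1,\cdot)=(1,0,0)$, $w_\player(a_2,\cdot)=(0,1,0)$, $w_\player(a_3,\cdot)=(0.6,0.6,1)$, and let $\exper_\player$ be fully revealing. Conditions (i)--(iii) hold with singleton best-response sets. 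Now let $\experb_\player$ merge signals $1,2$ and let $\experc_\player$ be uninformative. Then $\experb_\player\succ^*\experc_\player$, yet $V(\experb_\player)=V(\experc_\player)=11/15$ because $a_3$ is the unique best response at every posterior of $\experb_\player$. Your reading of (ii)---that posteriors of garblings of $\exper_\player$ inherit pairwise-disjoint best-response sets---is the error: those posteriors are mixtures of $\exper_\player$'s posteriors, and nothing prevents distinct mixtures from sharing a best response. Condition (ii) ensures $V$ strictly drops when garbling \emph{from} $\exper_\player$, not that $V$ is strictly monotone throughout $\Exper_\player$. The perturbation you sketch does not obviously repair this: adding $\epsilon\bigl(H(\experb_\player)-H(\exper_\player)\bigr)$ drives the cost negative at uninformative experiments, while a convex combination $(1-\epsilon)(V-V_0)+\epsilon H$ moves $\icost_\player(\exper_\player)$ off the required value. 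Building a strictly monotone* cost that hits the target at $\exper_\player$, stays nonnegative, and keeps $\exper_\player$ optimal needs a more careful construction than the one you outline; the details are in \cite{denti2021costly}.
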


\begin{proof}[Proof of Lemma \ref{lemma:single_agent_plus}]To use Lemma \ref{lemma:single_agent}, we define $\Omega^* = \supp (\rho)$. Let $\rho^*$ be the restriction of $\rho$ to $\Omega$, and let $w_\player^*$ be the restriction of $w_\player$ to $\Act_\player\times\Signal_\player\times\Omega^*$. By construction, $\rho^*$ has full support on $\Omega^*$; thus, for experiments defined on $\Omega^*$, the rankings $\succeq$ and $\succeq^*$ coincide.

``If.'' Suppose $(\exper_\player,\aplan_\player)$ satisfy Lemma \ref{lemma:single_agent_plus}-(i) and Lemma \ref{lemma:single_agent_plus}-(ii). Let $\exper_\player^*$ be the restriction of $\exper_\player$ to $\Omega^*$, and let $\mu^*_\player$ be the restriction of $\mu_\player$ to $\Act_\player\times\Signal_\player\times\Omega^*$. 

Lemma \ref{lemma:single_agent_plus}-(i) implies that for all signals $\signal_\player$ such that $\mu^*_\player(\signal_\player)>0$, 
    \[
    \aplan_\player \left(\BR\left(\mu^*_{\signal_\player}\right)\vert \signal_\player\right)=1.
    \]
Lemma \ref{lemma:single_agent_plus}-(ii) implies that for all signals $\signal_\player$ and $\signal_\player^\prime$ such that $\mu^*_\player(\signal_\player)>0$ and $\mu^*_\player(\signalb_\player)>0$,
    \[
\mu^*_{\signal_i}\neq \mu^*_{\signal^\prime_i}\quad\text{implies}\quad \BR(\mu^*_{\signal_\player})\cap \BR(\mu^*_{\signalb_\player})=\varnothing.
    \]
Since $\rho^*$ has full support, if $\exper^*_\player(\signal_\player\vert\omega^*)>0$ for $\signal_\player\in\Signal_\player$ and $\omega^*\in\Omega^*$, then $\mu_\player^*(\signal_\player)>0$. 

Thus, by Lemma \ref{lemma:single_agent}, there exist a flexible* $\Exper^*_\player\subseteq\Delta(\Signal_\player)^{\Omega^*}$ and monotone* $\icost^*:\Exper^*\rightarrow \mathbb{R}_+$ such that $(\exper^*_\player,\aplan_\player)$ is an optimal solution of 
\begin{equation}\label{eq:apr_12_morning}
\max_{\exper^\star_\player\in\Exper^*_\player,\aplan_\player\in\Aplan_\player} \left[\sum_{\omega^*,\signal_\player,\act_\player} w^*_\player(\act_\player,\omega^*) \aplan_{\player}(\act_\player|\signal_\player)\exper^\star_\player(\signal_{\player}|\omega^*)\rho^*(\omega^*)\right] - \icost^*_{\player}(\exper^\star_\player).
\end{equation}
In addition, we can choose the cost function so that
\[
\icost^*_\player(\exper^*_\player)=\sum_{\omega^*,\signal_\player,\act_\player} w^*_\player(\act_\player,\omega^*) \mu_\player^*(\act_\player,\signal_\player,\omega)-
\max_{\act_\player\in\Act_\player}\sum_{\omega^*} w_\player^*(\act_\player,\omega^*) \rho^*(\omega^*).
\]

Fix $\lambda_\player\in (0,1]$. Let $\Exper_\player$ be the set of all $\exper^\prime_\player:\Omega\rightarrow\Delta(\Signal_\player)$ such that $\exper\succeq\exper^\prime$. For $\exper^\prime\in\Exper_\player$, define $\icost_\player(\exper^\prime_\player)=\lambda_\player \icost^*_\player(\exper^\star_\player)$ where $\exper^\star_\player$ is the restriction of $\exper^\prime_\player$ to $\Omega^*$. Note that $\icost_\player$ is well defined because, if $\exper_\player\succeq\exper^\prime_\player$ and $\exper^\star_\player$ is the restriction of $\exper^\prime$ to $\Omega^*$, then $\exper^*_\player\succeq^*\exper^\star_\player$, which in turn implies $\exper^\star_\player\in \Exper_\player^*$ (given that $\Exper_\player^*$ is flexible*). Note that 
\[
\icost_\player(\exper_\player)=\lambda_\player \left(\sum_{\act_\player, \signal_\player,\omega} w_\player(\act_\player,\omega) \mu_\player(\act_\player,\signal_\player,\omega)-
\max_{\act_\player\in\Act_\player}\sum_{\omega} w_\player(\act_\player,\omega) \rho(\omega)\right).
\]

Clearly, $\Exper_\player$ is flexible. Since $\icost^*_\player$ is monotone*, $\icost_\player$ is monotone. Moreover, since $(\exper^*_\player,\aplan_\player)$ is an optimal solution of (\ref{eq:apr_12_morning}), then $(\exper_\player,\aplan_\player)$ is an optimal solution of 
\[
\max_{\exper^\prime_\player\in\Exper_\player,\aplan_\player\in\Aplan_\player} \left[\sum_{\omega,\signal_\player,\act_\player} w_\player(\act_\player,\omega) \aplan_{\player}(\act_\player|\signal_\player)\exper^\prime_\player(\signal_{\player}|\omega)\rho(\omega)\right] - \frac{1}{\lambda_\player}\icost_{\player}(\exper^\prime_\player).
\]
By construction of $\Exper_\player$, $(\exper_\player,\aplan_\player)$ is also optimal solution of
\[
\max_{\exper^\prime_\player\in\Exper_\player,\aplan_\player\in\Aplan_\player} \left[\sum_{\omega,\signal_\player,\act_\player} w_\player(\act_\player,\omega) \aplan_{\player}(\act_\player|\signal_\player)\exper^\prime_\player(\signal_{\player}|\omega)\rho(\omega)\right].
\]
Combining these two facts, we obtain that $(\exper_\player,\aplan_\player)$ is an optimal solution of 
\[
\max_{\exper^\prime_\player\in\Exper_\player,\aplan_\player\in\Aplan_\player} \left[\sum_{\omega,\signal_\player,\act_\player} w_\player(\act_\player,\omega) \aplan_{\player}(\act_\player|\signal_\player)\exper^\prime_\player(\signal_{\player}|\omega)\rho(\omega)\right] - \icost_{\player}(\exper^\prime_\player).
\]
This concludes the proof of the ``if'' part of Lemma \ref{lemma:single_agent_plus}.

``Only if.'' Let $(\exper_\player,\aplan_\player)$ be an optimal solution of (\ref{eq:single_agent}) for some flexible $\Exper_\player$ and some monotone $\icost_\player$. Let $\exper^*$ be the restrictions of $\exper$ to $\Omega^*$, and let  $\mu^*_\player$ be the restriction of $\mu_\player$ to $\Act_\player\times\Signal_\player\times\Omega^*$.

We denote by $\Exper_\player^*$ the set of all $\exper_\player^\star:\Omega^*\rightarrow\Delta(\Signal_\player)$ for which there exists $\exper_\player^\prime\in\Exper_\player$ such that $\exper_\player^\star(\cdot\vert \omega^*)=\exper_\player^\prime(\cdot\vert \omega^*)$ for all $\omega^*\in\Omega^*$. For $\exper_\player^\star\in \Exper_\player^*$, let $\icost_\player^*(\exper_\player^\star)$ be the infimum of $\icost_\player(\exper_\player^\prime)$ over all $\exper_\player^\prime\in\Exper_\player$ such that $\exper_\player^\star$ is the restriction of $\exper_\player^\prime$ to $\Omega^*$. Notice that, since $\icost$ is monotone, $\icost_\player^*(\exper_\player^\star)=\icost_\player(\exper_\player^\prime)$ for all $\exper_\player^\star\in \Exper_\player^*$ and $\exper_\player^\prime\in \Exper_\player$ such that $\exper_\player^\star$ is the restriction of $\exper_\player^\prime$ to $\Omega^*$.

Since $\Exper_\player$ is flexible, $\Exper_\player^*$ is flexible*. Since $\icost$ is monotone, $\icost^*$ is monotone*. Moreover, since  $(\exper_\player,\aplan_\player)$ is a optimal solution of (\ref{eq:single_agent}) for $\Exper_\player$ and $\icost_\player$, then $(\exper_\player^*,\aplan_\player)$ is an optimal solution of 
\[
\max_{\exper^\star_\player\in\Exper^*_\player,\aplan_\player\in\Aplan_\player} \left[\sum_{\omega^*,\signal_\player,\act_\player} w^*_\player(\act_\player,\omega^*) \aplan_{\player}(\act_\player|\signal_\player)\exper^\star_\player(\signal_{\player}|\omega^*)\rho^*(\omega^*)\right] - \icost^*_{\player}(\exper^\star_\player).
\]
Then, Lemma \ref{lemma:single_agent}-(i) implies  Lemma \ref{lemma:single_agent_plus}-(i), and Lemma \ref{lemma:single_agent}-(ii) implies  Lemma \ref{lemma:single_agent_plus}-(ii).
\end{proof}

Next we refine Lemma \ref{lemma:single_agent_plus} by showing that one can put a bound on the cost of all experiments. For short, we define
\begin{align*}
\underline{w}_\player(\mu_\player) & = \max_{\actb_\player\in\Act_\player}\sum_{\act_\player,\signal_\player,\omega} w_\player(\actb_\player,\omega) \mu_\player(\act_\player,\signal_\player,\omega)=\max_{\actb_\player\in\Act_\player}\sum_{\omega} w_\player(\actb_\player,\omega) \rho(\omega),\\
\overline{w}_\player(\mu_\player) & = \sum_{\act_\player,\signal_\player,\omega} w_\player(\act_\player,\omega) \mu_\player(\act_\player,\signal_\player,\omega), \\
\hat{w}_\player(\mu_\player) & =\sum_{\omega} \max_{\actb_\player\in\Act_\player} \sum_{\act_\player,\signal_\player} w_\player(\actb_\player,\omega) \mu_\player(\act_\player,\signal_\player,\omega)=\sum_{\omega} \max_{\actb_\player\in\Act_\player} w_\player(\actb_\player,\omega) \rho(\omega).
\end{align*}
\begin{lemma}\label{lemma:single_agent_bound}
If $(\exper_\player,\aplan_\player)$ satisfies Lemma \ref{lemma:single_agent_plus}-(i) and Lemma \ref{lemma:single_agent_plus}-(ii), then there exists a monotone $\icost_\player:\Delta(\Signal_\player)^\Omega\rightarrow\mathbb{R}_+$ such that $(\exper_\player,\aplan_\player)$ is an optimal solution of (\ref{eq:single_agent}), with $\Exper_\player=\Delta(\Signal_\player)^\Omega$. Moreover, for every $\lambda_\player\in(0,1]$, one can choose  $\icost_\player$ so that 
\begin{align*}
\icost_\player(\exper_\player) & =\lambda_\player\left(\overline{w}_\player(\mu_\player)- \underline{w}_\player(\mu_\player)\right),\\
\max_{\experb_\player\in\Delta(\Signal_\player)^\Omega} \icost_\player(\experb_\player)& \leq \lambda_\player + \hat{w}_\player(\mu_\player) - \left[(1-\lambda_\player)\overline{w}_\player(\mu_\player)+\lambda_\player \underline{w}_\player(\mu_\player)\right].
\end{align*}
\end{lemma}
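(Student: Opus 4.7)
My plan is to start from the flexible technology supplied by Lemma~\ref{lemma:single_agent_plus} and extend its cost function from its natural domain to the full product space $\Delta(\Signal_\player)^\Omega$, while (i) preserving optimality of $(\exper_\player,\aplan_\player)$, (ii) keeping the prescribed value $\icost_\player(\exper_\player)=\lambda_\player(\overline{w}_\player(\mu_\player)-\underline{w}_\player(\mu_\player))$, and (iii) enforcing the claimed uniform upper bound on $\icost_\player$.

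First I apply Lemma~\ref{lemma:single_agent_plus} to obtain a flexible $\Exper^0$ and monotone $\icost^0$ for which $(\exper_\player,\aplan_\player)$ is optimal and $\icost^0(\exper_\player)=\lambda_\player(\overline{w}_\player(\mu_\player)-\underline{w}_\player(\mu_\player))$. Without loss I may restrict $\Exper^0$ to $\{\experb_\player:\exper_\player\succsim \experb_\player\}$, which is still flexible and still admits $(\exper_\player,\aplan_\player)$ as optimal; by monotonicity of $\icost^0$, then $\icost^0(\experb_\player)\leq\lambda_\player(\overline{w}_\player(\mu_\player)-\underline{w}_\player(\mu_\player))$ for every $\experb_\player\in\Exper^0$. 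Next, set
\[
K:=\hat{w}_\player(\mu_\player)-(1-\lambda_\player)\overline{w}_\player(\mu_\player)-\lambda_\player\underline{w}_\player(\mu_\player),
\]
fix a strictly Blackwell-monotone function $\psi:\Delta(\Signal_\player)^\Omega\to[0,1]$ (e.g., the normalized Shannon mutual information between $\omega$ and $\signal_\player$ under prior $\rho$), and extend $\icost^0$ to $\icost_\player:\Delta(\Signal_\player)^\Omega\to\mathbb{R}_+$ by setting $\icost_\player(\experb_\player):=\icost^0(\experb_\player)$ on $\Exper^0$ and $\icost_\player(\experb_\player):=K+\lambda_\player\psi(\experb_\player)$ on its complement.

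The constant $K$ is exactly the cost needed to deter a deviation to full information: every experiment $\experb_\player$ yields maximal gross payoff $V^*(\experb_\player):=\sum_{\signal_\player}\max_{\act_\player}\sum_\omega w_\player(\act_\player,\omega)\experb_\player(\signal_\player|\omega)\rho(\omega)\leq\hat{w}_\player(\mu_\player)$, so any $\experb_\player\notin\Exper^0$ delivers net payoff at most $\hat{w}_\player(\mu_\player)-K=(1-\lambda_\player)\overline{w}_\player(\mu_\player)+\lambda_\player\underline{w}_\player(\mu_\player)$, which equals the net payoff at $(\exper_\player,\aplan_\player)$. Together with optimality inside $\Exper^0$ inherited from Lemma~\ref{lemma:single_agent_plus}, this gives global optimality. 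The value at $\exper_\player$ is preserved by construction, and the upper bound $\max\icost_\player\leq K+\lambda_\player$ is immediate: outside $\Exper^0$ it follows from $\psi\leq 1$, and inside $\Exper^0$ from $\icost_\player\leq\lambda_\player(\overline{w}_\player(\mu_\player)-\underline{w}_\player(\mu_\player))\leq K+\lambda_\player$ (the last inequality uses $\hat{w}_\player(\mu_\player)\geq\overline{w}_\player(\mu_\player)$).

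The main obstacle is verifying strict monotonicity of $\icost_\player$ across the seam between the two pieces of the definition. Comparisons internal to $\Exper^0$ or to its complement inherit strict monotonicity from $\icost^0$ and $\psi$, respectively. The only non-trivial case is $\experb_\player\succ\experb_\player'$ with $\experb_\player\notin\Exper^0$ and $\experb_\player'\in\Exper^0$ (the reverse is ruled out by transitivity of $\succsim$ and the definition of $\Exper^0$). I would split on whether $\exper_\player\succ\experb_\player'$ strictly or $\experb_\player'$ is Blackwell-equivalent to $\exper_\player$: in the first case $\icost^0(\experb_\player')<\lambda_\player(\overline{w}_\player(\mu_\player)-\underline{w}_\player(\mu_\player))\leq K\leq \icost_\player(\experb_\player)$; in the second $\experb_\player\succ\exper_\player$, so strict Blackwell monotonicity of $\psi$ gives $\psi(\experb_\player)>\psi(\exper_\player)\geq 0$, hence $\icost_\player(\experb_\player)=K+\lambda_\player\psi(\experb_\player)>K=\icost^0(\exper_\player)=\icost^0(\experb_\player')$.
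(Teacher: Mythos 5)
Your construction is essentially the paper's own: keep the Lemma~\ref{lemma:single_agent_plus} cost on the experiments Blackwell-dominated by $\exper_\player$, and on the complement charge the constant $\hat{w}_\player(\mu_\player)-\overline{w}_\player(\mu_\player)+\lambda_\player(\overline{w}_\player(\mu_\player)-\underline{w}_\player(\mu_\player))$ plus a strictly Blackwell-monotone term bounded by $\lambda_\player$ (the paper uses a rescaled entropy cost $K_\player\icost_\player^\star$ where you use $\lambda_\player\psi$), with the same seam-crossing monotonicity and deviation-deterrence checks. The argument is correct; the only blemish is the asserted equality $K=\icost^0(\exper_\player)$ in your last display, which should read $K\geq\icost^0(\exper_\player)$ (they differ by $\hat{w}_\player(\mu_\player)-\overline{w}_\player(\mu_\player)\geq 0$), but the chain $K+\lambda_\player\psi(\experb_\player)>K\geq\icost^0(\exper_\player)=\icost^0(\experb_\player')$ still yields the required strict inequality.
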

\begin{proof}
By Lemma \ref{lemma:single_agent_plus}, there are flexible  $\Exper_\player^\prime\subseteq \Delta(\Signal_\player)^\Omega$ and monotone $\icost_\player^\prime:\Exper_\player^\prime\rightarrow\mathbb{R}_+$ such that $(\exper_\player,\aplan_\player)$ is an optimal solution of (\ref{eq:single_agent}). In addition, we can assume that 
\[
\icost_\player^\prime(\exper_\player)= \lambda_\player\left(\overline{w}_\player(\mu_\player)- \underline{w}_\player(\mu_\player)\right).
\]
Note that the upper bound on information costs we wish to obtain can be re-written as
\[
\max_{\experb_\player\in\Delta(\Signal_\player)^\Omega} \icost_\player(\experb_\player) \leq \lambda_\player + \icost_\player^\prime(\exper_\player) +\hat{w}_\player(\mu_\player) - \overline{w}_\player(\mu_\player).
\]

Let $\icost_\player^\star:\Delta(\Signal_\player)^\Omega\rightarrow \mathbb{R}_+$ be any finite monotone cost function defined over the set of all experiments (e.g., the entropy cost of \citealp{Matejka2015a}). Let $K_\player>0$ be a positive constant such that for all $\exper_\player^\prime\in \Delta(\Signal_\player)^\Omega$, 
\[
K_\player\icost_\player^\star(\exper_\player^\prime)\leq \lambda_\player.
\]

We define $\icost_\player:\Delta(\Signal_\player)^\Omega\rightarrow\mathbb{R}_+$ as follows. For $\exper_\player\succeq \exper_\player^\prime$, we set 
$
\icost_\player(\exper_\player^\prime)=\icost_\player^\prime(\exper_\player^\prime),
$
and for $\exper_\player\not\succeq \exper_\player^\prime$, we set 
\[
\icost_\player(\exper_\player^\prime)= K_\player\icost_\player^\star(\exper_\player^\prime)+ \icost_\player^\prime(\exper_\player)+\hat{w}_\player(\mu_\player) - \overline{w}_\player(\mu_\player).
\]
Notice that for all $\exper_\player^\prime\in \Delta(\Signal_\player)^\Omega$,
\[
\icost_\player(\exper_\player^\prime) \leq \lambda_\player + \icost_\player^\prime(\exper_\player)+\hat{w}_\player(\mu_\player) - \overline{w}_\player(\mu_\player).
\]
Thus, $\icost_\player$ has the desired bound. It remains to check that $\icost_\player$ is monotone, and that $(\exper_\player,\aplan_\player)$ is an optimal solution of (\ref{eq:single_agent}) when all experiments are feasible and costs are given by $\icost_\player$. 

To verify that $\icost_\player$ is monotone, let $\experb_\player$ and  $\experc_\player$ a pair of experiments. If  $\experb_\player\sim \experc_\player$, then either both experiments are dominated by $\exper_\player$, in which case 
\[
\icost_\player(\exper_\player^\prime)=\icost_\player^\prime(\exper_\player^\prime)=\icost_\player^\prime(\exper_\player^{\prime\prime})=\icost_\player(\exper_\player^{\prime\prime})
\]
because $\icost_\player^\prime$ is monotone, or neither experiment are dominated by $\exper_\player$, in which case 
\[
\icost_\player(\exper_\player^\prime)-\icost_\player(\exper_\player^{\prime\prime})=K_\player(\icost_\player^\star(\exper_\player^\prime)-\icost_\player^\star(\exper_\player^{\prime\prime}))=0
\]
because $\icost_\player^\star$ is monotone. In any case, if $\experb_\player\sim \experc_\player$, then $\icost_\player(\exper_\player^\prime)=\icost_\player(\exper_\player^{\prime\prime})$.

Suppose now that $\experb_\player\succ \experc_\player$. If $\exper_\player \succeq \experb_\player$, then $\icost_\player(\exper_\player^\prime)>\icost_\player(\exper_\player^{\prime\prime})$ because $\icost_\player^\prime$ is monotone. If $\exper_\player \not\succeq \experc_\player$, then $\icost_\player(\exper_\player^\prime)>\icost_\player(\exper_\player^{\prime\prime})$ because $\icost_\player^\star$ is monotone. Finally, consider the case in which $\exper_\player \not\succeq \experb_\player$ and $\exper_\player \succeq \experc_\player$. We have that
\begin{align*}
\icost_\player(\exper_\player^\prime) 
& = K_\player\icost_\player^\star(\exper_\player^\prime)+ \icost_\player^\prime(\exper_\player)+\hat{w}_\player(\mu_\player) - \overline{w}_\player(\mu_\player)\\
& > \icost_\player^\prime(\exper_\player)+\hat{w}_\player(\mu_\player) - \overline{w}_\player(\mu_\player)\\
& \geq\icost_\player^\prime(\exper_\player^{\prime\prime})=\icost_\player(\exper_\player^{\prime\prime}).
\end{align*}
where the strict inequality follows from $\icost_\player^\star$ being monotone and $\exper_\player \not\succeq \experb_\player$, and the weak inequality from $\icost_\player^\prime$ being monotone and $\exper_\player \succeq \experc_\player$. Overall, we conclude that $\icost_\player$ is monotone.

Next we check that $(\exper_\player,\aplan_\player)$ is an optimal solution of (\ref{eq:single_agent}) when all experiments are feasible and the cost of information is $\icost_\player$. Let $\exper_\player^\prime\in \Delta(\Signal_\player)^\Omega$ and $\aplan_\player^\prime\in\Aplan_\player$ be an alternative solution of (\ref{eq:single_agent}). If $\exper_\player\succeq \exper_\player^\prime$, then $\exper_\player^\prime\in \Exper_\player^\prime$ because $\exper_\player\in \Exper_\player^\prime$ and $\Exper_\player^\prime$ is flexible. Thus, since $(\exper_\player,\aplan_\player)$ is an optimal solution of (\ref{eq:single_agent}) when the set of feasible of experiments is $\Exper_\player^\prime$ and the cost of information is $\icost_\player^\prime$, then $(\exper_\player^\prime,\aplan_\player^\prime)$ cannot be strictly better than $(\exper_\player,\aplan_\player)$ when the costs of $\exper_\player$ and $\exper_\player^\prime$ are given by $\icost_\player$, given that $\icost_\player$ coincide with $\icost_\player^\prime$ on $\Exper_\player^\prime$.

Consider now the case in which $\exper_\player\not\succeq \exper_\player^\prime$. The net payoff from $(\exper_\player^\prime,\aplan_\player^\prime)$ is at most 
\begin{align*}
 \sum_\omega \max_{\act_\player} w_\player(\act_\player,\omega)\rho(\omega) - \icost_\player(\exper_\player^\prime) 
& = - \left( K_\player\icost_\player^\star(\exper_\player^\prime)+ \icost_\player^\prime(\exper_\player)-\sum_{\omega,\signal_\player,\act_\player} w_\player(\act_\player,\omega) \mu_\player(\act_\player,\signal_\player,\omega)\right)\\
& \leq \sum_{\omega,\signal_\player,\act_\player} w_\player(\act_\player,\omega) \mu_\player(\act_\player,\signal_\player,\omega)-\icost_\player^\prime(\exper_\player)\\
& = \sum_{\omega,\signal_\player,\act_\player} w_\player(\act_\player,\omega) \mu_\player(\act_\player,\signal_\player,\omega)-\icost_\player(\exper_\player).
\end{align*}
Overall, we conclude that $(\exper_\player,\aplan_\player)$ is an optimal solution of (\ref{eq:single_agent}) when all experiments are feasible and the cost of information is $\icost_\player$.
\end{proof}

\subsection{Equilibrium information structures}\label{subsec:embedding}

In this section we characterize the information structures that can arise in an equilibrium of a rational inattention game for a fixed base game $\BGame$.

We adopt the following notation. Given an information structure $\mathcal{S}=(\Corstate,\corprior,(\Signal_\player,\exper_\player)_{\player\in\Player})$ and a profile of action plans $\aplan=(\aplan_\player)_{\player\in\Player}$, we denote by $\nu\in \Delta (\Act\times\Signal\times\Corstate\times\Paystate)$ the induced probability measure over actions, signals, and states:
\begin{equation}
\nu(\act,\signal,\corstate,\paystate) =
\left[\prod_{\player\in\Player} \aplan_\player(\act_\player|\signal_\player) \exper_\player(\signal_\player|\corstate,\paystate)\right]
\corprior(\corstate|\paystate)\payprior(\paystate).
\end{equation}
Let $\nu(\signal_\player)$ be the probability that player $\player$ observes signal $\signal_\player$:
\[
\nu(\signal_\player)=\sum_{\act,\signal_{-\player},\corstate,\paystate}\nu(\act,\signal_\player,\signal_{-\player},\corstate,\paystate).
\]
For all $\signal_\player$ such that $\nu(\signal_\player)>0$, we denote by $\nu_{\signal_\player}\in \Delta (\Act_{-\player}\times\Signal_{-\player}\times\Corstate\times\Paystate)$ the conditional distribution of the others' actions, the others' signals, and the state:
\[
\nu_{\signal_\player}(\act_{-\player},\signal_{-\player},\corstate,\paystate)=
\frac{\sum_{\act_{\player}}\nu(\act_\player,\act_{-\player},\signal_\player,\signal_{-\player},\corstate,\paystate)}
{\nu(\signal_\player)}.
\]
Let $\BR\left(\nu_{\signal_\player}\right)$ be the corresponding set of best responses:
\[
\BR\left(\nu_{\signal_\player}\right) = \argmax_{\act_\player \in \Act_\player} \left[ 
\sum_{\act_{-\player},\signal_{-\player},\corstate,\paystate}\util_{\player}(\act_\player,\act_{-\player},\paystate)\nu_{\signal_\player}(\act_{-\player},\signal_{-\player},\corstate,\paystate)
\right].
\] 
With a slight abuse of of notation, we also define
\begin{align*}
\noinfoval_\player(\nu) &=\max_{\actb_\player\in\Act_\player}\sum_{\act,\signal,\corstate,\paystate} \util_\player(\actb_{\player},\act_{-\player},\paystate) \nu(\act,\signal,\corstate,\paystate),\\
\grossval_\player(\nu) &=\sum_{\act,\signal,\corstate,\paystate} \util_{\player}(\act,\paystate) \nu(\act,\signal,\corstate,\paystate),\\
\hat{v}_\player(\nu) & =\sum_{\corstate,\paystate}\max_{\actb_\player} \sum_{\act,\signal} \util_{\player}(\actb_\player,\act_{-\player},\paystate) \nu(\act,\signal,\corstate,\paystate).
\end{align*}
\begin{lemma}\label{lem:multi_agent}
Let $\BGame$ be a base game, $\mathcal{S}=(\Corstate,\corprior,(\Signal_\player,\exper_\player)_{\player\in\Player})$ an information structure, and $\aplan=(\aplan_\player)_{\player\in\Player}$ a profile of action plans. For every player $\player$, there exist a flexible $\Exper_\player$ and a monotone $\icost_\player$ such that $(\exper,\aplan)$ is an equilibrium of $(\BGame,\itech)$, with $\itech=(\Corstate,\corprior,(\Signal_\player,\Exper_\player,\icost_\player)_{\player\in\Player})$, if and only if for every player $\player$, the following conditions hold:
\begin{itemize}
\item[(i)] For all signals $\signal_\player$ such that $\nu(\signal_\player)>0$, 
    \[
    \aplan_\player \left(\BR\left(\nu_{\signal_\player}\right)\vert \signal_\player\right)=1.
    \]
\item[(ii)] For all signals $\signal_\player$ and $\signal_\player^\prime$ such that $\nu(\signal_\player)>0$ and $\nu(\signalb_\player)>0$,
    \[
\nu_{\signal_\player}\neq \nu_{\signalb_\player}\quad\text{implies}\quad \BR\left(\nu_{\signal_\player}\right)\cap \BR\left(\nu_{\signalb_\player}\right)=\varnothing.
    \]
\end{itemize}
In addition, for every player $\player$ and scalar $\lambda_\player\in (0,1]$, one can choose $\Exper_\player=\Delta(\Signal_\player)^{\Corstate\times\Paystate}$ and $\icost_\player:\Delta(\Signal_\player)^{\Corstate\times\Paystate}\rightarrow \mathbb{R}_+$ such that
\begin{align*}
\icost_\player(\exper_\player) &= \lambda_\player\left( \grossval_\player(\nu) -\noinfoval_\player(\nu)\right),\\
\max_{\experb_\player\in \Delta(\Signal_\player)^{\Corstate\times\Paystate}}\icost_\player(\experb_\player) & \leq  \lambda_\player+\hat{v}_\player(\nu)-\left[(1-\lambda_\player)\grossval_\player(\nu)+\lambda_\player\noinfoval_\player(\nu)\right].
\end{align*}
\end{lemma}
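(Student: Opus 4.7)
The idea is to reduce the multi-agent statement to two applications of the single-agent Lemmas \ref{lemma:single_agent_plus} and \ref{lemma:single_agent_bound}, one for each player. Fix the others' strategies $(\exper_\playerb,\aplan_\playerb)_{\playerb\neq \player}$ and view player $\player$'s problem as a single-agent information-acquisition problem with state space $\Omega=\Corstate\times\Paystate$, prior $\rho(\corstate,\paystate)=\corprior(\corstate\vert\paystate)\payprior(\paystate)$, and effective utility
\[
w_\player(\act_\player,(\corstate,\paystate))=\sum_{\act_{-\player}}\util_\player(\act_\player,\act_{-\player},\paystate)\thickbar{\aplan}_{-\player}(\act_{-\player}\vert \corstate,\paystate),\quad\thickbar{\aplan}_{-\player}(\act_{-\player}\vert\corstate,\paystate)=\sum_{\signal_{-\player}}\prod_{\playerb\neq \player}\aplan_\playerb(\act_\playerb\vert\signal_\playerb)\exper_\playerb(\signal_\playerb\vert\corstate,\paystate).
\]
Because $C_\player$ only depends on player $\player$'s own experiment, Nash equilibrium in the multi-agent game is exactly mutual optimality of these single-agent problems. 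The translation from the multi-agent to the single-agent conditional belief is the pivotal ingredient: since $(\signal_{-\player},\act_{-\player})$ is conditionally independent of $\signal_\player$ given $(\corstate,\paystate)$, the joint conditional $\nu_{\signal_\player}$ is fully determined by the single-agent posterior $\mu^\player_{\signal_\player}\in\Delta(\Corstate\times\Paystate)$, and the best-response sets coincide: $\BR(\nu_{\signal_\player})=\BR(\mu^\player_{\signal_\player})$ (where the right-hand side is computed with $w_\player$). Consequently, $\nu_{\signal_\player}\neq\nu_{\signalb_\player}$ iff $\mu^\player_{\signal_\player}\neq\mu^\player_{\signalb_\player}$.

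\textbf{Necessity.} Given an equilibrium $(\exper,\aplan)$ with flexible $\Exper_\player$ and monotone $\icost_\player$, player $\player$'s pair $(\exper_\player,\aplan_\player)$ is an optimal solution of the single-agent problem (\ref{eq:single_agent}) with the data above. Lemma \ref{lemma:single_agent_plus} then gives conditions (i) and (ii) of Lemma \ref{lemma:single_agent_plus} for $\mu^\player$, which, under the translation in the previous paragraph, become exactly conditions (i) and (ii) of Lemma \ref{lem:multi_agent} for $\nu$.

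\textbf{Sufficiency and payoff bounds.} Conversely, assume conditions (i)--(ii) of Lemma \ref{lem:multi_agent} hold for every player, and fix $\const_\player\in(0,1]$ for each $\player$. Apply Lemma \ref{lemma:single_agent_bound} independently to each player's single-agent problem to obtain a monotone cost function $\icost_\player:\Delta(\Signal_\player)^{\Corstate\times\Paystate}\rightarrow\mathbb{R}_+$ with $\Exper_\player=\Delta(\Signal_\player)^{\Corstate\times\Paystate}$, such that $(\exper_\player,\aplan_\player)$ is optimal for player $\player$ and
\[
\icost_\player(\exper_\player)=\const_\player\left(\bar{w}_\player(\mu^\player)-\underline{w}_\player(\mu^\player)\right),\qquad \max_{\experb_\player}\icost_\player(\experb_\player)\leq \const_\player+\hat{w}_\player(\mu^\player)-\big[(1-\const_\player)\bar w_\player(\mu^\player)+\const_\player\underline w_\player(\mu^\player)\big].
\]
A direct computation using $\thickbar{\aplan}_{-\player}$ shows $\bar w_\player(\mu^\player)=\bar\payvector_\player(\nu)$, $\underline w_\player(\mu^\player)=\underline\payvector_\player(\nu)$, and $\hat w_\player(\mu^\player)=\hat\payvector_\player(\nu)$, so the claimed identity and upper bound on $\icost_\player$ follow. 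Since each $\icost_\player$ depends only on $\exper_\player$, the profile $(\exper,\aplan)$ constitutes a Nash equilibrium of $(\BGame,\itech)$ with $\itech=(\Corstate,\corprior,(\Signal_\player,\Exper_\player,\icost_\player)_{\player\in\Player})$.

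\textbf{Main obstacle.} The only non-routine step is establishing the equivalence between the multi-agent conditional $\nu_{\signal_\player}$ and the single-agent posterior $\mu^\player_{\signal_\player}$, particularly verifying that the two best-response sets coincide and that $\nu_{\signal_\player}\neq\nu_{\signalb_\player}\Leftrightarrow\mu^\player_{\signal_\player}\neq\mu^\player_{\signalb_\player}$. Once this is in place, the multi-agent result follows essentially automatically from the per-player separability of costs in the information-acquisition game.
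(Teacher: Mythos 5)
Your proposal is correct and follows essentially the same route as the paper: reduce each player's problem to the single-agent setting with state space $\Corstate\times\Paystate$ and effective utility obtained by integrating out the opponents' signals and actions, then invoke Lemmas \ref{lemma:single_agent_plus} and \ref{lemma:single_agent_bound}. The "pivotal ingredient" you isolate — that $\BR(\nu_{\signal_\player})=\BR(\mu^\player_{\signal_\player})$ and that $\nu_{\signal_\player}=\nu_{\signalb_\player}$ iff $\mu^\player_{\signal_\player}=\mu^\player_{\signalb_\player}$, via conditional independence given $(\corstate,\paystate)$ — is exactly what the paper establishes in its Claims \ref{claim:equal_br} and \ref{claim:equal_beliefs}, and your sketch of why it holds is sound.
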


\begin{proof} The proof of Lemma \ref{lem:multi_agent} builds on Lemmas \ref{lemma:single_agent_plus} and \ref{lemma:single_agent_bound}. Paralleling the notation in Section \ref{subsec:singla_agent}, for every player $\player$ we define an auxiliary decision problem. Given $\Omega : = \Corstate \times \Paystate$, we take $\rho \in \Delta(\Omega)$ and  $w_\player:\Act_\player\times\Omega \rightarrow \mathbb{R}$ as follows: 
\begin{align}
\label{eq:multi-to-single-agent - prior}
\rho(\corstate,\paystate)& = \corprior(\corstate\vert\paystate)\payprior(\paystate),\\
\label{eq:multi-to-single-agent - utility}
w_\player(\act_\player,\corstate,\paystate)&=\sum_{\act_{-\player},\signal_{-\player}} \util_\player(\act,\paystate) \left[\prod_{\playerb\neq \player} \aplan_{\playerb}(\act_{\playerb}|\signal_\playerb)\exper_\playerb(\signal_\playerb\vert \corstate,\paystate)\right].
\end{align}
In the single-agent problem, $\Omega$ is the set of possible states, $\rho$ is the prior distribution of the state, and $w_\player$ is the utility function ($\Act_\player$ remains the set of feasible actions). Like in Section \ref{subsec:singla_agent}, given a signal $\signal_\player$ generated by $\exper_\player$ with positive probability, we denote by $\mu_{\signal_\player}\in\Delta(\Omega)$ the conditional distribution of $\omega$, and we write $\BR(\belief_{\signal_\player})$ for the best response set.

We will use the following claims, which relate the single-agent problem to the primitive many-player environment.
\begin{claim}\label{claim:equal_br}
For all signals $\signal_{\player}$ such that $\nu(\signal_\player)>0$,
\[
\BR\left(\belief_{\signal_\player}\right)=\BR\left(\nu_{\signal_\player}\right).
\]
\end{claim}
\begin{proof}[Proof of the claim]
For all actions $\act_\player$, we have 
\begin{equation*}
\begin{split}
\sum_{\omega}w_\player(\act_\player,\omega)\belief_{\signal_\player}(\omega) 
& = 
\sum_{\act_{-\player},\signal_{-\player},\corstate,\paystate} \util_\player(\act,\paystate) 
\left[\prod_{\playerb\neq \player} \aplan_{\playerb}(\act_{\playerb}|\signal_\playerb)\exper_\playerb(\signal_\playerb\vert \corstate,\paystate)
\right]
\frac{\exper_\player(\signal_\player|\corstate,\paystate)\corprior(\corstate|\paystate)\payprior(\paystate)
}
{
\nu(\signal_\player)
}
\\ & =
\sum_{\act_{-\player},\signal_{-\player},\corstate,\paystate} \util_{\player}(\act,\paystate) \nu_{\signal_\player}(\act_{-\player},\signal_{-\player},\corstate,\paystate).
\end{split}
\end{equation*}
We conclude that $\BR(\belief_{\signal_\player})=\BR(\nu_{\signal_\player})$.
\end{proof}

\begin{claim}\label{claim:equal_beliefs}
For all signals $\signal_{\player}$ and $\signal_{\player}^\prime$ such that $\nu(\signal_\player)>0$ and $\nu(\signalb_\player)>0$,
\[
    \nu_{\signal_\player}=\nu_{\signalb_\player} \text{ if and only if } \mu_{\signal_\player} =  \mu_{\signalb_\player}.
\]
\end{claim}
\begin{proof}[Proof of the claim]
Notice first that for all $\omega$,
\begin{align*}
\mu_{\signal_\player}(\omega) & =\sum_{\act_{-\player},\signal_{-\player}}\nu_{\signal_\player}(\act_{-\player},\signal_{-\player},\omega),\\
\mu_{\signal^\prime_\player}(\omega) & =\sum_{\act_{-\player},\signal_{-\player}}\nu_{\signalb_\player}(\act_{-\player},\signal_{-\player},\omega).
\end{align*}
Thus, $\nu_{\signal_\player} = \nu_{\signalb_\player}$ implies 
$\mu_{\signal_\player} = \mu_{\signalb_\player}$. For the converse direction, suppose $\mu_{\signal_\player}=\mu_{\signalb_\player}$. Then one obtains the following equality chain for all $\act_{-\player}$, $\signal_{-\player}$, $\corstate$, and $\paystate$:
\[
\begin{split}
\nu_{\signal_{\player}}(\act_{-\player},\signal_{-\player},\corstate,\paystate)
& = 
\frac{
\exper_{\player}(\signal_\player|\corstate,\paystate)\corprior(\corstate|\paystate)\payprior(\paystate)\prod_{\playerb\neq \player}\aplan_{\playerb}(\act_{\playerb}|\signal_{\playerb})\exper_\playerb(\signal_{\playerb}|\corstate,\paystate)
}
{
\phi(\signal_\player)
}
\\ & = 
\frac{
\exper_{\player}(\signal_\player|\corstate,\paystate)\rho(\corstate,\paystate)
}
{
\nu(\signal_\player)
}
\prod_{\playerb\neq \player}\aplan_{\playerb}(\act_{\playerb}|\signal_{\playerb})\exper_\playerb(\signal_{\playerb}|\corstate,\paystate)
\\ & =
\mu_{\signal_\player}(\corstate,\paystate)
\prod_{\playerb\neq \player}\aplan_{\playerb}(\act_{\playerb}|\signal_{\playerb})\exper_\playerb(\signal_{\playerb}|\corstate,\paystate)
\\ & = 
\mu_{\signal'_\player}(\corstate,\paystate)
\prod_{\playerb\neq \player}\aplan_{\playerb}(\act_{\playerb}|\signal_{\playerb})\exper_\playerb(\signal_{\playerb}|\corstate,\paystate)
= \nu_{\signalb_{\player}}(\act_{-\player},\signal_{-\player},\corstate,\paystate).
\end{split}
\]
We deduce that $\nu_{\signal_\player}=\nu_{\signalb_\player}$.
\end{proof}

We are now ready to complete the proof of Lemma \ref{lem:multi_agent}. For the ``only if'' part, notice that if $(\exper,\aplan)$ is an equilibrium of $(\BGame,\IT)$, then for every player $\player$, $(\exper_\player,\aplan_\player)$ solve \eqref{eq:single_agent} for the setting defined above. Lemma~\ref{lemma:single_agent_plus} then delivers two facts. First, for all signals $\signal_{\player}$ such that $\nu(\signal_\player)>0$,
\[
\aplan_{\player}(\BR(\nu_{\signal_\player})|\signal_\player) =
\aplan_{\player}(\BR(\mu_{\signal_\player})|\signal_\player) = 1,
\]
where the first equality follows from Claim \ref{claim:equal_br}, and the second from Lemma~\ref{lemma:single_agent_plus}-(i). We conclude Lemma~\ref{lem:multi_agent}-(i) holds. Second, for all signals $\signal_{\player}$ and $\signal_{\player}^\prime$ such that $\nu(\signal_\player)>0$ and $\nu(\signalb_\player)>0$, if $\nu_{\signal_\player}\neq \nu_{\signalb_\player}$ then $\mu_{\signal_\player}\neq \mu_{\signalb_\player}$ (by Claim \ref{claim:equal_beliefs}), which implies 
\[
\varnothing = \BR(\mu_{\signal_\player})\cap  \BR(\mu_{\signalb_\player}) = \BR(\nu_{\signal_\player})\cap  \BR(\nu_{\signalb_\player}),
\]
where the first equality follows from Lemma~\ref{lemma:single_agent_plus}-(ii), and the second from Claim \ref{claim:equal_br}. Thus, Lemma~\ref{lem:multi_agent}-(ii) also holds. 

For the ``if'' part, suppose that Lemma~\ref{lem:multi_agent}-(i) and Lemma~\ref{lem:multi_agent}-(ii) hold. Reasoning as above, one can use Claims \ref{claim:equal_br} and \ref{claim:equal_beliefs} to obtain that Lemma~\ref{lemma:single_agent_plus}-(i) and Lemma~\ref{lemma:single_agent_plus}-(ii) are satisfied. It follows that for every player $\player$, we can find a flexible $\Exper_\player$ and a monotone $\icost_{\player}$ such that $(\exper_\player,\aplan_\player)$ solves \eqref{eq:single_agent} for the setting defined above. In addition, using Lemma \ref{lemma:single_agent_bound}, we can assume that $\Exper_\player=\Delta(\Signal_\player)^{\Omega}$ and, given $\lambda_i\in (0,1]$, that $\icost_{\player}:\Delta(\Signal_\player)^{\Omega}\rightarrow\mathbb{R}_+$ is such that 
\[
\icost_\player(\exper_\player)  =\lambda_\player\left(\overline{w}_\player(\mu_\player)- \underline{w}_\player(\mu_\player)\right)=\lambda_\player\left(\overline{v}_\player(\nu)- \underline{v}_\player(\nu_\player)\right),
\]
and for all $\experb_\player\in\Delta(\Signal_\player)^\Omega$,
\begin{align*}
 \icost_\player(\experb_\player)& \leq \lambda_\player + \hat{w}_\player(\mu_\player) - \left[(1-\lambda_\player)\overline{w}_\player(\mu_\player)+\lambda_\player \underline{w}_\player(\mu_\player)\right] \\
& = \lambda_\player + \hat{v}_\player(\nu) - \left[(1-\lambda_\player)\overline{v}_\player(\nu)+\lambda_\player \underline{v}_\player(\nu)\right].
\end{align*}
Going back to the many-player environment, we construct the desired technology $\IT$ by combining the information structure $\mathcal{S}$ with $(\Exper_\player,\icost_\player)_{\player\in\Player}$: $\IT:=(\Corstate,\corprior,(\Signal_\player,\Exper_\player,\icost_\player)_{\player\in\Player})$. \end{proof}

\subsection{Representing outcomes with information structures}

Fix a base game $\BGame$. An information structure $\mathcal{S}=(\Corstate,\corprior,(\Signal_\player,\exper_\player)_{\player\in\Player})$ and a profile of action plans $\aplan=(\aplan_\player)_{\player\in\Player}$ \textbf{represent} an outcome $\out \in \Delta(\Act\times \Paystate)$ if for all $\act\in\Act$ and $\paystate\in\Paystate$,
\[
\out(\act,\paystate)=\sum_{\corstate,\signal}\left[\prod_{\player\in\Player} \aplan_\player(\act_\player|\signal_\player) \exper_\player(\signal_\player|\corstate,\paystate)\right]
\corprior(\corstate|\paystate)\payprior(\paystate).
\]
Each outcome admits multiple representations. For example, as is well known, one can represent $\out$ as follows. Set $\Corstate=\Act$, and for every $\player \in \Player$, take $\Signal_\player$ so that $\Signal_\player \supseteq \Act_\player$. For every $\corstate\in\Act$ and $\paystate\in\Paystate$, set $\corprior(\corstate\vert\paystate)=\out(\corstate,\paystate)/\payprior(\theta)$. For every $\player\in\Player$, $\signal_\player\in\Signal_\player$, $\corstate\in\Act$, and $\paystate\in\Paystate$, take
\[
\exper_\player(\signal_\player\vert \corstate,\paystate)=
\begin{cases}
1 &\text{if }\signal_\player=\corstate_\player,\\
0 &\text{otherwise.}
\end{cases}
\]
And finally, for every $\player\in\Player$ and $\signal_\player\in\Act_\player$, 
\[
\aplan_\player(\act_\player\vert \signal_\player) = \begin{cases}
1 &\text{if }\act_\player=\signal_\player,\\
0 &\text{otherwise.}
\end{cases}
\]

This outcome representation has been used, among others, by \cite{aumann1987correlated} and \cite{bergemann2016bayes} to study games of incomplete information. Next, we introduce outcome representations that are useful to analyze information acquisition games.

Let $\PartA_\player^\out$ be the partition of $\Act_\player$ with the following properties: for $\act_\player\in\supp_\player(\out)$,
\[
\PartA^\out_\player(\act_\player)=\{\actb_\player\in\supp_\player(\out):\out_{\act_\player}=\out_{\actb_\player}\},
\]
and for $\act_\player\notin\supp_\player(\out)$,
\[
\PartA^\out_\player(\act_\player)=\Act_\player\setminus\supp_\player(\out).
\]
In the formulas above, $\PartA^\out_\player(\act_\player)$ is the cell of the partition that contains $\act_\player$.

A representation $(\mathcal{S},\sigma)$ of an outcome $\out$ is \textbf{canonical} if the following conditions hold:

\begin{itemize}
\item $\Corstate\subseteq \prod_{\player\in\Player}\Delta(\PartA^{\out}_\player)$;
\item for every $\player\in\Player$, 
\begin{equation}\label{eq:canonical_sig}
 \PartA^{\out}_\player\subseteq \Signal_\player;
\end{equation}
\item for every $\player\in \Player$, $\signal_\player\in \Signal_\player$, $\corstate\in \Corstate$, and $\theta\in\Theta$,
\begin{equation}\label{eq:canonical_exp}
\exper_\player(\signal_\player|\corstate,\paystate)=
\begin{cases}
\corstate_\player(\signal_\player) &\text{if } \signal_\player\in \PartA^{\out}_\player,\\
0 &\text{otherwise};
\end{cases}
\end{equation}
\item for every $\player\in\Player$, $\signal_\player\in\PartA^{\out}_\player$, and $\act_\player\in\Act_\player$,
\begin{equation}\label{eq:canonical_plan}
\aplan_\player(\act_\player|\signal_\player) = 
\begin{cases}
    \frac{\out(\act_\player)}{\sum_{\actb_\player \in \signal_\player}\out(\actb_\player)} & \text{if }\act_\player \in \signal_\player\text{ and }\act_\player\in \supp_\player(\out),\\
    \frac{1}{\vert\Act_\player\vert-\vert\supp_\player(\out)\vert} & \text{if }\act_\player \in \signal_\player\text{ and }\act_\player\notin \supp_\player(\out),\\
    0 & \text{otherwise.}
\end{cases}
\end{equation}
\end{itemize}
Observe that canonical representations of the same outcome (essentially) differ only in the specification of $\Corstate$ and $\corprior$, the correlation state.
\begin{lemma}\label{lem:p-canonical-existence}
Every outcome $\out \in \Delta_{\payprior}(\Act\times \Paystate)$ admits a canonical representation.
\end{lemma}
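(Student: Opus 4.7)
The plan is to construct the canonical representation explicitly, encoding each player's partition cell into the correlation state. Take $\Corstate$ to be the set of profiles of Dirac measures $(\delta_{C_\player})_{\player\in\Player}$ with $C_\player\in\PartA_\player^{\out}$, naturally identified with $\prod_{\player}\PartA_\player^{\out}\subseteq\prod_{\player}\Delta(\PartA_\player^{\out})$. Pick $\Signal_\player$ to be any finite set containing $\PartA_\player^{\out}$ that satisfies the richness assumption, and define $\exper_\player$ and $\aplan_\player$ by formulas~\eqref{eq:canonical_exp} and~\eqref{eq:canonical_plan} (these are well-posed on $\PartA_\player^{\out}$; the action plan on signals outside $\PartA_\player^{\out}$ is irrelevant because the canonical $\exper_\player$ never produces them). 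It remains to specify the Markov kernel $\corprior$ and to verify that the induced distribution over $\Act\times\Paystate$ equals $\out$.

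For each cell profile $\corstate=(C_1,\ldots,C_n)\in\Corstate$ and each $\paystate\in\Paystate$, set
\[
\bar{\out}(\corstate,\paystate)\;:=\;\sum_{\act\in C_1\times\cdots\times C_n}\out(\act,\paystate),
\qquad
\corprior(\corstate|\paystate)\;:=\;\frac{\bar{\out}(\corstate,\paystate)}{\payprior(\paystate)}.
\]
Since $\out$ has marginal $\payprior$ on $\Paystate$ and $\{C_1\times\cdots\times C_n\}_{\corstate}$ partitions $\Act$, summing over $\corstate$ yields $\payprior(\paystate)$, so $\corprior(\cdot|\paystate)$ is a probability measure. The heart of the proof is then the factorization
\[
\out(\act,\paystate)\;=\;\bar{\out}(\corstate,\paystate)\prod_{\player\in\Player}\aplan_\player(\act_\player\,|\,C_\player)
\qquad\text{whenever }\act\in C_1\times\cdots\times C_n,
\]
because combining this identity with the canonical $\exper_\player$ (which deterministically emits the signal $C_\player$ when $\corstate_\player=\delta_{C_\player}$) yields exactly $\out(\act,\paystate)$ upon summing the joint over $\corstate$ and $\signal$.

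To prove the factorization I would use the defining property of the partition $\PartA_\player^{\out}$: for $\act_\player,\actb_\player\in\supp_\player(\out)$ in the same cell, $\out_{\act_\player}=\out_{\actb_\player}$, i.e., $\out(\actb_\player,\act_{-\player},\paystate)=\bigl(\out(\actb_\player)/\out(\act_\player)\bigr)\out(\act_\player,\act_{-\player},\paystate)$. Iterating this identity across players, swapping one coordinate at a time within its cell, one obtains $\out(\act',\paystate)=\out(\act,\paystate)\prod_\player\out(\act'_\player)/\out(\act_\player)$ for all $\act,\act'\in C_1\times\cdots\times C_n$ when the cells are in-support; summing over $\act'$ gives $\bar{\out}(\corstate,\paystate)=\out(\act,\paystate)\prod_\player \out(C_\player)/\out(\act_\player)$, which upon rearranging is precisely the claimed factorization. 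Cell profiles involving the out-of-support cell are handled separately: if some $C_\player=\Act_\player\setminus\supp_\player(\out)$, then every $\act\in C_1\times\cdots\times C_n$ has $\out(\act_\player)=0$ and hence $\out(\act,\paystate)=0$, forcing $\bar{\out}(\corstate,\paystate)=0$, so both sides vanish.

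The main obstacle is the iterated swapping argument in the in-support case: one has to be careful that swapping coordinates preserves the zero-pattern of $\out(\cdot,\paystate)$ along the cell, which follows from the equality of conditional distributions within a cell but requires a clean induction on the number of swapped coordinates. Once that inductive identity is in hand, the rest of the verification — checking that the construction meets each of the four canonical requirements and that the marginal of the induced joint on $\Act\times\Paystate$ coincides with $\out$ — reduces to a direct computation.
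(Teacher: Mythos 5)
Your construction coincides with the paper's: the same correlation space of Dirac-measure profiles over partition cells, the same kernel $\corprior(\corstate|\paystate)=\sum_{\act\in B}\out(\act,\paystate)/\payprior(\paystate)$, and the same key factorization of $\out(\act,\paystate)$ as $\prod_\player\aplan_\player(\act_\player|\PartA^\out_\player(\act_\player))$ times the cell mass, which is exactly the paper's Lemma~\ref{lem:p-decomposition}. Your iterated coordinate-swapping derivation of that factorization is just a pointwise restatement of the paper's player-by-player induction (both rest on the within-cell identity $\out(\actb_\player,\act_{-\player},\paystate)=\frac{\out(\actb_\player)}{\out(\act_\player)}\out(\act_\player,\act_{-\player},\paystate)$), and your handling of out-of-support cells is correct, so the proof is sound and essentially the same as the paper's.
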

To prove the existence of a canonical representation, we use the following decomposition of $\out$ in terms of the partitions $\PartA^\out_\player$, $\player\in\Player$:
\begin{lemma}\label{lem:p-decomposition}
Fix an outcome $p\in\Delta_{\payprior}(\Act\times \Paystate)$. For all $\act\in\Act$ and $\paystate\in\Paystate$, 
\begin{equation}\label{eq:02-02-2023c}
    \out(\act,\paystate)  =
\left[\prod_{\player\in\Player}\aplan_\player(\act_\player|\PartA^\out_\player(\act_\player))\right]\sum_{\actb\in \PartA^\out(\act)}\out(\actb,\paystate),
\end{equation}
where $\aplan_\player(\act_\player|\PartA^\out_\player(\act_\player))$ defined by (\ref{eq:canonical_plan}), and $\PartA^\out(\act)=\prod_{\player\in\Player}\PartA^\out_\player(\act_\player)$.
\end{lemma}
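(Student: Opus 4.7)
The plan is to separate two cases based on whether each coordinate of $\act$ lies in the relevant support.

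First, if there is some player $\playerb$ with $\act_\playerb \notin \supp_\playerb(\out)$, then $\out(\act,\paystate) = 0$ because the $\act_\playerb$-marginal of $\out$ is zero. For the right-hand side, note that $\PartA^\out_\playerb(\act_\playerb) = \Act_\playerb \setminus \supp_\playerb(\out)$ consists entirely of actions outside $\supp_\playerb(\out)$, so every $\actb \in \PartA^\out(\act)$ has $\actb_\playerb \notin \supp_\playerb(\out)$, which forces $\out(\actb,\paystate) = 0$. Hence both sides of \eqref{eq:02-02-2023c} vanish.

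Second, suppose $\act_\player \in \supp_\player(\out)$ for every $\player \in \Player$. The core observation is a single-player peeling identity: for each such player $\player$,
\[
\out(\act_\player, \act_{-\player}, \paystate) = \frac{\out(\act_\player)}{\sum_{\actb_\player \in \PartA^\out_\player(\act_\player)}\out(\actb_\player)} \sum_{\actb_\player \in \PartA^\out_\player(\act_\player)} \out(\actb_\player, \act_{-\player}, \paystate).
\]
This follows from the defining property of $\PartA^\out_\player$: for any $\actb_\player \in \PartA^\out_\player(\act_\player)$, one has $\out_{\actb_\player} = \out_{\act_\player}$, so $\out(\actb_\player, \act_{-\player},\paystate) = \out(\actb_\player)\,\out_{\act_\player}(\act_{-\player},\paystate)$. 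Summing over $\actb_\player \in \PartA^\out_\player(\act_\player)$ and dividing gives $\out_{\act_\player}(\act_{-\player},\paystate)$, which in turn equals $\out(\act_\player,\act_{-\player},\paystate)/\out(\act_\player)$. Matching this against \eqref{eq:canonical_plan} identifies the prefactor as $\aplan_\player(\act_\player\mid \PartA^\out_\player(\act_\player))$.

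I would then peel off one player at a time. Define $\out^{(0)} = \out$ and, after eliminating players $\player_1,\dots,\player_k$, let
\[
\out^{(k)}(\act_{-\{\player_1,\dots,\player_k\}}, \paystate) := \sum_{\actb_{\player_1} \in \PartA^\out_{\player_1}(\act_{\player_1})} \cdots \sum_{\actb_{\player_k} \in \PartA^\out_{\player_k}(\act_{\player_k})} \out(\actb_{\player_1},\dots,\actb_{\player_k}, \act_{-\{\player_1,\dots,\player_k\}}, \paystate).
\]
The anticipated obstacle is justifying that the single-player peeling identity still applies to $\out^{(k)}$ with the \emph{same} partition $\PartA^\out_{\player_{k+1}}$ inherited from $\out$. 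This reduces to checking that for any $\actb_{\player_{k+1}}, \actc_{\player_{k+1}} \in \PartA^\out_{\player_{k+1}}(\act_{\player_{k+1}})$, the ratio $\out^{(k)}(\actb_{\player_{k+1}}, \cdot)/\out(\actb_{\player_{k+1}})$ is independent of the choice within the cell. This holds because, writing each summand as $\out(\actb_{\player_{k+1}})\,\out_{\actb_{\player_{k+1}}}(\cdot)$ and using that $\out_{\actb_{\player_{k+1}}}$ is constant on the cell, the summations factor out cleanly.

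Iterating the peeling identity across all players $\player \in \Player$ yields
\[
\out(\act,\paystate) = \left[\prod_{\player \in \Player} \aplan_\player(\act_\player \mid \PartA^\out_\player(\act_\player))\right] \sum_{\actb \in \PartA^\out(\act)} \out(\actb,\paystate),
\]
which is exactly \eqref{eq:02-02-2023c}. Combined with the first case, the lemma follows.
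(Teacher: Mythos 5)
Your proof is correct and follows essentially the same route as the paper's: establish the single-player ``peeling'' identity (via Bayes rule and the fact that $\out_{\actb_\player}=\out_{\act_\player}$ on each cell of $\PartA^\out_\player$) and then iterate over players, with the zero-probability case handled separately. The paper organizes the degenerate case inside the single-player step ($\out(\act_\player)=0$) rather than upfront for the whole profile, but this is a cosmetic difference.
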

\begin{proof}
It suffices to show that for all $\act\in\Act$, $\paystate\in\Paystate$, and $\player\in\Player$,
\begin{equation}\label{eq:04_20_2023}
\out(\act,\paystate)=\aplan_\player\left(\act_\player|\PartA^\out_\player(\act_\player)\right)\sum_{\actb_{\player}\in \PartA^\out_\player(\act_\player)}\out(\actb_{\player},\act_{-\player},\paystate).
\end{equation}
The desired result then follows from reasoning by induction on the number of players.

If $\out(\act_\player)=0$, then 
\[
\out(\act,\paystate)=0=\sum_{\actb_{\player}\in \PartA^\out_\player(\act_\player)}\out(\actb_{\player},\act_{-\player},\paystate)=\aplan_\player\left(\act_\player|\PartA^\out_\player(\act_\player)\right)\sum_{\actb_{\player}\in \PartA^\out_\player(\act_\player)}\out(\actb_{\player},\act_{-\player},\paystate).
\]
Thus, if $\out(\act_\player)=0$, then (\ref{eq:04_20_2023}) holds. 

Suppose now $\out(\act_\player)>0$. By Bayes rule, 
\[
\out(\act,\paystate) = \out(\act_\player) \out_{\act_\player}(\act_{-\player},\paystate).
\]
Recall that $\actb_\player\in \PartA^\out_\player(\act_\player)$ if and only if $\out_{\actb_\player}=\out_{\act_\player}$; in addition, $\sum_{\actb_{\player}\in \PartA^\out_\player(\act_\player)}\aplan_\player\left(\actb_\player\vert \PartA^\out_\player(\act_\player)\right)=1$. Therefore, 
\[
\begin{split}
\out(\act_{\player},\actb_{-\player},\paystate) 
& =
\out(\act_\player)\out_{\act_{\player}}(\actb_{-\player},\paystate) 
\\ & =
\out(\act_\player)\sum_{\actb_{\player}\in \PartA^\out_\player(\act_\player)}\aplan_\player\left(\actb_\player\vert \PartA^\out_\player(\act_\player)\right)\out_{\act_{\player}}(\act_{-\player},\paystate)
\\ & =
\out(\act_\player)\sum_{\actb_{\player}\in \PartA^\out_\player(\act_\player)}\aplan_\player(\actb_\player\vert \PartA^\out_\player(\act_\player))\out_{\actb_{\player}}(\act_{-\player},\paystate).
\end{split}
\]
Substituting in the definition of $\aplan_\player$, one obtains (\ref{eq:04_20_2023}).\end{proof}

\begin{proof}[Proof of Lemma \ref{lem:p-canonical-existence}]
Take 
\begin{equation}\label{eq:spec_can_z}
\Corstate=\prod_{\player\in\Player}\left\{\delta_{B_\player}:B_\player\in\PartA^\out_\player\right\},
\end{equation}
where $\delta_{B_\player}\in \Delta(\PartA^\out_\player)$ is the Dirac measure concentrated on $B_\player$. For $\corstate=(\delta_{B_\player})_{\player\in\Player}$ and $B=\prod_{\player\in\Player} B_\player$, define
\begin{equation}\label{eq:spec_can_ze}
\corprior(\corstate\vert\paystate)=\sum_{\act\in  B}\frac{\out(\act,\paystate)}{\payprior(\paystate)}.
\end{equation}
For $\player\in\Player$, define $\Signal_\player$, $\exper_\player$, and $\aplan_\player$ as in (\ref{eq:canonical_sig})-(\ref{eq:canonical_plan}). Let $\mathcal{S}:=(\Corstate,\corprior,(\Signal_\player,\exper_\player)_{\player\in\Player})$ and $\aplan:=(\aplan_\player)_{\player\in\Player}$.

We need to show that for all $\act\in\Act$ and $\paystate\in\Paystate$,
\begin{equation}\label{eq:correct_outcome}
\out(\act,\paystate)=\sum_{\corstate,\signal}\left[\prod_{\player\in\Player} \aplan_\player(\act_\player|\signal_\player) \exper_\player(\signal_\player|\corstate,\paystate)\right]
\corprior(\corstate|\paystate)\payprior(\paystate).
\end{equation}
Using Lemma \ref{lem:p-decomposition} we get that
\begin{align*}
 \out(\act,\paystate) & = \left[\prod_{\player\in\Player}\aplan_\player(\act_\player\vert\PartA^\out_\player(\act_\player))\right]\sum_{\corstate}\left[\prod_{\player\in\Player}\corstate_{\player}(\PartA^\out_\player(\act_\player))\right]\corprior(\corstate\vert\paystate)\payprior(\paystate)\\
 & = 
\sum_{\corstate}\left[\prod_{\player\in\Player}\aplan_\player(\act_\player\vert\PartA^\out_\player(\act_\player))\exper_\player(\PartA^\out_\player(\act_\player)|\corstate,\paystate)\right]\corprior(\corstate\vert\paystate)\payprior(\paystate)\\
& = \sum_{\corstate,\signal}\left[\prod_{\player\in\Player}\aplan_\player(\act_\player\vert\signal_\player)\exper_\player(\signal_\player|\corstate,\paystate)\right]\corprior(\corstate\vert\paystate)\payprior(\paystate),
\end{align*}
where the first equality combines (\ref{eq:02-02-2023c}) with (\ref{eq:spec_can_z}) and (\ref{eq:spec_can_ze}), the second equality follows from (\ref{eq:canonical_exp}), and the last equality from (\ref{eq:canonical_plan}). We conclude \eqref{eq:correct_outcome} holds. \end{proof}

The next lemma concerns canonical representations of separated BCE. With a slight abuse of notation, we define
\[
\hat{v}_\player(\mathcal{S},\aplan)= \sum_{\corstate,\paystate}\corprior(\corstate|\paystate)\payprior(\paystate)\left[\max_{\act_\player}\sum_{\signal_{-\player},\act_{-\player}}\util_\player(\act_\player,\act_{-\player},\paystate)\prod_{\playerb\neq\player} \aplan_\playerb(\act_\playerb|\signal_\playerb) \exper_\playerb(\signal_\playerb|\corstate,\paystate)\right].
\]

\begin{lemma}\label{lem:p-canonical}
Let $\mathcal{S}=(\Corstate,\corprior,(\Signal_\player,\exper_\player)_{\player\in\Player})$ and $\aplan$ be a canonical representation of an outcome $\out$. If $\out$ is a separated BCE, then for every player $\player$ there exist a flexible $\Exper_\player$ and a monotone $\icost_\player$ such that $(\exper,\aplan)$ is an equilibrium of $(\BGame,\IT)$ with $\IT=(\Corstate,\corprior,(\Signal_\player,\Exper_\player,\icost)_{\player\in\Player})$. Moreover, for every player $\player$, one can set $\Exper_\player=\Delta(\Signal_\player)^{\Corstate\times\Paystate}$ and, given $\lambda_\player\in (0,1]$, choose $\icost_\player:\Delta(\Signal_\player)^{\Corstate\times\Paystate}\rightarrow\mathbb{R}_+$ such that 
\begin{align*}
\icost_{\player}(\exper_\player) & = 
\lambda_\player(\grossval_\player(\out)
-
\noinfoval_\player(\out)),\\
\max_{\experb_\player\in \Delta(\Signal_\player)^{\Corstate\times\Paystate}} \icost_{\player}(\experb_\player) & \leq \lambda_\player + \hat{v}_\player(\mathcal{S},\aplan) -[(1-\lambda_\player)\grossval_\player(\out)+\lambda_\player \noinfoval_\player(\out)].
\end{align*}
\end{lemma}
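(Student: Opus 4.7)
The plan is to apply Lemma \ref{lem:multi_agent} to the canonical representation $(\mathcal{S}, \aplan)$. The work consists of checking its two hypotheses and matching the bounds from Lemma \ref{lem:multi_agent} to the bounds stated here.

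First, I would compute the distribution $\nu$ induced by the canonical representation and identify its positive-probability signals. From \eqref{eq:canonical_exp}, signals outside $\PartA^\out_\player$ have zero probability, and from the specification of $\corprior$ in \eqref{eq:spec_can_ze} (used to prove Lemma \ref{lem:p-canonical-existence}), $\zeta(\corstate|\paystate)=0$ whenever some coordinate $\corstate_\playerb=\delta_{\Actset_\playerb}$ with $\Actset_\playerb=\Act_\playerb\setminus \supp_\playerb(\out)$. So the signals with $\nu(\signal_\player)>0$ are exactly cells $\Actset_\player\in\PartA^\out_\player$ with $\Actset_\player\subseteq\supp_\player(\out)$. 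Next, I would show that for such a signal $\Actset_\player$, the marginal of $\nu_{\Actset_\player}$ on $\Act_{-\player}\times\Paystate$ equals $\out_{\act_\player}$ for any $\act_\player\in \Actset_\player$: this is because conditional on signal $\Actset_\player$, player $\player$'s action is supported inside $\Actset_\player$ (by \eqref{eq:canonical_plan}) and $(\mathcal{S},\aplan)$ represents $\out$, so the conditional of $(\act_{-\player},\paystate)$ given $\signal_\player=\Actset_\player$ is the conditional of $(\act_{-\player},\paystate)$ under $\out$ given $\act_\player\in\Actset_\player$, which equals $\out_{\act_\player}$ by the definition of the partition cell.

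With this identification, checking Lemma \ref{lem:multi_agent}-(i) is immediate from obedience of $\out$ together with constancy of the best-response set on each cell: for any $\act_\player\in \Actset_\player$ we have $\act_\player \in \BR(\out_{\act_\player})=\BR(\nu_{\Actset_\player})$, so $\aplan_\player(\cdot|\Actset_\player)$ is supported on $\BR(\nu_{\Actset_\player})$. Checking Lemma \ref{lem:multi_agent}-(ii) uses the separation property of $\out$: if two cells $\Actset_\player\neq \Actset_\player^\prime$ are both with positive probability, then $\out_{\act_\player}\neq \out_{\actb_\player}$ for $\act_\player\in \Actset_\player$, $\actb_\player\in \Actset_\player^\prime$, so $\BR(\nu_{\Actset_\player})\cap \BR(\nu_{\Actset^\prime_\player})=\BR(\out_{\act_\player})\cap \BR(\out_{\actb_\player})=\varnothing$, which makes condition (ii) vacuously true (regardless of whether the full conditionals coincide).

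Having verified both hypotheses, Lemma \ref{lem:multi_agent} produces, for any $\const_\player\in (0,1]$, a flexible $\Exper_\player=\Delta(\Signal_\player)^{\Corstate\times\Paystate}$ and a monotone $\icost_\player$ such that $(\exper,\aplan)$ is an equilibrium of $(\BGame,\IT)$, with
\[
\icost_\player(\exper_\player)=\const_\player(\grossval_\player(\nu)-\noinfoval_\player(\nu))\quad\text{and}\quad\max \icost_\player\leq \const_\player+\hat{v}_\player(\nu)-[(1-\const_\player)\grossval_\player(\nu)+\const_\player\noinfoval_\player(\nu)].
\]
To finish, I would translate these bounds into the statement of the lemma by noting three identities, each of which is a direct calculation using that the marginal of $\nu$ on $\Act\times\Paystate$ is $\out$: (a) $\grossval_\player(\nu)=\grossval_\player(\out)$; (b) $\noinfoval_\player(\nu)=\noinfoval_\player(\out)$, since $\util_\player(\actb_\player,\act_{-\player},\paystate)$ depends only on $(\act_{-\player},\paystate)$; and (c) $\hat{v}_\player(\nu)=\hat{v}_\player(\mathcal{S},\aplan)$, by summing $\nu$ over $(\act_\player,\signal_\player)$ inside the max. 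The main place requiring care is the derivation of $\nu_{\Actset_\player}$ from the canonical representation in the first step, since the canonical construction couples $\corstate_\player$ with $\signal_\player$ and mixes over $\Actset_\player$ through $\aplan_\player$; everything else reduces to bookkeeping once that marginal is pinned down.
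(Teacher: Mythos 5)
Your proposal is correct and follows essentially the same route as the paper's proof: identify the positive-probability signals as the cells of $\PartA^\out_\player$ contained in $\supp_\player(\out)$, show that the marginal of $\nu_{\signal_\player}$ on $\Act_{-\player}\times\Paystate$ equals $\out_{\act_\player}$ for $\act_\player\in\signal_\player$ (the paper's Claim~\ref{claim:br_br}), verify conditions (i) and (ii) of Lemma~\ref{lem:multi_agent} from obedience and separation, and translate the resulting bounds using $\grossval_\player(\nu)=\grossval_\player(\out)$, $\noinfoval_\player(\nu)=\noinfoval_\player(\out)$, and $\hat{v}_\player(\nu)=\hat{v}_\player(\mathcal{S},\aplan)$. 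One small fix: since the lemma is stated for an \emph{arbitrary} canonical representation, you should not justify the equivalence of $\nu(\signal_\player)>0$ and $\signal_\player\subseteq\supp_\player(\out)$ via the particular $\corprior$ of \eqref{eq:spec_can_ze}; instead derive it from the representation property together with \eqref{eq:canonical_plan}, which give $\nu(\signal_\player)=\sum_{\actb_\player\in\signal_\player}\out(\actb_\player)$ for every cell $\signal_\player\in\PartA^\out_\player$ (the paper's display \eqref{eq:monday_morning}).
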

\newcommand{\gpp}{\Gamma}
\newcommand{\pgpp}{\phi}

\begin{proof}  We need to show that the conditions of Lemma~\ref{lem:multi_agent} hold; like in Section \ref{subsec:embedding}, we denote by $\nu\in\Delta(\Act\times\Signal\times\Corstate\times\Paystate)$ the probability measure over actions, signals, and states induced by $\mathcal{S}=(\Corstate,\corprior,(\Signal_\player,\exper_\player)_{\player\in\Player})$ and $\aplan$. 

We will use the following claim:
\begin{claim}\label{claim:br_br}
For all $\player\in \Player$, $\signal_\player\in \Signal_\player$ such that $\nu(\signal_\player)>0$, and $\act_\player \in \signal_\player$, $\BR(\nu_{\signal_\player}) = \BR(\out_{\act_\player})$.
\end{claim}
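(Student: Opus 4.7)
The plan is to show that for any $\act_\player \in \signal_\player$, the marginal of $\nu_{\signal_\player}$ over $(\act_{-\player},\paystate)$ coincides with $\out_{\act_\player}$. Since a best response depends only on this marginal, this immediately yields $\BR(\nu_{\signal_\player}) = \BR(\out_{\act_\player})$.

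First, I would verify that the hypothesis $\nu(\signal_\player) > 0$ forces $\signal_\player$ to be a cell of $\PartA^\out_\player$ contained in $\supp_\player(\out)$. By \eqref{eq:canonical_exp}, the experiment $\exper_\player$ puts mass only on signals in $\PartA^\out_\player$, so $\signal_\player \in \PartA^\out_\player$. Moreover, in the specific $\Corstate$ and $\corprior$ chosen in \eqref{eq:spec_can_z}--\eqref{eq:spec_can_ze}, the only correlation states with positive weight are concentrated on cells whose product contains actions of positive $\out$-probability; in particular, the cell $\Act_\player \setminus \supp_\player(\out)$ is never observed, so $\signal_\player \subseteq \supp_\player(\out)$.

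Next, I would compute $P(\act,\paystate,\signal_\player)$ under $\nu$ by marginalizing over $\signal_{-\player}$ and $\corstate$. By \eqref{eq:canonical_plan}, $\aplan_\player(\act_\player | \signal_\player) > 0$ only when $\signal_\player = \PartA^\out_\player(\act_\player)$, i.e.\ when $\act_\player \in \signal_\player$. Combined with the fact that $(\mathcal{S},\aplan)$ represents $\out$, this gives
\[
P(\act,\paystate,\signal_\player) = \begin{cases} \out(\act,\paystate) & \text{if } \act_\player \in \signal_\player, \\ 0 & \text{otherwise,} \end{cases}
\]
so $P(\signal_\player) = \sum_{\actb_\player \in \signal_\player}\out(\actb_\player)$. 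Summing over $\act_\player$ and dividing by $P(\signal_\player)$ yields
\[
P(\act_{-\player},\paystate \mid \signal_\player) = \frac{\sum_{\actb_\player \in \signal_\player}\out(\actb_\player,\act_{-\player},\paystate)}{\sum_{\actb_\player \in \signal_\player}\out(\actb_\player)}.
\]

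Finally, for any $\actb_\player \in \signal_\player = \PartA^\out_\player(\act_\player)$, the definition of the partition $\PartA^\out_\player$ gives $\out_{\actb_\player} = \out_{\act_\player}$, so the right-hand side above collapses to $\out_{\act_\player}(\act_{-\player},\paystate)$. Thus the marginal of $\nu_{\signal_\player}$ on $(\act_{-\player},\paystate)$ equals $\out_{\act_\player}$, and the best-response sets coincide. The only real obstacle is careful bookkeeping: one must handle the fact that $\Signal_\player$ may contain signals outside $\PartA^\out_\player$ and that the cell $\Act_\player \setminus \supp_\player(\out)$ receives no probability, but both are directly settled by the construction of the canonical representation.
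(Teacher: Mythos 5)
Your proposal is correct and follows essentially the same route as the paper's proof: both establish that the $(\act_{-\player},\paystate)$-marginal of $\nu_{\signal_\player}$ equals $\out_{\act_\player}$ by a direct Bayes-rule computation combining the representation property, the fact that $\aplan_\player(\cdot\vert\signal_\player)$ is supported on $\signal_\player$, and the constancy of $\out_{\actb_\player}$ across the cell $\PartA^\out_\player(\act_\player)$. One small remark: your appeal to the specific $\Corstate$ and $\corprior$ of \eqref{eq:spec_can_z}--\eqref{eq:spec_can_ze} is out of place, since the claim must hold for an \emph{arbitrary} canonical representation (as required by Lemma \ref{lem:p-canonical}), but this is harmless because $\signal_\player\subseteq\supp_\player(\out)$ already follows from your own identity $\nu(\signal_\player)=\sum_{\actb_\player\in\signal_\player}\out(\actb_\player)>0$ together with the fact that each cell of $\PartA^\out_\player$ lies entirely inside or entirely outside $\supp_\player(\out)$.
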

\begin{proof}[Proof of the claim]
Fix $\player\in \Player$, $\signal_\player\in \Signal_\player$ such that $\nu(\signal_\player)>0$, and $\act_\player \in \signal_\player$. Observe first that for all $\actb_\player\in \signal_\player$,
\[
\out(\actb_\player)=\sum_{\signalb_\player,\corstate,\paystate}\aplan_\player(\actb_\player\vert\signalb_\player)\exper_\player(\signalb_\player\vert\corstate,\paystate)\corprior(\corstate\vert\paystate)\payprior(\paystate)=\sum_{\corstate,\paystate}\aplan_\player(\actb_\player\vert\signal_\player)\exper_\player(\signal_\player\vert\corstate,\paystate)\corprior(\corstate\vert\paystate)\payprior(\paystate),
\]
where the first equality holds because $(\mathcal{S},\sigma)$ represents $\out$, and the second equality because $\aplan_\player(\actb_\player\vert\signalb_\player)>0$ if and only if $\actb_\player\in\signalb_\player$. Thus,
\begin{equation}\label{eq:monday_morning}
\sum_{\actb_\player\in\signal_\player}\out(\actb_\player)=\nu(\signal_\player).
\end{equation}
Reasoning as above, we have that for all $\act_{-\player}\in\Act_{-\player}$ and $\paystate\in\Paystate$,
\[
\out(\act,\paystate)=\sum_{\signalb_\player,\signal_{-\player},\corstate}\nu(\act,\signalb_\player,\signal_{-\player},\corstate,\paystate)=\sum_{\signal_{-\player},\corstate}\nu(\act,\signal_\player,\signal_{-\player},\corstate,\paystate).
\]
Then, we obtain that
\begin{align*}
\out(\act,\paystate)
                    & = \aplan_\player(\act_\player\vert \signal_\player)\sum_{\actb_\player,\signal_{-\player},\corstate}\nu(\actb_\player,\act_{-\player},\signal,\corstate,\paystate) \\
                    & = \frac{\out(\act_\player)}{\sum_{\actb_\player\in\signal_\player}\out(\actb_\player)}\sum_{\actb_\player,\signal_{-\player},\corstate}\nu(\actb_\player,\act_{-\player},\signal,\corstate,\paystate)=\out(\act_\player) \sum_{\signal_{-\player},\corstate}\nu_{\signal_\player}(\act_{-\player},\signal_{-\player},\corstate,\paystate),
\end{align*}
where the first equality follows from $\act_\player$ being conditionally independent of the other variable given $\signal_\player$, and the second equality from (\ref{eq:monday_morning}). In sum, we deduce that 
\[
\out_{\act_{\player}}(\act_{-\player},\paystate)=\sum_{\signal_{-\player},\corstate}\nu_{\signal_\player}(\act_{-\player},\signal_{-\player},\corstate,\paystate).
\]
It follows that $\BR(\nu_{\signal_\player}) = \BR(\out_{\act_\player})$.
\end{proof}

We now complete the proof by verifying the conditions of Lemma~\ref{lem:multi_agent}. For Lemma \ref{lem:multi_agent}-(i), take $\signal_\player \in \Signal_\player$ and $\act_\player\in\Act_\player$ such that $\nu(\signal_\player)>0$ and $\aplan_\player(\act_\player\vert\signal_\player)>0$. By the construction of the action plan, $\act_\player\in\signal_\player$. By the obedience constraint for $\out$, $\act_\player\in BR(\out_{\act_\player})$. From Claim \ref{claim:br_br}, $BR(\out_{\act_\player})=BR(\nu_{\signal_\player})$. Therefore, $\act_\player\in BR(\nu_{\signal_\player})$. We deduce that Lemma \ref{lem:multi_agent}-(i) holds.

To verify Lemma \ref{lem:multi_agent}-(ii), take $\signal_\player,\signal_\player^\prime\in \Signal_\player$ such that $\nu(\signal_\player)>0$, $\nu(\signalb_\player)>0$, and $\nu_{\signal_\player}\neq\nu_{\signalb_\player}$. Let $\act_\player\in \signal_\player$ and $\actb_\player\in \signal_\player^\prime$. Since $\nu_{\signal_\player}\neq\nu_{\signalb_\player}$, we must have $\signal_\player\neq \signal_\player^\prime$ and, therefore, $\out_{\act_\player}\neq \out_{\actb_\player}$. By the separation constraint for $\out$, $BR(\out_{\act_\player})\cap BR(\out_{\actb_\player})=\varnothing$. From Claim \ref{claim:br_br}, $BR(\out_{\act_\player})=BR(\nu_{\signal_\player})$ and $BR(\out_{\actb_\player})=BR(\nu_{\signal_\player^\prime})$. We deduce $BR(\nu_{\signal_\player})=BR(\nu_{\signal_\player^\prime})=\varnothing$: Lemma \ref{lem:multi_agent}-(ii) holds.\end{proof}

\section{Proof of Theorem \ref{thm:mon_tech}}

The ``if'' statement of Theorem \ref{thm:mon_tech} follows immediately from Lemma \ref{lem:p-canonical}---to apply the lemma, one needs a canonical representation for $\out$, whose existence is guaranteed by Lemma \ref{lem:p-canonical-existence}. 

Next we prove the ``only if'' statement of Theorem \ref{thm:mon_tech}. Let $(\out,\payvector)$ be the outcome-value pair induced by an equilibrium $(\exper,\aplan)$ of a rational inattention game $(\BGame,\itech)$, with $\itech=(\Corstate,\corprior,(\Signal_\player,\Exper_\player,\icost_\player)_{\player\in\Player})$. It follows the conditions of Lemma~\ref{lem:multi_agent} must hold for the information structure $\mathcal{S}=(\Corstate,\corprior,(\Signal_\player,\exper_\player)_{\player\in\Player})$ and the profile of action plans $\aplan$. We now use this fact to prove a connection between the best responses in the information acquisition game and optimal behavior given the mediator's recommendation. Like in Section \ref{subsec:embedding}, we denote by $\nu\in\Delta(\Act\times\Signal\times\Corstate\times\Paystate)$ the measure over actions, signals, and states induced by $(\mathcal{S},\aplan)$. For $\player\in\Player$ and $\act_\player\in \supp_\player(\out)$, we denote by $X_{\act_\player}$ the set of positive-probability signals that makes player $\player$ take action $\act_\player$:
\[
\Signal_{\act_\player}:=\{\signal_\player:\nu(\signal_\player)>0\text{ and }\aplan_\player(\act_\player|\signal_\player)>0\}.
\]

\begin{lemma}\label{lem:BR in info-game contains BR from outcome}
For every $\player\in\Player$ and $\act_\player\in \supp_\player(\out)$,

\[
\BR(\out_{\act_\player}) = \bigcap_{\signal_\player\in X_{\act_\player}} \BR(\nu_{\signal_\player}).
\]
\end{lemma}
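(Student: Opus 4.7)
The plan is to express $\out_{\act_\player}$ as a convex combination of the signal-conditional distributions $\nu_{\signal_\player}$ (marginalized over $\Act_{-\player}\times\Paystate$) for $\signal_\player\in \Signal_{\act_\player}$, and then leverage two ingredients: linearity of expected utility in the belief, and the obedience condition for the information-acquisition equilibrium (Lemma~\ref{lem:multi_agent}-(\ref{thm:mon_tech_sBCE})), which says $\act_\player\in \BR(\nu_{\signal_\player})$ for every $\signal_\player\in \Signal_{\act_\player}$. The decomposition itself follows from Bayes' rule: since each player's action is a function (through $\aplan_\player$) of her own signal alone, $\act_\player$ and $(\act_{-\player},\paystate)$ are conditionally independent given $\signal_\player$. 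Writing $\alpha_{\signal_\player}:=\nu(\signal_\player\vert \act_\player)$, this gives
\[
\out_{\act_\player}(\act_{-\player},\paystate)=\sum_{\signal_\player\in \Signal_{\act_\player}}\alpha_{\signal_\player}\cdot\sum_{\signal_{-\player},\corstate}\nu_{\signal_\player}(\act_{-\player},\signal_{-\player},\corstate,\paystate),
\]
where $\{\alpha_{\signal_\player}\}_{\signal_\player\in \Signal_{\act_\player}}$ are strictly positive weights summing to one. Since the utility of any $\actb_\player$ only depends on the $(\Act_{-\player}\times\Paystate)$-marginal of the belief, denoting $U_\player(\actb_\player;\belief)$ the expected utility, we obtain $U_\player(\actb_\player;\out_{\act_\player})=\sum_{\signal_\player\in \Signal_{\act_\player}}\alpha_{\signal_\player}U_\player(\actb_\player;\nu_{\signal_\player})$.

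For the inclusion $\supseteq$, suppose $\actb_\player\in\bigcap_{\signal_\player\in \Signal_{\act_\player}}\BR(\nu_{\signal_\player})$. Then for every $\actc_\player$ and every $\signal_\player\in \Signal_{\act_\player}$, $U_\player(\actc_\player;\nu_{\signal_\player})\leq U_\player(\actb_\player;\nu_{\signal_\player})$. Taking the $\alpha_{\signal_\player}$-weighted sum yields $U_\player(\actc_\player;\out_{\act_\player})\leq U_\player(\actb_\player;\out_{\act_\player})$, so $\actb_\player\in \BR(\out_{\act_\player})$.

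For the reverse inclusion $\subseteq$, the harder direction, I would pivot through $\act_\player$ itself. By Lemma~\ref{lem:multi_agent}-(\ref{thm:mon_tech_sBCE}), $\act_\player\in\BR(\nu_{\signal_\player})$ for every $\signal_\player\in \Signal_{\act_\player}$, so $U_\player(\act_\player;\nu_{\signal_\player})-U_\player(\actb_\player;\nu_{\signal_\player})\geq 0$ for each such $\signal_\player$. Taking the convex combination with weights $\alpha_{\signal_\player}$ gives $U_\player(\act_\player;\out_{\act_\player})\geq U_\player(\actb_\player;\out_{\act_\player})$. On the other hand, $\actb_\player\in\BR(\out_{\act_\player})$ implies the reverse inequality, so equality holds. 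A non-negative combination with strictly positive weights vanishes only if every summand vanishes, hence $U_\player(\actb_\player;\nu_{\signal_\player})=U_\player(\act_\player;\nu_{\signal_\player})$ for each $\signal_\player\in \Signal_{\act_\player}$, and therefore $\actb_\player\in \BR(\nu_{\signal_\player})$.

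The main obstacle is the $\subseteq$ direction: knowing that $\actb_\player$ maximizes expected utility against the \emph{average} belief $\out_{\act_\player}$ does not on its face imply optimality against each individual posterior $\nu_{\signal_\player}$. The resolution is to use $\act_\player$ as a benchmark that is simultaneously optimal under every $\nu_{\signal_\player}$ thanks to obedience, which forces the ``slack'' of $\actb_\player$ relative to $\act_\player$ to vanish pointwise rather than merely on average.
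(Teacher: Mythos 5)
Your proposal is correct and follows essentially the same route as the paper's proof: both directions rest on the Bayes-rule decomposition of $\out_{\act_\player}$ as a convex combination of the marginals of $\nu_{\signal_\player}$ over $\signal_\player\in \Signal_{\act_\player}$, with the $\supseteq$ inclusion obtained by averaging and the $\subseteq$ inclusion obtained by using obedience (Lemma~\ref{lem:multi_agent}-(i)) to make $\act_\player$ a benchmark optimal under every posterior, so that the pointwise-nonnegative differences must vanish term by term. No gaps.
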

\begin{proof}
First we show that 
\[
\BR(\out_{\act_\player}) \subseteq \bigcap_{\signal_\player\in X_{\act_\player}} \BR(\nu_{\signal_\player}).
\]
Take $\actb_\player\in BR(\out_{\act_\player})$. By Lemma~\ref{lem:multi_agent}-(i), $\act_\player\in \BR(\nu_{\signal_\player})$ for all $\signal_\player\in\Signal_{\act_\player}$, which implies
\begin{equation}\label{eq:pointwise}
\sum_{\act_{-\player},\signal_{-\player},\corstate,\paystate}
\util_\player(\act_{\player},\act_{-\player},\paystate)
\nu_{\signal_\player}(\act_{-\player},\signal_{-\player},\corstate,\paystate)
\geq 
\sum_{\act_{-\player},\signal_{-\player},\corstate,\paystate}
\util_\player(\actb_{\player},\act_{-\player},\paystate)
\nu_{\signal_\player}(\act_{-\player},\signal_{-\player},\corstate,\paystate).
\end{equation}
Letting 
\[
\nu(\signal_\player|\act_\player) = \frac{\aplan_\player(\act_\player\vert \signal_\player)\nu(\signal_\player)}{\out(\act_\player)},
\]
simple algebra shows that 
\begin{equation}\label{eq:bayes_rule}
\out_{\act_\player}(\act_{-\player},\paystate)=\sum_{\signal_\player:\nu(\signal_\player)>0}\nu(\signal_\player|\act_\player) 
\left[ 
\sum_{\signal_{-\player},\corstate}\nu_{\signal_\player}(\act_{-\player},\signal_{-\player},\corstate,\paystate)
\right].
\end{equation}
Therefore,
\begin{align*}
\sum_{\signal_\player:\nu(\signal_\player)>0}\nu(\signal_\player|\act_\player) 
\left(\sum_{\act_{-\player},\signal_{-\player},\corstate,\paystate}
\util_\player(\act_{\player},\act_{-\player},\paystate)
\nu_{\signal_\player}(\act_{-\player},\signal_{-\player},\corstate,\paystate) \right) 
& = \sum_{\act_{-\player},\paystate} \util_\player(\act_\player,\act_{-\player},\theta)\out_{\act_\player}(\act_{-\player},\paystate),
\\ 
\sum_{\signal_\player:\nu(\signal_\player)>0}\nu(\signal_\player|\act_\player) 
\left(\sum_{\act_{-\player},\signal_{-\player},\corstate,\paystate}
\util_\player(\actb_{\player},\act_{-\player},\paystate)
\nu_{\signal_\player}(\act_{-\player},\signal_{-\player},\corstate,\paystate) \right) 
& = \sum_{\act_{-\player},\paystate} \util_\player(\actb_\player,\act_{-\player},\theta)\out_{\act_\player}(\act_{-\player},\paystate).
\end{align*}
As a consequence, since $\actb_\player\in BR(\out_{\act_\player})$, we have that
\begin{align*}
   & \sum_{\signal_\player:\nu(\signal_\player)>0}\nu(\signal_\player|\act_\player) 
\left(\sum_{\act_{-\player},\signal_{-\player},\corstate,\paystate}
\util_\player(\act_{\player},\act_{-\player},\paystate)
\nu_{\signal_\player}(\act_{-\player},\signal_{-\player},\corstate,\paystate) \right) \\ 
\leq & \sum_{\signal_\player:\nu(\signal_\player)>0}\nu(\signal_\player|\act_\player) 
\left(\sum_{\act_{-\player},\signal_{-\player},\corstate,\paystate}
\util_\player(\actb_{\player},\act_{-\player},\paystate)
\nu_{\signal_\player}(\act_{-\player},\signal_{-\player},\corstate,\paystate) \right).
\end{align*}
It follows (\ref{eq:pointwise}) holds with equality for all $\signal_\player\in\Signal_{\act_\player}$. Since $\act_\player\in BR(\nu_{\signal_\player})$,  $\actb_\player\in BR(\nu_{\signal_\player})$.

Now we show that 
\[
\BR(\out_{\act_\player}) \supseteq \bigcap_{\signal_\player\in \Signal_{\act_\player}} \BR(\nu_{\signal_\player}).
\]
Let $\actb_\player\in \Act_\player$ such that $\actb_\player\in\BR(\nu_{\signal_\player})$ for all $\signal_\player\in\Signal_{\act_\player}$, that is, 
\[
\sum_{\act_{-\player},\signal_{-\player},\corstate,\paystate}
\util_\player(\actc_{\player},\act_{-\player},\paystate)
\nu_{\signal_\player}(\act_{-\player},\signal_{-\player},\corstate,\paystate)
\leq 
\sum_{\act_{-\player},\signal_{-\player},\corstate,\paystate}
\util_\player(\actb_{\player},\act_{-\player},\paystate)
\nu_{\signal_\player}(\act_{-\player},\signal_{-\player},\corstate,\paystate).
\]
for all $\actc_\player\in\Act_\player$. Averaging across inequalities, we obtain that
\begin{align*}
& \sum_{\signal_\player\in\Signal_{\act_\player}}\nu(\signal_\player\vert\act_\player)\sum_{\act_{-\player},\signal_{-\player},\corstate,\paystate}
\util_\player(\actc_{\player},\act_{-\player},\paystate)
\nu_{\signal_\player}(\act_{-\player},\signal_{-\player},\corstate,\paystate)\\
\leq &
\sum_{\signal_\player\in\Signal_{\act_\player}}\nu(\signal_\player\vert\act_\player)\sum_{\act_{-\player},\signal_{-\player},\corstate,\paystate}
\util_\player(\actb_{\player},\act_{-\player},\paystate)
\nu_{\signal_\player}(\act_{-\player},\signal_{-\player},\corstate,\paystate).
\end{align*}
Using (\ref{eq:bayes_rule}), we obtain that 
\[
\sum_{\act_{-\player},\paystate}\util_\player(\actc_\player,\act_{-\player},\paystate)\out_{\act_\player}(\act_{-\player},\paystate)
\leq 
\sum_{\act_{-\player},\paystate}\util_\player(\actb_\player,\act_{-\player},\paystate)\out_{\act_\player}(\act_{-\player},\paystate).
\]
We deduce that $\actb_\player\in \BR(\out_{\act_\player})$.
\end{proof}

Next, we use the above to show the equilibrium outcome must satisfy the obedience constraint and the separation constraint.

\begin{lemma}\label{lem:isasBCE}
The equilibrium outcome $\out$ is a sBCE.
\end{lemma}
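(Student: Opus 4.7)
The plan is to verify obedience and separation separately, both by combining Lemma \ref{lem:multi_agent} (which captures the equilibrium conditions on $(\mathcal{S},\aplan)$) with Lemma \ref{lem:BR in info-game contains BR from outcome} (which converts best-response statements at the signal level into best-response statements at the recommendation level).

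For obedience, I would first use Lemma \ref{lem:multi_agent}-(i) to note that, for any $\act_\player\in\supp_\player(\out)$ and any $\signal_\player\in\Signal_{\act_\player}$, we have $\act_\player\in\BR(\nu_{\signal_\player})$. Intersecting over $\signal_\player\in\Signal_{\act_\player}$ and applying Lemma \ref{lem:BR in info-game contains BR from outcome} then yields $\act_\player\in\BR(\out_{\act_\player})$; that is, $\act_\player$ weakly beats every alternative $\actb_\player$ under the belief $\out_{\act_\player}$. Multiplying the resulting inequality by $\out(\act_\player)$ and summing over $\act_\player\in\supp_\player(\out)$ (the terms with $\act_\player\notin\supp_\player(\out)$ vanish since $\out(\act_\player)=0$) delivers the obedience inequality \eqref{eq:OC}.

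For separation, I would take $\act_\player,\actb_\player\in\supp_\player(\out)$ with $\out_{\act_\player}\neq\out_{\actb_\player}$ and first exhibit signals $\signal_\player^\star\in\Signal_{\act_\player}$ and $\signalb_\player^\star\in\Signal_{\actb_\player}$ with $\nu_{\signal_\player^\star}\neq\nu_{\signalb_\player^\star}$. The existence of such a pair follows from equation \eqref{eq:bayes_rule}, which expresses $\out_{\act_\player}$ as a convex combination of the $(\act_{-\player},\paystate)$-marginals of $\nu_{\signal_\player}$ over $\signal_\player\in\Signal_{\act_\player}$, and analogously for $\out_{\actb_\player}$: if $\nu_{\signal_\player}=\nu_{\signalb_\player}$ held for every pair in $\Signal_{\act_\player}\times\Signal_{\actb_\player}$, these two averages would coincide, contradicting $\out_{\act_\player}\neq\out_{\actb_\player}$. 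With $\signal_\player^\star$ and $\signalb_\player^\star$ in hand, Lemma \ref{lem:multi_agent}-(ii) gives $\BR(\nu_{\signal_\player^\star})\cap\BR(\nu_{\signalb_\player^\star})=\varnothing$, while Lemma \ref{lem:BR in info-game contains BR from outcome} provides the inclusions $\BR(\out_{\act_\player})\subseteq\BR(\nu_{\signal_\player^\star})$ and $\BR(\out_{\actb_\player})\subseteq\BR(\nu_{\signalb_\player^\star})$; intersecting yields $\BR(\out_{\act_\player})\cap\BR(\out_{\actb_\player})=\varnothing$, which is exactly the separation constraint \eqref{eq:SC}.

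The only step that needs any care is the implication ``all $\nu_{\signal_\player}$ agree across $\Signal_{\act_\player}\cup\Signal_{\actb_\player}\Rightarrow\out_{\act_\player}=\out_{\actb_\player}$,'' and this is immediate from the Bayes-rule decomposition \eqref{eq:bayes_rule} already established in the proof of Lemma \ref{lem:BR in info-game contains BR from outcome}. Overall I expect no real obstacle: the heavy lifting has been done in the preliminary lemmas, and what remains is essentially a bookkeeping exercise translating Lemma \ref{lem:multi_agent}'s signal-level conditions into their recommendation-level counterparts via Lemma \ref{lem:BR in info-game contains BR from outcome}.
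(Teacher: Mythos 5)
Your proof is correct and follows the paper's own argument essentially verbatim: both obedience and separation are obtained by feeding the signal-level conditions of Lemma~\ref{lem:multi_agent} through the identity $\BR(\out_{\act_\player})=\bigcap_{\signal_\player\in\Signal_{\act_\player}}\BR(\nu_{\signal_\player})$ from Lemma~\ref{lem:BR in info-game contains BR from outcome}, and your justification via \eqref{eq:bayes_rule} of the existence of a signal pair with distinct conditional beliefs is exactly the step the paper leaves implicit. One cosmetic correction: to recover \eqref{eq:OC} from $\act_\player\in\BR(\out_{\act_\player})$ you should \emph{not} sum over $\act_\player$ --- the obedience constraint is a separate inequality for each pair $(\act_\player,\actb_\player)$, obtained by multiplying the best-response inequality by $\out(\act_\player)$ for each on-support $\act_\player$ (and holding trivially when $\out(\act_\player)=0$); summing over $\act_\player$ would only yield the weaker aggregated, coarse-correlated-equilibrium-type inequality.
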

\begin{proof}
To check the obedience constraint, take $\player\in\Player$ and $\act_\player\in\supp_\player(\out)$.
By Lemma~\ref{lem:multi_agent}-(i), $\act_\player\in \BR(\nu_{\signal_\player})$ for all $\signal_\player\in\Signal_{\act_\player}$. By Lemma~\ref{lem:BR in info-game contains BR from outcome}, $\act_\player\in \BR(\out_{\act_\player})$. 

To verify the separation constraint, let $\player\in\Player$ and $\act_\player,\actb_\player\in\supp_\player (\out)$ such that $\out_{\act_\player}\neq \out_{\actb_\player}$. Since $\out_{\act_\player}\neq \out_{\actb_\player}$, there must be a pair of signals $\signal_{\act_\player}\in\Signal_{\act_\player}$ and $\signal_{\actb_\player}\in \Signal_{\actb_\player}$ such that $\mu_{\signal_{\act_\player}}\neq \mu_{\signal_{\actb_\player}}$. By Lemma~\ref{lem:multi_agent}-(ii), 
\[
BR(\mu_{\signal_{\act_\player}})\cap BR(\mu_{\signal_{\actb_\player}})=\varnothing.
\]
By Lemma~\ref{lem:BR in info-game contains BR from outcome}, $BR(\out_{\act_\player})\cap BR(\out_{\actb_\player})=\varnothing$.
\end{proof}

The next lemma concludes the proof of the ``only if'' statement of Theorem~\ref{thm:mon_tech}.

\begin{lemma}\label{lem:valuesss}
For every player $\player$,
\[
\payvector_\player=\noinfoval_\player(\out)=\grossval_\player(\out)\quad\text{or}\quad\payvector_\player\in [\noinfoval_\player(\out),\grossval_\player(\out)).
\]
\end{lemma}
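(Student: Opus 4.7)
The plan is to decompose $\payvector_\player$ as $\grossval_\player(\out) - \icost_\player(\exper_\player)$, which is the defining expression of player $\player$'s equilibrium payoff in the rational inattention game $(\BGame, \itech)$. Since costs are nonnegative, this immediately yields the upper bound $\payvector_\player \leq \grossval_\player(\out)$. For the lower bound, I would exploit the structure provided by rational inattention: flexibility of $\Exper_\player$ combined with the existence of a zero-cost experiment and monotonicity of $\icost_\player$ implies the fully uninformative experiment lies in $\Exper_\player$ and carries zero cost. Holding the other players' strategies fixed, player $\player$ can then deviate to this free experiment followed by any constant action $\actb_\player$. Such a deviation does not alter the marginal distribution of $(\act_{-\player}, \paystate)$, which remains $\out(\act_{-\player}, \paystate)$, and therefore yields payoff $\sum_{\act_{-\player}, \paystate} \util_\player(\actb_\player, \act_{-\player}, \paystate) \out(\act_{-\player}, \paystate)$. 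Maximizing over $\actb_\player$ and invoking the equilibrium condition delivers $\payvector_\player \geq \noinfoval_\player(\out)$.

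It remains to analyze the boundary case $\payvector_\player = \grossval_\player(\out)$, which by the decomposition above forces $\icost_\player(\exper_\player) = 0$. Here I would invoke strict monotonicity: since the uninformative experiment is Blackwell-dominated by $\exper_\player$ at cost zero, strict monotonicity rules out $\exper_\player \succ$ uninformative, so the two must be Blackwell equivalent. Unpacking the definition of the order used in the paper (which requires the garbling identity only on the support of $\corprior \cdot \payprior$), this means the conditional distribution of $\signal_\player$ given $(\corstate, \paystate)$ is effectively constant in $(\corstate, \paystate)$ on that support. Because every other player's signal depends only on $(\corstate, \paystate)$ and signals are conditionally independent given $(\corstate, \paystate)$, the signal $\signal_\player$ is independent of $(\signal_{-\player}, \corstate, \paystate)$, and hence $\act_\player$ is independent of $(\act_{-\player}, \paystate)$ under $\out$. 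Consequently $\grossval_\player(\out)$ equals a convex combination (over $\act_\player$, weighted by $\out(\act_\player)$) of the deterministic-deviation payoffs, so $\grossval_\player(\out) \leq \noinfoval_\player(\out)$. Combined with the reverse inequality $\grossval_\player(\out) \geq \noinfoval_\player(\out)$---a direct consequence of the obedience of $\out$, already established in Lemma~\ref{lem:isasBCE}, since averaging the individual obedience inequalities against any fixed $\actb_\player$ gives $\grossval_\player(\out) \geq \sum_{\act_{-\player}, \paystate} \util_\player(\actb_\player, \act_{-\player}, \paystate) \out(\act_{-\player}, \paystate)$---this forces $\noinfoval_\player(\out) = \grossval_\player(\out) = \payvector_\player$, completing the proof.

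The main obstacle is the equality case. It requires converting the algebraic statement $\icost_\player(\exper_\player) = 0$ into the structural statement that $\exper_\player$ is Blackwell equivalent to an uninformative experiment, and then propagating that equivalence through the correlation device $\corstate$ to conclude independence of $\act_\player$ from $(\act_{-\player}, \paystate)$ under $\out$. Once this structural fact is established, the convex combination argument and the obedience-based reverse inequality are routine.
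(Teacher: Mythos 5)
Your proof is correct and follows essentially the same route as the paper's: nonnegativity of costs gives the upper bound, a deviation to a free uninformative experiment (feasible by flexibility, free by the zero-cost normalization plus weak monotonicity) gives the lower bound, and strict monotonicity handles the boundary case. The only difference is organizational---you treat the equality case as a contrapositive ($\icost_\player(\exper_\player)=0$ implies $\exper_\player$ is Blackwell-equivalent to an uninformative experiment, which implies $\grossval_\player(\out)=\noinfoval_\player(\out)$), whereas the paper argues that $\noinfoval_\player(\out)<\grossval_\player(\out)$ forces $\exper_\player$ to be informative and hence strictly costly; you also spell out more explicitly than the paper the step where Blackwell equivalence to an uninformative experiment propagates through the conditional independence of signals given $(\corstate,\paystate)$ to yield independence of $\act_\player$ from $(\act_{-\player},\paystate)$ under $\out$.
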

\begin{proof}
For every player $i$, $\icost_\player(\exper_\player)\geq 0$, which implies that $\payvector_\player\leq \grossval_\player(\out)$. In addition, uninformative experiments have zero cost by hypothesis.  Thus, since $(\exper_\player,\aplan_\player)$ is a best response to $(\exper_{-\player},\aplan_{-\player})$, we have that for any uninformative experiment $\exper_\player^\prime$,
\[
\payvector_\player  \geq \max_{\aplan^\prime_\player}\sum_{\act,\signal,\corstate,\paystate} \util_\player(\act,\paystate) \left[\aplan^\prime_{\player}(\act_\player|\signal_\player)\exper_\player(\signal_{\player}|\corstate,\paystate)\prod_{\playerb\neq \player} \aplan_{\playerb}(\act_{\playerb}|\signal_\playerb)\exper_{\playerb}(\signal_{\playerb}|\corstate,\paystate)\right]\corprior(\corstate|\paystate)\payprior(\paystate)
 = \noinfoval_\player(\out).
\]
Overall, we conclude that $\payvector_\player\in[\noinfoval_\player(\out),\grossval_\player(\out)]$

If $\noinfoval_\player(\out)=\grossval_\player(\out)$, then $\payvector_\player\in[\noinfoval_\player(\out),\grossval_\player(\out)]$ implies 
$
\payvector_\player=\noinfoval_\player(\out)=\grossval_\player(\out).
$
Suppose instead that $\noinfoval_\player(\out)<\grossval_\player(\out)$. If 
\[
\sum_{\act,\signal,\corstate,\paystate} \util_\player(\act,\paystate) \left[\prod_{\playerb\in\Player} \aplan_{\playerb}(\act_{\playerb}|\signal_\playerb)\exper_{\playerb}(\signal_{\playerb}|\corstate,\paystate)\right]\corprior(\corstate|\paystate)\payprior(\paystate)<\grossval_\player(\out),
\]
then obviously $\payvector_i<\grossval_\player(\out)$ because $\icost_{\player}(\exper_\player)\geq 0$. If, on the other hand,
\[
\sum_{\act,\signal,\corstate,\paystate} \util_\player(\act,\paystate) \left[\prod_{\playerb\in\Player} \aplan_{\playerb}(\act_{\playerb}|\signal_\playerb)\exper_{\playerb}(\signal_{\playerb}|\corstate,\paystate)\right]\corprior(\corstate|\paystate)\payprior(\paystate)=\grossval_\player(\out),
\]
then $\exper_\player$ cannot be uninformative because $\noinfoval_\player(\out)<\grossval_\player(\out)$. By monotonicity, $\icost_{\player}(\exper_\player)>0$, which implies that $\payvector_\player<\grossval_\player(\out)$. In sum, if $\noinfoval_\player(\out)<\grossval_\player(\out)$, then 
$
\payvector_\player\in [\noinfoval_\player(\out),\grossval_\player(\out)).
$
\end{proof}

\section{Proof of Theorem \ref{thm:genericity}}

Throughout this section, we fix the set of players $\Player$, the set of payoff states $\Paystate$, the prior $\payprior$ (with full support), and an action set $\Act_\player$ for every player $\player$. Given these, specifying the players' utilities is all that is left for defining a base game. As such, we use the profile of utility functions $\util=(\util_\player)_{\player\in\Player}$ as a shorthand for the base game it defines, writing for example $\BCE(\util)$ for the set of BCEs. We also denote by $\sBCE(\util)$ the set of sBCEs, by $\cl(\sBCE(\util))$ the closure of $\sBCE(\util)$, and by  $\Vert \util\Vert$ the Euclidean norm. 

\begin{lemma}\label{lem:poking}
For every $\util\in\mathbb{R}^{\Player\times\Act\times\Paystate}$, $\out\in \BCE(\util)$, and $\epsilon>0$, there exists $\utilb\in\mathbb{R}^{\Player\times\Act\times\Paystate}$ such
that $\Vert \util-\utilb\Vert\leq\epsilon$ and $\out\in \sBCE(\utilb)$.
\end{lemma}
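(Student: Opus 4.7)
The plan is to construct a small explicit perturbation of $\util$ under which $\out$ becomes a separated BCE. The main obstacle is to preserve the obedience of $\out$ while forcing distinct posteriors $\out_{\act_\player} \neq \out_{\actb_\player}$ (for $\act_\player, \actb_\player \in \supp_\player(\out)$) to generate disjoint best-response sets; the perturbation must strictly break ``across-class'' ties between $\PartA_\player^\out$-cells while leaving ``within-class'' ties intact, since the latter are required by obedience. The key trick is to choose a perturbation whose own-posterior expected-value differences exactly recover the squared $L^{2}$ gap between posteriors.

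Concretely, I would set $\utilb := \util + \epsilon' g$, where
\[
g_\player(\act_\player, \act_{-\player}, \paystate) := \begin{cases} 2\, \out_{\act_\player}(\act_{-\player}, \paystate) - \|\out_{\act_\player}\|^{2} & \text{if } \act_\player \in \supp_\player(\out), \\ 0 & \text{if } \act_\player \notin \supp_\player(\out), \end{cases}
\]
with $\|\out_{\act_\player}\|^{2} := \sum_{\act_{-\player}, \paystate} \out_{\act_\player}(\act_{-\player}, \paystate)^{2}$, and $\epsilon' := \epsilon/\|g\|$ (note $g \neq 0$ because $\supp_\player(\out)$ is non-empty). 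By construction, $\|\utilb - \util\| = \epsilon$. A direct expansion, using that $\out_{\act_\player}$ is a probability measure, yields for every $\act_\player \in \supp_\player(\out)$ and every $\actb_\player \in \Act_\player$,
\[
\sum_{\act_{-\player}, \paystate} \bigl(g_\player(\act_\player, \act_{-\player}, \paystate) - g_\player(\actb_\player, \act_{-\player}, \paystate)\bigr)\, \out_{\act_\player}(\act_{-\player}, \paystate) = \begin{cases} \|\out_{\act_\player} - \out_{\actb_\player}\|^{2} & \text{if } \actb_\player \in \supp_\player(\out), \\ \|\out_{\act_\player}\|^{2} & \text{otherwise}, \end{cases}
\]
which is always non-negative and strictly positive exactly when $\actb_\player \notin \PartA_\player^\out(\act_\player)$.

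To finish, I would verify the two conditions defining sBCE. Multiplying the above display by $\epsilon' \out(\act_\player) \geq 0$ and adding it to the obedience inequality for $\out$ under $\util$ immediately gives $\out \in \BCE(\utilb)$. For separation, the strict positivity above shows that whenever $\actb_\player \notin \PartA_\player^\out(\act_\player)$, action $\actb_\player$ is strictly suboptimal against the belief $\out_{\act_\player}$ under $\utilb$; hence the best-response set at $\out_{\act_\player}$ under $\utilb$ is contained in $\PartA_\player^\out(\act_\player)$. Since distinct $\PartA_\player^\out$-cells are disjoint, the best-response sets at $\out_{\act_\player}$ and $\out_{\actb_\player}$ under $\utilb$ are disjoint whenever $\out_{\act_\player} \neq \out_{\actb_\player}$, establishing $\out \in \sBCE(\utilb)$.
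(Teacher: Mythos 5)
Your proof is correct, and it follows the same overall strategy as the paper's (perturb each $\util_\player$ by adding a bonus term $g_{\act_\player}(\act_{-\player},\paystate)$ that rewards the recommended action precisely at its own posterior, so that obedience is preserved and cross-cell ties are broken strictly), but the construction of the perturbation is genuinely different and considerably cleaner. The paper builds $g$ by enumerating the distinct posteriors so that each is an extreme point of the convex hull of its predecessors, invoking a separating-hyperplane theorem for each one, and then rescaling the resulting functions by carefully chosen products $s_{l_\player}=\prod_{m_\player=l_\player}^{n_\player}t_{m_\player}$ to make the ``diagonal'' terms dominate in both directions. You replace this entire inductive construction with the quadratic scoring rule, $g_\player(\act_\player,\cdot)=2\out_{\act_\player}-\Vert\out_{\act_\player}\Vert^2$, whose Bregman-type identity delivers the cross-term $\Vert\out_{\act_\player}-\out_{\actb_\player}\Vert^2$ in one line and is automatically symmetric in $\act_\player,\actb_\player$, with the off-support case handled by the extra $\Vert\out_{\act_\player}\Vert^2>0$. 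The verification steps (nonnegativity plus obedience of $\out$ under $\util$ gives $\out\in\BCE(\utilb)$; strict positivity forces $\BR(\out_{\act_\player})\subseteq\PartA^\out_\player(\act_\player)$ under $\utilb$, and disjointness of distinct cells gives separation) are sound, and the normalization $\epsilon'=\epsilon/\Vert g\Vert$ is legitimate since each on-support component of $g$ is a nonconstant affine function of a probability vector and hence nonzero. What your approach buys is brevity and an explicit closed-form perturbation; what the paper's buys is essentially nothing extra here, so your argument is a genuine simplification of this lemma.
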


\begin{proof}
For each player $\player\in \Player$, we consider a set $P_\player\subseteq\Delta(\Act_{-\player}\times\Paystate)$
given by
\[
P_\player=\left\{\out_{\act_\player}:\act_\player\in\supp_\player(\out)\right\}.
\]
Let $n_\player$ be the cardinality of $P_\player$ (of course, it could be that $n_\player$ is smaller than the cardinality of $\supp_\player(\out)$). Reasoning inductively,
we can find an enumeration $\out_{1},\ldots,\out_{n_\player}$ of the elements
of $P_{i}$ such that, for every $m_\player\in\{1,\ldots,n_{i}\}$, $\out_{m_\player}$
is an extreme point of the convex hull of $\left\{\out_{1},\ldots,\out_{m_\player}\right\}$. 

By an hyperplane separation theorem (e.g., \citealp{rockafellar1970convex}, Corollary 11.4.2) for every $m_\player\in\{2,\ldots,n_\player\}$
we can find a function $f_{m_\player}:A_{-\player}\times\Paystate\rightarrow\mathbb{R}$ such that
\begin{equation}\label{eq:separation_thm}
\sum_{\act_{-\player},\paystate}f_{m_{\player}}(\act_{-\player},\paystate)\out_{m_{\player}}(\act_{-\player},\paystate)>0\geq\max_{l_\player\in\{1,\ldots,m_\player-1\}}\sum_{\act_{-\player},\paystate}f_{m_\player}(\act_{-\player},\paystate)\out_{l_{\player}}(\act_{-\player},\paystate).
\end{equation}
For $m_{\player}=1$, we define $f_{1}(\act_{-\player},\paystate)=1$ for all $\act_{-\player}\in\Act_{-\player}$ and $\paystate\in\Paystate$. 

For every $l_\player\in\{1,\ldots,n_\player-1\}$, we choose $t_{l_{\player}}\in(0,1]$
such that for every $m_{\player}\in\{l_{\player}+1,\ldots,n_{\player}\}$,
\begin{equation}
\sum_{\act_{-\player},\paystate}f_{m_{\player}}(\act_{-\player},\paystate)\out_{m_{\player}}(\act_{-\player},\paystate)>t_{l_{\player}}\sum_{\act_{-\player}}f_{l_{\player}}(\act_{-\player})\out_{m_{\player}}(\act_{-\player}).\label{eq:scalar_way}
\end{equation}
We can choose such a $t_{l_{\player}}$ because the left-hand side of (\ref{eq:scalar_way})
is positive---see (\ref{eq:separation_thm}). For $l_{\player}=n_{\player}$,
we simply define $t_{n_{\player}}=1$.

For every $l_{\player}\in\{1,\ldots,n_{\player}\}$, we define
\[
s_{l_{\player}}=\prod_{m_{\player}=l_{\player}}^{n_{\player}}t_{m_{\player}}.
\]
Using (\ref{eq:scalar_way}), simple algebra shows that for every
$l_{\player}\in\{1,\ldots,n_{\player}-1\}$ and $m_{\player}\in\{l_{\player}+1,\ldots,n_{\player}\}$,
\begin{equation}
s_{m_{\player}}\sum_{\act_{-\player},\paystate}f_{m_{\player}}(\act_{-\player},\paystate)\out_{m_{\player}}(\act_{-\player},\paystate)>s_{l_{\player}}\sum_{\act_{-\player},\paystate}f_{l_{\player}}(\act_{-\player},\paystate)\out_{m_{\player}}(\act_{-\player},\paystate).\label{eq:combined_in}
\end{equation}

For $\act_{\player}\in \supp_\player(\out)$, we define the function $g_{\act_{\player}}:\Act_{-\player}\times\Paystate\rightarrow\mathbb{R}$ by
\[
g_{\act_{\player}}(\act_{-\player},\paystate)=s_{m_{\player}}\cdot f_{m_{\player}}(\act_{-\player},\paystate)
\]
where $m_\player$ is such that $\out_{\act_{\player}}=\out_{m_{\player}}$.
For $\act_{\player}\notin \supp_\player(\out)$, we define $g_{\act_{\player}}=0$. 

We claim that for all $\act_{\player}\in \supp_\player(\out)$ and $\actb_{\player}\notin\{\actc_\player\in \supp_\player(\out): \out_{\actc_{\player}}=\out_{\act_{\player}}\}$,
\begin{equation}\label{eq:claim_strict}
\sum_{\act_{-\player},\paystate}g_{\act_{\player}}(\act_{-\player},\paystate)\out_{\act_{\player}}(\act_{-\player},\paystate)>\sum_{\act_{-\player},\paystate}g_{\actb_{\player}}(\act_{-\player},\paystate)\out_{\act_{\player}}(\act_{-\player},\paystate).
\end{equation}

To verify the claim, pick $m_{\player}\in\{1,\ldots,n_{\player}\}$ such that $\out_{\act_{\player}}=\out_{m_{\player}}$.
From the left-hand side of (\ref{eq:separation_thm}) and the fact that  $s_{m_{i}}>0$, we obtain that
\begin{equation}
\sum_{\act_{-\player},\paystate}g_{\act_{\player}}(\act_{-\player},\paystate)\out_{\act_{\player}}(\act_{-\player},\paystate)=s_{m_{\player}}\sum_{\act_{-\player},\paystate}f_{m_{\player}}(\act_{-\player},\paystate)\out_{m_{\player}}(\act_{-\player},\paystate)>0.\label{eq:strict_posit}
\end{equation}
Hence, for $\actb_{\player}\notin\supp_\player(\out)$, we have
\[
\sum_{\act_{-\player},\paystate}g_{\act_{\player}}(\act_{-\player},\paystate)\out_{\act_{\player}}(\act_{-\player},\paystate)>0=\sum_{\act_{-\player},\paystate}g_{\actb_{\player}}(\act_{-\player},\paystate)\out_{\act_{\player}}(\act_{-\player},\paystate)
\]
where the equality follows from $g_{\actb_{\player}}=0$. 

Assume now that $\actb_{\player}\in\supp_\player(\out)$. Choose $l_{\player}$ such that $\out_{\actb_{\player}}=\out_{l_{\player}}$.
Since $\out_{\act_{\player}}\neq \out_{\actb_{\player}}$, $m_{\player}\neq l_{\player}$. Suppose that $l_{\player}>m_{\player}$. It follows from the right-hand side of
(\ref{eq:separation_thm})---in (\ref{eq:separation_thm}) the roles
of $l_{\player}$ and $m_{\player}$ are inverted---that
\[
0\geq \sum_{\act_{-\player},\paystate}f_{l_{\player}}(\act_{-\player},\paystate)\out_{m_{\player}}(\act_{-\player},\paystate).
\]
Thus, given that $s_{l_{\player}}>0$, we deduce that
\[
0\geq \sum_{\act_{-\player},\paystate}g_{\actb_{\player}}(\act_{-\player},\paystate)\out_{\act_{\player}}(\act_{-\player},\paystate)=s_{l_{\player}}\sum_{\act_{-\player},\paystate}f_{l_{\player}}(\act_{-\player},\paystate)\out_{m_{\player}}(\act_{-\player},\paystate).
\]
We obtain that
\[
\sum_{\act_{-\player},\paystate}g_{\act_{\player}}(\act_{-\player},\paystate)\out_{\act_{\player}}(\act_{-\player},\paystate)>0\geq\sum_{\act_{-\player},\paystate}g_{\actb_{\player}}(\act_{-\player},\paystate)\out_{\act_{\player}}(\act_{-\player},\paystate)
\]
where we use again (\ref{eq:strict_posit}). For the case $l_{\player}<m_{\player}$, the condition
\[
\sum_{\act_{-\player},\paystate}g_{\act_{\player}}(\act_{-\player},\paystate)\out_{\act_{\player}}(\act_{-\player},\paystate)>\sum_{\act_{-\player},\paystate}g_{\actb_{\player}}(\act_{-\player},\paystate)\out_{\act_{\player}}(\act_{-\player},\paystate)
\]
is equivalent to (\ref{eq:combined_in}). We conclude that (\ref{eq:claim_strict}) holds.

To complete the proof of the lemma, for every $\delta>0$ we define $\utilb=(\utilb_\player)_{\player\in\Player}$ by
\[
\utilb_{\player}(\act,\paystate)=\util_{\player}(\act,\paystate)+\delta g_{\act_{\player}}(\act_{-\player},\paystate).
\]
By choosing $\delta$ sufficiently small, we can be make sure that
$\Vert \util-\utilb\Vert\leq\epsilon$. Since $\out\in \BCE(\util)$, it follows
from (\ref{eq:claim_strict}) that for all $\player\in\Player$ and $\act_\player\in\supp_\player(\out)$,
\[
\act_\player \in BR^\prime(\out_{\act_\player})\subseteq\{\actb_\player\in\supp_\player(\out):\out_{\act_\player}=\out_{\actb_\player}\}
\]
where $BR^\prime(\out_{\act_\player})$ is the set of $i$'s best response to a belief $\out_{\act_\player}$ given utility function $\utilb_\player$. Thus, $\out\in \sBCE(\utilb)$.
\end{proof}

A subset of a Euclidean space is \textbf{semi-algebraic} if it is defined by finite systems of polynomial inequalities. A correspondence between Euclidean spaces is semi-algebraic if its graph is semi-algebraic. The background knowledge on semi-algebraic sets that we use in this proof can be gathered from \citet[Section 2]{blume1994algebraic}. 

\begin{lemma}\label{lem:semi_algebraic}
The correspondences $\util\mapsto \BCE(\util)$, $\util\mapsto \sBCE(\util)$, and $\util\mapsto \cl(\sBCE(\util))$ are semi-algebraic.
\end{lemma}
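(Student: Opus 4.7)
The plan is to exhibit, in each of the three cases, a first-order formula over the ordered field of reals whose atomic subformulas are polynomial (in)equalities in the entries of $\util$ and $\out$; by the Tarski--Seidenberg theorem (quantifier elimination), the set defined by such a formula is semi-algebraic. I would organize the argument into three steps, one for each correspondence.

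First, for $\util\mapsto \BCE(\util)$: an outcome $\out$ lies in $\BCE(\util)$ iff (i) $\out(\act,\paystate)\geq 0$ for all $(\act,\paystate)$, (ii) $\sum_{\act} \out(\act,\paystate) = \payprior(\paystate)$ for all $\paystate$, and (iii) for every $\player,\act_\player,\actb_\player$ the obedience inequality \eqref{eq:OC} holds. Each of (i)--(iii) is a linear (in particular, polynomial) inequality in the joint variable $(\util,\out)$, so the graph is semi-algebraic.

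Second, for $\util\mapsto \sBCE(\util)$: on top of the BCE conditions I need to encode the separation constraint \eqref{eq:SC} polynomially, which is the main obstacle because the constraint involves the conditionals $\out_{\act_\player}$ (implicit division by $\out(\act_\player)$), the support set, the inequality of two conditional distributions, and the emptiness of an intersection of best-response sets. I would avoid division as follows. The condition $\act_\player\in\supp_\player(\out)$ is the polynomial inequality $\sum_{\act_{-\player},\paystate}\out(\act_\player,\act_{-\player},\paystate)>0$. The condition $\out_{\act_\player}=\out_{\actb_\player}$ is equivalent to the polynomial equations
\[
\out(\actb_\player)\,\out(\act_\player,\act_{-\player},\paystate)=\out(\act_\player)\,\out(\actb_\player,\act_{-\player},\paystate)\qquad\forall\act_{-\player},\paystate,
\]
so its negation is a finite disjunction of polynomial strict inequalities. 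For best responses, $\actc_\player\in\BR(\out_{\act_\player})$ is equivalent (after clearing the positive denominator $\out(\act_\player)$) to the polynomial inequalities
\[
\sum_{\act_{-\player},\paystate}\bigl(\util_\player(\actc_\player,\act_{-\player},\paystate)-\util_\player(\actd_\player,\act_{-\player},\paystate)\bigr)\out(\act_\player,\act_{-\player},\paystate)\geq 0 \qquad \forall\actd_\player,
\]
and the emptiness $\BR(\out_{\act_\player})\cap\BR(\out_{\actb_\player})=\varnothing$ is the finite conjunction, over $\actc_\player\in\Act_\player$, of the negation of the combined system. Assembling these polynomial atoms into the implication "for every $\player$ and every pair $\act_\player,\actb_\player\in\supp_\player(\out)$, if $\out_{\act_\player}\neq\out_{\actb_\player}$ then $\BR(\out_{\act_\player})\cap\BR(\out_{\actb_\player})=\varnothing$" yields a quantifier-free formula in $(\util,\out)$ (the outer quantifiers are finite and can be expanded into conjunctions). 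Hence the graph of $\sBCE$ is semi-algebraic.

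Third, for $\util\mapsto \cl(\sBCE(\util))$: since semi-algebraic sets form a class closed under finite unions, intersections, complements, and projections (Tarski--Seidenberg), the condition
\[
(\util,\out)\in\mathrm{graph}(\cl\circ\sBCE)\iff \forall\epsilon>0\ \exists \outb\ \bigl[(\util,\outb)\in\mathrm{graph}(\sBCE)\ \text{and}\ \Vert \out-\outb\Vert^2<\epsilon^2\bigr]
\]
is a first-order formula whose atoms are polynomial (in)equalities in $(\util,\out,\outb,\epsilon)$, and therefore defines a semi-algebraic set by quantifier elimination. This proves the third claim and completes the lemma.
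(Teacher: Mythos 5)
Your proposal is correct and follows essentially the same route as the paper: linear inequalities for the BCE graph, cross-multiplication to express $\out_{\act_\player}=\out_{\actb_\player}$ and cleared denominators for best responses so that the separation constraint becomes a Boolean combination of polynomial atoms, and a first-order formula plus Tarski--Seidenberg for the closure. The only difference is presentational---the paper enumerates the sets $T_\player$ of equal-belief pairs to write the sBCE graph as a finite union of polynomial systems (and packages ``no common best response'' into a single inequality $F>0$ using obedience), whereas you expand the same content as a quantifier-free formula directly.
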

\begin{proof}
The BCE correspondence is semi-algebraic: for all $\util\in\mathbb{R}^{\Player\times\Act\times\Paystate}$ and $\out\in \mathbb{R}^{\Act\times\Paystate}$, $\out\in \BCE(\util)$ if and only if the pair $(\util,\out)$ is a solution to the following finite system of polynomial inequalities:
\begin{align*}
\out(\act,\paystate) & \geq 0 &&\text{for all $a\in A$ and $\paystate\in\Paystate$},\\
\sum_{\act}\out(\act,\paystate) & = \payprior(\paystate) && \text{for all $\paystate\in \Paystate$},\\
\sum_{\act_{-\player},\paystate}(\util (\act,\paystate)-\util (\actb_\player,\act_{-\player},\paystate))\out(\act,\paystate) & \geq 0 &&\text{for all $\player\in\Player$ and $\act_\player,\actb_\player\in\Act_\player$}.
\end{align*}

The sBCE correspondence is also semi-algebraic. To prove it, for every $\util\in \mathbb{R}^{\Player\times\Act\times\Paystate}$, $\out\in \mathbb{R}^{\Act\times\Paystate}$, $\player\in\Player$, and $\act_\player,\actb_\player,\actc_\player\in\Act_\player$, we denote by $F(\util,\out,\act_\player,\actb_\player,\actc_\player)$ the quantity
\[
\sum_{\act_{-\player},\paystate}(\util (\act_\player,\act_{-\player},\paystate)-\util (\actc_\player,\act_{-\player},\paystate))\out(\act_\player,\act_{-\player},\paystate)+(\util (\actb_\player,\act_{-\player},\paystate)-\util (\actc_\player,\act_{-\player},\paystate))\out(\actb_\player,\act_{-\player},\paystate).
\]
We observe that $\out\in \sBCE(\util)$ if and only if $\out\in \BCE(\util)$ and for every $\player\in\Player$ there is $T_\player\subseteq \Act_\player\times \Act_\player$ such that the pair $(\util,\out)$ is a solution of the following finite system of polynomial inequalities:
\begin{align*}
\sum_{\act_{-\player},\paystate}(\out(\act_\player,\act_{-\player},\paystate)\out(\actb_\player)- \out(\actb_\player,\act_{-\player},\paystate)\out(\act_\player))^2 & =0 && \text{for all $(\act_\player,\actb_\player)\in T_\player$},\\
F(\util,\out,\act_\player,\actb_\player,\actc_\player) & > 0 &&\text{for all $(\act_\player,\actb_\player)\notin T_\player$ and $\actc_\player\in\Act_\player$}.
\end{align*}
Thus, $\out\in \sBCE(\util)$ if and only if it the solution of one of finitely many systems of polynomial inequalities; we conclude that the sBCE correspondence is semi-algebraic. 

The correspondence $\util\mapsto \cl(\sBCE(\util))$ is also semi-algebraic. Indeed, $\out\in \cl(\sBCE(\util))$ if and only if for every $\epsilon>0$ there exists $\outb\in\mathbb{R}^{\Act\times\Paystate}$ such that $\Vert\out-\outb\Vert \leq \epsilon$ and $\outb\in\sBCE(\util)$. Thus, since the sBCE correspondence is semi-algebraic, the graph of the correspondence $\util\mapsto \cl(\sBCE(\util))$ is defined by a first-order formula and therefore semi-algebraic by the Tarski-Seidenberg theorem (\citealp{blume1994algebraic}, page 787).
\end{proof}

We are ready to complete the proof of the theorem. By Lemma \ref{lem:semi_algebraic}, the correspondence $\util\mapsto \cl(\sBCE(\util))$ is semi-algebraic. Hence, there is an open subsets $U$ of $\mathbb{R}^{\Player\times\Act\times\Paystate}$ such that the complement of $U$ has Lebesgue measure zero, and $\util\mapsto \cl(\sBCE(\util))$ is continuous on $U$ (\citealp{blume1994algebraic}, page 786). 

We claim that for all $\util\in U$, $\BCE(\util)=\cl(\sBCE(\util))$. To prove the claim, take any $\util\in U$. Since $\BCE(\util)$ is closed and $\sBCE(\util)\subseteq \BCE(\util)$, $\cl(\sBCE(\util))\subseteq \BCE(\util)$. To verify the other inclusion, we use Lemma \ref{lem:poking} to find a sequence of games $(\util^n)_{n=1}^\infty$ such that $\util^n\rightarrow \util$ and, for every $n$, $\out\in\sBCE(\util^n)\subseteq \cl(\sBCE(\util^n))$. Since $ \cl(\sBCE(\util))$ is continuous at $\util$, we have $\out\in \cl(\sBCE(\util))$; see \citet[Theorem 17.16]{Aliprantis2006}. Hence, $\BCE(\util) \subseteq \cl(\sBCE(\util))$. We deduce that $\BCE(\util)=\cl(\sBCE(\util))$, as desired.

We conclude that for generic $\util$, the sBCE set is dense in the BCE set.

\section{Proof of Theorem~\ref{thm:robust difference in welfare}}

Throughout this section, we fix the set of players $\Player$, the set of payoff states $\Paystate$, the prior $\payprior$ (with full support), and an action set $\Act_\player$ for every player $\player$. Given these, specifying the players' utilities is all that is left for defining a base game. As such, we use the profile of utility functions $\util=(\util_\player)_{\player\in\Player}$ as a shorthand for the base game it defines, writing for example $\BCE(\util)$ for the set of BCEs.

The first part of the theorem immediately follows from Theorem \ref{thm:genericity}, together with (\ref{eq:def_rival}) and (\ref{eq:def_exval}). Next, we assume that $\vert\Player\vert\geq 2$, $\vert\Paystate\vert\geq 2$, and $\vert A_\player\vert\geq 2$ for all players $\player$, and show there is an open set $U\subseteq\mathbb{R}^{\Player\times\Act\times\Paystate}$ such that 
$\Exval(\util)\subset\RIval(\util)$ for all $\util\in U$.

To prove $\Exval(\util)\subset\RIval(\util)$, we will determine that 
\begin{equation}\label{eq:stric_ineq_v}
\min_{v\in \RIval(\util)}\sum_\player v_\player <  \min_{v\in \Exval(\util)}\sum_\player v_\player.
\end{equation}
Consistently with the notation of Section \ref{sec:welfare analysis}, we write $\welfen(\out,\util)=\sum_\player \noinfoval_\player(\out,\util)$ and $\welfex(\out,\util)=\sum_\player \grossval_\player(\out,\util)$. We also denote by $\welfen(\util)$ the minimum of $\welfen(\out,\util)$ over all $\out\in\BCE(\util)$, and by $\welfex(\util)$ the minimum of $\welfex(\out,\util)$ over all $\out\in\BCE(\util)$. Note that 
\[
\min_{v\in \RIval(\util)}\sum_\player v_\player  = \min_{\out\in\cl(\sBCE(\util))}\welfen(\out,\util)\geq \welfen(\util),\quad\text{and}\quad
\min_{v\in \Exval(\util)}\sum_\player v_\player  = \welfex(\util).\\
\]

The next lemma gives a sufficient condition for the existence of an open set $U\subseteq\mathbb{R}^{\Player\times\Act\times\Paystate}$ such that (\ref{eq:stric_ineq_v}) holds for all $\util\in U$.

 \begin{lemma}\label{lem:robust worst case difference sufficiency}
    Suppose $\util^*\in\mathbb{R}^{\Player\times\Act\times\Paystate}$ and $\out^* \in \Delta_\payprior(\Act\times\Paystate)$ satisfy the following properties:
    \begin{enumerate}[(i)]
    
    \item Each player $\player$ takes at least two actions at $\out^*$: $|\supp_\player(\out^*)|\geq 2$. 

    \item $\out^*$ is a strict BCE: $BR(\out_{\act_\player})=\{\act_\player\}$ for all $\player\in\Player$ and $\act_\player\in\supp_\player(\out)$. 
    
    \item $\out^*$ is the unique minimizer of $\welfen(\out,\util^*)$ over all $\out\in \BCE(\util^*)$.
      
    \end{enumerate}
    Then there is a neighborhood $U$ of $\util^*$ such that, for all $\util\in U$,
    \[
    \min_{\out\in\cl(\sBCE(\util))}\welfen(\out,\util)<\welfex(\util).
    \]
    
\end{lemma}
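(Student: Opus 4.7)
The plan is to exhibit $\out^*$ itself as a witness that keeps the inequality alive throughout a whole neighborhood of $\util^*$. More precisely, I will show that (a) $\out^*$ remains a separated BCE of $\util$ for all $\util$ close to $\util^*$, and (b) $\welfen(\out^*,\util)<\welfex(\util)$ throughout such a neighborhood. Combining (a) and (b) gives
\[
\min_{\out \in \cl(\sBCE(\util))}\welfen(\out,\util) \;\leq\; \welfen(\out^*,\util) \;<\; \welfex(\util),
\]
as required. Claim (a) is immediate from hypothesis (ii): the strict obedience inequalities defining ``strict BCE'' depend continuously on $\util$, so they persist under small perturbations, and strict BCE trivially implies separation (distinct recommendations have disjoint singleton best responses). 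So the real work lies in claim (b).

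The first step for (b) is to establish the baseline strict inequality $\welfen(\util^*) < \welfex(\util^*)$. Using hypotheses (i) and (ii), I argue $\noinfoval_\player(\out^*,\util^*) < \grossval_\player(\out^*,\util^*)$ for every $\player$: if the two were equal, some fixed action $\actb_\player$ would be a best reply to every $\out^*_{\act_\player}$ with $\act_\player\in \supp_\player(\out^*)$, but (ii) forces each such best-reply set to equal $\{\act_\player\}$, making $\actb_\player=\act_\player$ for every $\act_\player \in \supp_\player(\out^*)$, contradicting (i). Summing over $\player$ yields $\welfen(\out^*,\util^*)<\welfex(\out^*,\util^*)$. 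A standard aggregation of the obedience constraints (sum the obedience inequality over recommendations $\act_\player$) gives $\welfen(\out,\util^*)\leq \welfex(\out,\util^*)$ for every $\out\in \BCE(\util^*)$. Letting $\outb$ denote any minimizer of $\welfex(\cdot,\util^*)$ on $\BCE(\util^*)$ and splitting into cases $\outb=\out^*$ and $\outb\neq \out^*$, hypothesis (iii) delivers $\welfen(\util^*)<\welfex(\util^*)$ in both cases.

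To extend this gap to a neighborhood of $\util^*$, I use two semicontinuity facts. First, $\util \mapsto \welfen(\out^*,\util)$ is continuous, being a sum of pointwise maxima of linear functions of $\util$. Second, $\util\mapsto \welfex(\util)$ is lower semicontinuous: the BCE correspondence $\util\mapsto \BCE(\util)$ has a closed graph with values in the fixed compact simplex $\Out$ (its defining obedience inequalities depend continuously on $\util$), hence is upper hemicontinuous, and $\welfex(\out,\util)$ is jointly continuous; along any sequence $\util_n\to \util^*$, taking minimizers $\out_n$ and extracting a convergent subsequence $\out_{n_k}\to \out \in \BCE(\util^*)$ gives $\liminf \welfex(\util_n)\geq \welfex(\out,\util^*)\geq \welfex(\util^*)$. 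Setting $c:=\welfex(\util^*)-\welfen(\out^*,\util^*)>0$, I can shrink the neighborhood $U$ so that $\welfen(\out^*,\util)<\welfen(\out^*,\util^*)+c/3$ and $\welfex(\util)>\welfex(\util^*)-c/3$ on $U$, which gives $\welfen(\out^*,\util)<\welfex(\util)$ throughout $U$.

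The main obstacle is the baseline strict inequality at $\util^*$: one has to exploit all three hypotheses simultaneously, with (i)--(ii) delivering a strict per-outcome gap at $\out^*$ and (iii) being what converts that single-outcome gap into a gap between minima that might a priori be attained at different outcomes. The extension to a neighborhood is then a routine combination of continuity of $\welfen(\out^*,\cdot)$ and lower semicontinuity of $\welfex$.
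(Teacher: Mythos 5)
Your proof is correct and follows essentially the same route as the paper's: both establish the baseline gap $\welfen(\out^*,\util^*)<\welfex(\util^*)$ via the same case split on whether the $\welfex$-minimizer equals $\out^*$ (using (i)--(ii) for the per-outcome gap and (iii) otherwise), both use that the strict BCE $\out^*$ persists as a separated BCE under small perturbations of $\util$, and both close the argument with continuity of $\welfen(\out^*,\cdot)$ plus lower semicontinuity of $\welfex$ from upper hemicontinuity of the BCE correspondence. The only cosmetic difference is that you argue directly with an $\epsilon/3$ neighborhood while the paper phrases the last step as a contradiction along a sequence $\util^n\to\util^*$.
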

\begin{proof}
Take $\util^*$ and $\out^*$ that satisfy (i)-(iii). First, we verify that
\begin{equation}\label{eq:nabi}
    \welfen(\out^*,\util^*)<\welfex(\util^*).
\end{equation}
Take $\out\in \BCE(\util^*)$ such that $\welfex(\util^*)=\welfex(\out,\util^*)$. If $\out\neq\out^*$, then $\welfex(\out,\util^*)\geq \welfen(\out,\util^*)> \welfen(\out^*,\util^*)$, where the strict inequality holds by (iii); thus, $\welfex(\util^*)>\welfen(\out^*,\util^*)$. If instead $\out=\out^*$, then $\welfex(\out^*,\util^*)>\welfen(\out^*,\util^*)$ by (i) and (ii); thus $\welfex(\util^*)>\welfen(\out^*,\util^*)$. Overall, we conclude that (\ref{eq:nabi}) holds, as desired.

The rest of the proof proceed by contradiction. To attain this contradiction, suppose there is a sequence $(\util^n)_{n=1}^\infty$ converging to $\util^*$ such that
\begin{equation}\label{eq:contradiction_ass}
\min_{\out\in\cl(\sBCE(\util^n))}\welfen(\out,\util^n)=\welfex(\util^n)\quad\text{for all }n.
\end{equation}
By (ii), $\out^*\in \sBCE(\util^n)$ for all $n$ sufficiently large. Thus,
\begin{equation}\label{eq:n_comparison}
\Welfen(\out^*,\util^n) \geq \min_{\out\in\cl(\sBCE(\util^n))}\welfen(\out,\util^n)\quad\text{for all $n$ large enough}.
\end{equation}
Combining (\ref{eq:contradiction_ass}) and (\ref{eq:n_comparison}), we obtain that 
\begin{equation}\label{eq:saturday}
\Welfen(\out^*,\util^n) \geq \welfex(\util^n)\quad\text{for all $n$ large enough}.
\end{equation}
 By standard arguments, $\Welfen(\out^*,\util)$ is continuous in $\util$. In addition, since the correspondence $\util\mapsto \BCE(\util)$ is upper hemicontinuous, $\Welfex(\util)=\min_{\out\in \BCE(\util)} \Welfex(\out,\util)$ is lower semicontinuous in $\util$ \citep[e.g.,][Lemma 17.3]{Aliprantis2006}. It follows from (\ref{eq:saturday}) that
\begin{align*}
\Welfen(\out^*,\util^*)=\liminf_{n\rightarrow\infty}\Welfen(\out^*,\util^n)  \geq
 \liminf_{n\rightarrow\infty}\Welfex(\util^n)\geq \Welfex(\util^*).
\end{align*}
Hence, $\Welfen(\out^*,\util^*)\geq \Welfex(\util^*)$, which contradicts (\ref{eq:nabi}).
\end{proof}

To complete the proof of the theorem, we construct a utility profile $\util^*$ and and outcome $\out^*$ that satisfy the conditions of Lemma \ref{lem:robust worst case difference sufficiency}. This lemma then delivers a neighborhood $U$ of $\util^*$ such that (\ref{eq:stric_ineq_v}) holds for all $\util\in U$, which in turn means that the set of $\util$ such that $\Exval(\util)\subset\RIval(\util)$ has non-empty interior.

We now construct $\util^*$ and $\out^*$. Let $n$ be cardinality of $\Player$; by hypothesis, $n\geq 2$. For every player $\player$, we order the set of actions from $0$ to $m_i$ (where $m_i+1$ is the cardinality of $\Act_\player$): $\Act_\player=\{0,\ldots,m_\player\}$. By hypothesis, $\Act_\player$ contains at least two distinct elements, thus $m_i\geq 1$. We also consider a partition $\Paystate=\Paystate_l\cup\Paystate_h$ of the set of payoff states such that both $\Paystate_l$ and $\Paystate_h$ are nonempty; this is feasible because, by hypothesis, $\Paystate$ contains at least two elements.

For player $\player$, we define $\util^*_\player$ as follows:
\[
\util_\player^*(\act,\paystate)=\begin{cases}
0 &\text{if }\act_\player=0,\\
\frac{1}{\pi(\Theta_l)}\left(-1 + \frac{1}{n-1}\sum_{\playerb\neq \player}\act_\playerb\right) &\text{if }\act_\player=1\text{ and }\paystate\in\Paystate_l,\\
\frac{1}{\pi(\Theta_h)}\left(2 - \frac{1}{n-1}\sum_{\playerb\neq \player}\act_\playerb\right) &\text{if }\act_\player=1\text{ and }\paystate\in\Paystate_h,\\
-1 &\text{if }\act_\player>1.
\end{cases}
\]
Thus, action $0$ is a safe action. Action $1$ has a payoff that depends both on the state and on the average action of the opponents. For states in $\Theta_l$, action $1$ generates a \emph{negative} baseline payoff of $-1$, but there is also a \emph{positive} externality from the actions of others; these payoffs are scaled by $1/\payprior(\Paystate_l)$. For states in $\Theta_h$, action $1$ generates a \emph{positive} baseline payoff of $2$, but there is also a \emph{negative} externality from the actions of others; these payoffs are scaled by $1/\payprior(\Paystate_h)$.  Any action outside  $\{0,1\}$ is strictly dominated by $0$. 

Let $\out^*$ be the outcome such that all players take action $0$ when $\paystate\in\Paystate_l$, and all players take action $1$ when $\paystate\in\Paystate_h$. It is clear  that conditions (i) and (ii) of Lemma~\ref{lem:robust worst case difference sufficiency} are satisfied.  

All that remains is to verify (iii). To do so, note first that $\out^*$ is the unique minimizer of $\sum_\player \sum_{\act,\paystate}\util^*_\player(1,\act_{-\player},\paystate)\out(\act,\paystate)$ over all $\out\in \Delta_\payprior(\{0,1\}^\Player\times \Paystate)$. Since any action outside $\{0,1\}$ is strictly dominated, we deduce that $\out*$ is the unique minimizer of $\sum_\player \sum_{\act,\paystate}\util^*_\player(1,\act_{-\player},\paystate)\out(\act,\paystate)$ over all $\out\in\BCE(\util^*)$. Moreover, simple algebra shows that for all players $\player$,
\[
\max_{\actb_\player}\sum_{\act,\paystate}\util^*_\player(\actb_\player,\act_{-\player},\paystate)\out^*(\act,\paystate)= \sum_{\act,\paystate}\util^*_\player(1,\act_{-\player},\paystate)\out^*(\act,\paystate).
\]
In turn, this implies that 
\[
\Welfen(\out^*,\util^*)=\sum_\player\sum_{\act,\paystate}\util^*_\player(1,\act_{-\player},\paystate)\out^*(\act,\paystate).
\]
Therefore, every $\out \in \BCE(\util^*) \setminus \{\out^*\}$ has
\begin{align*}
\welfen(\out,\util^*) 
& = \sum_\player \max_{\actb_\player}\sum_{\act,\paystate}\util^*_\player(\actb_\player,\act_{-\player},\paystate)\out(\act,\paystate)
\\ & \geq \sum_\player \sum_{\act,\paystate}\util^*_\player(1,\act_{-\player},\paystate)\out(\act,\paystate)
\\ & > \sum_\player \sum_{\act,\paystate}\util^*_\player(1,\act_{-\player},\paystate)\out^*(\act,\paystate) 
= \welfen(\out^*,\util^*).
\end{align*}
we conclude that $\Welfen(\util^*)=\Welfen(\out^*,\util^*)$ if and only if $\out=\out^*$, that is, (iii) of Lemma~\ref{lem:robust worst case difference sufficiency} holds. The proof is now complete.

\section{Proofs of the results in Section~\ref{sec:welfare analysis}}

\subsection{Proof of Lemma \ref{lem:symmetric worst case}}\label{sec:proof_lemma_sym}

We prove a slightly stronger result:
\begin{claim}
For every BCE $\out$, there is a symmetric BCE $\outb$ such that 
\[
\welfex(\outb)=\welfex(\out)\quad\text{and}\quad\welfen(\outb)\leq\welfen(\out).
\]
\end{claim}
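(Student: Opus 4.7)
The plan is to symmetrize $\out$ by averaging over the permutation group, and then exploit the fact that $\welfex$ is linear and $\welfen$ is convex in $\out$.

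First, for every permutation $\perm:\Player\rightarrow\Player$, define the \emph{permuted outcome} $\out_\perm\in \Out$ by $\out_\perm(\act,\paystate):=\out(\act_\perm,\paystate)$, where $\act_\perm=(\act_{\perm(\player)})_{\player\in\Player}$. I would first collect three elementary lemmas about this action on outcomes, each proved by a change-of-variables in the summation: (a) if $\out\in\BCE$ then $\out_\perm\in\BCE$, which uses the obedience constraint for $\out$ together with the symmetry identity $\util_\player(\act_\perm,\paystate)=\util_{\perm(\player)}(\act,\paystate)$; (b) $\welfex(\out_\perm)=\welfex(\out)$, which follows from summing the symmetry identity over all players and noting that $\player\mapsto\perm(\player)$ is a bijection; (c) $\welfen(\out_\perm)=\welfen(\out)$, proved analogously, now using that the best-reply deviation for player $\player$ against $\out_\perm$ coincides, after relabeling, with the deviation for $\perm(\player)$ against $\out$.

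Next, I would set
\[
\outb:=\frac{1}{\vert\Perm\vert}\sum_{\perm\in\Perm}\out_\perm,
\]
where $\Perm$ is the set of permutations of $\Player$. By construction $\outb$ is symmetric: for any $\perm^\prime\in\Perm$, $\outb_{\perm^\prime}$ is just a reindexing of the sum, so $\outb_{\perm^\prime}=\outb$. Since each $\out_\perm$ is a BCE by (a), and the BCE set is convex (it is cut out by linear inequalities in $\out$), we also get $\outb\in \BCE$.

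It remains to compare welfare. Because $\welfex(\out)=\sum_\player\sum_{\act,\paystate}\util_\player(\act,\paystate)\out(\act,\paystate)$ is linear in $\out$, we have
\[
\welfex(\outb)=\frac{1}{\vert\Perm\vert}\sum_{\perm}\welfex(\out_\perm)=\welfex(\out)
\]
by (b). For $\welfen$, note that $\out\mapsto\noinfoval_\player(\out)$ is a maximum of linear functionals of $\out$ and therefore convex, so $\welfen=\sum_\player \noinfoval_\player$ is convex in $\out$. Hence
\[
\welfen(\outb)\leq \frac{1}{\vert\Perm\vert}\sum_{\perm}\welfen(\out_\perm)=\welfen(\out)
\]
by (c). This gives the desired $\outb$.

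The only non-routine step is the bookkeeping for items (a)--(c); the rest is convexity of $\BCE$, linearity of $\welfex$, and convexity of $\welfen$. No hard estimates or topological arguments are needed.
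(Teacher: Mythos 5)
Your proposal is correct and follows essentially the same route as the paper: symmetrize by averaging the permuted outcomes $\out_\perm$ over the permutation group, use convexity of the BCE set to keep the average a BCE, and conclude via linearity of $\welfex$ and convexity of $\welfen$ together with permutation-invariance of both welfare functionals. The bookkeeping lemmas (a)--(c) you list are exactly the change-of-variables facts the paper relies on, so no further comparison is needed.
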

\begin{proof}
Fix a BCE $\out$. Let $\Phi$ be the set of permutations of $\Player$. For every permutation $\phi\in\Phi$, we define the outcome $\out_\phi$ by 
\[
\out_\phi(\act,\paystate)=\out(\act_\phi,\paystate).
\]
Note that player $\player$ in $\out_\phi$ behaves as player $\playerb=\phi^{-1}(\player)$ in $\out$. One can verify that $\out_\phi$ because $\out$ is a BCE and the game is symmetric.

We define the outcome $\outb$ by 
\[
\outb = \frac{1}{\vert\Phi\vert}\sum_{\phi\in\Phi} \out_\phi,
\]
where $\vert\Phi\vert$ is the cardinality of $\Phi$. As noted above, each $\out_\phi$ is a BCE. Since the BCE set is convex, $\outb$ is a BCE. 

The outcome $\outb$ is symmetric. Indeed, for every permutation $\psi\in \Phi$,
\[
\Phi = \left\{\psi^{-1}\circ\phi:\phi\in\Phi\right\}.
\]
Thus, we deduce that
\begin{align*}
\outb (\act_\psi,\paystate) &=  \frac{1}{\vert\Phi\vert}\sum_{\phi\in\Phi} \out_\phi (\act_\psi,\paystate)=\frac{1}{\vert\Phi\vert}\sum_{\phi\in\Phi} \out_{(\psi^{-1}\circ\phi)}(\act_{\psi},\paystate)\\
&=  \frac{1}{\vert\Phi\vert}\sum_{\phi\in\Phi} \out (\act_\phi,\paystate)=\frac{1}{\vert\Phi\vert}\sum_{\phi\in\Phi} \out_{\phi}(\act,\paystate)=\outb (\act,\paystate).
\end{align*}
We conclude $\outb$ is symmetric.

To conclude the proof, we observe that
\[
\welfex (\outb) = \frac{1}{\vert\Phi\vert}\sum_{\phi\in\Phi} \welfex (\out_\phi)= \frac{1}{\vert\Phi\vert}\sum_{\phi\in\Phi} \welfex (\out)=\welfex (\out),
\]
where the first equality holds because $\welfex (\out_\phi)$ is affine in $\out_\phi$, and the second equality because the game is symmetric. Finally, we note that
\[
\welfen (\outb) \leq \frac{1}{\vert\Phi\vert}\sum_{\phi\in\Phi} \welfen (\out_\phi)= \frac{1}{\vert\Phi\vert}\sum_{\phi\in\Phi} \welfen (\out)=\welfen (\out),
\]
where the first inequality holds because $\welfen (\out_\phi)$ is convex in $\out_\phi$, and the second equality because the game is symmetric.
\end{proof}

\subsection{Proof of Proposition \ref{pro:welfare_binary_symmetric}}\label{sec:proof_prop_sym_welfare}
By Lemma \ref{lem:symmetric worst case}, $\welfex$ is the value of the optimization problem
\begin{equation}\label{eq:sym_opt_ex}
\min_{\out\in\BCEsym}\welfex(\out),
\end{equation}
and $\welfen$ is the value of the following optimization
\begin{equation}\label{eq:sym_opt_en}
\min_{\out\in\BCEsym}\welfen(\out).
\end{equation}
We consider also the following optimization problem:
\begin{equation}\label{eq:sym_opt_out}
\min_{\out\in\Outsym}\welfen(\out),
\end{equation}

We proceed by successive claims.
\begin{claim}
The following conditions are equivalent:
\begin{itemize}
\item[(i)] $\welfen<\welfex$ 
\item[(ii)] $\noinfoval_\player(\out)<\grossval_\player(\out)$ for all players $\player$  and optimal solutions $\out$ of (\ref{eq:sym_opt_en}).
\end{itemize}
\end{claim}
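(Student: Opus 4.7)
The plan is to reduce both minimizations to symmetric BCEs via Lemma \ref{lem:symmetric worst case}, and then exploit the fact that whenever $\out$ is a symmetric BCE of a symmetric base game, the per-player quantities $\noinfoval_\player(\out)$ and $\grossval_\player(\out)$ are constant in $\player$. The latter invariance is a direct consequence of the joint invariance of $\util$ and $\out$ under permutations of $\Player$: a change of variables $\act \leftrightarrow \act_\perm$ in the sums defining $\noinfoval_\player(\out)$ and $\grossval_\player(\out)$, combined with $\util_\player(\act_\perm,\paystate) = \util_{\perm(\player)}(\act,\paystate)$ and $\out(\act_\perm,\paystate)=\out(\act,\paystate)$, shows that both quantities are unchanged when $\player$ is replaced by $\perm(\player)$.

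For the direction $(i)\Rightarrow(ii)$, I would fix an arbitrary optimal solution $\out^*$ of \eqref{eq:sym_opt_en} and chain the inequalities $\welfen(\out^*) = \welfen < \welfex \leq \welfex(\out^*)$, where the last step uses $\out^*\in \BCEsym \subseteq \BCE$. Rewriting gives $\sum_\player (\grossval_\player(\out^*) - \noinfoval_\player(\out^*)) > 0$; because all summands are equal by the symmetry observation above and each is nonnegative, each summand must in fact be strictly positive, yielding $\noinfoval_\player(\out^*) < \grossval_\player(\out^*)$ for every $\player$.

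For the direction $(ii)\Rightarrow(i)$, I would argue by contradiction. Assume $\welfen = \welfex$ and invoke Lemma \ref{lem:symmetric worst case} to pick a symmetric BCE $\out$ minimizing $\welfex$. Then
\[
\welfen \;\leq\; \welfen(\out) \;\leq\; \welfex(\out) \;=\; \welfex \;=\; \welfen,
\]
so $\out$ is simultaneously an optimal solution of \eqref{eq:sym_opt_en} and satisfies $\welfen(\out) = \welfex(\out)$, that is, $\sum_\player (\grossval_\player(\out) - \noinfoval_\player(\out)) = 0$. Combined with the pointwise bound $\grossval_\player(\out) \geq \noinfoval_\player(\out)$, this forces $\grossval_\player(\out) = \noinfoval_\player(\out)$ for every $\player$, contradicting (ii).

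The only subtle point is ensuring that the minimizers compared above lie in $\BCEsym$, so that the symmetry-based per-player invariance can be used to pass between sum-level and player-level (in)equalities; this is handled at the very start by Lemma \ref{lem:symmetric worst case}. I do not anticipate a substantive obstacle: notice in particular that the binary-action hypothesis of Proposition \ref{pro:welfare_binary_symmetric} plays no role in this claim, and is only needed in the subsequent claims linking the condition $\noinfoval_\player(\out) < \grossval_\player(\out)$ to property \eqref{eq:welfare_binary_symmetric}.
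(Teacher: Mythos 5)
Your proof is correct and follows essentially the same route as the paper's: both directions reduce to symmetric BCEs via Lemma~\ref{lem:symmetric worst case}, use the permutation-invariance of $\noinfoval_\player(\out)$ and $\grossval_\player(\out)$ for symmetric $\out$ to pass between the aggregate inequality $\welfen(\out)<\welfex(\out)$ and the per-player one, and compare the two minimizers through the chain $\welfen\leq\welfen(\out)\leq\welfex(\out)=\welfex$. The only cosmetic difference is that you phrase the $(ii)\Rightarrow(i)$ direction as a contradiction, whereas the paper gives the equivalent direct two-case argument on whether the $\welfex$-minimizer also minimizes $\welfen$.
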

\begin{proof}
We first prove that (i) implies (ii). Suppose $\welfen<\welfex$ and let $\out$ be an optimal solution of (\ref{eq:sym_opt_en}). Then, $\welfen(\out)=\welfen<\welfex\leq\welfex(\out) $. The inequality  $\welfen(\out)<\welfex(\out)$ implies that $\noinfoval_\player(\out)<\grossval_\player(\out)$ for some player $\player$. Since the game is symmetric and $\out$ is symmetric, $\noinfoval_\player(\out)<\grossval_\player(\out)$ for all players $\player$.

We now prove that (ii) implies (i). Suppose $\noinfoval_\player(\out)<\grossval_\player(\out)$ for all players $\player$  and optimal solutions $\out$ of (\ref{eq:sym_opt_en}). Let $\out\in \BCE$ be an optimal solution of (\ref{eq:sym_opt_ex}). If $\out$ is also an optimal solution of (\ref{eq:sym_opt_en}), then $\welfen(\out)=\sum_\player \noinfoval_\player(\out)<\sum_\player \grossval_\player(\out)=\welfex(\out)$ by hypothesis; thus, $\welfen<\welfex$. If instead $\out$ is not an optimal solution of (\ref{eq:sym_opt_en}), then $\welfen<\welfen(\out)\leq \welfex(\out)=\Welfex$; thus, $\welfen<\welfex$.
\end{proof}

\begin{claim}
For every $\out\in \BCE$ and $\player\in\Player$, the following conditions are equivalent:
\begin{itemize}
\item[(i)] $\noinfoval_\player(\out)<\grossval_\player(\out)$.
\item[(ii)] $\act_\player\in\supp_\player(\out)$ and  $\BR(\out_{\act_\player}) = \{\act_\player\}$ for all $\act_\player\in\Act_\player$.
\end{itemize}
\end{claim}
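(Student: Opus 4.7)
My plan is to reformulate (i) as a statement about intersections of best-response sets, and then exploit the binary-action hypothesis to convert that statement into (ii). Writing $U_\player(\actb_\player\mid\out_{\act_\player}):=\sum_{\act_{-\player},\paystate}\util_\player(\actb_\player,\act_{-\player},\paystate)\out_{\act_\player}(\act_{-\player},\paystate)$, I would express
\[
\grossval_\player(\out)=\sum_{\act_\player\in\supp_\player(\out)}\out(\act_\player)\,U_\player(\act_\player\mid\out_{\act_\player}),\qquad \noinfoval_\player(\out)=\max_{\actb_\player\in\Act_\player}\sum_{\act_\player\in\supp_\player(\out)}\out(\act_\player)\,U_\player(\actb_\player\mid\out_{\act_\player}).
\]
By obedience, every term $U_\player(\act_\player\mid\out_{\act_\player})-U_\player(\actb_\player\mid\out_{\act_\player})$ is nonnegative, so the weighted sum equals zero if and only if some single $\actb_\player^*$ attains $U_\player(\act_\player\mid\out_{\act_\player})$ for every $\act_\player\in\supp_\player(\out)$, i.e., lies in $\BR(\out_{\act_\player})$ for every such $\act_\player$. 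Consequently, (i) is equivalent to $\bigcap_{\act_\player\in\supp_\player(\out)}\BR(\out_{\act_\player})=\varnothing$.

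The direction (ii) $\Rightarrow$ (i) is then immediate: with $\Act_\player=\{\act,\actb\}$, (ii) yields $\BR(\out_\act)\cap\BR(\out_\actb)=\{\act\}\cap\{\actb\}=\varnothing$, so the reformulated (i) holds.

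For the converse, I would argue by contradiction, using the binary-action hypothesis in two places. First, if $\supp_\player(\out)$ were a singleton $\{\act_\player\}$, obedience would give $\act_\player\in\BR(\out_{\act_\player})$, so the intersection would be non-empty, contradicting (i); hence $\supp_\player(\out)=\Act_\player=\{\act,\actb\}$, giving the first half of (ii). Second, obedience already gives $\act\in\BR(\out_\act)$ and $\actb\in\BR(\out_\actb)$; if $\BR(\out_\act)\neq\{\act\}$, then since $|\Act_\player|=2$ we must have $\actb\in\BR(\out_\act)$, so $\actb\in\BR(\out_\act)\cap\BR(\out_\actb)$, again contradicting (i). Symmetrically $\BR(\out_\actb)=\{\actb\}$, which completes (ii).

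The main (and really only) conceptual step is the reduction of the scalar gap $\grossval_\player-\noinfoval_\player$ to the set-theoretic statement $\bigcap_{\act_\player\in\supp_\player(\out)}\BR(\out_{\act_\player})=\varnothing$; once this is in hand, the binary-action structure makes both implications essentially automatic, since $|\Act_\player|=2$ is precisely what allows one to pass between ``$\BR(\out_{\act_\player})$ is a singleton equal to $\{\act_\player\}$'' and ``$\BR(\out_{\act_\player})$ does not contain the other action.'' No obstacle of any depth is anticipated.
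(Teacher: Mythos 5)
Your proof is correct and follows essentially the same route as the paper's: the paper likewise observes that $\noinfoval_\player(\out)<\grossval_\player(\out)$ holds exactly when no single action lies in $\BR(\out_{\actb_\player})$ for every $\actb_\player\in\supp_\player(\out)$, and then invokes $|\Act_\player|=2$ to identify this with condition (ii). You have merely written out explicitly the decomposition and the two contradiction steps that the paper leaves implicit.
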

\begin{proof}
Condition (i) holds if and only if player $\player$ is strictly better by following the action recommendation of the mediator rather then best responding ex ante. In other terms, player $\player$ has no action $\act_\player$ such that for all $\actb_\player\in\supp_\player (\out)$, $\act_\player\in \BR(\out_{\actb_\player})$. Given that $\Act_\player$ has two elements, this is equivalent to condition (ii). 
\end{proof}

\begin{claim} The following conditions are equivalent:
\begin{itemize}
\item[(i)] All optimal solutions of (\ref{eq:sym_opt_en}) satisfy (\ref{eq:welfare_binary_symmetric}). 
\item[(ii)] All optimal solutions of (\ref{eq:sym_opt_out}) satisfy (\ref{eq:welfare_binary_symmetric}).
\end{itemize}
In either case, 
\[
\argmin_{\out\in\BCEsym}\welfen(\out)=\argmin_{\out\in\Outsym}\welfen(\out).
\]
\end{claim}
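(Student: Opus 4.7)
The approach I would take exploits two properties. First, condition (\ref{eq:welfare_binary_symmetric}) implies the outcome is a \emph{strict} BCE: every action in the support is the unique best response to its conditional, which immediately delivers the obedience constraint, so any $\out$ satisfying (\ref{eq:welfare_binary_symmetric}) automatically lies in $\BCEsym$. Second, the strict-BCE condition is topologically open (by continuity of best responses), while $\BCEsym$ is closed. Set $S := \min_{\out \in \BCEsym} \welfen(\out)$, $T := \min_{\out \in \Outsym} \welfen(\out)$, and write $M_{BCE}$, $M_{Out}$ for the corresponding argmin sets. Since $\BCEsym \subseteq \Outsym$, one has $T \leq S$.

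For the easy direction (ii) $\Rightarrow$ (i), I would observe that if every $\outb^* \in M_{Out}$ satisfies (\ref{eq:welfare_binary_symmetric}), then the first observation gives $M_{Out} \subseteq \BCEsym$, so $T \geq S$ and hence $T = S$. Combined with $\BCEsym \subseteq \Outsym$ and the common optimal value, this immediately yields $M_{BCE} = M_{Out}$, and (i) follows.

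For the harder direction (i) $\Rightarrow$ (ii), I would first establish $T = S$ and then $M_{Out} \subseteq \BCEsym$. For the equality $T = S$: assuming $T < S$ for contradiction, pick $\out^* \in M_{BCE}$ (a strict BCE by (i)) and $\outb^* \in M_{Out}$; by convexity of $\welfen$ the segment $\lambda \out^* + (1-\lambda)\outb^*$ has value strictly below $S$ for $\lambda \in (0,1)$, but openness of the strict-BCE condition keeps the segment inside $\BCEsym$ for $\lambda$ near $1$ — contradicting the definition of $S$. For $M_{Out} \subseteq \BCEsym$: assuming some $\outb^* \in M_{Out} \setminus \BCEsym$, I would consider $q_\mu := (1-\mu)\out^* + \mu \outb^*$ and set $\mu^* := \sup\{\mu \in [0,1] : q_\mu \in \BCEsym\}$; openness of strict BCE at $q_0 = \out^*$ gives $\mu^* > 0$, and closedness of $\BCEsym$ together with $q_1 = \outb^* \notin \BCEsym$ (some obedience constraint being strictly violated, by linearity) gives $\mu^* < 1$ while placing $q_{\mu^*}$ in $\BCEsym$ with some obedience constraint tight. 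But convexity of $\welfen$ combined with $T = S$ gives $\welfen(q_{\mu^*}) \leq S$, forcing $q_{\mu^*} \in M_{BCE}$; hypothesis (i) then forces (\ref{eq:welfare_binary_symmetric}) at $q_{\mu^*}$, making it a strict BCE and contradicting the tight obedience constraint. So $M_{Out} \subseteq \BCEsym$, and together with $T = S$ one obtains $M_{BCE} = M_{Out}$, so (ii) follows from (i).

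The hard part will be the boundary argument in the last step: identifying the crossover point $q_{\mu^*}$ and extracting two features — a tight obedience constraint from membership in the topological boundary of $\BCEsym$, and strict best responses from membership in $M_{BCE}$ combined with hypothesis (i) — that stand in direct contradiction. The coincidence of the argmin sets, which is the additional conclusion of the claim, will then fall out automatically from $T = S$ and $\BCEsym \subseteq \Outsym$ in both cases.
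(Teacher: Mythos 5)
Your proof is correct and follows essentially the same route as the paper's: both arguments hinge on the segment between a minimizer of (\ref{eq:sym_opt_en}) and a minimizer of (\ref{eq:sym_opt_out}), use convexity of $\welfen$ to show the last point of the segment inside $\BCEsym$ is still optimal for the constrained problem, and use the fact that (\ref{eq:welfare_binary_symmetric}) makes every nontrivial obedience constraint strictly slack to rule out that this point lies on the boundary of $\BCEsym$. The only difference is organizational: the paper compresses your two-step treatment of (i)$\Rightarrow$(ii) (first $T=S$, then $M_{Out}\subseteq\BCEsym$ via the crossover point $q_{\mu^*}$) into a single step by setting $s=\max\{t:\out^t\in\BCEsym\}$ and concluding directly that $s=1$.
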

\begin{proof}
First we show that (i) implies (ii). Let $\out$ be an optimal solution of (\ref{eq:sym_opt_en}) and let $\outb$ be an optimal solution of (\ref{eq:sym_opt_out}). For every $t\in [0,1]$, define $\out^t=(1-t)\out+t\outb$. Furthermore, set 
\[
s = \max \{t:\out^t\in \BCEsym\}.
\]
Note that $s$ is well defined: the set $\BCEsym$ is closed and $\out^0=\out\in \BCEsym$. 

We observe that $\out^s$ is an optimal solution of (\ref{eq:sym_opt_en}): since $\welfen(\out^t)$ is convex in $t$,
\[
\welfen(\out^s)\leq (1-s)\welfen(\out)+s\welfen(\outb)\leq \welfen(\out)=\welfen.
\]
Thus, $\out^s$ must satisfy (\ref{eq:welfare_binary_symmetric}). But this implies that $s=1$; otherwise, one could find $\epsilon>0$ sufficiently small so that $\out^{s+\epsilon}\in \BCEsym$, contradicting the definition of $\out^s$. This implies that $\outb=\out^s$ satisfies (\ref{eq:welfare_binary_symmetric}).

Now we show that (ii) implies (i). Let $\out$ be an optimal solution of (\ref{eq:sym_opt_en}) and let $\outb$ be an optimal solution of (\ref{eq:sym_opt_out}). Since $\outb$ satisfies (\ref{eq:welfare_binary_symmetric}), $\outb$ is a BCE. Thus, $\outb$ is an optimal solution of (\ref{eq:sym_opt_en}). This implies that $\out$ is an optimal solution of (\ref{eq:sym_opt_out}), and therefore satisfies (\ref{eq:welfare_binary_symmetric}).
\end{proof}

By combining the three claims above, we obtain Proposition \ref{pro:welfare_binary_symmetric}.

\subsection{Proof of Claim \ref{claim:br unconstrained min}}

We begin with a result that establishes a necessary condition for an outcome to solve the relaxed program from Proposition~\ref{pro:welfare_binary_symmetric}. To state the result, let $\actval_\player(\act_\player,\out)$ be player $\player$'s payoff if she always takes action $\act_\player$ while $(\act_{-\player},\paystate)$ is distributed according to $\out$:
\[
\actval_\player(\act_\player,\out):=\sum_{\actb_\player,\act_{-\player},\paystate}\util_\player(\act_\player,\act_{-\player},\paystate)\out(\actb_\player,\act_{-\player},\paystate).
\]
Note that for all $\out\in \Outsym$ and $\player\in\Player$,
\[
\welfen(\out)=n\max\{\actval_\player(0,\out),\actval_\player(1,\out)\}.
\]
Thus,
\[
\argmin_{\out\in \Outsym} \welfen(\out) = \argmin_{\out\in \Outsym}\max\{\actval_\player(0,\out),\actval_\player(1,\out)\}.
\]
\begin{claim}\label{claim:br equality in unconstrained min}
    Every $\out^* \in \argmin_{\out \in \Outsym} \welfen(\out)$ has $\actval_i(0,\out^*) = \actval_i(1,\out^*)$ for all players $\player$.
\end{claim}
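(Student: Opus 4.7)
The plan is to argue by contradiction, exploiting the fact that for symmetric outcomes in a symmetric game, $\welfen(\out)$ collapses to $n\max\{\actval_i(0,\out),\actval_i(1,\out)\}$ for any single player $i$. So it suffices to show that an outcome that minimizes this max over $\Outsym$ must render player $i$ indifferent between the two actions; the idea is the standard saddle-point intuition that if one action strictly dominates the other under $\out^*$, we can tilt $\out^*$ toward an extreme symmetric outcome that further shrinks the larger of the two values.

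Suppose for contradiction $\actval_i(0,\out^*)>\actval_i(1,\out^*)$. I would introduce the symmetric outcome $\out^1$ in which, regardless of $\paystate$, every player attacks (i.e., $\out^1((1,\dots,1),\paystate)=\payprior(\paystate)$). Direct computation using the fact that $\max \Paystate<n-1$ shows $\actval_i(0,\out^1)=-\brext$, the smallest possible value for action $0$. On the other hand, letting $q$ denote the probability that the attack succeeds under $\out^*$ when player $i$ plays $0$, one has $\actval_i(0,\out^*)=-\brext q$, and I would observe that the strict inequality $-\brext q>\actval_i(1,\out^*)\geq -\brext$ forces $q<1$; hence $\actval_i(0,\out^1)<\actval_i(0,\out^*)$. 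Considering the perturbation $\out^\epsilon:=(1-\epsilon)\out^*+\epsilon\out^1\in\Outsym$, linearity gives $\actval_i(0,\out^\epsilon)<\actval_i(0,\out^*)$, while by continuity $\actval_i(1,\out^\epsilon)<\actval_i(0,\out^\epsilon)$ for small $\epsilon>0$, yielding $\welfen(\out^\epsilon)<\welfen(\out^*)$, a contradiction.

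The opposite case $\actval_i(1,\out^*)>\actval_i(0,\out^*)$ is handled symmetrically by perturbing toward $\out^0$, the symmetric outcome in which no one attacks. Here $\actval_i(1,\out^0)=-\brcost$, and the assumed strict inequality together with $\min\Paystate>1$ (so a single speculator never succeeds) forces $\actval_i(1,\out^*)>-\brcost$, giving a strict decrease of $\actval_i(1,\cdot)$ along the perturbation and the same contradiction. The main obstacle, really the only delicate step, is verifying the two strict inequalities $\actval_i(0,\out^*)>-\brext$ and $\actval_i(1,\out^*)>-\brcost$; both follow from the assumed ordering of $\actval_i(0,\out^*)$ and $\actval_i(1,\out^*)$ combined with the regime change game's specific payoff structure, in particular the assumption that $\min\Paystate>1$ and $\max\Paystate<n-1$ so that the ``degenerate'' symmetric outcomes $\out^0,\out^1$ actually give extreme values to actions $1,0$ respectively.
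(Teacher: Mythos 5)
Your approach is the same as the paper's: argue by contradiction (the paper phrases it as a contrapositive), tilt $\out^*$ toward the extreme symmetric outcome in which everyone attacks (resp.\ no one attacks), and use linearity of $\actval_\player(0,\cdot)$ together with continuity of $\actval_\player(1,\cdot)$ to conclude that the max strictly decreases, contradicting optimality. The one step that is wrong as written is the inequality $\actval_\player(1,\out^*)\geq -\brext$ in the first case: the only general lower bound is $\actval_\player(1,\out^*)\geq -\brcost$, and since $\brext$ may be smaller than $\brcost$ this chain can fail --- e.g.\ with $\brext=0.1$, $\brcost=0.9$ and an $\out^*$ under which the attack never succeeds, the case hypothesis $\actval_\player(0,\out^*)>\actval_\player(1,\out^*)$ holds yet $\actval_\player(1,\out^*)=-0.9<-\brext$. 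The conclusion you need, namely $q<1$, is nonetheless correct, but for the reason the paper uses: if $q=1$, then the attack also succeeds with probability one when $\player$ herself attacks, so $\actval_\player(1,\out^*)=1-\brcost>0\geq -\brext q=\actval_\player(0,\out^*)$, directly contradicting the case hypothesis. With that substitution your first case goes through, and your treatment of the second case (where $\actval_\player(1,\out^*)>-\brcost$ does follow correctly from the case hypothesis) is fine.
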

\begin{proof}
    We prove the contrapositive: if $\out^*\in \Outsym$ has $\actval_i(0,\out^*) \neq \actval_i(1,\out^*)$, then 
    \[
    \out^* \notin \argmin_{\out \in \Outsym} \welfen(\out).
    \]
    
    We first consider the case in which $\actval_\player(0,\out^*)>\actval_\player(1,\out^*)$. Let $\outb$ be the outcome where all investors attack with probability one. Observe that $
    \outb \in \argmin_{\out\in \Outsym} \actval_\player(0,\out),
    $
    and that every $\outc \in \argmin_{\out\in \Outsym} \actval_\player(0,\out)$ has the speculative attack succeeding with probability one. Hence, every such $\outc$ has $\actval_\player(0,\outc)<\actval_\player(1,\outc)$, which implies that $\out^*\notin \argmin_{\out\in \Outsym} \actval_\player(0,\out)$. We deduce that 
    $
    \actval_\player(0,\outb) < \actval_\player(0,\out^*).
    $
    
    For every $\epsilon \in (0,1)$, we define $\outb^\epsilon:=\epsilon\outb + (1-\epsilon)\out^* \in \Outsym$. Using the inequality $\actval_\player(0,\outb) < \actval_\player(0,\out^*)$, we obtain that for all $\epsilon>0$ small enough,
    \[
    \welfen(\outb^\epsilon) 
    = \Playernum \actval_\player (0,\out^\epsilon) 
    = \Playernum \left(\epsilon \actval_\player (0,\outb) +(1-\epsilon) \actval_\player (0,\out^*)\right)
    < \Playernum\actval_\player (0,\out^*)=\welfen(\out^*).
    \]
    We conclude that $\out^* \notin \argmin_{\out \in \Outsym} \welfen(\out)$.
    
     The argument for the case $\actval_\player(0,\out^*) < \actval_\player(1,\out^*)$ is similar, but with $\outb$ being replaced by the outcome where no one ever speculates.
\end{proof}

Thanks to Claim \ref{claim:br equality in unconstrained min}, to determine $\argmin_{\out \in \Outsym} \welfen(\out)$, we can study the following ``simpler'' optimization problem:
\begin{equation}\label{eq:aux_opt_prob}
\min_{\out \in \Outsym} \actval_\player(0,\out)\quad\text{s.t.}\quad\actval_i(0,\out) = \actval_i(1,\out).
\end{equation}

\begin{claim}
An outcome $\out\in \Outsym$ is an optimal solution of (\ref{eq:aux_opt_prob}) if and only if
\[
    \out\left(\sum_{\playerb\neq \player} \act_\playerb = \paystate -1\right)= 0\quad\text{and}\quad
    \out\left(\sum_{\playerb\neq \player} \act_\playerb\geq \paystate \right)= \frac{\brcost}{1+\brext}.
\]
\end{claim}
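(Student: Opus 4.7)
The plan is a direct computation. I would start by evaluating $\actval_\player(\act_\player,\out)$ for each $\act_\player\in\{0,1\}$ under a symmetric $\out$, using the regime-change payoff table. Since $\util_\player(0,\act_{-\player},\paystate)=-\brext\cdot\mathbf{1}\{\sum_{\playerb\neq\player}\act_\playerb\geq\paystate\}$ and $\util_\player(1,\act_{-\player},\paystate)=\mathbf{1}\{\sum_{\playerb\neq\player}\act_\playerb\geq\paystate-1\}-\brcost$, integrating against $\out$ yields
\[
\actval_\player(0,\out)=-\brext\cdot\out\Big(\sum_{\playerb\neq\player}\act_\playerb\geq\paystate\Big),\qquad \actval_\player(1,\out)=\out\Big(\sum_{\playerb\neq\player}\act_\playerb\geq\paystate-1\Big)-\brcost.
\]

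Next, using the disjoint decomposition $\{\sum_{\playerb\neq\player}\act_\playerb\geq\paystate-1\}=\{\sum_{\playerb\neq\player}\act_\playerb=\paystate-1\}\cup\{\sum_{\playerb\neq\player}\act_\playerb\geq\paystate\}$, I would rewrite the feasibility constraint $\actval_\player(0,\out)=\actval_\player(1,\out)$ as
\[
\brcost=\out\Big(\sum_{\playerb\neq\player}\act_\playerb=\paystate-1\Big)+(1+\brext)\,\out\Big(\sum_{\playerb\neq\player}\act_\playerb\geq\paystate\Big).
\]
Solving for $\out(\sum_{\playerb\neq\player}\act_\playerb\geq\paystate)$ and substituting into the objective gives
\[
\actval_\player(0,\out)=\frac{\brext}{1+\brext}\Big(\out\Big(\sum_{\playerb\neq\player}\act_\playerb=\paystate-1\Big)-\brcost\Big),
\]
so on the feasible set the objective is affine and strictly increasing in $\out(\sum_{\playerb\neq\player}\act_\playerb=\paystate-1)$.

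Because probabilities are nonnegative, the constrained infimum over $\Outsym$ is attained exactly at feasible $\out$ with $\out(\sum_{\playerb\neq\player}\act_\playerb=\paystate-1)=0$, and via the constraint this is equivalent to the two displayed conditions. It remains to verify feasibility: I would exhibit the symmetric outcome where, conditional on each $\paystate$, all $\Playernum$ investors speculate together with probability $\brcost/(1+\brext)$ and nobody speculates otherwise. Under this $\out$, $\sum_{\playerb\neq\player}\act_\playerb\in\{0,\Playernum-1\}$ almost surely; the standing assumptions $\min\Paystate\geq 2$ and $\max\Paystate\leq \Playernum-2$ ensure $\paystate-1\notin\{0,\Playernum-1\}$, so the first condition holds trivially, while $\{\sum_{\playerb\neq\player}\act_\playerb\geq\paystate\}$ coincides with the ``all speculate'' event, of probability $\brcost/(1+\brext)$. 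The only place care is required is this feasibility step, where the bounds on $\Paystate$ are crucial; the rest is one-line algebra.
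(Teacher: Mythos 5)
Your proposal is correct and follows essentially the same route as the paper: both rewrite the indifference constraint via the decomposition $\{\sum_{\playerb\neq\player}\act_\playerb\geq\paystate-1\}=\{\sum_{\playerb\neq\player}\act_\playerb=\paystate-1\}\cup\{\sum_{\playerb\neq\player}\act_\playerb\geq\paystate\}$, observe that on the feasible set the objective reduces to an increasing affine function of $\out\bigl(\sum_{\playerb\neq\player}\act_\playerb=\paystate-1\bigr)$, and then exhibit the same feasible outcome (all attack with probability $\brcost/(1+\brext)$, no one attacks otherwise). Your explicit check that $\min\Paystate>1$ and $\max\Paystate<\Playernum-1$ make this outcome feasible is a welcome bit of care that the paper leaves implicit.
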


\begin{proof}
    Writing the constraint $\actval_\player (1,\out) = \actval_\player (0,\out)$ in terms of $\out$ gives
    \[
    \out\left(\sum_{\playerb\neq \player} \act_\playerb \geq \paystate -1\right) - \brcost = -\brext\out\left(\sum_{\playerb\neq \player} \act_\playerb\geq \paystate\right),
    \]
    which is equivalent to
    \[
     \out\left(\sum_{\playerb\neq \player} \act_\playerb\geq \paystate \right)=\frac{\brcost-\out\left(\sum_{\playerb\neq \player} \act_\playerb = \paystate -1\right)}{1+\brext}.
    \]
    It follows that every $\out$ that satisfies $\actval_\player (1,\out) = \actval_\player (0,\out)$ must yield 
    \[
    \actval_\player (0,\out) = -\brext\out\left(\sum_{\playerb\neq \player} \act_\playerb\geq \paystate \right) = -\brext\left[\frac{\brcost-\out\left(\sum_{\playerb\neq \player} \act_\playerb = \paystate -1\right)}{1+\brext}\right].
    \]
    Hence, we get that (\ref{eq:aux_opt_prob}) is the same as
    \begin{align*}
     \min_{\out \in \Outsym}  &\quad\out\left(\sum_{\playerb\neq \player} \act_\playerb = \paystate -1\right)
    \\ \text{s.t.}\quad\,\,  & \quad \out\left(\sum_{\playerb\neq \player} \act_\playerb\geq \paystate \right) = \frac{\brcost - \out\left(\sum_{\playerb\neq \player} \act_\playerb = \paystate -1\right)}{1+\brext}.
    \end{align*}
    Hence, to complete the proof, we only need to be sure that there is $\out \in \Outsym$ such that \[
    \out\left(\sum_{\playerb\neq \player} \act_\playerb = \paystate -1\right)= 0\quad\text{and}\quad
    \out\left(\sum_{\playerb\neq \player} \act_\playerb\geq \paystate \right)= \frac{\brcost}{1+\brext}.
\]
    Such an outcome is easy to construct: with probability $k/1+x$, all players attack; with the remaining probability, no player attacks.
\end{proof}

The following result connects what we have just found with the conditions in the statement of Claim \ref{claim:br unconstrained min}.

\begin{claim}\label{claim:equiv_cond_boc}
For an outcome $\out \in \Out$, the following conditions are equivalent:
\begin{itemize}
\item[(i)] For all players $\player$ and payoff states $\paystate$,
\begin{align}
    \out\left(\sum_{\playerb\neq \player} \act_\playerb = \paystate -1\right)& = 0,\label{eq:bocconi_mon}\\
    \out\left(\sum_{\playerb\neq \player} \act_\playerb\geq \paystate \right) &= \frac{\brcost}{1+\brext}\label{eq:bocconi_tue}.
\end{align}
\item[(ii)] For all payoff states $\paystate$,
\begin{align}
    \out\left(\paystate -1\leq \sum_{\player} \act_\player \leq\paystate \right)& = 0,\label{eq:bocconi_noom}\\
    \out\left(\sum_{\player} \act_\player>\paystate \right) &= \frac{\brcost}{1+\brext}\label{eq:bocconi_non}.
\end{align}
\end{itemize}
\end{claim}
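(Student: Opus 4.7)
The plan is to prove (i) $\Leftrightarrow$ (ii) by reparametrizing all four statements in terms of the total attack count $S := \sum_\player \act_\player$, using the identity $S_{-\player} := \sum_{\playerb \neq \player} \act_\playerb = S - \act_\player$. The argument is essentially a change of variable, carried out separately for the two pairs of conditions; symmetry of $\out$ is not needed.

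First I would show \eqref{eq:bocconi_mon} (universally in $\player$) is equivalent to \eqref{eq:bocconi_noom}. The disjoint decomposition
\[
\{S_{-\player} = \paystate - 1\} = \{\act_\player = 0,\, S = \paystate - 1\}\;\cup\;\{\act_\player = 1,\, S = \paystate\}
\]
makes the ``$\Leftarrow$'' direction immediate: each summand on the right has $\out$-probability zero once $\out(S = \paystate - 1) = \out(S = \paystate) = 0$. For the ``$\Rightarrow$'' direction, I would sum the identity over $\player \in \Player$ to obtain
\[
\sum_\player \out(S_{-\player} = \paystate - 1) = (\Playernum - \paystate + 1)\,\out(S = \paystate - 1) + \paystate\,\out(S = \paystate),
\]
since on $\{S = \paystate - 1\}$ exactly $\Playernum - \paystate + 1$ players choose $0$, and on $\{S = \paystate\}$ exactly $\paystate$ players choose $1$. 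Because $\paystate \in \Paystate$ satisfies $2 \leq \paystate \leq \Playernum - 2$, both coefficients are strictly positive, so the sum vanishes only if each of $\out(S = \paystate - 1)$ and $\out(S = \paystate)$ vanishes, yielding \eqref{eq:bocconi_noom}.

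Next, assuming \eqref{eq:bocconi_mon}/\eqref{eq:bocconi_noom}, I would derive \eqref{eq:bocconi_tue} $\Leftrightarrow$ \eqref{eq:bocconi_non}. Splitting on $\act_\player \in \{0, 1\}$ gives
\[
\out(S_{-\player} \geq \paystate) = \out(\act_\player = 0,\, S \geq \paystate) + \out(\act_\player = 1,\, S \geq \paystate + 1).
\]
Since $\out(S = \paystate) = 0$ by \eqref{eq:bocconi_noom}, the first summand equals $\out(\act_\player = 0,\, S > \paystate)$; the second already equals $\out(\act_\player = 1,\, S > \paystate)$. Adding these gives $\out(S_{-\player} \geq \paystate) = \out(S > \paystate)$ for every player $\player$, establishing the equivalence of \eqref{eq:bocconi_tue} and \eqref{eq:bocconi_non}.

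There is no substantive obstacle; the only detail to keep track of is that $\paystate$ and $\Playernum - \paystate + 1$ are both strictly positive, which follows from the standing assumption $\min \Paystate > 1$ and $\max \Paystate < \Playernum - 1$ on the regime change example.
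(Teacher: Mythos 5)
Your proof is correct and follows essentially the same route as the paper's: both arguments rest on the decomposition $\{\sum_{\playerb\neq\player}\act_\playerb=\paystate-1\}=\{\act_\player=0,\ \sum_\playerb\act_\playerb=\paystate-1\}\cup\{\act_\player=1,\ \sum_\playerb\act_\playerb=\paystate\}$ and on the standing assumptions $\min\Paystate>1$ and $\max\Paystate<\Playernum-1$. The only cosmetic differences are that you replace the paper's union bound over players with an exact counting identity (coefficients $\Playernum-\paystate+1$ and $\paystate$), and you package the second equivalence into the single identity $\out\left(\sum_{\playerb\neq\player}\act_\playerb\geq\paystate\right)=\out\left(\sum_\playerb\act_\playerb>\paystate\right)$ rather than the paper's two separate chains of equalities.
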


\begin{proof}
First we show that (i) implies (ii). Since $\max\Theta<n$, 
\[
\out\left(\sum_{\player} \act_\player=\paystate-1\right)=\out\left(\sum_{\player} \act_\player=\paystate-1,\text{ and }\act_\player=0\text{ for some }\player\right).
\]
Thus,  
\[
\out\left(\sum_{\player} \act_\player=\paystate-1\right)\leq \sum_\player\out\left(\sum_{\playerb\neq\player} \act_\playerb=\paystate-1,\text{ and }\act_\player=0\right)=0,
\]
where the last equality follows from (\ref{eq:bocconi_mon}). Moreover, since $\min\Theta>0$,
\[
\out\left(\sum_{\player} \act_\player=\paystate\right)=\out\left(\sum_{\player} \act_\player=\paystate,\text{ and }\act_\player=1\text{ for some }\player\right).
\]
Thus,  
\[
\out\left(\sum_{\player} \act_\player=\paystate\right)\leq \sum_\player\out\left(\sum_{\playerb\neq\player} \act_\playerb=\paystate-1,\text{ and }\act_\player=1\right)=0,
\]
where the last equality follows from (\ref{eq:bocconi_mon}). We conclude that (\ref{eq:bocconi_noom}) holds.

To prove (\ref{eq:bocconi_non}), notice that 
\[
\out\left(\sum_{\player} \act_\player>\paystate\right)=\out\left(\sum_{\player} \act_\player\geq\paystate\right),
\]
because we have just verified that $\out\left(\sum_{\player} \act_\player=\paystate\right)=0$. Then, fixing some player $\player^*$,
\begin{align*}
\out\left(\sum_{\player} \act_\player\geq\paystate\right)& =\out\left(\sum_{\player\neq\player^*} \act_\player\geq\paystate,\text{ and }\act_{\player^*}=0\right)+\out\left(\sum_{\player\neq\player^*} \act_\player\geq\paystate-1,\text{ and }\act_{\player*}=1\right)\\
& =\out\left(\sum_{\player\neq\player^*} \act_\player\geq\paystate,\text{ and }\act_{\player^*}=0\right)+\out\left(\sum_{\player\neq\player^*} \act_\player\geq\paystate,\text{ and }\act_{\player^*}=1\right)\\
& =\out\left(\sum_{\player\neq\player^*} \act_\player\geq\paystate\right)=\frac{\brcost}{1+\brext},
\end{align*}
where the second equality holds by (\ref{eq:bocconi_mon}), and the last equality by (\ref{eq:bocconi_tue}). We deduce (\ref{eq:bocconi_non}). This completes the proof that (i) implies (ii).

Now we show that (ii) implies (i). Observe that
\[
\out\left(\sum_{\playerb\neq \player} \act_\playerb = \paystate -1\right)=\out\left(\sum_{\playerb} \act_\playerb = \paystate -1,\text{ and }\act_\player=0\right)+\out\left(\sum_{\playerb} \act_\playerb = \paystate,\text{ and }\act_\player=1\right).
\]
By (\ref{eq:bocconi_noom}), the right-hand side is equal to zero: we deduce (\ref{eq:bocconi_mon}). We obtain (\ref{eq:bocconi_tue}) from the following chain of equalitites:
\begin{align*}
\out\left(\sum_{\playerb\neq \player} \act_\playerb \geq \paystate \right) & = \out\left(\sum_{\playerb} \act_\playerb \geq \paystate,\text{ and }\act_\player=0 \right)+\out\left(\sum_{\playerb} \act_\playerb > \paystate,\text{ and }\act_\player=1 \right)\\
& = \out\left(\sum_{\playerb} \act_\playerb > \paystate,\text{ and }\act_\player=0 \right)+\out\left(\sum_{\playerb} \act_\playerb > \paystate,\text{ and }\act_\player=1 \right)\\
& = \out\left(\sum_{\playerb} \act_\playerb > \paystate \right)=\frac{\brcost}{1+\brext},
\end{align*}
where the second equality follows from (\ref{eq:bocconi_noom}), and the last equality from (\ref{eq:bocconi_non}). This completes the proof that (ii) implies (i). 
\end{proof}

Combining the three results above, we obtain Claim \ref{claim:br unconstrained min}.

\subsection{Proof of Claim \ref{claim:br two states char}}

First, we obtain necessary and sufficient conditions for $\Welfen < \Welfex$ in the regime change game for an arbitrary number of states. 

\begin{claim}\label{claim:to_refine}
The inequality $\Welfen < \Welfex$ holds if and only if all symmetric outcomes $\out$ that satisfy (\ref{eq:br not pivotal}) and (\ref{eq:br ex-ante fail prob}), also satisfy
\begin{equation}\label{eq:bocconi_sem}
\out_{\act_\player=1}\left(\sum_{\playerb}\act_\playerb\geq \theta \right)>\frac{k}{1+x},
\end{equation}
where $\out_{\act_\player=1}$ is the conditional probability of $(\act_{-\player},\paystate)$ given $\act_\player=1$.
\end{claim}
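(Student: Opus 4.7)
The plan is to specialize Proposition~\ref{pro:welfare_binary_symmetric} to the regime-change game. By Claim~\ref{claim:br unconstrained min}, $\argmin_{\out\in\Outsym}\welfen(\out)$ equals the set of symmetric outcomes satisfying \eqref{eq:br not pivotal} and \eqref{eq:br ex-ante fail prob}; and by Proposition~\ref{pro:welfare_binary_symmetric}, $\Welfen<\Welfex$ holds if and only if each such minimizer $\out$ has $\supp_\player(\out)=\{0,1\}$ and $\BR(\out_{\act_\player})=\{\act_\player\}$ for every player $\player$ and every $\act_\player\in\{0,1\}$. It therefore suffices to show that, on the set of minimizers, this requirement is equivalent to \eqref{eq:bocconi_sem}.

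First I would check that every minimizer $\out$ has $\out(\act_\player=0),\out(\act_\player=1)\in(0,1)$. This follows from \eqref{eq:br ex-ante fail prob}: the ``all-speculate'' symmetric outcome gives success probability $1$ (using $\max\Paystate<\Playernum$) and the ``no-speculate'' outcome gives $0$, neither equal to $\brcost/(1+\brext)\in(0,1)$. Next I would translate both best-response conditions into inequalities on conditional success probabilities. Let $q_{\act_\player}:=\out_{\act_\player}(\sum_\playerb\act_\playerb\geq\paystate)$. Using Claim~\ref{claim:equiv_cond_boc} to reformulate \eqref{eq:br not pivotal} as $\out(\paystate-1\leq\sum_\player\act_\player\leq\paystate)=0$, the three a priori distinct success-type events $\{\sum_{\playerb\neq\player}\act_\playerb\geq\paystate-1\}$, $\{\sum_{\playerb\neq\player}\act_\playerb\geq\paystate\}$, and $\{\sum_\playerb\act_\playerb\geq\paystate\}$ all coincide in probability under $\out_{\act_\player}$. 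A direct computation from the payoff table then shows that the expected payoff difference ``speculate minus not speculate'' under the belief $\out_{\act_\player}$ equals $(1+\brext)q_{\act_\player}-\brcost$. Hence $\BR(\out_{\act_\player=1})=\{1\}$ is exactly \eqref{eq:bocconi_sem}, and $\BR(\out_{\act_\player=0})=\{0\}$ is exactly $q_0<\brcost/(1+\brext)$.

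Finally, Bayes' rule closes the equivalence: $\out(\sum_\playerb\act_\playerb\geq\paystate)=\out(\act_\player=0)\,q_0+\out(\act_\player=1)\,q_1$, and by \eqref{eq:br ex-ante fail prob} together with the absence of ties from \eqref{eq:br not pivotal}, this equals $\brcost/(1+\brext)$. Since $\out(\act_\player=0),\out(\act_\player=1)>0$, the strict inequality $q_1>\brcost/(1+\brext)$ is equivalent to $q_0<\brcost/(1+\brext)$, and hence to both best-response conditions holding simultaneously. The main obstacle is the bookkeeping in the second paragraph: the no-pivot property is what allows the slightly different ``success'' events to be conflated into the single quantity $q_{\act_\player}$, collapsing both uniqueness-of-best-response requirements into \eqref{eq:bocconi_sem}; without it, the clean equivalence between condition \eqref{eq:welfare_binary_symmetric} and \eqref{eq:bocconi_sem} would not survive.
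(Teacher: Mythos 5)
Your proposal is correct and follows essentially the same route as the paper: reduce via Proposition~\ref{pro:welfare_binary_symmetric} and Claim~\ref{claim:br unconstrained min} to checking that condition \eqref{eq:welfare_binary_symmetric} and \eqref{eq:bocconi_sem} coincide on the set of symmetric outcomes satisfying \eqref{eq:br not pivotal} and \eqref{eq:br ex-ante fail prob}, use the no-pivot condition (through Claim~\ref{claim:equiv_cond_boc}) to collapse the various success events into a single conditional probability $q_{\act_\player}$, and then invoke the law of total probability together with the ex-ante indifference to show the two strict best-response requirements are equivalent to the single inequality $q_1>\brcost/(1+\brext)$. The paper writes out the two obedience inequalities explicitly where you compute the payoff difference $(1+\brext)q_{\act_\player}-\brcost$, but these are the same calculation.
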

\begin{proof}
By Proposition \ref{pro:welfare_binary_symmetric}, $\Welfen < \Welfex$ if and only if all optimal solutions of $\min_{\out\in\Outsym}\welfen(\out)$ satisfy (\ref{eq:welfare_binary_symmetric}). By Claim \ref{claim:br unconstrained min}, the latter condition is equivalent to the following statement: all symmetric outcomes $\out$ that satisfy (\ref{eq:br not pivotal}) and (\ref{eq:br ex-ante fail prob}), also satisfy (\ref{eq:welfare_binary_symmetric}). Next we verify that, for all symmetric outcomes $\out$ that satisfy (\ref{eq:br not pivotal}) and (\ref{eq:br ex-ante fail prob}), the conditions (\ref{eq:welfare_binary_symmetric}) and (\ref{eq:bocconi_sem}) are equivalent.

Let $\out$ be a symmetric outcome that satisfy (\ref{eq:br not pivotal}) and (\ref{eq:br ex-ante fail prob}). First, note \eqref{eq:br ex-ante fail prob} implies the attack succeeds with a probability strictly between $0$ and $1$, and so players must both attack and and not attack with positive probability due to symmetry. Hence $\supp_\player(\out)=\{0,1\}=\Act_\player$. 

Given $\supp_\player(\out)=\{0,1\}$, $\player$'s obedience constraints are strict when
\begin{align}
    \out_{\act_\player=1}\left(\sum_{\playerb\neq \player} \act_\playerb \geq \paystate-1\right)-\brcost & > -\brext \out_{\act_\player=1}\left(\sum_{\playerb\neq \player} \act_\playerb \geq \paystate\right),\label{eq:bocconi-after}\\
    \out_{\act_\player=0}\left(\sum_{\playerb\neq \player} \act_\playerb \geq \paystate-1\right)-\brcost & < -\brext \out_{\act_\player=0}\left(\sum_{\playerb\neq \player} \act_\playerb \geq \paystate\right)\label{eq:bocconi-before}.
\end{align}
By (\ref{eq:br not pivotal})---see also Claim \ref{claim:equiv_cond_boc}---
\[
    \out_{\act_\player=1}\left(\sum_{\playerb\neq \player} \act_\playerb \geq \paystate-1\right) = \out_{\act_\player=1}\left(\sum_{\playerb\neq \player} \act_\playerb \geq \paystate\right)\quad\text{and}\quad
    \out_{\act_\player=0}\left(\sum_{\playerb\neq \player} \act_\playerb \geq \paystate-1\right) = \out_{\act_\player=0}\left(\sum_{\playerb\neq \player} \act_\playerb \geq \paystate\right).
\]
Thus, (\ref{eq:bocconi-after}) and (\ref{eq:bocconi-before}) hold if and only if 
\[
\out_{\act_\player=1}\left(\sum_{\playerb\neq \player} \act_\playerb \geq \paystate\right)>\frac{\brcost}{1+\brext}>
\out_{\act_\player=0}\left(\sum_{\playerb\neq \player} \act_\playerb \geq \paystate\right).
\]
By (\ref{eq:br not pivotal}) and (\ref{eq:br ex-ante fail prob})---see also Claim \ref{claim:equiv_cond_boc}---
\[
\out\left(\sum_{\playerb\neq \player} \act_\playerb \geq \paystate\right)=\frac{\brcost}{1+\brext}.
\]
Thus, by the law of total probability, (\ref{eq:bocconi-after}) and (\ref{eq:bocconi-before}) hold if and only if
\[
\out_{\act_\player=1}\left(\sum_{\playerb} \act_\playerb \geq \paystate\right)=\out_{\act_\player=1}\left(\sum_{\playerb\neq \player} \act_\playerb \geq \paystate-1\right)=\out_{\act_\player=1}\left(\sum_{\playerb\neq \player} \act_\playerb \geq \paystate\right)>\frac{\brcost}{1+\brext}.
\]
Overall, we conclude that, for all symmetric outcomes $\out$ that satisfy (\ref{eq:br not pivotal}) and (\ref{eq:br ex-ante fail prob}), the conditions (\ref{eq:welfare_binary_symmetric}) and (\ref{eq:bocconi_sem}) are equivalent.
\end{proof}

Next we refine the characterization $\Welfen < \Welfex$ obtained in Claim \ref{claim:to_refine}. To state this refinement, denote the CDF of $\paystate$ by $\ppcdf(\paystate) := \sum_{\paystateb \leq \paystate}\payprior(\paystate)$. Define also the cutoff state $\paystate^*$ by
\[
\paystate^* := \min\left\{\paystate \in \Paystate: \ppcdf(\paystate) \geq \frac{\brcost}{1+\brext}\right\}
\]

\begin{claim}\label{claim:br general welfare difference inequality}
    The inequality $\Welfex>\Welfen$ holds if and only if 
    \begin{equation}\label{eq:2023-05-16a}
    \ppcdf(\paystate^*)\left(\paystate^* - \bbE[\paystate|\paystate\leq \paystate^*]\right) < \frac{\brcost}{1+\brext}\left( 3-\frac{3\brcost}{1+\brext} + \paystate^* - \bbE[\paystate] \right).
    \end{equation}
\end{claim}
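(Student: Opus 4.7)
The plan is to invoke Claim~\ref{claim:to_refine} and reduce the question to a linear program. Recall that Claim~\ref{claim:to_refine} says $\Welfex > \Welfen$ holds iff every symmetric $\out$ satisfying the non-pivotality constraint~\eqref{eq:br not pivotal} and the total-success constraint~\eqref{eq:br ex-ante fail prob} also satisfies $\out_{\act_\player = 1}(\sum_\playerb\act_\playerb \geq \paystate) > q$, where $q := \brcost/(1+\brext)$. I will therefore compute the infimum of $\out_{\act_\player=1}(\sum_\playerb\act_\playerb \geq \paystate)$ over feasible symmetric outcomes and check exactly when this infimum exceeds $q$.

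Setting $N := \sum_\playerb\act_\playerb$, symmetry implies $\out(\act_\player = 1) = \bbE[N]/\Playernum$ and $\out(\act_\player = 1,\,N\geq \paystate) = \bbE[N\mathbf{1}\{N \geq \paystate\}]/\Playernum$, so the inequality becomes the linear inequality $\bbE[N\mathbf{1}\{N\geq \paystate\}] > q\bbE[N]$ in the joint distribution $\pi_{ij} := \out(\paystate = i,\, N = j)$. The feasibility conditions translate into $\pi \geq 0$, the marginal constraint $\sum_j \pi_{ij} = \payprior(i)$, the non-pivotality $\pi_{ij} = 0$ for $j \in \{i-1,i\}$, and the success-probability constraint $\sum_{(i,j):\,j > i}\pi_{ij} = q$. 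The task becomes the LP of minimizing $\bbE[N\mathbf{1}\{N\geq \paystate\}] - q\bbE[N]$ over this feasible set.

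To solve this LP, I would first argue that for any fixed success-marginals $s_i := \sum_{j > i}\pi_{ij}$, the inner distribution of $N$ on the success event is optimally concentrated at $j = i+1$, and on the failure event at $j = i-2$ (the standing assumptions $\min\Paystate > 1$ and $\max\Paystate < \Playernum - 1$ keep these values inside $\{0,\dots,\Playernum\}$). This reduces the LP to one in the variables $(s_i)$ alone, with objective proportional (up to an additive constant in $\bbE[\paystate]$) to $\sum_i s_i(i+1-3q)$, subject to $0 \leq s_i \leq \payprior(i)$ and $\sum_i s_i = q$. Because the coefficient $i + 1 - 3q$ is strictly increasing in $i$, the LP is solved greedily: set $s_i = \payprior(i)$ for $i < \paystate^*$ and $s_{\paystate^*} = q - \ppcdf(\paystate^* - 1)$, which is exactly how $\paystate^*$ is defined in the statement.

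Finally, I would plug this optimal $s$ back into the objective and simplify, using the identity $\sum_{i \leq \paystate^* - 1} i\payprior(i) = \ppcdf(\paystate^*)\bbE[\paystate \mid \paystate \leq \paystate^*] - \paystate^*\payprior(\paystate^*)$ to bring in the conditional expectation appearing in~\eqref{eq:2023-05-16a}. Collecting terms, the minimum value of the objective becomes
\[
q\bigl(3 - 3q + \paystate^* - \bbE[\paystate]\bigr) \;-\; \ppcdf(\paystate^*)\bigl(\paystate^* - \bbE[\paystate\mid\paystate \leq \paystate^*]\bigr),
\]
and requiring this to be strictly positive yields~\eqref{eq:2023-05-16a} exactly. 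The step I expect to be the main obstacle is the LP reduction: one must carefully justify that concentrating the success mass at $j = i+1$ and the failure mass at $j = i-2$ is simultaneously optimal and feasible for every admissible choice of the success marginals $(s_i)$, rather than handling the success and failure halves separately.
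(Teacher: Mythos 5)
Your proposal is correct and reaches exactly the right final expression, but it solves the core optimization by a genuinely different method than the paper. Both arguments start from Claim~\ref{claim:to_refine} and both end by identifying the same optimal structure (mass concentrated on $\{\paystate-2,\paystate+1\}$ with the greedy cutoff at $\paystate^*$), but the paper keeps the objective as the fractional quantity $\out_{\act_\player=1}(\sum_\playerb\act_\playerb\geq\paystate)$, parametrizes symmetric outcomes by kernels $\brsout:\Paystate\to\Delta(\{0,\dots,\Playernum\})$, and establishes the three structural properties of the minimizer by computing directional derivatives of the ratio under explicit mass-transfer perturbations. You instead clear the denominator: since $\out(\act_\player=1)=\bbE[N]/\Playernum>0$ on the feasible set (because $q>0$ and $\min\Paystate\geq 2$ force $\bbE[N]\geq 3q$), the strict inequality is equivalent to $\bbE[N\mathbf{1}\{N\geq\paystate\}]-q\bbE[N]>0$, which is linear in the joint law of $(\paystate,N)$, and compactness of the feasible polytope converts ``holds for all feasible $\out$'' into ``the LP minimum is positive.'' This buys you something real: the obstacle you flag at the end --- justifying simultaneous optimality of concentrating success mass at $j=i+1$ and failure mass at $j=i-2$ --- essentially evaporates in the linearized problem, because for fixed success marginals $s_i=\sum_{j>i}\pi_{ij}$ the objective decomposes additively across states and across the success/failure halves within each state, with coefficients $(1-q)j$ (positive, so push $j$ down to $i+1$) and $-qj$ (so push $j$ up to $i-2$), each a trivial one-dimensional choice; the standing assumptions $\min\Paystate>1$ and $\max\Paystate<\Playernum-1$ give feasibility. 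The remaining outer problem $\min\sum_i s_i(i+1-3q)$ subject to $0\leq s_i\leq\payprior(i)$, $\sum_i s_i=q$ is solved greedily exactly as you say, and I have checked that substituting the greedy solution and using your identity for $\sum_{i<\paystate^*}i\payprior(i)$ reproduces $q(3-3q+\paystate^*-\bbE[\paystate])-\ppcdf(\paystate^*)(\paystate^*-\bbE[\paystate\mid\paystate\leq\paystate^*])$, so positivity is precisely \eqref{eq:2023-05-16a}. In short, your LP route is a clean and arguably simpler substitute for the paper's fractional-programming perturbation argument.
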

\begin{proof}
    By Claim~\ref{claim:to_refine},  $\Welfen<\Welfex$ is equivalent to 
    \begin{equation}\label{eq:2023-05-16b}
        \begin{split}
         \frac{\brcost}{1+\brext} < \min_{\out \in \Outsym} & \out_{\act_\player=1}\left(\sum_{\playerb\neq\player} \act_\playerb \geq \paystate-1\right) \\
        \text{s.t.}\quad  & \eqref{eq:br not pivotal} \text{ and }\eqref{eq:br ex-ante fail prob}. 
        \end{split}
    \end{equation}
    Hence, showing \eqref{eq:2023-05-16a} and \eqref{eq:2023-05-16b} are equivalent is sufficient. To show this equivalence, we first characterize the unique solution to the program on the right hand side of \eqref{eq:2023-05-16b}. This solution gives the value of the program, which we then compare to $\brcost/(1+\brext)$.
    
    We begin with an alternative way of representing symmetric outcomes. This representation is based on the observation that an outcome $\out \in \Out$ is symmetric if and only if, conditional on the state, all action profiles with the same number of attackers have the same probability. Consequently, $\out \in \Outsym$ if and only if there is $\brsout:\Paystate \rightarrow \Delta\left(\{0,\ldots,\Playernum\}\right)$ such that
    \[
    \out(\act,\paystate) = \binom{\Playernum}{\sum_{\playerb}\act_\playerb} \brsout\left(\sum_{\playerb}\act_\playerb\Big|\paystate\right)\payprior(\paystate),
    \]
    where $\binom{\Playernum}{\sum_{\playerb}\act_\playerb}$ is the binomial coefficient. Thus, one can write
    \[
    \out(\act_\player=1) = \sum_\paystate \payprior(\paystate)\sum_{m=1}^{\Playernum}\frac{m}{\Playernum}\brsout (m|\paystate).
    \]
    Moreover, condition \eqref{eq:br not pivotal} is equivalent to $\brsout(\paystate-1|\paystate) = \brsout(\paystate|\paystate) = 0$. Therefore, 
    \[
    \out\left(\sum_{\playerb\neq\player}\act_\playerb \geq \paystate-1 \text{ and } \act_\player=1\right) = \sum_\paystate \payprior(\paystate)\sum_{m\geq \paystate+1}\frac{m}{\Playernum}\brsout (m|\paystate),
    \]
    and
    \[
    \out\left(\sum_{\playerb}\act_\playerb \geq \paystate\right) = \sum_\paystate \payprior(\paystate)\sum_{m\geq \paystate+1}\brsout (m|\paystate).
    \]
    Hence, letting
    \[
    \func(\brsout) = \frac{\sum_\paystate \payprior(\paystate)\sum_{m\geq \paystate+1}m\brsout (m|\paystate)}{\sum_\paystate \payprior(\paystate)\sum_{m=1}^{\Playernum}m\brsout (m|\paystate)},
    \]
    we can write the program on the right hand side of \eqref{eq:2023-05-16a} as
    \begin{equation}\label{eq:2023-05-17a}
    \begin{split}
    \min_{\brsout: \Paystate \rightarrow \Delta(\{0,\ldots,\Playernum\})} & 
    \func(\brsout)
    \\ \text{s.t.} \quad
    & \sum_\paystate \payprior(\paystate)\sum_{m\geq \paystate+1}\brsout (m|\paystate) = \frac{\brcost}{1+\brext}, 
    \\
    & \brsout(\paystate-1|\paystate) = \brsout(\paystate|\paystate) = 0 \text{ for all }\paystate.
    \end{split}
    \end{equation}
    Since the constraint set is compact and the objective continuous, the above program admits a solution, $\brsout^*$. We now use perturbation-based arguments to show $\brsout^*$ must satisfy a few properties:
    \begin{enumerate}
        \item $\brsout^*(m|\paystate)=0$ whenever $m \notin \{\paystate-2,\paystate+1\}$: if $\brsout^*(m|\paystate)>0$ for $m > \paystate+1$ (resp., $m < \paystate-2$), one can reduce the objective without violating the constraints by moving $\epsilon>0$ mass from $\brsout^*(m|\paystate)$ to $\brsout^*(\paystate+1|\paystate)$ (resp., $\brsout^*(\paystate-2|\paystate)$).
        
        \item If $\brsout^*(\paystate+1|\paystate)>0$, then $\brsout^*(\paystateb+1|\paystateb)=1$ for all $\paystateb < \paystate$: For a contradiction, suppose $\brsout^*(\paystate+1|\paystate)>0$, but $\brsout^*(\paystateb+1|\paystateb)<1$ for some $\paystateb < \paystate$. For every $\epsilon>0$, define the following perturbation $\brsout^{\epsilon}$ of $\brsout$: 
        \[
        \brsout^{\epsilon}(m|\paystatec) = 
        \begin{cases}
            \brsout^*(\paystate+1|\paystate) - \epsilon & \text{if }m=\paystate+1, \paystatec = \paystate, \\ 

            \brsout^*(\paystate-2|\paystate) + \epsilon & \text{if }m=\paystate-2, \paystatec = \paystate, \\

            \brsout^*(\paystateb+1|\paystateb) + \epsilon\frac{\payprior(\paystate)}{\payprior(\paystateb)} & \text{if }m=\paystateb+1, \paystatec = \paystateb, \\ 

            \brsout^*(\paystateb-2|\paystateb) - \epsilon\frac{\payprior(\paystate)}{\payprior(\paystateb)} & \text{if }m=\paystateb-2, \paystatec = \paystateb, \\

            \brsout^*(m|\paystatec) & \text{otherwise.}
        \end{cases}
        \]
        The contradiction assumption means $\brsout^\epsilon$ is feasible for all sufficiently small $\epsilon>0$. Direct computation shows
        \[
        \lim_{\epsilon \searrow 0}\frac{1}{\epsilon}\left(\func(\brsout^\epsilon) - \func(\brsout^*)\right) = \frac{\payprior(\paystate)(\paystateb - \paystate)}{\sum_{\paystatec} \payprior(\paystatec)\sum_{m=1}^{\Playernum}m\brsout^* (m|\paystatec)} <0 ,
        \]
        contradicting the optimality of $\brsout$.
        
        \item If $\brsout^*(\paystate-2|\paystate)>0$, then $\brsout^*(\paystateb-2|\paystateb)=1$ for all $\paystateb > \paystate$:  For a contradiction, suppose $\brsout^*(\paystate-2|\paystate)>0$, but $\brsout^*(\paystateb-2|\paystateb)<1$ for some $\paystateb > \paystate$. For every $\epsilon>0$, define the following perturbation $\brsout^{\epsilon}$ of $\brsout$: 
        \[
        \brsout^{\epsilon}(m|\paystatec) = 
        \begin{cases}
            \brsout^*(\paystate-2|\paystate) - \epsilon & \text{if }m=\paystate-2, \paystatec = \paystate, \\ 

            \brsout^*(\paystate+1|\paystate) + \epsilon & \text{if }m=\paystate+1, \paystatec = \paystate, \\

            \brsout^*(\paystateb-2|\paystateb) + \epsilon\frac{\payprior(\paystate)}{\payprior(\paystateb)} & \text{if }m=\paystateb-2, \paystatec = \paystateb, \\ 

            \brsout^*(\paystateb+1|\paystateb) - \epsilon\frac{\payprior(\paystate)}{\payprior(\paystateb)} & \text{if }m=\paystateb+1, \paystatec = \paystateb, \\

            \brsout^*(m|\paystatec) & \text{otherwise.}
        \end{cases}
        \]
        The contradiction assumption means $\brsout^\epsilon$ is feasible for all sufficiently small $\epsilon>0$. Direct computation shows
        \[
        \lim_{\epsilon \searrow 0}\frac{1}{\epsilon}\left(\func(\brsout^\epsilon) - \func(\brsout^*)\right) = \frac{\payprior(\paystate)(\paystate-\paystateb)}{\sum_{\paystatec} \payprior(\paystatec)\sum_{m=1}^{\Playernum}m\brsout (m|\paystatec)} <0,
        \]
        contradicting the optimality of $\brsout^*$.        
    \end{enumerate}
    The above conditions imply the optimal $\brsout^*$ admits a cutoff $\tilde\paystate$ such that $\brsout^*(\paystate+1|\paystate)=1$ for all $\paystate<\tilde\paystate$, $\brsout^*(\paystate-2|\paystate)=1$ for all $\paystate>\tilde\paystate$, and $\brsout^*(\{\tilde\paystate+1,\tilde\paystate-2\}|\tilde\paystate) =1$. Then, the constraint 
    \[
    \sum_\paystate \payprior(\paystate)\sum_{m\geq \paystate+1}\brsout (m|\paystate) = \frac{\brcost}{1+\brext}
    \] pins down the optimum: we must have $\tilde\paystate=\paystate^*$, and  
    \[
    \brsout^*(\paystate^*+1|\paystate^*) = \frac{1}{\payprior(\paystate^*)}\left(\frac{\brcost}{1+\brext} - \ppcdf(\paystate^*-1)\right).
    \]
    Therefore, the inequality \eqref{eq:2023-05-16b} becomes
    \begin{align*}
    \frac{\brcost}{1+\brext} < \func(\brsout^*) 
    & = \frac{\sum_\paystate \payprior(\paystate)\sum_{m\geq \paystate+1}m\brsout^* (m|\paystate)}{\sum_\paystate \payprior(\paystate)\sum_{m=1}^{\Playernum}m\brsout^* (m|\paystate)} \\
    & = \frac{\ppcdf(\paystate^*)\bbE[\paystate+1|\paystate\leq \paystate^*] - \left(\ppcdf(\paystate^*) - \frac{\brcost}{1+\brext}\right)(\paystate^*+1) 
    }
    {
    \bbE[\paystate] + \frac{\brcost}{1+\brext} -2\left(1-\frac{\brcost}{1+\brext}\right)
    }.
    \end{align*}
    Rearranging the above equation gives \eqref{eq:2023-05-16a}.
\end{proof}

Finally, we prove Claim~\ref{claim:br two states char} by specializing Claim~\ref{claim:br general welfare difference inequality} to two states.

\begin{proof}[Proof of Claim~\ref{claim:br two states char}]
We begin the proof by explicitly stating the implication of \eqref{eq:2023-05-16a} for the binary  case. In particular, we show $\Welfen < \Welfex$ if and only if one of the following two conditions hold:
\begin{enumerate}[(i)]
    \item $\frac{\brcost}{1+\brext} > \payprior(\minpaystate)$ and $\frac{\brcost}{1+\brext} > \frac{1}{3}\payprior(\minpaystate)(\maxpaystate-\minpaystate)$.
    \item $\frac{\brcost}{1+\brext} \leq \payprior(\minpaystate)$ and $\frac{\brcost}{1+\brext} < 1 - \frac{1}{3}\payprior(\maxpaystate)(\maxpaystate-\minpaystate).$
\end{enumerate}
To prove the above, we consider two cases, depending on the value of $\paystate^*$:
\begin{itemize}
    \item \emph{Case 1:} $\brcost/(1+\brext) > \payprior(\minpaystate)$. Then $\paystate^*=\maxpaystate$, and the inequality \eqref{eq:2023-05-16a} specializes to
\[
\maxpaystate - \bbE[\paystate] < \frac{\brcost}{1+\brext}\left(3 - \frac{3\brcost}{1+\brext} + \maxpaystate - \bbE[\paystate]\right).
\]
Substituting $\maxpaystate - \bbE[\paystate] = \payprior(\minpaystate)(\maxpaystate-\minpaystate)$ and rearranging gives
\[
\left(1-\frac{\brcost}{1+\brext}\right)\payprior(\minpaystate)(\maxpaystate-\minpaystate) <
3\frac{\brcost}{1+\brext}\left(1-\frac{\brcost}{1+\brext}\right),
\]
which is equivalent to
\[
\frac{1}{3}\payprior(\minpaystate)(\maxpaystate-\minpaystate) < \frac{\brcost}{1+\brext}.
\]
Thus, we have established (i) is sufficient for $\Welfen < \Welfex$, and necessary if $\payprior(\minpaystate)\geq \frac{\brcost}{1+\brext}$. 

\item \emph{Case 2:} Suppose now $\brcost/(1+\brext) \leq \payprior(\minpaystate)$. Then $\paystate^* = \minpaystate$. Thus, the inequality \eqref{eq:2023-05-16a} is now
\[
0 < \frac{\brcost}{1+\brext}\left(3 - \frac{3\brcost}{1+\brext} + \minpaystate - \bbE[\paystate]\right).
\]
Note $\minpaystate - \bbE[\paystate] = -\payprior(\maxpaystate)(\maxpaystate-\minpaystate)$. Therefore, the above inequality is equivalent to
\[
\frac{\brcost}{1+\brext} < 1 - \frac{1}{3}\payprior(\maxpaystate)(\maxpaystate-\minpaystate).
\]
Hence, (ii) is sufficient for $\Welfen < \Welfex$, and necessary if $\payprior(\minpaystate)\leq\frac{\brcost}{1+\brext}$.
\end{itemize}

Next, we argue that a violation of one of the claim's conditions implies that either (i) or (ii) above hold. Suppose first $\maxpaystate-\minpaystate<3$. In this case, $\frac{1}{3}\payprior(\minpaystate)(\maxpaystate-\minpaystate) < \payprior(\minpaystate)$, and so (i) holds whenever $\frac{\brcost}{1+\brext} > \payprior(\minpaystate)$. If $\frac{\brcost}{1+\brext} \leq \payprior(\minpaystate)$, then (ii) holds, because 
\[
1 - \frac{1}{3}\payprior(\maxpaystate)(\maxpaystate-\minpaystate) > 1-\payprior(\maxpaystate) = \payprior(\minpaystate) \geq \frac{\brcost}{1+\brext}.
\]
Suppose now $\maxpaystate-\minpaystate\geq 3$, but \eqref{eq:br binary welfare diff condition} fails. Then one of the following inequality chains must hold: either  
\[
\frac{\brcost}{1+\brext} > \frac{1}{3}(\maxpaystate-\minpaystate)(1-\payprior(\maxpaystate)) =\frac{1}{3}(\maxpaystate-\minpaystate)\payprior(\minpaystate) \geq  \payprior(\minpaystate),
\]
or
\[
\frac{\brcost}{1+\brext} < 1-\frac{1}{3}(\maxpaystate-\minpaystate)\payprior(\maxpaystate) \leq 1- \payprior(\maxpaystate) = \payprior(\minpaystate).
\]
Either way, $\Welfen < \Welfex$ holds: the first inequality chain implies (i), whereas the second inequality chain implies (ii).

To conclude the proof, we show that the claim's condition must hold if neither (i) nor (ii) hold. Suppose first that $\frac{\brcost}{1+\brext}>\payprior(\minpaystate)$, but (i) fails. Then 
\[
\frac{1}{3}(\maxpaystate-\minpaystate)(1-\payprior(\maxpaystate)) = \frac{1}{3}(\maxpaystate-\minpaystate)\payprior(\minpaystate) \geq \frac{\brcost}{1+\brext}>\payprior(\minpaystate),
\]
meaning $\maxpaystate-\minpaystate\geq 3$, and the right inequality in \eqref{eq:br binary welfare diff condition} holds. For the left inequality, note that
\[
1-\frac{1}{3}(\maxpaystate-\minpaystate)\payprior(\maxpaystate) \leq 1-\payprior(\maxpaystate) = \payprior(\minpaystate) < \frac{\brcost}{1+\brext}.
\]
Suppose now $\frac{\brcost}{1+\brext}\leq \payprior(\minpaystate)$, but (ii) fails. Then,
\[
1-\payprior(\maxpaystate) \geq \frac{\brcost}{1+\brext} \geq 1 - \frac{1}{3}(\maxpaystate-\minpaystate)\payprior(\maxpaystate).
\]
The right inequality above delivers the left inequality in \eqref{eq:br binary welfare diff condition}. Moreover, the implied inequality between the left most expression and the right most expression implies 
\[
\frac{1}{3}(\maxpaystate-\minpaystate)\payprior(\maxpaystate) \geq \payprior(\maxpaystate),
\]
and so $\maxpaystate-\minpaystate \geq 3$. Finally, to get the right inequality in \eqref{eq:br binary welfare diff condition}, notice that
\[
\frac{\brcost}{1+\brext} \leq 1-\payprior(\maxpaystate) \leq \frac{1}{3}(\maxpaystate-\minpaystate)(1-\payprior(\maxpaystate)),
\]
where the last inequality holds because $\maxpaystate-\minpaystate \geq 3$.
\end{proof}

\bibliography{references}

\newpage

\setcounter{page}{1} 

\begin{center}
\huge{Online Appendix}
\end{center}

\section{An Example Where Separation Binds}\label{sec:An Example Where the Separation constraint bind}

Next we present a base game where (i) the set of separated BCEs is \emph{nowhere dense} in the set of BCEs, (ii) for every player $i$, the utility function $u_i:A\rightarrow \mathbb{R}$ is \emph{one-to-one} (i.e., no ties in the matrix below), and (iii) no action is weakly dominated. 

\begin{center}
\begin{game}{3}{3}   &   $a_2$       &   $b_2$    &    $c_2$ \\
     $a_1$           &   $8,8$      &   $3,7$      &   $2,6$   \\ 
     $b_1$          &   $7,3$     &   $5,1$      &   $0,5$   \\  
     $c_1$           &   $6,2$     &   $1,4$      &   $4,0$  \\
\end{game}
\end{center}

The game ($\Theta$ is a sigleton) has one pure Nash equilibrium and one mixed Nash equilibrium:
\[
(a_1,a_2)\quad\text{and}\quad\left(\frac{1}{2}b_1+\frac{1}{2}c_1,\frac{1}{2}b_2+\frac{1}{2}c_2\right).
\]
The set of BCE is the set of convex combinations of the two Nash equilibria: for $t\in [0,1]$,
\[
p^t = t (a_1,a_2) + (1-t)\left(\frac{1}{2}b_1+\frac{1}{2}c_1,\frac{1}{2}b_2+\frac{1}{2}c_2\right).
\]

The game has only two separated BCE, namely, the two Nash equilibria. Indeed, for every $t\in (0,1)$ and every player $i$, the action recommendations $a_i$ and $b_i$ (or $c_i$) induce distinct posterior beliefs about the action of the opponent:
\[
p^t_{a_i}(a_j)=1 \quad\text{while}\quad p^t_{b_i} (b_j)=p^t_{b_i} (c_j)=\frac{1}{2}.
\]
Yet, $a_i$ is best response to the belief induced by $b_i$:
\[
\frac{1}{2}u_i(a_i,b_j)+\frac{1}{2}u_i(a_i,c_j)=\frac{5}{2}=\frac{1}{2}u_i(b_i,b_j)+\frac{1}{2}u_i(b_i,c_j).
\]
Thus, $p^t$ is not separated: player $i$ does not have an incentive to acquire information about the correlation device; they could just play $a_i$ without acquiring any information.

\section{Arbitrary Information Technologies}\label{sec:Arbitrary information technologies}

In this section, we characterize the predictions attainable as one ranges across \emph{all} information technologies. In particular, we do not require the information technology to be flexible or monotone. We also show it is without loss to require the technology to be flexible, and costs to be weakly monotone. Formally, a cost function $\icost_\player$ is \textbf{weakly monotone} if less informative experiment are weakly cheaper to acquire: if $\exper_\player, \experb_\player \in \Exper_{\player}$ are such that $\exper_\player \succsim \experb_\player$, then $\icost_\player(\exper_\player) \geq \icost_\player(\experb_\player)$.

\begin{samepage}
\begin{proposition}\label{pro:all_tech}
Fix a base game $\BGame$. An information technology $\IT$ exists that induces the outcome-value pair $(\out,\payvector)$ in an equilibrium of $(\BGame,\IT)$ if and only if 
\begin{enumerate}[(i)]
\item $\out$ is a BCE, and 
\item for every $\player\in \Player$, $\payvector_\player \in [\noinfoval_{\player}(\out),\grossval_{\player}(\out)].$
\end{enumerate}
In addition, for every player $\player$, one can choose $\Exper_\player$ flexible and $\icost_\player$ weakly monotone.
\end{proposition}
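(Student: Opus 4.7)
The plan is to prove necessity by directly adapting the arguments for Theorem~\ref{thm:mon_tech}, and to prove sufficiency by a single mediator-style construction that uses just enough cost to extract the gap $\grossval_\player(\out)-\payvector_\player$ from each player.

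For necessity, I would fix an equilibrium $(\exper,\aplan)$ of some $(\BGame,\IT)$ inducing $(\out,\payvector)$, and first observe that obedience (and hence $\out\in\BCE$) follows exactly as in the proof of Lemma~\ref{lem:isasBCE}: conditional best responses to signals aggregate to conditional best responses to action recommendations. For the value bounds, the upper bound $\payvector_\player\le \grossval_\player(\out)$ is immediate since costs are nonnegative and the outcome-generated gross payoff is $\grossval_\player(\out)$. The lower bound $\payvector_\player\ge \noinfoval_\player(\out)$ uses the normalization that some experiment has zero cost: the player can always deviate to that experiment and play her best uninformed action against the marginal of $\out$, which yields $\noinfoval_\player(\out)$; equilibrium optimality then bounds $\payvector_\player$ from below.

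For sufficiency, given a BCE $\out$ and $\payvector\in\prod_\player[\noinfoval_\player(\out),\grossval_\player(\out)]$, I would use a canonical representation in the spirit of Section~A.3: take $\Corstate=\Act$ with $\corprior(\cdot\mid\paystate)=\out(\cdot,\paystate)/\payprior(\paystate)$, let $\exper_\player^*$ reveal player $\player$'s component $\corstate_\player$, and let $\aplan_\player^*$ follow the signal. The key step is to define the cost function. For each experiment $\experb_\player\in\Delta(\Signal_\player)^{\Corstate\times\Paystate}$, let $V_\player(\experb_\player)$ be the maximum gross payoff player $\player$ can achieve from $\experb_\player$ given the other players' strategies $(\exper_{-\player}^*,\aplan_{-\player}^*)$, and set
\[
\icost_\player(\experb_\player)=\max\{0,\, V_\player(\experb_\player)-\payvector_\player\}.
\]
Taking $\Exper_\player=\Delta(\Signal_\player)^{\Corstate\times\Paystate}$ makes the set of feasible experiments (trivially) flexible, and since $V_\player$ is Blackwell-monotone, the cost function is weakly monotone. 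By obedience of $\out$, the best action plan at $\exper_\player^*$ is to follow recommendations, so $V_\player(\exper_\player^*)=\grossval_\player(\out)$ and $\icost_\player(\exper_\player^*)=\grossval_\player(\out)-\payvector_\player\ge 0$; for any uninformative experiment $V_\player=\noinfoval_\player(\out)\le\payvector_\player$ and so cost is zero, satisfying the normalization.

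The remaining verification is that $(\exper^*,\aplan^*)$ is an equilibrium delivering $\payvector$. The equilibrium net payoff is $\grossval_\player(\out)-\icost_\player(\exper_\player^*)=\payvector_\player$, and any deviation $\experb_\player$ yields a net payoff of at most $V_\player(\experb_\player)-\icost_\player(\experb_\player)=\min\{V_\player(\experb_\player),\payvector_\player\}\le\payvector_\player$. The main obstacle I anticipate is exactly this cost calibration: naively pricing all strict garblings of $\exper^*_\player$ at zero would allow partially informative, strictly profitable deviations; the trick is that $V_\player(\cdot)$ is itself Blackwell-monotone, so the prescription $\icost_\player(\experb_\player)=\max\{0,V_\player(\experb_\player)-\payvector_\player\}$ simultaneously deters every deviation, pins the equilibrium payoff at $\payvector_\player$, and respects both flexibility and weak monotonicity without requiring strict monotonicity or the separation constraint that Theorem~\ref{thm:mon_tech} forces under rational inattention.
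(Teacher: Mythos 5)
Your proof is correct and follows essentially the same route as the paper's: necessity via nonnegative costs plus a deviation to the normalized zero-cost experiment, and sufficiency via a direct (canonical) representation of the BCE with a cost equal to a weakly increasing function of the experiment's optimal gross value $V_\player(\cdot)$, whose Blackwell-monotonicity delivers weak monotonicity of the cost. The only difference is in the calibration: the paper uses the affine cost $\lambda_\player\left(V_\player(\experb_\player)-\noinfoval_\player(\out)\right)$ and restricts $\Exper_\player$ to garblings of the revealing experiment (which is needed there, since under that affine cost an experiment more informative than the revealing one could be a profitable deviation), whereas your kinked cost $\max\{0,V_\player(\experb_\player)-\payvector_\player\}$ lets you keep all experiments feasible while still pinning the equilibrium value at $\payvector_\player$ and deterring every deviation.
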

\end{samepage}

\begin{proof}
``If.'' Let $(\out,\payvector)$ be an outcome-value pair such that $\out$ is a BCE and, for every $\player\in\Player$, $\payvector_\player\in [\noinfoval_\player(\out),\grossval_\player(\out)]$. Since $\out$ is a BCE, by \cite{bergemann2016bayes} there exist an information structure $\mathcal{S}=(\Corstate,\corprior,(\Signal_\player,\exper_\player)_{\player\in\Player})$ and a profile of action plans $\aplan=(\aplan_\player)_{\player\in\Player}$ such that $\out$ is the outcome of $(\exper,\aplan)$, and for every player $\player$, $\aplan_\player$ maximizes 
\begin{equation}\label{eq:optimality_BCE}
\sum_{\act,\signal,\corstate,\paystate} \util_\player(\act,\paystate) \left(\aplan^\prime_{\player}(\act_\player|\signal_\player)\exper_\player(\signal_{\player}|\corstate,\paystate)\prod_{\playerb\neq \player} \aplan_{\playerb}(\act_{\playerb}|\signal_\playerb)\exper_{\playerb}(\signal_{\playerb}|\corstate,\paystate)\right)\corprior(\corstate|\paystate)\payprior(\paystate)
\end{equation}
over all $\aplan^\prime_\player\in \Aplan_\player$.

For every player $\player$, let $\Exper_\player=\{\experb_\player:\exper_\player\succeq \experb_\player\}$. In addition, take $\lambda_\player\in [0,1]$ such that 
\[
\payvector_\player=\lambda_\player\noinfoval_\player(\out) +(1-\lambda_\player)\grossval_\player(\out).
\]
For every $\experb_\player\in\Exper_\player$, set  $\icost_\player(\experb_\player)$ equal to
\[
\lambda_\player\left[\max_{\aplan_\player^\prime}\sum_{\act,\signal,\corstate,\paystate} \util_\player(\act,\paystate) \left(\aplan^\prime_{\player}(\act_\player|\signal_\player)\experb_\player(\signal_{\player}|\corstate,\paystate)\prod_{\playerb\neq \player} \aplan_{\playerb}(\act_{\playerb}|\signal_\playerb)\exper_{\playerb}(\signal_{\playerb}|\corstate,\paystate)\right)\corprior(\corstate|\paystate)\payprior(\paystate)-\noinfoval_\player(\out)\right].
\]
Notice that $\Exper_\player$ is flexible and $\icost_\player$ is weakly monotone. 

It follows from (\ref{eq:optimality_BCE}) that
$
\icost_\player(\exper_\player)=\lambda_\player\left(\grossval_\player(\out)-\noinfoval_\player(\out)\right),
$
which in turn implies that
\[
\sum_{\act,\signal,\corstate,\paystate} \util_\player(\act,\paystate) \left(\prod_{\playerb} \aplan_{\playerb}(\act_{\playerb}|\signal_\playerb)\exper_{\playerb}(\signal_{\playerb}|\corstate,\paystate)\right)\corprior(\corstate|\paystate)\payprior(\paystate)-\icost_\player(\exper_\player)=\payvector_\player.
\]
We also see that for every $\experb_\player\in\Exper_\player$
\begin{align*}
    & \sum_{\act,\signal,\corstate,\paystate} \util_\player(\act,\paystate) \left(\prod_{\playerb} \aplan_{\playerb}(\act_{\playerb}|\signal_\playerb)\exper_{\playerb}(\signal_{\playerb}|\corstate,\paystate)\right)\corprior(\corstate|\paystate)\payprior(\paystate)-\icost_\player(\exper_\player)\\
=\, & \max_{\aplan^\prime_\player}\sum_{\act,\signal,\corstate,\paystate} \util_\player(\act,\paystate)\left( \aplan^\prime_{\player}(\act_\player|\signal_\player)\exper_\player(\signal_{\player}|\corstate,\paystate)\prod_{\playerb\neq \player} \aplan_{\playerb}(\act_{\playerb}|\signal_\playerb)\exper_{\playerb}(\signal_{\playerb}|\corstate,\paystate)\right)\corprior(\corstate|\paystate)\payprior(\paystate)-\icost_\player(\exper_\player)\\
=\, & \lambda_\player \noinfoval_\player(\out)+(1-\lambda_\player)\max_{\aplan^\prime_\player}\sum_{\act,\signal,\corstate,\paystate} \util_\player(\act,\paystate) \left(\aplan^\prime_{\player}(\act_\player|\signal_\player)\exper_\player(\signal_{\player}|\corstate,\paystate)\prod_{\playerb\neq \player} \aplan_{\playerb}(\act_{\playerb}|\signal_\playerb)\exper_{\playerb}(\signal_{\playerb}|\corstate,\paystate)\right)\corprior(\corstate|\paystate)\payprior(\paystate) \\  
\geq \, & \lambda_\player \noinfoval_\player(\out)+(1-\lambda_\player)\max_{\aplan^\prime_\player}\sum_{\act,\signal,\corstate,\paystate} \util_\player(\act,\paystate) \left(\aplan^\prime_{\player}(\act_\player|\signal_\player)\experb_\player(\signal_{\player}|\corstate,\paystate)\prod_{\playerb\neq \player} \aplan_{\playerb}(\act_{\playerb}|\signal_\playerb)\exper_{\playerb}(\signal_{\playerb}|\corstate,\paystate)\right)\corprior(\corstate|\paystate)\payprior(\paystate) \\
=\, & \max_{\aplan^\prime_\player}\sum_{\act,\signal,\corstate,\paystate} \util_\player(\act,\paystate) \left(\aplan^\prime_{\player}(\act_\player|\signal_\player)\experb_\player(\signal_{\player}|\corstate,\paystate)\prod_{\playerb\neq \player} \aplan_{\playerb}(\act_{\playerb}|\signal_\playerb)\exper_{\playerb}(\signal_{\playerb}|\corstate,\paystate)\right)\corprior(\corstate|\paystate)\payprior(\paystate)-\icost_\player(\experb_\player),
\end{align*}
where the first equality follows from (\ref{eq:optimality_BCE}) and the weak inequality from $\exper_\player\succeq\experb_\player$. We conclude $(\exper,\aplan)$ is an equilibrium of $(\BGame,\IT)$ with $\IT:=(\Corstate, \corprior, (\Signal_\player,\Exper_\player,\icost_\player)_{\player\in\Player})$; in addition, $(\out,\payvector)$ is the outcome-value pair corresponding to $(\exper,\aplan)$.

``Only if.'' Let $(\out,\payvector)$ be the outcome-value pair of an equilibrium $(\exper,\aplan)$ of an information acquisition game $(\BGame,\IT)$, with $\IT=(\Corstate, \corprior, (\Signal_\player,\Exper_\player,\icost_\player)_{\player\in\Player})$. Define the information structure $\mathcal{S}=(\Corstate, \corprior, (\Signal_\player,\exper_\player)_{\player\in\Player})$. Since $(\exper,\aplan)$ is an equilibrium of $(\BGame,\IT)$, $\aplan$ is an equilibrium of $(\BGame,\mathcal{S})$. It follows from \cite{bergemann2016bayes} that $\out$ is a BCE.

For every player $i$, $\icost_\player(\exper_\player)\geq 0$, which implies that $\payvector_\player\leq \grossval_\player(\out)$. In addition, by hypothesis there exists an experiment $\exper_\player^\prime$ such that  $\icost_\player(\exper_\player^\prime) = 0$. Thus, since $(\exper_\player,\aplan_\player)$ is a best response to $(\exper_{-\player},\aplan_{-\player})$, we have that
\[
\payvector_\player  \geq \max_{\aplan^\prime_\player}\sum_{\act,\signal,\corstate,\paystate} \util_\player(\act,\paystate) \aplan^\prime_{\player}(\act_\player|\signal_\player)\exper_\player(\signal_{\player}|\corstate,\paystate)\prod_{\playerb\neq \player} \aplan_{\playerb}(\act_{\playerb}|\signal_\playerb)\exper_{\playerb}(\signal_{\playerb}|\corstate,\paystate)\corprior(\corstate|\paystate)\payprior(\paystate)
 \geq \noinfoval_\player(\out).
\]
We conclude that $\payvector_\player\in [\grossval_\player(\out),\noinfoval_\player(\out)]$.

\end{proof}

\section{Strict BCE: Single-Agent Settings}\label{sec:generic_singla_agent}

A BCE $\out$ is \textbf{strict} if all $\player\in\Player$, $\act_\player\in\supp_\player(\out)$, and $\actb_\player \in \Act_\player$ with $\actb_\player\neq \act_\player$,
\[
\sum_{\act_{-\player},\paystate}\left(\util_{\player}(\act_{\player},\act_{-\player},\paystate) - \util_{\player}(\actb_{\player},\act_{-\player},\paystate) \right)\out(\act_{\player},\act_{-\player},\paystate) > 0. 
\]
In the main text, discussing Theorem \ref{thm:genericity}, we mentioned the following result:  
\begin{proposition}\label{pro:strict_single}
Let $\Player=\{\player\}$ be a singleton. For generic $\util_\player$, the set of strict BCE is dense in the BCE set.
\end{proposition}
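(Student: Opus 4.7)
The plan is to mirror the proof of Theorem~\ref{thm:genericity}, replacing the sBCE correspondence with the strict-BCE correspondence $\util_\player \mapsto \text{SB}(\util_\player)$. The argument has two ingredients.

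The first is generic continuity of $\util_\player \mapsto \cl(\text{SB}(\util_\player))$. Strict BCEs are defined by finitely many polynomial strict inequalities in $(\util_\player,\out)$, so the graph of this correspondence is semi-algebraic by Tarski--Seidenberg, exactly as in Lemma~\ref{lem:semi_algebraic}. The correspondence is therefore continuous outside a closed Lebesgue-null set $N \subset \mathbb{R}^{\Act_\player\times\Paystate}$.

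The second, and main, ingredient is a poking lemma: for every $\util_\player$, $\out \in \BCE(\util_\player)$, and $\epsilon > 0$, there exists $\util_\player'$ with $\|\util_\player - \util_\player'\| \leq \epsilon$ such that $\out \in \cl(\text{SB}(\util_\player'))$. I would build $\util_\player'$ in two steps. First, I apply Lemma~\ref{lem:poking} in its single-agent form to obtain $\util_\player'$ close to $\util_\player$ with $\out \in \sBCE(\util_\player')$; inspection of the proof of that lemma actually establishes the stronger inclusion $\BR'(\out_{\act_\player}) \subseteq \{\actb_\player \in \supp_\player(\out) : \out_{\actb_\player} = \out_{\act_\player}\}$ for every $\act_\player \in \supp_\player(\out)$, where $\BR'$ denotes best responses under $\util_\player'$. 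Second, keeping $\util_\player'$ fixed, I perturb $\out$ into a sequence $\outb^n \to \out$ of strict BCEs for $\util_\player'$ by iteratively splitting coincident posteriors: for a pair $\act_\player, \actb_\player \in \supp_\player(\out)$ with $\out_{\act_\player} = \out_{\actb_\player}$, I shift $\epsilon\, g(\paystate)$ of mass from $\actb_\player$ to $\act_\player$ at each $\paystate$, where $g:\Paystate\to\mathbb{R}$ is chosen so that obedience becomes strict for every pair of actions in $\BR'(\out_{\act_\player})$. For generic $\util_\player'$ this system of strict linear inequalities on $g$ is solvable by a Farkas-type argument, since no action's utility vector is a convex combination of the others. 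Iterating across coincident pairs yields the desired sequence $\outb^n \in \text{SB}(\util_\player')$.

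Combining the two ingredients yields $\cl(\text{SB}(\util_\player)) \supseteq \BCE(\util_\player)$ for every $\util_\player \notin N$, which gives the density. The hard part will be the second step of the poking lemma: verifying that the iterated pairwise mass-shift preserves obedience across all constraints simultaneously when three or more actions share a common posterior. I plan to handle the higher-order ties by induction on $|\BR'(\out_{\act_\player})|$, using the Lemma~\ref{lem:poking} inclusion to localize each perturbation so that it leaves the posteriors of unrelated actions undisturbed.
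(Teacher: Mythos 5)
Your route is genuinely different from the paper's, and it has a real gap at exactly the step you flag as ``the hard part.'' The problem is the strict-BCE poking lemma. In the construction of Lemma~\ref{lem:poking}, two actions $\act_\player,\actb_\player\in\supp_\player(\out)$ with $\out_{\act_\player}=\out_{\actb_\player}$ receive the \emph{identical} perturbation, $g_{\act_\player}=g_{\actb_\player}$, because the functions $g$ depend only on the posterior cell. Hence if $\util_\player(\act_\player,\cdot)=\util_\player(\actb_\player,\cdot)$ in the original game, the perturbed payoff vectors under $\util_\player'$ remain identical, no belief makes either action a unique best response, neither action can lie in the support of \emph{any} strict BCE of $\util_\player'$, and therefore no sequence of strict BCEs of $\util_\player'$ can converge to an $\out$ that assigns both actions positive probability: $\out\notin\cl(\mathrm{SB}(\util_\player'))$ for the $\util_\player'$ you have built. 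The appeal to ``generic $\util_\player'$'' cannot repair this, because in a poking lemma $\util_\player'$ is the specific output of a construction applied to an \emph{arbitrary} $\util_\player$, not a free variable. The fix would be a further tie-breaking perturbation $h$ with $\sum_\paystate h(\paystate)\out_{\act_\player}(\paystate)=0$ but $h\neq 0$, which keeps the tied actions exactly indifferent at the common posterior while separating their payoff vectors, followed by a verification that the subsequent mass-shifts respect all remaining strict inequalities simultaneously. That is precisely the bookkeeping your last paragraph defers, and it is where the argument currently stops being a proof.

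The paper avoids all of this: in the single-agent case it does not use the semi-algebraic continuity machinery of Theorem~\ref{thm:genericity} at all. It first shows (Lemma~\ref{lem:minmax_single}, via the minimax theorem) that an action admits a belief to which it is the \emph{unique} best response if and only if it is not weakly dominated by a mixture of the other actions, and then (Lemma~\ref{lem:dominance_single}, via a rank argument) that generically any action so weakly dominated is in fact strictly dominated. For generic $\util_\player$ one can therefore decompose the full-support prior $\payprior$ as a convex combination of beliefs $\mu_{\act_\player}$, one for each non-strictly-dominated action, at which that action is the unique best response; this produces a single strict BCE $\out$ whose support contains the support of every BCE, and the mixtures $s\outb+(1-s)\out$ with $s\in(0,1)$ are strict BCEs dense in the BCE set. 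Given how short that direct argument is, I would abandon the poking-plus-continuity route here.
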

We expect the result to be known in the literature. However, we could not find a good reference. Thus, next we provide a self-contained proof. The proof relies on two lemmas on dominated actions. A mixed action $\mact_\player\in\Delta(\Act_\player)$ \textbf{weakly dominates} a pure action $\act_\player\in\Act_\player$ if for all $\act_{-\player}\in\Act_{-\player}$ and $\paystate\in\Paystate$,
\[
\sum_{\actb_\player}\util_\player(\actb_\player,\act_{-\player},\paystate)\mact_\player(\actb_\player)\geq \util_\player(\act_\player,\act_{-\player},\paystate). 
\]
\begin{lemma}\label{lem:minmax_single}
The following statements are equivalent:
\begin{itemize}
    \item[(i)] There is no belief $\mu_{\act_\player}\in\Delta(\Act_{-\player}\times\Paystate)$ for which $\act_\player$ is the unique best response.
    \item[(ii)] There is a mixed action $\mact_\player\in\Delta(\Act_\player\setminus\{\act_\player\})$ that weakly dominates $\act_\player$.
\end{itemize}
\end{lemma}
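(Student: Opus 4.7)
My plan is to prove the two directions separately, with the nontrivial content concentrated in $(i)\Rightarrow(ii)$, which I would handle via a minimax/linear programming duality argument (equivalently, Pearce's lemma for a single-agent setting with correlated beliefs over $\Act_{-\player}\times\Paystate$).

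The direction $(ii)\Rightarrow(i)$ is straightforward: suppose $\mact_\player\in\Delta(\Act_\player\setminus\{\act_\player\})$ weakly dominates $\act_\player$. Then for any belief $\mu_{\act_\player}\in\Delta(\Act_{-\player}\times\Paystate)$, integrating the pointwise inequality yields $\sum_{\actb_\player}\mact_\player(\actb_\player)\sum_{\act_{-\player},\paystate}\util_\player(\actb_\player,\act_{-\player},\paystate)\mu_{\act_\player}(\act_{-\player},\paystate)\geq \sum_{\act_{-\player},\paystate}\util_\player(\act_\player,\act_{-\player},\paystate)\mu_{\act_\player}(\act_{-\player},\paystate)$. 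Since $\mact_\player$ is supported off $\act_\player$, at least one $\actb_\player\neq\act_\player$ in the support of $\mact_\player$ must yield an expected payoff at least as large as $\act_\player$, so $\act_\player$ cannot be the unique best reply to $\mu_{\act_\player}$.

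For $(i)\Rightarrow(ii)$, I would reformulate (i) as a statement about the value of a zero-sum auxiliary game. Define the payoff matrix $M(\actb_\player,(\act_{-\player},\paystate)):=\util_\player(\actb_\player,\act_{-\player},\paystate)-\util_\player(\act_\player,\act_{-\player},\paystate)$ indexed by rows $\actb_\player\in\Act_\player\setminus\{\act_\player\}$ and columns $(\act_{-\player},\paystate)\in\Act_{-\player}\times\Paystate$. Condition (i) says that for every mixed column strategy $\mu_{\act_\player}\in\Delta(\Act_{-\player}\times\Paystate)$ some row yields a nonnegative expected entry, i.e.\ $\max_{\actb_\player\neq\act_\player}\sum_{\act_{-\player},\paystate}M(\actb_\player,(\act_{-\player},\paystate))\mu_{\act_\player}(\act_{-\player},\paystate)\geq 0$. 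Taking the minimum over $\mu_{\act_\player}$, the lower value of the auxiliary zero-sum game is nonnegative. By the minimax theorem (applied to finite matrices), the upper value is also nonnegative, so there exists $\mact_\player\in\Delta(\Act_\player\setminus\{\act_\player\})$ such that $\sum_{\actb_\player}\mact_\player(\actb_\player)M(\actb_\player,(\act_{-\player},\paystate))\geq 0$ for every pure column $(\act_{-\player},\paystate)$. Unpacking the definition of $M$ gives exactly the pointwise inequality defining weak dominance of $\act_\player$ by $\mact_\player$.

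The only conceptual step is the appeal to minimax; the main thing to verify carefully is that we may restrict the dominating mixture to lie in $\Delta(\Act_\player\setminus\{\act_\player\})$ rather than $\Delta(\Act_\player)$ (which is automatic because $\act_\player$ contributes zero to $M$, so removing it from the support never decreases the value). I expect no genuine obstacle; everything else is a routine translation between best-response language and domination language.
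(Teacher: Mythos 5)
Your proposal is correct and follows essentially the same route as the paper: both directions are handled by rewriting (i) as the statement that the lower value of the auxiliary zero-sum game (rows indexed by $\Act_\player\setminus\{\act_\player\}$, columns by $\Act_{-\player}\times\Paystate$, entries given by payoff differences) is nonnegative, and then invoking the minimax theorem to extract the dominating mixture. The paper's proof is exactly this argument (with the sign convention flipped), so there is nothing further to reconcile.
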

\begin{proof}
Condition (i) can be rewritten as 
\[
\max_{\mu_\player\in \Delta(\Act_{-\player}\times\Paystate)}\min_{\actb_\player\in\Act_\player\setminus\{\act_\player\}} \sum_{\act_{-\player},\paystate} (\util_\player(\act_\player,\act_{-\player},\paystate)-\util_\player(\actb_\player,\act_{-\player},\paystate))\mu_\player(\act_{-\player},\paystate)\leq 0.
\]
Equivalently,
\[
\max_{\mu_\player\in \Delta(\Act_{-\player}\times\Paystate)}\min_{\mact_\player\in\Delta(\Act_\player\setminus\{\act_\player\})} \sum_{\act_{-\player},\paystate} (\util_\player(\act_\player,\act_{-\player},\paystate)-\util_\player(\actb_\player,\act_{-\player},\paystate))\mu_\player(\act_{-\player},\paystate)\mact_\player(\actb_\player)\leq 0.
\]
By the minimax theorem (e.g., \citealp[Corollary 37.3.2]{rockafellar1970convex}), the above inequality holds if and only if
\[
\min_{\mact_\player\in\Delta(\Act_\player\setminus\{\act_\player\})} \max_{\mu_\player\in \Delta(\Act_{-\player}\times\Paystate)} \sum_{\act_{-\player},\paystate} (\util_\player(\act_\player,\act_{-\player},\paystate)-\util_\player(\actb_\player,\act_{-\player},\paystate))\mu_\player(\act_{-\player},\paystate)\mact_\player(\actb_\player)\leq 0.
\]
Equivalently, 
\[
\min_{\mact_\player\in\Delta(\Act_\player\setminus\{\act_\player\})} \max_{\act_{-\player},\paystate} \sum_{\act_{-\player},\paystate} (\util_\player(\act_\player,\act_{-\player},\paystate)-\util_\player(\actb_\player,\act_{-\player},\paystate))\mact_\player(\actb_\player)\leq 0.
\]
which is another way of expressing condition (ii).\end{proof}

A mixed action $\mact_\player\in\Delta(\Act_\player)$ \textbf{strictly dominates} a pure action $\act_\player\in\Act_\player$ if for all $\act_{-\player}\in\Act_{-\player}$ and $\paystate\in\Paystate$,
\[
\sum_{\actb_\player}\util_\player(\actb_\player,\act_{-\player},\paystate)\mact_\player(\actb_\player)> \util_\player(\act_\player,\act_{-\player},\paystate). 
\]

\begin{lemma}\label{lem:dominance_single}
Let $\Player=\{\player\}$ be a singleton. For generic $\util_\player$, if an action $\act_\player$ is weakly dominated by some mixed action $\mact_\player\in\Delta(\Act_\player\setminus\{\act_\player\})$, then it is strictly dominated by some mixed action $\beta_\player\in \Delta(\Act_\player)$. 
\end{lemma}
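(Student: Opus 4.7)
The plan is to identify the ``bad'' utility profiles (those where the lemma fails) as a semi-algebraic subset of $\mathbb{R}^{\Act_\player\times\Paystate}$ of codimension at least one, then conclude by a dimension count. First, I would observe two reductions. (i) Any $\beta_\player\in\Delta(\Act_\player)$ that strictly dominates $\act_\player$ must place mass strictly less than one on $\act_\player$, and its renormalization to $\Delta(\Act_\player\setminus\{\act_\player\})$ remains a strict dominator; hence ``strictly dominated in $\Delta(\Act_\player)$'' is equivalent to ``strictly dominated in $\Delta(\Act_\player\setminus\{\act_\player\})$''. (ii) By the minimax theorem applied to $\max_{\beta}\min_\paystate\sum_{\actb}[\util_\player(\actb,\paystate)-\util_\player(\act_\player,\paystate)]\beta(\actb)$ over $\beta\in\Delta(\Act_\player\setminus\{\act_\player\})$---exactly parallel to the proof of Lemma~\ref{lem:minmax_single} but phrased with strict inequalities---$\act_\player$ admits no strict dominator if and only if $\act_\player$ is a best response to some prior $\lambda\in\Delta(\Paystate)$, i.e., $\sum_\paystate\lambda(\paystate)\util_\player(\actb,\paystate)\leq\sum_\paystate\lambda(\paystate)\util_\player(\act_\player,\paystate)$ for every $\actb\in\Act_\player$. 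The lemma therefore fails at $\util_\player$ precisely when some $\act_\player$ admits both a weak dominator $\mact_\player\in\Delta(\Act_\player\setminus\{\act_\player\})$ and a supporting prior $\lambda$.

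Second, whenever both exist, I would derive complementary slackness. Averaging the weak-dominance inequality $\sum_\actb\util_\player(\actb,\paystate)\mact_\player(\actb)\geq\util_\player(\act_\player,\paystate)$ with weights $\lambda(\paystate)$ and comparing with the best-response inequality summed with weights $\mact_\player(\actb)$ sandwiches $\sum_\paystate\lambda(\paystate)\util_\player(\act_\player,\paystate)$ between itself, forcing equality wherever $\lambda(\paystate)>0$ or $\mact_\player(\actb)>0$. With $B:=\supp(\mact_\player)$ and $E:=\supp(\lambda)$, both nonempty, this yields
\[
\sum_{\actb\in B}\util_\player(\actb,\paystate)\mact_\player(\actb)=\util_\player(\act_\player,\paystate)\ \forall\paystate\in E,\qquad\sum_{\paystate\in E}\lambda(\paystate)\util_\player(\actb,\paystate)=\sum_{\paystate\in E}\lambda(\paystate)\util_\player(\act_\player,\paystate)\ \forall\actb\in B.
\]

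Third, for each triple $(\act_\player,B,E)$ the set $S_{\act_\player,B,E}$ of utility profiles $\util_\player$ admitting some $(\mact_\player,\lambda)$ positive on $(B,E)$ and satisfying these equations is semi-algebraic. Treating the $|B|+|E|$ equations as linear in $\util_\player$ for fixed positive $(\mact_\player,\lambda)$, I claim the only linear dependency among the rows is the vector $(\lambda(\paystate),\mact_\player(\actb))$ itself: matching coefficients on columns indexed by $\util_\player(\act_\player,\paystate)$ with $\paystate\in E$ forces any dependency $(c_\paystate,d_\actb)$ to satisfy $c_\paystate=\lambda(\paystate)\sum_\actb d_\actb$, and matching coefficients on $\util_\player(\actb,\paystate)$ with $\actb\in B$, $\paystate\in E$ then forces $d_\actb=\mact_\player(\actb)\sum_\actb d_\actb$, so the dependency is one-dimensional. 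Hence the coefficient matrix has rank $|B|+|E|-1$ and the $\util_\player$-solution locus has codimension $|B|+|E|-1$ in $\mathbb{R}^{\Act_\player\times\Paystate}$. Sweeping over the $(|B|+|E|-2)$-dimensional parameter set $\mathrm{relint}\,\Delta(B)\times\mathrm{relint}\,\Delta(E)$, the projection $S_{\act_\player,B,E}$ has dimension at most $|\Act_\player|\cdot|\Paystate|-1$, hence Lebesgue measure zero. Semi-algebraic sets of strictly smaller dimension have closures of the same strictly smaller dimension, so their closures also have measure zero; a finite union over $(\act_\player,B,E)$ preserves this.

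The main obstacle is the rank-drop computation. A rank of $|B|+|E|$ or $|B|+|E|-1$ both yield a measure-zero bad set, but a rank of $|B|+|E|-2$ or less would make the parameter sweep fill the ambient space and destroy the argument. Pinning down that the rank is exactly $|B|+|E|-1$---with the unique dependency coming from the LP primal--dual symmetry and being precisely the pair $(\lambda,\mact_\player)$---is the essential calculation on which the whole proof rests.
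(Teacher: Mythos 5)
Your proof is correct, but it takes a genuinely different route from the paper's. The paper argues constructively: given a weak dominator $\mact_\player$ with support $\Act'_\player$ and equality set $\Paystate'\subseteq\Paystate$, the difference matrix $M(\actb_\player,\paystate)=\util_\player(\act_\player,\paystate)-\util_\player(\actb_\player,\paystate)$ on $\Act'_\player\times\Paystate'$ has linearly dependent rows (weighting by $\mact_\player$ annihilates them on $\Paystate'$), so the generic requirement that every such submatrix have full rank forces its rank to equal $|\Paystate'|$; the rows then span $\mathbb{R}^{\Paystate'}$, one picks $\beta_\player$ making every difference strictly negative on $\Paystate'$, and a small perturbation of $\mact_\player$ in that direction (renormalized) is a strict dominator. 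Genericity there is consumed entirely by a finite list of full-rank conditions. You instead dualize: via the minimax theorem (parallel to Lemma~\ref{lem:minmax_single}), absence of a strict dominator is equivalent to the existence of a supporting prior $\lambda$; complementary slackness then yields $|B|+|E|$ linear equations in $\util_\player$, and your rank computation---which I verified: the dependency space is exactly the line spanned by the primal--dual pair $(\lambda,\mact_\player)$, so the rank is $|B|+|E|-1$---combined with the $(|B|+|E|-2)$-dimensional parameter sweep shows the bad set sits inside a finite union of semi-algebraic sets of codimension one. Both proofs ultimately rest on a rank-drop of exactly one, but they locate it differently: the paper in the difference matrix of a given weak dominator, you in the coefficient matrix of the complementary-slackness system. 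The paper's route is more elementary (no semi-algebraic dimension theory, only linear algebra plus a perturb-and-renormalize step); yours cleanly isolates why the codimension exceeds the parameter dimension by exactly one and fits naturally with the Tarski--Seidenberg machinery the paper already deploys for Theorem~\ref{thm:genericity}, at the cost of invoking the standard facts that semi-algebraic projections do not raise dimension and that lower-dimensional semi-algebraic sets have Lebesgue-null closures.
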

\begin{proof}
Let $\act_\player$ be an action that is weakly dominated by a mixed action $\mact_\player\in\Delta(\Act_\player\setminus\{\act_\player\})$. Let $\Act^\prime_\player$ be the support of $\mact_\player$, and let  $\Paystate^\prime$ be set of states $\theta$ for which 
\begin{equation}\label{eq:linear_sys}
\util_\player(\act_\player,\paystate)=\sum_{\actb_\player}\util_\player(\actb_\player,\paystate)\mact_\player(\actb_\player).
\end{equation}
Let $m$ be the cardinality of $\Act^\prime_\player$, and let $n$ be the cardinality of $\Paystate^\prime$. We consider the $m\times n$ matrix $M\in \mathbb{R}^{\Act^\prime_\player\times\Paystate^\prime}$ given by
\[
M(\actb_\player,\paystate)=\util_\player(\act_\player,\paystate)-\util_\player(\actb_\player,\paystate).
\]
For generic $\util_\player$, the matrix $M$ has full rank. By (\ref{eq:linear_sys}), the rows of $M$ are linearly dependent. Thus, the rank of $M$ must be $n$, the number of columns. We obtain that the row space of $M$ has dimension $n$. Hence, we can find $\beta_\player\in\mathbb{R}^{\Act^\prime_\player}$ such that for every $\paystate\in \Paystate^\prime$
\[
\sum_{\actb_\player}(\util_\player(\act_\player,\paystate)-\util(\actb_\player,\paystate))\beta_\player(\actb_\player)<0.
\]
For every $t>0$, we define $\mact^t_\player\in \mathbb{R}^{\Act^\prime_\player}$ by
\[
\mact^t_\player(\actb_\player)=\frac{\mact_\player(\actb_\player)+t\beta_\player(\actb_\player)}{\sum_{\actc_\player} \mact_\player(\actc_\player)+t\beta_\player(\actc_\player)}.
\]
For $t$ sufficiently small, $\mact^t_\player$ is a mixed action that strictly dominates $\act_\player$.
\end{proof}

We are now ready to prove the proposition on strict BCE.

\begin{proof}[Proof of Proposition \ref{pro:strict_single}]
Let $\Act_\player^*$ be the set of actions that are not strictly dominated. Since $\util_\player$ is generic, it follows from Lemma \ref{lem:dominance_single} that each $\act_\player\in \Act_\player^*$ is not weakly dominated by a mixed action $\mact_\player\in\Delta(\Act_\player\setminus\{\act_\player\})$. By Lemma \ref{lem:minmax_single}, there is a belief $\mu_{\act_\player}\in\Delta(\Theta)$ for which $\act_\player$ is the unique best response.

Since $\payprior$ has full support, we can find $\nu\in\Delta(\Paystate)$ and for every $\act_\player\in \Act_\player^*$, $t_{\act_\player}\in (0,1)$---with $\sum_{\act_\player\in \Act_\player^*}t_{\act_\player}\leq 1$---such that 
\[
\payprior=\sum_{\act_\player\in\Act_\player^*}t_{\act_\player}\mu_{\act_\player} +\left(1-\sum_{\act_\player\in\Act_\player^*} t_{\act_\player}\right)\nu.
\]
Let $\act_\player^*$ be a best response to $\nu$; necessarily, $\act_\player^*\in \Act_\player^*$. Define the outcome $\out\in\Delta_\pi(\Act_\player\times\Paystate)$ as follows:
\[
\out(\act_\player,\theta)=\begin{cases}
t_{\act_\player} \mu_{\act_\player}(\theta) &\text{if }\act_\player\in\Act_\player^*\setminus\{\act_\player^*\},\\
t_{\act^*_\player} \mu_{\act^*_\player}(\theta) + \left(1-\sum_{\act_\player\in \Act_\player^*}t_{\act_\player}\right)\nu(\theta) &\text{if }\act_\player=\act^*_\player,\\
0 &\text{otherwise.}
\end{cases}
\]
The outcome $\out$ is a strict BCE. Moreover, if $\outb$ is a BCE, then
\[
\supp_\player(\outb)\subseteq \Act^*_\player = \supp_\player(\out).
\]
Thus, the set of outcomes
\[
\left\{s \outb +(1-s)\out: s\in (0,1)\text{ and }\outb\in \BCE \right\}
\]
is a subset of the set of strict BCE, and it is dense in the BCE set. We conclude that (for generic $\util_\player$) the set of strict BCE is dense in the BCE set.\end{proof}

\section{Non-generic Environments}

\subsection{Proofs of Proposition \ref{pro:BCE equals cl sBCE iff stuff} and Theorem~\ref{thm:Dense or nowhere dense}}

In this section we prove generalizations of Proposition \ref{pro:BCE equals cl sBCE iff stuff} and Theorem~\ref{thm:Dense or nowhere dense} that apply locally to a closed convex set of BCEs. 

Fix a base game $\BGame$; denote by $\BCE$ the set of all BCEs, and by $\sBCE$ the set of all sBCEs. Let $\BCEset\subseteq\BCE$ be a non-empty closed convex set. For a player $\player$, an action $\act_\player$ $\BCEset$-\textbf{jeopardizes} an action $\actb_\player$ if, for every $\out \in \BCEset$ with $\actb\in\supp_\player(\out)$, $\act_\player\in BR(\out_{\actb_\player})$. We denote by $\Jeop_{\BCEset}(\actb_\player)$ the set of actions that $\BCEset$-jeopardizes $\actb_\player$. Just like the standard jeopardization concept, one has $\actb_\player \in J_{\BCEset}(\actb_\player)$ for all $\actb_\player$. 

An outcome $\out \in \BCEset$ has $\BCEset$-\textbf{maximal support} if the support of every other $\outb\in \BCEset$ is contained by the support of $\out$. An outcome $\out\in \BCEset$ is $\BCEset$-\textbf{minimally mixed} if it has $\BCEset$-maximal support and
\[
\outb_{\act_\player}\neq\outb_{\actb_\player}\quad\text{implies}\quad\out_{\act_\player}\neq\out_{\actb_\player}.
\]
if for every $\outb \in \BCEset$, $\player\in \Player$, and $\act_\player,\actb_\player\in\supp_\player(\out)$, 

We are now ready to state the local versions of Proposition~\ref{pro:BCE equals cl sBCE iff stuff} and Theorem~\ref{thm:Dense or nowhere dense} that we prove in this section. 

\begin{proposition}\label{pro:B-BCE equals cl B-sBCE iff stuff}
The following statements are equivalent:
\begin{enumerate}[(i)]
\item\label{pro:B-BCE equals cl B-sBCE: closure} The set $\sBCE \cap \BCEset$ is dense in $\BCEset$. 
\item\label{pro:B-BCE equals cl B-sBCE: min-mixed sBCE} A $\BCEset$-minimally mixed sBCE exists. 
\item\label{pro:B-BCE equals cl B-sBCE: jeopardization} For all $p \in \BCEset$, $\player\in \Player$, $\act_\player,\actb_\player\in \supp_\player(\out)$, 
\[
\out_{\act_\player}\neq\out_{\actb_\player}\quad\text{implies}\quad \Jeop_{\BCEset}(\act_\player)\cap \Jeop_{\BCEset}(\actb_\player)=\varnothing.
\] 
\end{enumerate}
\end{proposition}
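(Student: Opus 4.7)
The plan is to prove the cycle (i)$\Rightarrow$(iii)$\Rightarrow$(ii)$\Rightarrow$(i); the additional direction (ii)$\Rightarrow$(iii) falls out directly and provides a useful warm-up.

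For (i)$\Rightarrow$(iii), I argue by contrapositive. Suppose (iii) fails, so there exist $\out\in\BCEset$, a player $\player$, actions $\act_\player,\actb_\player\in\supp_\player(\out)$ with $\out_{\act_\player}\neq\out_{\actb_\player}$, and $\actc_\player\in \Jeop_{\BCEset}(\act_\player)\cap \Jeop_{\BCEset}(\actb_\player)$. Any sequence $\out^n\in \sBCE\cap\BCEset$ converging to $\out$ eventually satisfies $\act_\player,\actb_\player\in\supp_\player(\out^n)$ (because $\out(\act_\player),\out(\actb_\player)>0$) and $\out^n_{\act_\player}\neq \out^n_{\actb_\player}$ by continuity of conditionals on positive-probability events. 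The definition of $\Jeop_{\BCEset}$ then forces $\actc_\player\in \BR(\out^n_{\act_\player})\cap \BR(\out^n_{\actb_\player})$, contradicting separation of $\out^n$. For (ii)$\Rightarrow$(iii), let $\out^*$ be $\BCEset$-minimally mixed: for any $\outc\in\BCEset$ with $\outc_{\act_\player}\neq\outc_{\actb_\player}$, minimal mixing and maximal support give $\out^*_{\act_\player}\neq\out^*_{\actb_\player}$ with $\act_\player,\actb_\player\in\supp_\player(\out^*)$, so separation of $\out^*$ together with the definitional inclusions $\Jeop_{\BCEset}(\act_\player)\subseteq \BR(\out^*_{\act_\player})$ and $\Jeop_{\BCEset}(\actb_\player)\subseteq \BR(\out^*_{\actb_\player})$ delivers disjoint jeopardization sets.

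For (iii)$\Rightarrow$(ii), I build $\out^*$ explicitly. Let $D$ be the finite set of triples $(\player,\act_\player,\actb_\player)$ such that some $\outc\in\BCEset$ has $\act_\player,\actb_\player\in\supp_\player(\outc)$ and $\outc_{\act_\player}\neq\outc_{\actb_\player}$; pick one witness per $\gamma\in D$, adjoin a relative-interior point of $\BCEset$ to secure maximal support, and form a convex combination with strictly positive weights. For generic weights $\out^*$ distinguishes every $\gamma\in D$, yielding minimal mixing. Separation is the crux: when $\out^*_{\act_\player}\neq\out^*_{\actb_\player}$, condition (iii) delivers $\Jeop_{\BCEset}(\act_\player)\cap \Jeop_{\BCEset}(\actb_\player)=\varnothing$, but one must still rule out accidental best responses in $\BR(\out^*_{\act_\player})\setminus \Jeop_{\BCEset}(\act_\player)$ that happen to lie in $\BR(\out^*_{\actb_\player})$. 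I would address this via a Myerson-style LP-dual characterization of $\Jeop_{\BCEset}$ combined with a perturbation argument on the mixing weights, ensuring the best response set at $\out^*$ collapses down to the jeopardization set. Finally, (ii)$\Rightarrow$(i) proceeds by approximating any $\outb\in\BCEset$ by mixtures of $\outb$ and $\out^*$, re-weighting within each $\out^*$-equivalence class of actions (pairs not in $D$) to prevent new action-level distinctions from emerging in the mixture, so that separation is inherited from $\out^*$.

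The hardest step is establishing separation of the constructed $\out^*$ in (iii)$\Rightarrow$(ii): while (iii) disjointifies the jeopardization sets, the best response sets at $\out^*$ can in general be strictly larger, so bridging the gap will require either a duality argument or a careful parametric genericity argument on the weights of the convex combination.
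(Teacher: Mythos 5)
Your directions (i)$\Rightarrow$(iii) and (ii)$\Rightarrow$(iii) are correct: the contrapositive argument for (i)$\Rightarrow$(iii) (convergence of conditionals on positive-probability recommendations plus the definition of $\Jeop_{\BCEset}$) is sound and is in fact a more direct route than the paper's, which reaches (iii) via (ii).

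The problem is that the two links you actually need to close the cycle---(iii)$\Rightarrow$(ii) and (ii)$\Rightarrow$(i)---both remain open, and for the same reason. The gap you flag yourself (``the best response sets at $\out^*$ can in general be strictly larger'' than the jeopardization sets) is real, and neither LP duality nor genericity of the mixing weights resolves it: for an arbitrary convex combination of beliefs, an action can be a best response to the average without being a best response to any component, so nothing forces $\BR(\out^*_{\act_\player})$ down to $\Jeop_{\BCEset}(\act_\player)$. The missing ingredient is a structural fact about BCEs specifically (the paper's Lemma~\ref{lem:BR inclusion under convex combinations}): if $\out,\outb\in\BCE$ and $\act_\player\in\supp_\player(\out)$, then $\BR\left(\left(\wt\out+(1-\wt)\outb\right)_{\act_\player}\right)\subseteq \BR\left(\out_{\act_\player}\right)$ for $\wt\in(0,1)$; the proof uses the obedience of \emph{both} components at $\act_\player$ to show any best response to the mixture must be a best response to each piece. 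With this lemma, averaging finitely many members of $\BCEset$ (one witness per action $\actb_\player\notin\Jeop_{\BCEset}(\act_\player)$, plus a maximal-support element) produces $\outb\in\BCEset$ with $\BR(\outb_{\act_\player})=\Jeop_{\BCEset}(\act_\player)$ exactly---this is the ``collapse'' you want in (iii)$\Rightarrow$(ii). The same lemma is what makes your (ii)$\Rightarrow$(i) work: for $\out^{\wt}=\wt\out+(1-\wt)\out^*$, minimal mixing of $\out^*$ already rules out ``new distinctions'' (since $\out^{\wt}\in\BCEset$, any pair distinguished by $\out^{\wt}$ is distinguished by $\out^*$), so the only issue is containment of the mixture's best-response sets in those of $\out^*$, which is again the lemma; no re-weighting within equivalence classes is needed. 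Note also that the paper's architecture differs: it gets (i)$\Rightarrow$(ii) for free from the fact that $\BCEset$-minimally mixed BCEs are open and dense in $\BCEset$, and proves (iii)$\Rightarrow$(i) directly rather than through (ii). As written, your proof establishes only that (i) and (ii) each imply (iii).
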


\begin{theorem}\label{thm:B-Dense or nowhere dense}
The set $\sBCE \cap \BCEset$ is either dense or nowhere dense in $\BCEset$.
\end{theorem}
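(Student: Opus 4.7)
The strategy is to deduce the dichotomy as an immediate corollary of Proposition~\ref{pro:B-BCE equals cl B-sBCE iff stuff}, combined with the auxiliary Lemma~\ref{lem:B-BCEs are open and dense in B} from the Online Appendix, which asserts that the set of $\BCEset$-minimally mixed BCEs is open and dense in $\BCEset$. The argument I have in mind is a clean case split on whether a $\BCEset$-minimally mixed sBCE exists.

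If some $\BCEset$-minimally mixed sBCE exists, the implication \ref{pro:B-BCE equals cl B-sBCE: min-mixed sBCE}$\Rightarrow$\ref{pro:B-BCE equals cl B-sBCE: closure} of Proposition~\ref{pro:B-BCE equals cl B-sBCE iff stuff} gives that $\sBCE \cap \BCEset$ is dense in $\BCEset$. Otherwise, $\sBCE \cap \BCEset$ is disjoint from the set of $\BCEset$-minimally mixed BCEs; by Lemma~\ref{lem:B-BCEs are open and dense in B}, that set is open and dense in $\BCEset$, so its relative complement is closed with empty interior in $\BCEset$. Consequently $\sBCE \cap \BCEset$, being contained in this complement, has closure with empty interior in $\BCEset$, which is the definition of being nowhere dense.

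The main obstacle, conceptually separate from the short dichotomy argument above, is establishing Lemma~\ref{lem:B-BCEs are open and dense in B}. My plan for that lemma would be the following: for density, take any $\out \in \BCEset$ and form a small convex combination with a $\BCEset$-maximally supported BCE to secure maximal support, then perturb slightly within $\BCEset$ (using convexity) to break any coincidences $\outb_{\act_\player} = \outb_{\actb_\player}$ that are not forced across all of $\BCEset$. Openness should follow because, on the stratum of outcomes with a fixed support profile, the conditionals $\out \mapsto \out_{\act_\player}$ vary continuously, so the finite family of strict inequalities that characterizes $\BCEset$-minimal mixing is stable under sufficiently small perturbations.
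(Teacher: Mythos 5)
Your argument is correct and essentially identical to the paper's: both deduce the dichotomy from Proposition~\ref{pro:B-BCE equals cl B-sBCE iff stuff} together with the openness and density of the $\BCEset$-minimally mixed BCEs (Lemma~\ref{lem:B-BCEs are open and dense in B}), with your case split on the existence of a minimally mixed sBCE being just the contrapositive phrasing of the paper's ``not nowhere dense implies dense'' argument. Since Lemma~\ref{lem:B-BCEs are open and dense in B} is already established in the Online Appendix, your (somewhat sketchy) plan for re-deriving it is not needed.
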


As a first step, we prove a basic lemma about best responses. In what follows, for $\out \in \Delta(\Act\times\Paystate)$, $\act_{\player}\in\supp_\player(\out)$, and $\actb_{\player}\in \Act_{\player}$, take
$\util_{\player}\left(\actb_{\player},\out_{\act_{\player}}\right)\in\mathbb{R}$ to be
\[
\util_{\player}\left(\actb_{\player},\out_{\act_\player}\right)=\sum_{\act_{-\player},\paystate}\util_{\player}(\actb_{\player},\act_{-\player},\paystate)\out_{\act_{\player}}(\act_{-\player},\paystate).
\]

\begin{lemma}\label{lem:BR inclusion under convex combinations}
For every $\wt\in(0,1)$, $\out,\outb\in \BCE$, $\player\in \Player$,
and $\act_{\player}\in \supp_\player (\out)$,
\[
\BR\left(\left(\wt\out+(1-\wt)\outb\right)_{\act_{\player}}\right)\subseteq \BR\left(\out_{\act_{\player}}\right).
\]
\end{lemma}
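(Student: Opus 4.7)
My plan is to leverage two facts: the conditional $(tp+(1-t)q)_{a_i}$ is a convex combination of the conditionals $p_{a_i}$ and $q_{a_i}$, and expected utility is linear in beliefs. Combined with the obedience constraints that $p$ and $q$ satisfy as BCEs, these will force any best reply to the mixture to also be a best reply against $p_{a_i}$ separately.

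First I will handle notation. For brevity write $r = tp + (1-t)q$, and note that $r(a_i) \geq t\, p(a_i) > 0$ because $a_i \in \supp_i(p)$ and $t>0$, so $a_i \in \supp_i(r)$ and $r_{a_i}$ is well defined. If $q(a_i) = 0$, then $r(a_i, a_{-i}, \theta) = t\, p(a_i, a_{-i}, \theta)$ for every $(a_{-i},\theta)$, and so $r_{a_i} = p_{a_i}$, in which case the claimed inclusion is trivial. Otherwise $q(a_i) > 0$ and by Bayes' rule
\[
r_{a_i} \;=\; \lambda\, p_{a_i} + (1-\lambda)\, q_{a_i}, \qquad \text{where}\ \lambda = \frac{t\, p(a_i)}{t\, p(a_i) + (1-t)\, q(a_i)} \in (0,1).
\]

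Next I invoke obedience. Since $p$ is a BCE and $a_i \in \supp_i(p)$, we have $a_i \in BR(p_{a_i})$; likewise, since $q$ is a BCE and $a_i \in \supp_i(q)$, we have $a_i \in BR(q_{a_i})$. Writing $u_i(b_i, \mu)$ for $i$'s expected payoff from playing $b_i$ under belief $\mu \in \Delta(A_{-i}\times\Theta)$, linearity of expected utility gives, for any $b_i \in A_i$,
\[
u_i(b_i, r_{a_i}) - u_i(a_i, r_{a_i}) = \lambda\bigl[u_i(b_i, p_{a_i}) - u_i(a_i, p_{a_i})\bigr] + (1-\lambda)\bigl[u_i(b_i, q_{a_i}) - u_i(a_i, q_{a_i})\bigr].
\]
Both bracketed terms are non-positive by the obedience observations above.

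Now take any $b_i \in BR(r_{a_i})$. Since $a_i \in BR(p_{a_i}) \subseteq A_i$, we have $u_i(b_i, r_{a_i}) \geq u_i(a_i, r_{a_i})$ at minimum by optimality of $b_i$; in fact, because $a_i$ itself attains the maximum against each of $p_{a_i}$ and $q_{a_i}$, it also attains the maximum against $r_{a_i}$, so the left-hand side of the displayed equation equals zero. A convex combination of two non-positive numbers with strictly positive weights $\lambda, 1-\lambda$ vanishes only if both numbers vanish, so $u_i(b_i, p_{a_i}) = u_i(a_i, p_{a_i})$, i.e., $b_i \in BR(p_{a_i})$. This gives the inclusion and completes the proof. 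The argument is essentially routine; the only point requiring minor care is the degenerate case $q(a_i) = 0$, which is disposed of at the outset.
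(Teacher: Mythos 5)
Your proof is correct and follows essentially the same route as the paper's: dispose of the degenerate case $q(\act_\player)=0$, write $(\wt\out+(1-\wt)\outb)_{\act_\player}$ as a strict convex combination of $\out_{\act_\player}$ and $\outb_{\act_\player}$, invoke obedience of both BCEs, and conclude that a convex combination of two non-positive payoff differences with strictly positive weights can vanish only if each does. Nothing is missing.
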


\begin{proof}
Take $\actb_{\player}\in \BR\left(\left(\wt\out+(1-\wt)\outb\right)_{\act_{\player}}\right)$. If $\act_{\player}\notin \supp_\player (\outb)$, then 
\[
\left(\wt\out+(1-\wt)\outb\right)_{\act_{\player}}=\out_{\act_{\player}},
\]
which immediately implies the desired result.

Suppose now that $\act_{\player}\in \supp_\player (\outb)$. Since $\out,\outb\in \BCE$, we have
\[
\util_{\player}(\act_{\player},\out_{\act_{\player}})\geq \util_{\player}(\actb_{\player},\out_{\act_{\player}})\quad\text{and}\quad \util_{i}(\act_{\player},\outb_{\act_{\player}})\geq \util_{\player}(\actb_{\player},\outb_{\act_{\player}}).
\]
Simple algebra shows that there exists $\wtb\in(0,1)$ such that
\[
\left(\wt \out + (1-\wt) \outb \right)_{\act_{\player}}=\wtb \out_{\act_{\player}}+(1-\wtb)\outb_{\act_{\player}}.
\]
Since $\actb_{\player}\in \BR\left(\left(\wt \out +(1-\wt)\outb\right)_{\act_{\player}}\right)$,
we obtain that
\begin{align*}
\wtb \util_{\player}(\actb_{\player},\out_{\act_{\player}})+(1-s)\util_{\player}(\actb_{\player},\outb_{\act_{\player}}) & =\util_{\player}(\actb_{\player},\wtb \out_{\act_{\player}}+(1-\wtb)\outb_{\act_{\player}})\\
 & \geq \util_{\player}(\act_{\player},\wtb \out_{\act_{\player}}+(1-\wtb)\outb_{\act_{\player}})\\
 & =\wtb\util_{\player}(\act_{\player},\out_{\act_{\player}})+(1-\wtb)\util_{\player}(\act_{\player},\outb_{\act_{\player}}).
\end{align*}
We conclude that $\util_{\player}(\act_{\player},\out_{\act_{\player}})=\util_{\player}(\actb_{\player},\out_{\act_{\player}})$
and $\util_{\player}(\act_{\player},\outb_{\act_{\player}})=\util_{\player}(\actb_{\player},\outb_{\act_{\player}})$. It follows from
$\out\in \BCE$ that
$
\actb_\player \in \BR\left(\out_{\act_{\player}}\right).
$
\end{proof}

Next, we show that taking convex combinations of BCEs usually preserve the set of action recommendations that lead to different beliefs. 

\begin{lemma}\label{lem:most convex combinations preserve difference in posteriors}For every $\out,\outb\in\Delta(\Act \times \Paystate)$, $\player \in \Player$, and
$\act_{\player},\actb_{\player}\in \supp_\player(\out)$ with $\out_{\act_{\player}}\neq \out_{\actb_{\player}}$,
there are at most two $\wt\in(0,1)$ such that
\begin{equation}
(\wt\out+(1-\wt)\outb)_{\act_{\player}}=(\wt\out+(1-\wt)\outb)_{\actb_{\player}}.\label{eq:convex combination of outcomes actually yields equal beliefs}
\end{equation}
\end{lemma}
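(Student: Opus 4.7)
The plan is to convert the belief-equality condition into a polynomial equation in $\wt$ of degree at most two, and then argue that this polynomial is not identically zero, so it has at most two roots.

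First, I would fix $(\act_{-\player},\paystate) \in \Act_{-\player}\times \Paystate$ and write the coordinate-wise form of \eqref{eq:convex combination of outcomes actually yields equal beliefs}. Letting $\outc = \wt\out + (1-\wt)\outb$, the condition $\outc_{\act_\player}(\act_{-\player},\paystate) = \outc_{\actb_\player}(\act_{-\player},\paystate)$ can be rewritten, after clearing denominators, as
\[
[\wt\out(\act_\player,\act_{-\player},\paystate) + (1-\wt)\outb(\act_\player,\act_{-\player},\paystate)]\cdot [\wt\out(\actb_\player) + (1-\wt)\outb(\actb_\player)]
\]
\[
=\,[\wt\out(\actb_\player,\act_{-\player},\paystate) + (1-\wt)\outb(\actb_\player,\act_{-\player},\paystate)]\cdot [\wt\out(\act_\player) + (1-\wt)\outb(\act_\player)].
\]
Clearing denominators is justified because $\act_\player,\actb_\player \in \supp_\player(\out)$ and $\wt\in(0,1)$ ensure both denominators are strictly positive. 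For each such $(\act_{-\player},\paystate)$, this is a polynomial equation in $\wt$ of degree at most $2$, call it $P_{\act_{-\player},\paystate}(\wt) = 0$.

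The main step is to rule out the degenerate case where \emph{every} polynomial $P_{\act_{-\player},\paystate}$ is identically zero in $\wt$. This is where I use the hypothesis $\out_{\act_\player}\neq\out_{\actb_\player}$. If all polynomials were identically zero, I would evaluate them at $\wt = 1$, which yields
\[
\out(\act_\player,\act_{-\player},\paystate)\out(\actb_\player) = \out(\actb_\player,\act_{-\player},\paystate)\out(\act_\player)
\]
for every $(\act_{-\player},\paystate)$. Dividing through by $\out(\act_\player)\out(\actb_\player) > 0$ gives $\out_{\act_\player} = \out_{\actb_\player}$, contradicting the hypothesis. Hence, there exists some $(\act_{-\player},\paystate)$ for which $P_{\act_{-\player},\paystate}$ is a nonzero polynomial of degree at most two, and so has at most two real roots.

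To finish, I would observe that the set of $\wt \in (0,1)$ satisfying \eqref{eq:convex combination of outcomes actually yields equal beliefs} is contained in the zero set of this one nonzero polynomial, hence has at most two elements. The whole argument is really just elementary polynomial counting; the only mildly subtle point is noting that evaluating at the endpoint $\wt = 1$ recovers the original belief $\out_{\act_\player}$, which lets me translate the ``all polynomials trivial'' case into the forbidden equality of conditional distributions. I expect no substantive obstacle.
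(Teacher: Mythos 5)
Your proof is correct and follows essentially the same route as the paper: both reduce the belief equality to a family of coordinate-wise polynomial equations in $\wt$ of degree at most two and then argue that, because $\out_{\act_\player}\neq\out_{\actb_\player}$, at least one of these polynomials is not identically zero, so the solution set has at most two elements. Your evaluation at $\wt=1$ is a clean way of making explicit the step the paper leaves terse.
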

\begin{proof}
Note that $t\in(0,1)$ is a solution of (\ref{eq:convex combination of outcomes actually yields equal beliefs})
if and only if for every $\act_{-\player}\in \Act_{-i}$ and $\paystate \in \Paystate$,
\begin{align}
 & (\wt\out(\act_{\player},\act_{-\player},\paystate)+(1-\wt)\outb(\act_{\player},\act_{-\player},\paystate))\left(\wt\out(\actb_{\player})+(1-\wt)\outb(\actb_{\player})\right)\nonumber \\
= & (\wt\out(\actb_{\player},\act_{-\player},\paystate)+(1-\wt)\outb(\actb_{\player},\act_{-\player},\paystate))\left(\wt\out(\act_{\player})+(1-\wt)\outb(\act_{\player})\right).\label{eq:when does convex combination of outcomes yield equal beliefs}
\end{align}
Each equation (\ref{eq:when does convex combination of outcomes yield equal beliefs}) is polynomial in $\wt$, with degree at most two. Since
$\out_{\act_{\player}}\neq \out_{\actb_{\player}}$, at least one such polynomial equation does not have degree zero and,
therefore, has at most two solutions. We deduce that (\ref{eq:convex combination of outcomes actually yields equal beliefs})
has at most two solutions for $\wt\in(0,1)$.
\end{proof}

Our next goal is to show that $\BCEset$-minimally mixed BCEs are the norm rather than the exception. As an intermediate step, we first show the set of $\BCEset$-minimally mixed BCEs is non-empty.

\begin{lemma}\label{lem:a min-mix BCE exists}
    A $\BCEset$-minimally mixed BCE exists. 
\end{lemma}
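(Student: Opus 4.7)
The plan is to build the desired BCE by iterative averaging. First I would produce a $\BCEset$-maximally supported BCE $\out^*$ by choosing, for each pair $(\act,\paystate) \in \Act\times\Paystate$ that lies in the support of some outcome in $\BCEset$, one witness BCE whose support contains it, then averaging these finitely many witnesses: since $\BCEset$ is convex and the support of a strictly convex combination equals the union of the supports, $\out^*\in\BCEset$ and $\supp(\out^*)\supseteq\supp(\outb)$ for every $\outb\in\BCEset$. Next I would enumerate the finite set
\[
D=\{(\player,\act_\player,\actb_\player):\exists\,\outb\in\BCEset\text{ with }\act_\player,\actb_\player\in\supp_\player(\outb)\text{ and }\outb_{\act_\player}\neq\outb_{\actb_\player}\},
\]
and for each $k=(\player,\act_\player,\actb_\player)\in D$ pick a witness $\outb^{(k)}\in\BCEset$ achieving that separation.

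The construction proceeds by induction on an ordering $D=\{k_1,\dots,k_n\}$. Set $\out^{(0)}=\out^*$ and $\out^{(j)}=(1-\wt_j)\out^{(j-1)}+\wt_j\outb^{(k_j)}$ for weights $\wt_j\in(0,1)$ to be chosen. The inductive hypothesis is that $\out^{(j-1)}\in\BCEset$ has $\BCEset$-maximal support and satisfies $\out^{(j-1)}_{\act_{k_l}}\neq\out^{(j-1)}_{\actb_{k_l}}$ for every $l<j$. Because $\BCEset$ is convex and $\wt_j\in(0,1)$, $\out^{(j)}\in\BCEset$ and its support equals $\supp(\out^{(j-1)})\cup\supp(\outb^{(k_j)})=\supp(\out^{(j-1)})$, so maximal support is propagated automatically. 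The final outcome $\out^{(n)}$ will then be $\BCEset$-minimally mixed as soon as all triples in $D$ have been successfully handled.

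The main step, and the only real obstacle, is choosing $\wt_j$ so as to simultaneously (a) preserve the $j-1$ separations already established and (b) create the new separation for $k_j$. For (a), Lemma \ref{lem:most convex combinations preserve difference in posteriors} applied with $\out^{(j-1)}$ in the role of the outcome that already separates beliefs for $k_l$ shows that at most two values of $\wt_j\in(0,1)$ can destroy any given separation $l<j$. For (b), swapping the roles in the convex combination---so that $\outb^{(k_j)}$ plays the part of the separated outcome---Lemma \ref{lem:most convex combinations preserve difference in posteriors} again excludes at most two values of $\wt_j$. Hence at step $j$ only finitely many weights (at most $2j$) in $(0,1)$ are forbidden, so an admissible $\wt_j$ always exists. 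One minor subtlety to check along the way is that the actions $\act_{k_l},\actb_{k_l}$ appearing in a hypothesized separation actually lie in $\supp_\player(\out^{(j-1)})$, which the lemma requires; this is automatic because $\supp_\player(\outb^{(k_l)})\subseteq\supp_\player(\out^*)\subseteq\supp_\player(\out^{(j-1)})$ by maximal support. Completing the induction yields the desired $\BCEset$-minimally mixed BCE.
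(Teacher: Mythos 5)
Your proof is correct, but it reaches the conclusion by a different route than the paper. The paper's argument is extremal rather than iterative: for each $\out\in\BCEset$ it defines the set
\[
X(\out)=\bigcup_{\player}\left\{(\act_\player,\actb_\player):\act_\player,\actb_\player\in\supp_\player(\out)\text{ and }\out_{\act_\player}\neq\out_{\actb_\player}\right\},
\]
observes that $\BCEset$-minimal mixing is equivalent to having $\BCEset$-maximal support together with $X(\outb)\subseteq X(\out)$ for all $\outb\in\BCEset$, and then selects a maximal-support $\out$ that maximizes the (finite) cardinality of $X(\out)$ over all maximal-support elements of $\BCEset$. A single application of Lemma~\ref{lem:most convex combinations preserve difference in posteriors} to $\wt\out+(1-\wt)\outb$ shows that all but finitely many weights preserve both $X(\out)$ and $X(\outb)$ in the mixture, and maximality of $\vert X(\out)\vert$ then forces $X(\outb)\subseteq X(\out)$. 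Your induction replaces this cardinality-maximization with an explicit accumulation of separations, one witness per triple in $D$, excluding at most $2j$ bad weights at step $j$. Both arguments rest on the same two ingredients---convexity of $\BCEset$ and the ``at most two bad weights'' lemma---and your two applications of that lemma (once with $\out^{(j-1)}$ as the separating outcome to preserve old separations, once with $\outb^{(k_j)}$ in that role to create the new one) are legitimate instances of its hypotheses, with the support containments you flag indeed guaranteed by maximal support. What your version buys is constructiveness: the minimally mixed BCE is exhibited as an explicit convex combination of $\vert D\vert+1$ named outcomes. What the paper's version buys is brevity: the extremal choice disposes of all of $D$ in one mixing step and avoids carrying an inductive hypothesis.
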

\begin{proof}
For every $\out\in \BCEset$, define the set
\[
X(\out)=\bigcup_{\player}\left\{ (\act_{\player},\actb_{\player}):\act_{\player},\actb_{\player}\in\supp_\player(\out)\text{ and }\out_{\act_{\player}}\neq \out_{\actb_{\player}}\right\} .
\]
Note that $\out\in\BCEset$ is $\BCEset$-minimally mixed if and only if it has $\BCEset$-maximal support and for every $\outb\in \BCEset$, $X(\outb)\subseteq X(\out)$.

Since the set $\Act\times\Paystate$ is finite and the set $\BCEset$ is convex, we can
find a $\BCEset$-maximal support $\out\in \BCEset$ such that for every $\BCEset$-maximal-support $q\in \BCEset$,
the cardinality of $X(\out)$ is larger than the cardinality of $X(\outb)$.

We now show that $\out$ is $\BCEset$-minimally mixed. Fix an arbitrary $\outb\in \BCEset$. For every $\wt\in(0,1)$, define $\out^{\wt}=\wt \out+(1-\wt)\outb$, which belongs to $\BCEset$ because $\BCEset$ is convex. Since $\out$ has $\BCEset$-maximal support, the same is true for $\out^{\wt}$. Thus, the cardinality of
$X(\out)$ is larger than the cardinality of $X(\out^{\wt})$. By Lemma \ref{lem:most convex combinations preserve difference in posteriors}, we can find $\wt\in(0,1)$ such that $X(\out)\subseteq X(\out^{\wt})$ and $X(\outb)\subseteq X(\out^{\wt})$.
This shows that $X(\outb)\subseteq X(\out)$; otherwise, the cardinality
of $X(\out^{\wt})$ would be strictly larger than the cardinality of $X(\out)$.
We conclude that $\out$ is $\BCEset$-minimally mixed.
\end{proof}

We now show that the $\BCEset$-minimally mixed BCEs includes most of the BCEs in $\BCEset$ in a precise sense.

\begin{lemma}\label{lem:B-BCEs are open and dense in B}
    The set of $\BCEset$-minimally mixed BCEs is open and dense in $\BCEset$.
\end{lemma}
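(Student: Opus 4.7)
I would prove density and openness separately, in each case reducing the problem to a small finite set of equalities that must be avoided.

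For density, fix an arbitrary $q\in\BCEset$ and, using Lemma~\ref{lem:a min-mix BCE exists}, a $\BCEset$-minimally mixed BCE $p$. For $t\in(0,1)$ define $p^{t}=tp+(1-t)q$, which lies in $\BCEset$ by convexity and converges to $q$ as $t\to 0$. I would argue that $p^{t}$ is $\BCEset$-minimally mixed for all but finitely many $t$. First, because $p$ has $\BCEset$-maximal support we have $\supp(q)\subseteq\supp(p)$, hence $\supp(p^{t})=\supp(p)$ is $\BCEset$-maximal for every $t\in(0,1)$. Second, I would apply Lemma~\ref{lem:most convex combinations preserve difference in posteriors} to each pair $(\act_\player,\actb_\player)\in X(p)$: for each such pair there are at most two values of $t$ at which $(p^{t})_{\act_\player}=(p^{t})_{\actb_\player}$. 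Since $X(p)$ is finite, outside a finite exceptional set of $t$'s we get $X(p)\subseteq X(p^{t})$. Combined with the fact that any $r\in\BCEset$ satisfies $X(r)\subseteq X(p)\subseteq X(p^{t})$, this shows $p^{t}$ is $\BCEset$-minimally mixed. Letting $t\to 0$ along the cofinite set completes the density argument.

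For openness, let $p$ be $\BCEset$-minimally mixed and consider $q\in\BCEset$ close to $p$. Two conditions must be verified. The first is that $\supp(q)$ is $\BCEset$-maximal. Since $p$ has $\BCEset$-maximal support, $\supp(q)\subseteq\supp(p)$; and for every $(\act,\paystate)\in\supp(p)$ one has $q(\act,\paystate)>0$ as soon as $\|q-p\|$ is smaller than $\min_{(\act,\paystate)\in\supp(p)}p(\act,\paystate)$, so in a neighborhood $\supp(q)=\supp(p)$. The second is that $X(r)\subseteq X(q)$ for every $r\in\BCEset$; since $p$ is $\BCEset$-minimally mixed it suffices to show $X(p)\subseteq X(q)$. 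Fix $(\act_\player,\actb_\player)\in X(p)$. Then $\act_\player,\actb_\player\in\supp_\player(p)=\supp_\player(q)$ (by the first step), and the map $q\mapsto q_{\act_\player}$ is continuous on $\{q:q(\act_\player)>0\}$, so the strict inequality $p_{\act_\player}\neq p_{\actb_\player}$ persists to $q_{\act_\player}\neq q_{\actb_\player}$ for $q$ close enough to $p$. Taking the intersection of these finitely many neighborhoods yields a neighborhood of $p$ in $\BCEset$ consisting of $\BCEset$-minimally mixed BCEs, proving openness.

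The only nontrivial input is Lemma~\ref{lem:most convex combinations preserve difference in posteriors}, which controls how convex combinations can destroy the distinctness of beliefs; everything else is a routine continuity/support argument. I do not expect any serious obstacle beyond keeping the bookkeeping of $X(\cdot)$ straight.
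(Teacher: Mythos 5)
Your proof is correct and follows essentially the same route as the paper: density via convex combinations with a minimally mixed BCE from Lemma~\ref{lem:a min-mix BCE exists}, invoking Lemma~\ref{lem:most convex combinations preserve difference in posteriors} to exclude finitely many mixing weights, and openness from the finitely many open conditions defining maximal support and distinctness of posteriors. The only cosmetic difference is that the paper phrases openness as a finite intersection of open sets while you run the equivalent direct neighborhood argument.
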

\begin{proof}
Let $\BCEset_{M}$ denote the set of $\BCEset$-minimally mixed BCEs. We first argue that $\BCEset_{M}$ is open in $\BCEset$. Towards this goal, note the following sets are open in $\BCEset$ for every $\player\in \Player$ and $\act_\player, \actb_\player \in \Act_\player$:
\[
\{\out \in \BCEset: \out(\act_\player)>0\}, 
\quad \text{ and } \quad 
\{\out \in \BCEset: \out(\act_\player)\out(\actb_\player)>0 \text{ and } \out_{\act_\player} \neq \out_{\actb_\player}\}.
\]
Since $\Act$ is finite, we obtain that $\BCEset_{M}$ equals the intersection of a finite number of open subsets of $\BCEset_{M}$. It follows $\BCEset_{M}$ is open in $\BCEset$.

To see $\BCEset_{M}$ is dense in $\BCEset$, fix some $\outb \in \BCEset$. Take $\out$ to be a $\BCEset$-minimally mixed BCE, which exists by Lemma~\ref{lem:a min-mix BCE exists}. For every $\wt \in (0,1)$, define $\out^{\wt} = \wt \out + (1-\wt)\outb$. Because $\out$ has $\BCEset$-maximal support, the same is true for $\out^{\wt}$ for all $\wt \in (0,1)$. Moreover, by Lemma~\ref{lem:most convex combinations preserve difference in posteriors}, a finite set $T \subseteq (0,1)$ exists such that for all $\wt \in (0,1)\setminus T$, $\player\in \Player$, and $\act_\player, \actb_\player \in \supp_\player(\out)$,
\[
\out_{\act_\player} \neq \out_{\actb_\player}\quad\text{implies}\quad\out^{\wt}_{\act_\player} \neq \out^{\wt}_{\actb_\player}.
\]
Thus, $\out^{\wt}$ is a $\BCEset$-minimally mixed BCE for all $\wt \in (0,1) \setminus T$. Thus, $\outb$ is a limit point of $\{\out^{\wt}: \wt \in (0,1)\setminus T\}$, which implies it is a limit point of $\BCEset_{M}$. 
\end{proof}

We are now ready to prove Proposition~\ref{pro:B-BCE equals cl B-sBCE iff stuff} and Theorem~\ref{thm:B-Dense or nowhere dense}.

\begin{proof}[Proof of Proposition~\ref{pro:B-BCE equals cl B-sBCE iff stuff}] 
That \eqref{pro:B-BCE equals cl B-sBCE: closure} implies \eqref{pro:B-BCE equals cl B-sBCE: min-mixed sBCE} follows from Lemma~\ref{lem:B-BCEs are open and dense in B}. 

We now show \eqref{pro:B-BCE equals cl B-sBCE: min-mixed sBCE} implies \eqref{pro:B-BCE equals cl B-sBCE: jeopardization}. Let $\outb$ be a $\BCEset$-minimally mixed sBCE. Fix any $\out \in \BCEset$, $\player\in \Player$ and $\act_\player, \actb_\player \in \supp_\player(\out)$ such that $\out_{\act_\player}\neq \out_{\actb_\player}$. Since $\outb$ is $\BCEset$-minimally mixed, $\act_\player,\actb_\player \in \supp_\player(\outb)$ (because $\outb$ has $\BCEset$-maximal support) and $\outb_{\act_\player}\neq \outb_{\actb_\player}$. Thus,
\[
\varnothing = \BR (\outb_{\act_\player}) \cap \BR (\outb_{\actb_\player}) \supseteq \Jeop_{\BCEset}(\act_\player)\cap \Jeop_{\BCEset}(\actb_\player),
\]
where first we use the separation constraint, and then the fact that $\Jeop_{\BCEset}(\actc_\player) = \cap_{\tilde\out \in \BCEset} \BR (\tilde\out_{\actc_{\player}})$ for all $\actc_\player \in \Act_\player$. We conclude \eqref{pro:B-BCE equals cl B-sBCE: min-mixed sBCE} implies \eqref{pro:B-BCE equals cl B-sBCE: jeopardization}. 

Finally, we argue \eqref{pro:B-BCE equals cl B-sBCE: jeopardization} implies \eqref{pro:B-BCE equals cl B-sBCE: closure}. Fix any $\out \in \BCEset$. Because $\Act$ is finite and $\BCEset$ is convex, it follows from Lemma~\ref{lem:BR inclusion under convex combinations} that we can find $\outb \in \BCEset$ such that $\outb$ has $\BCEset$-maximal support and 
\begin{equation}\label{eq:iii_to_i}
\BR(\outb_{\act_\player}) = \Jeop_{\BCEset}(\outb_{\act_\player})
\end{equation}
for all $\player\in \Player$ and $\act_\player \in \supp_\player(\outb)$. 

For $\wt \in (0,1)$, let $\out^{\wt} = \wt \out + (1-\wt)\outb$. We claim that $\out^{\wt} \in \sBCE \cap \BCEset$. That $\out^{\wt} \in \BCEset$ follows from $\BCEset$ being convex. To see $\out^{\wt}$ is a sBCE, take any $\player\in\Player$ and $\act_{\player},\actb_{\player} \in \supp_{\player}(\out^{\wt})$ such that $\out^{\wt}_{\act_\player}\neq\out^{\wt}_{\actb_\player}$. Since $\outb$ has maximal support, $\act_{\player},\actb_{\player} \in \supp_{\player}(\outb)$. Then,
\[
\BR(\out^{\wt}_{\act_\player})\cap \BR(\out^{\wt}_{\actb_\player})\subseteq \BR(\outb_{\act_\player})\cap \BR(\outb_{\actb_\player})=\Jeop_{\BCEset}(\act_\player) \cap \Jeop_{\BCEset}(\actb_\player) = \varnothing,
\]
where first we use Lemma~\ref{lem:BR inclusion under convex combinations}, then (\ref{eq:iii_to_i}), and finally Proposition~\ref{pro:B-BCE equals cl B-sBCE iff stuff}-\eqref{pro:B-BCE equals cl B-sBCE: jeopardization}. We conclude $\out^{\wt} \in \sBCE \cap \BCEset$ for all $\wt \in (0,1)$. Proposition~\ref{pro:B-BCE equals cl B-sBCE iff stuff}-\eqref{pro:B-BCE equals cl B-sBCE: closure} then follows from $\out = \lim_{\wt \rightarrow 1} \out^{\wt}$. 
\end{proof}

\begin{proof}[Proof of Theorem~\ref{thm:B-Dense or nowhere dense}]
It is enough to prove that if $\sBCE\cap \BCEset$ is not nowhere dense in $\BCEset$, then it is dense in $\BCEset$. Suppose $\sBCE\cap \BCEset$ is dense in some non-empty set $\tilde\BCEset \subseteq \BCEset$ that is open in $\BCEset$. Let $\BCEset_{M}$ the set of $\out \in \BCEset$ that are $\BCEset$-minimally mixed.
Note $\tilde \BCEset \cap \BCEset_{M}$ is open (in $\BCEset$) and non-empty by Lemma~\ref{lem:B-BCEs are open and dense in B}. Because $\sBCE\cap \BCEset$ is dense in $\tilde\BCEset$, we obtain that $\left(\sBCE\cap \BCEset\right)\cap(\tilde \BCEset \cap \BCEset_{MM})$ is non-empty. Thus, we have found a $\BCEset$-minimally mixed sBCE. That $\sBCE\cap \BCEset$ is dense in $\BCEset$ then follows from Proposition~\ref{pro:B-BCE equals cl B-sBCE iff stuff}. 
\end{proof}

\subsection{Checking for Equal Beliefs}\label{sec:Checking for Equal Beliefs}

To check the conditions of Proposition~\ref{pro:BCE equals cl sBCE iff stuff}, knowing which actions induce different beliefs for some BCE is useful. In this section, we prove a result that shows how to find actions that lead to different beliefs in a closed convex set of outcomes $\BCEset \subseteq \Delta(\Act \times \Paystate)$.\footnote{Neither the obedience nor the separation constraint play any role in this section.} 

For a player $\player$, say an action $\act_\player$ is \textbf{$\BCEset$-coherent} if a $\out \in \BCEset$ exists with $\out(\act_\player) >0$.\footnote{Our notion of $\BCEset$-coherent is inspired by the notion of coherence in \cite{nau1990coherent}.} Let $\textbf{0}$ be the all-zeros vector in $\mathbb{R}^{\Act_{-\player}\times\Paystate}$; in what follows, we use the convention that $\out_{\act_\player}=\textbf{0}$ for every $\out \in \Delta(\Act\times \Paystate)$ and $\act_\player \in \Act_\player$ such that $\out(\act_\player) = 0$. As in the previous section, we say that an outcome $\out\in \BCEset$ has \textbf{$\BCEset$-maximal support} if the support of every other $\outb\in \BCEset$ is contained by the support of $\out$.

\begin{proposition}\label{pro:local equal-belief actions via extreme points}
Fix a player $\player$ and two $\BCEset$-coherent actions $\act_\player, \actb_\player \in \Act_\player$. Then every $\out\in \BCEset$ with $\act_\player,\actb_\player \in \supp_\player(\out)$ has $\out_{\act_\player}=\out_{\actb_\player}$ if and only if one of the following two conditions hold:
\begin{enumerate}[(i)]
\item \label{pro:local equal-belief actions via extreme points: equal beliefs} A $\belief \in \Delta(\Act_{-\player}\times \Paystate)$ exists such that for all $\out \in \ext(\BCEset)$, $\{\out_{\act_\player},\out_{\actb_\player}\}\subseteq\{\belief,\textbf{0}\}$.
\item \label{pro:local equal-belief actions via extreme points: equal ratios} 
 A constant $\const >0$ exists such that for all $\out \in \ext (\BCEset)$, $\out(\act_\player)\out_{\act_\player} = \const\out(\actb_\player)\out_{\actb_\player}$.
\end{enumerate}
\end{proposition}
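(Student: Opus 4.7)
Proof plan.

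The forward direction is direct. Given any $\out \in \BCEset$ with $\act_\player, \actb_\player \in \supp_\player(\out)$, Carathéodory's theorem (applied inside the compact convex set $\BCEset\subseteq\Delta(\Act\times\Paystate)$) lets me write $\out$ as a finite convex combination $\out = \sum_k t_k \outb^{k}$ of extreme points $\outb^{k}\in\ext(\BCEset)$. Introduce the ``unnormalized conditional'' $\out|_{\act_\player} := \out(\act_\player)\out_{\act_\player}$, with the convention $\out|_{\act_\player}=\textbf{0}$ when $\out(\act_\player)=0$; this quantity is linear in $\out$, so $\out|_{\act_\player}=\sum_k t_k \outb^{k}|_{\act_\player}$. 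Under (i), every summand equals $\outb^{k}(\act_\player)\belief$, yielding $\out|_{\act_\player}=\out(\act_\player)\belief$, hence $\out_{\act_\player}=\belief$; symmetrically $\out_{\actb_\player}=\belief$. Under (ii), $\outb^{k}|_{\act_\player}=\const\,\outb^{k}|_{\actb_\player}$ for every $k$, so $\out|_{\act_\player}=\const\,\out|_{\actb_\player}$; taking total masses gives $\out(\act_\player)=\const\,\out(\actb_\player)$, and dividing gives $\out_{\act_\player}=\out_{\actb_\player}$. The real work is the converse.

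For the converse, assume the antecedent. Partition $\ext(\BCEset)$ into four types by support pattern on $\{\act_\player,\actb_\player\}$: Type~A ($\out(\act_\player)>0$, $\out(\actb_\player)=0$), Type~B (the reverse), Type~AB (both positive), Type~0 (both zero). The $\BCEset$-coherence of $\act_\player$ forces at least one Type~A or Type~AB extreme point to exist, and similarly for $\actb_\player$. Each Type~AB extreme point already has a common conditional $\out^{AB}_{\act_\player}=\out^{AB}_{\actb_\player}$ by the antecedent; call it $\belief^{AB}$.

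The core step is to apply the antecedent along midpoint mixtures of extreme points. Whenever $\out,\outb\in\ext(\BCEset)$ are such that $\act_\player,\actb_\player\in\supp_\player((\out+\outb)/2)$, the equality of conditionals at $(\out+\outb)/2$ rearranges to
\begin{equation*}
(\out|_{\act_\player}+\outb|_{\act_\player})\bigl(\out(\actb_\player)+\outb(\actb_\player)\bigr)=(\out|_{\actb_\player}+\outb|_{\actb_\player})\bigl(\out(\act_\player)+\outb(\act_\player)\bigr).
\end{equation*}
Specializing this identity to three kinds of pairings yields three implications. (a) For Type~A$+$Type~B, the identity collapses to $\out^A_{\act_\player}=\out^B_{\actb_\player}$, so all Type~A and Type~B conditionals agree. (b) For Type~A$+$Type~AB (and symmetrically Type~B$+$Type~AB), the identity collapses to $\out^A_{\act_\player}=\belief^{AB}$ and $\out^B_{\actb_\player}=\belief^{AB}$. (c) For Type~AB$+$Type~AB with beliefs $\belief_1,\belief_2$, the identity factors as $\bigl[\out^{AB}_1(\act_\player)\out^{AB}_2(\actb_\player)-\out^{AB}_2(\act_\player)\out^{AB}_1(\actb_\player)\bigr](\belief_1-\belief_2)=\textbf{0}$, so either $\belief_1=\belief_2$ or the ratios $c_j:=\out^{AB}_j(\act_\player)/\out^{AB}_j(\actb_\player)$ agree.

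The proof concludes by case analysis on Type~AB. If no two Type~AB extreme points have distinct beliefs (in particular if no Type~AB exists, where (a) takes over), then (a), (b), and (c) combine to show all Type~A, Type~B, and Type~AB extreme points share a single belief $\belief$, giving (i). Otherwise, there exist two Type~AB extreme points with $\belief_1\neq\belief_2$; a short three-extreme-point bookkeeping argument using (c) shows that in fact \emph{every} Type~AB extreme point has $c_j$ equal to the common ratio $c:=c_1=c_2$. In this situation (b) would force $\out^A_{\act_\player}$ to equal both $\belief_1$ and $\belief_2$, which is impossible, so no Type~A extreme point exists, and symmetrically no Type~B extreme point exists. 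Only Type~AB and Type~0 extreme points remain, and the common ratio $c$ translates directly into $\out|_{\act_\player}=c\,\out|_{\actb_\player}$ at every extreme point, giving (ii). The main obstacle is the organizational one: making the case split clean and, in the ``distinct beliefs'' case, explicitly ruling out Type~A and Type~B extreme points using identity~(b); once the midpoint identity above is in hand, the rest is routine.
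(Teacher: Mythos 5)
Your proof is correct. Both directions check out: the forward direction's use of the linearity of the unnormalized conditional $\out(\act_\player)\out_{\act_\player}$ across an extreme-point decomposition is exactly how the paper argues, and your midpoint identity, together with its factorization in case (c) into $\bigl[\out(\act_\player)\outb(\actb_\player)-\out(\actb_\player)\outb(\act_\player)\bigr](\belief_1-\belief_2)=\textbf{0}$, is precisely the content of the paper's auxiliary lemma on convex combinations of outcomes with equal beliefs (its Lemma~\ref{lem:equal beliefs at a cvx comb of outcomes w same beliefs}). The difference is organizational rather than substantive: the paper proves the converse by contraposition---assuming both (i) and (ii) fail, it selects two extreme points witnessing the failure of (i), invokes the pairwise lemma to force the equal-ratio alternative, then brings in a third extreme point violating (ii) and shows one of two midpoints must have unequal conditionals---whereas you prove the disjunction directly by classifying extreme points into your Types A, B, AB, 0 and showing that either all non-degenerate conditionals collapse to a single $\belief$ (giving (i)) or Types A and B are ruled out and all Type AB points share a common ratio (giving (ii)). Your direct version makes the global structure of $\ext(\BCEset)$ explicit, at the cost of a longer case analysis; the paper's contrapositive is shorter but less transparent about which of (i)/(ii) obtains. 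Both arguments are valid for $\BCEset$ with infinitely many extreme points, since every step is pairwise or three-point. One small point worth making explicit in a final write-up: in your sub-case where no Type AB point exists, the coherence of $\act_\player$ and $\actb_\player$ is what guarantees that at least one Type A and one Type B extreme point exist, so that identity (a) actually ties all the Type A and Type B conditionals together; you use this implicitly and it is fine, but it deserves a sentence.
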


Thus, to know whether a pair of actions leads to the same beliefs in all outcomes in $\BCEset$, it is enough to check the extreme points of $\BCEset$ for one of two properties. The first property states these actions induce the same beliefs in all of the set's extreme points. The second property requires the likelihood ratio for these actions to be constant across all these extreme points. 

To prove the proposition, we need the following lemma.

\begin{lemma}\label{lem:equal beliefs at a cvx comb of outcomes w same beliefs}
Fix a player $\player$ and two actions $\act_\player,\actb_\player \in \Act_\player$. Let $\out,\outb \in \Delta(\Act\times\Paystate)$ such that $\{\act_\player,\actb_\player \} \subseteq \supp_\player(\out) \cup \supp_\player(\outb)$. Suppose $\outc_{\act_\player}=\outc_{\actb_\player}$ for all $\outc \in \{\out,\outb\}$ with $\{\act_\player,\actb_\player \} \subseteq \supp_\player(\outc)$. If $(\wt \out + (1-\wt)\outb)_{\act_\player} = (\wt \out + (1-\wt)\outb)_{\actb_\player}$ for some $\wt \in (0,1)$, then one of the following two conditions hold:
\begin{enumerate}[(i)]
\item \label{lem:equal beliefs at a cvx comb of outcomes w same beliefs: equal beliefs} A $\belief \in \Delta(\Act_{-\player}\times \Paystate)$ exists such that for all $\outc\in \{\out,\outb\}$, $\{\outc_{\act_\player},\outc_{\actb_\player}\}\subseteq \{\belief,\textbf{0}\}$. 

\item \label{lem:equal beliefs at a cvx comb of outcomes w same beliefs: equal ratios} A constant $\const >0$ exists such that for all $\outc \in \{\out,\outb\}$, $\outc(\act_\player) = \const\outc(\actb_\player)$.
\end{enumerate}

\end{lemma}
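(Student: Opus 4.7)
The plan is to compute $s := tp + (1-t)q$ directly and compare $s_{a_i}$ with $s_{b_i}$, then break into cases based on which of the two actions lies in $\supp_i(p)$ and $\supp_i(q)$. Recall that $s(c_i)\, s_{c_i}(a_{-i},\theta) = tp(c_i)\,p_{c_i}(a_{-i},\theta) + (1-t)q(c_i)\,q_{c_i}(a_{-i},\theta)$ for $c_i \in \{a_i,b_i\}$, with the convention that $r_{c_i} = \mathbf{0}$ when $c_i \notin \supp_i(r)$. Under the hypothesis $\{a_i,b_i\}\subseteq\supp_i(p)\cup\supp_i(q)$, we must have $s(a_i),s(b_i)>0$, so the conditional distributions $s_{a_i}$ and $s_{b_i}$ are well-defined.

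The main case is when both actions lie in both supports; then $p_{a_i}=p_{b_i}$ and $q_{a_i}=q_{b_i}$ by assumption, so I may write $\pi := p_{a_i}$ and $\kappa := q_{a_i}$. Cross-multiplying the identity $s_{a_i}=s_{b_i}$ and expanding yields, after cancellation of common terms,
\[
\bigl[p(a_i)q(b_i)-p(b_i)q(a_i)\bigr]\,\pi \;=\; \bigl[p(a_i)q(b_i)-p(b_i)q(a_i)\bigr]\,\kappa.
\]
Either the scalar in brackets vanishes, in which case $p(a_i)/p(b_i)=q(a_i)/q(b_i)$ and (ii) holds with $\lambda$ equal to this common ratio, or the scalar is nonzero, forcing $\pi=\kappa$ and hence (i) with $\mu:=\pi$.

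In the remaining cases at least one of $a_i,b_i$ fails to lie in one of the two supports; I will just check (i) directly. If $a_i \in \supp_i(p)\cap\supp_i(q)$ but $b_i \in \supp_i(q)\setminus\supp_i(p)$, then $s_{b_i}=q_{b_i}=q_{a_i}$ (using the hypothesis $q_{a_i}=q_{b_i}$), and plugging into $s_{a_i}=s_{b_i}$ and simplifying forces $p_{a_i}=q_{a_i}$, so (i) holds with $\mu=p_{a_i}$. If $a_i \in \supp_i(p)\setminus\supp_i(q)$ and $b_i \in \supp_i(q)\setminus\supp_i(p)$, then $s_{a_i}=p_{a_i}$ and $s_{b_i}=q_{b_i}$, so the equal-beliefs assumption gives $p_{a_i}=q_{b_i}$ and (i) holds with $\mu=p_{a_i}$. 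Finally, if $\{a_i,b_i\}\subseteq\supp_i(p)$ and $\{a_i,b_i\}\cap\supp_i(q)=\varnothing$, the assumption already gives $p_{a_i}=p_{b_i}=:\mu$ while $q_{a_i}=q_{b_i}=\mathbf{0}$, so (i) is immediate. The symmetric subcases (swapping $p$ and $q$) are identical.

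The main obstacle is bookkeeping: the hypothesis $\{a_i,b_i\}\subseteq\supp_i(p)\cup\supp_i(q)$ rules out the subcase where exactly one of $a_i,b_i$ lies in $\supp_i(p)\cap\supp_i(q)$ and the other in neither support, so it does not need to be handled; making sure every remaining configuration is covered and aligned with the $r_{c_i}=\mathbf{0}$ convention is the only delicate point. No deep machinery is needed beyond the elementary polynomial identity displayed above, which is what makes the either/or dichotomy in (i)--(ii) appear.
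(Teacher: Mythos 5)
Your proof is correct and follows essentially the same route as the paper's: cross-multiplying the equality of posteriors at the convex combination, observing that the terms quadratic in $t$ cancel under the hypothesis, and factoring what remains as $\left(p(a_i)q(b_i)-p(b_i)q(a_i)\right)\left(p_{a_i}-q_{a_i}\right)$, with the degenerate support configurations disposed of separately. The differences are purely presentational—you argue directly and enumerate the boundary subcases where the paper argues by contraposition and collapses them with a single without-loss reduction; just note that the configurations in which only $b_i$ (rather than $a_i$) lies in both supports are reached via the harmless $a_i\leftrightarrow b_i$ symmetry in addition to the $p\leftrightarrow q$ swap you invoke.
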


\begin{proof}
Let $\out^{\wt}:= \wt \out + (1-\wt)\outb$. We proceed by contradiction: we assume that Lemma~\ref{lem:equal beliefs at a cvx comb of outcomes w same beliefs}-\eqref{lem:equal beliefs at a cvx comb of outcomes w same beliefs: equal beliefs} and Lemma~\ref{lem:equal beliefs at a cvx comb of outcomes w same beliefs}-\eqref{lem:equal beliefs at a cvx comb of outcomes w same beliefs: equal ratios} both fail and show that $\out^{\wt}_{\act_\player}\neq\out^{\wt}_{\actb_\player}$.

We begin by noting that one can rewrite the condition that $\outc_{\act_\player}=\outc_{\actb_\player}$ for all $\outc \in \{\out,\outb\}$ with $\{\act_\player,\actb_\player \} \subseteq \supp_\player(\outc)$ as
\begin{equation}\label{eq:equal beliefs or zero probability}
\outc(\act_\player)\outc(\actb_\player)\outc_{\act_\player} =
\outc(\act_\player)\outc(\actb_\player)\outc_{\actb_\player} \text{ for all }\outc \in \{\out,\outb\}.
\end{equation}
Because $\supp_{\player}(\out^{\wt}) = \supp_\player(\out) \cup \supp_\player(\outb)$ and $\{\act_\player,\actb_\player \}\subseteq \supp_\player(\out) \cup \supp_\player(\outb)$, we have $\{\act_\player,\actb_\player \}\subseteq \supp_{\player}(\out^{\wt})$. Thus, applying Bayes rule, we obtain that $\out^{\wt}_{\act_\player} = \out^{\wt}_{\actb_\player}$ if and only if for every $\act_{-\player}\in \Act_{-\player}$ and $\paystate\in\Paystate$, one has
\[
\out^{\wt}(\act_\player)\out^{\wt}(\actb_\player,\act_{-\player},\paystate) - \out^{\wt}(\actb_\player)\out^{\wt}(\act_\player,\act_{-\player},\paystate)=0.
\]
Expanding the left hand side of the above equation by substituting in the definition of $\out^{\wt}$, rearranging terms as a polynomial in $\wt$, and using \eqref{eq:equal beliefs or zero probability}, delivers that the above display equation is equivalent to
\begin{equation*}
 (\wt - \wt^{2}) \Big[ \out(\act_\player)\outb(\actb_\player,\act_{-\player},\paystate)
+ \outb(\act_\player)\out(\actb_\player,\act_{-\player},\paystate) 
- \outb(\actb_\player)\out(\act_\player,\act_{-\player},\paystate)
- \out(\actb_\player)\outb(\act_\player,\act_{-\player},\paystate) \Big]=0.
\end{equation*}
Since $\wt\in (0,1)$, we get that $\out^{\wt}_{\act_\player} = \out^{\wt}_{\actb_\player}$ if and only if for every $\act_{-\player}\in \Act_{-\player}$ and $\paystate\in\Paystate$, one has
\begin{equation*}
\out(\act_\player)\outb(\actb_\player,\act_{-\player},\paystate)
+ \outb(\act_\player)\out(\actb_\player,\act_{-\player},\paystate) 
- \outb(\actb_\player)\out(\act_\player,\act_{-\player},\paystate)
- \out(\actb_\player)\outb(\act_\player,\act_{-\player},\paystate)=0.
\end{equation*}
Writing the above in vector notation delivers that $\out^{\wt}_{\act_\player} = \out^{\wt}_{\actb_\player}$ is equivalent to
\begin{equation}\label{eq:lem-1118}
 \out(\act_\player)\outb(\actb_\player)\outb_{\actb_\player}
+ \outb(\act_\player)\out(\actb_\player)\out_{\actb_\player}
- \outb(\actb_\player)\out(\act_\player)\out_{\act_\player}
- \out(\actb_\player)\outb(\act_\player)\outb_{\act_\player}=\mathbf{0}.
\end{equation}

We now divide the proof into cases. Consider first the case in which $\{\act_\player,\actb_\player\} \subseteq \supp_{\player}(\out) \cap \supp_{\player}(\outb)$. In this case,  \eqref{eq:equal beliefs or zero probability} implies $\out_{\act_\player}= \out_{\actb_\player}$ and $\outb_{\act_\player}= \outb_{\actb_\player}$, and so we get that 
\[
\begin{split}
\out(\act_\player)\outb(\actb_\player)\outb_{\actb_\player}
+ \outb(\act_\player)\out(\actb_\player)\out_{\actb_\player}
& - \outb(\actb_\player)\out(\act_\player)\out_{\act_\player}
- \out(\actb_\player)\outb(\act_\player)\outb_{\act_\player} =
\\
& = (\out(\act_\player)\outb(\actb_\player) - \out(\actb_\player)\outb(\act_\player) )
(\outb_{\act_\player}-\out_{\act_\player})\neq \mathbf{0},
\end{split}
\]
where the inequality follows from failure of Lemma~\ref{lem:equal beliefs at a cvx comb of outcomes w same beliefs}-\eqref{lem:equal beliefs at a cvx comb of outcomes w same beliefs: equal beliefs} and Lemma~\ref{lem:equal beliefs at a cvx comb of outcomes w same beliefs}-\eqref{lem:equal beliefs at a cvx comb of outcomes w same beliefs: equal ratios}. We conclude \eqref{eq:lem-1118} fails.

Consider now the case in which $\{\act_\player,\actb_\player\} \not\subseteq \supp_{\player}(\out) \cap \supp_{\player}(\outb)$. Because Lemma~\ref{lem:equal beliefs at a cvx comb of outcomes w same beliefs}-\eqref{lem:equal beliefs at a cvx comb of outcomes w same beliefs: equal ratios} fails, we can assume  $\out(\act_{\player}) = 0<\out(\actb_\player)$ without loss of generality. Since the lemma assume $\act_\player \in \supp_{\player}(\out)\cup \supp_{\player}(\outb)$, it follows $\outb(\act_\player)>0$. Therefore, we can use failure of Lemma~\ref{lem:equal beliefs at a cvx comb of outcomes w same beliefs}-\eqref{lem:equal beliefs at a cvx comb of outcomes w same beliefs: equal ratios} to deduce that $\out_{\actb_\player} \neq \outb_{\act_\player}$. Using these facts, we obtain that 
\[
\out(\act_\player)\outb(\actb_\player)\outb_{\actb_\player}
+ \outb(\act_\player)\out(\actb_\player)\out_{\actb_\player}
 - \outb(\actb_\player)\out(\act_\player)\out_{\act_\player}
- \out(\actb_\player)\outb(\act_\player)\outb_{\act_\player} =
\outb(\act_{\player})\out(\actb_{\player})(\out_{\actb_\player} - \outb_{\act_\player})\neq \mathbf{0}.
\]
It follows that \eqref{eq:lem-1118} fails.
\end{proof}

We are now ready to prove Proposition~\ref{pro:local equal-belief actions via extreme points}. The ``if'' portion is straightforward; the ``only if'' portion uses Lemma~\ref{lem:equal beliefs at a cvx comb of outcomes w same beliefs}.

\begin{proof}[Proof of Proposition~\ref{pro:local equal-belief actions via extreme points}]
We first prove the ``if'' portion. Let $\out\in\BCEset$ and $\act_\player,\actb_\player\in\supp_\player(\out)$. Let $t^1,\ldots,t^n>0$ and $\out^1,\ldots,\out^n\in\ext(\BCEset)$ such that 
\[
\out=\sum_{m=1}^n t^m \out^m. 
\]
Simple algebra shows that for all $\actc_\player\in \supp_\player(\out)$
\[
\out_{\actc_\player}=\sum_{m=1}^n \frac{t^m\out^m(\actc_\player)}{\sum_{l=1}^n t^l\out^l(\actc_\player)} \out^m_{\actc_\player}.
\]
If Proposition~\ref{pro:local equal-belief actions via extreme points}-\eqref{pro:local equal-belief actions via extreme points: equal beliefs} holds, then 
\begin{align*}
\out_{\act_\player}
 =\sum_{m=1}^n \frac{t^m\out^m(\act_\player)}{\sum_{l=1}^n t^l\out^l(\act_\player)} \out^m_{\act_\player}
 & =\sum_{m=1}^n \frac{t^m\out^m(\act_\player)}{\sum_{l=1}^n t^l\out^l(\act_\player)} \mu\\
& = \mu \\
& = \sum_{m=1}^n \frac{t^m\out^m(\actb_\player)}{\sum_{l=1}^n t^l\out^l(\actb_\player)} \mu
= \sum_{m=1}^n \frac{t^m\out^m(\actb_\player)}{\sum_{l=1}^n t^l\out^l(\actb_\player)} \out^m_{\actb_\player}
= \out_{\actb_\player}.
\end{align*}
Suppose now that Proposition~\ref{pro:local equal-belief actions via extreme points}-\eqref{pro:local equal-belief actions via extreme points: equal ratios} holds. For every $m$, $\out^m(\act_\player)\out^m_{\act_\player} = \const\out^m(\actb_\player)\out^m_{\actb_\player}$ implies $\out^m(\act_\player)=\const\out^m(\actb_\player)$ and $\out^m_{\act_\player}= \out^m_{\actb_\player}$. Thus,
\[
\out_{\act_\player}
 =\sum_{m=1}^n \frac{t^m\out^m(\act_\player)}{\sum_{l=1}^n t^l\out^l(\act_\player)} \out^m_{\act_\player}
 =\sum_{m=1}^n \frac{t^m\const \out^m(\actb_\player)}{\sum_{l=1}^n t^l\const\out^l(\actb_\player)} \out^m_{\actb_\player} = \sum_{m=1}^n \frac{t^m \out^m(\actb_\player)}{\sum_{l=1}^n t^l\out^l(\actb_\player)} \out^m_{\actb_\player}
= \out_{\actb_\player}.
\]
This concludes the proof of the proposition's ``if'' portion.

We now show the proposition's ``only if'' portion. We proceed by contradiction: we assume that Proposition~\ref{pro:local equal-belief actions via extreme points}-\eqref{pro:local equal-belief actions via extreme points: equal beliefs} and Proposition~\ref{pro:local equal-belief actions via extreme points}-\eqref{pro:local equal-belief actions via extreme points: equal ratios} both fail and show that there exists $\out\in \BCEset$ such that $\act_\player,\actb_\player \in \supp_\player(\out)$ and $\out_{\act_\player}\neq\out_{\actb_\player}$. As we are done if $\out_{\act_\player} \neq \out_{\actb_\player}$ for some $\out \in \ext(\BCEset)$ with $\act_\player,\actb_\player \in \supp_\player (\out)$, assume $\out_{\act_\player} = \out_{\actb_\player}$ holds for all such $\out$.

Since Proposition~\ref{pro:local equal-belief actions via extreme points}-\eqref{pro:local equal-belief actions via extreme points: equal beliefs} fails, and $\act_\player$ and $\actb_\player$ are $\BCEset$-coherent, there exist $\out,\outb\in\ext(\BCEset)$ such that $\out(\act_\player)>0$, $\outb(\actb_\player)>0$, and $\out_{\act_\player}\neq \outb_{\actb_\player}$. As we are done if $(0.5\out+0.5\outb)_{\act_\player}\neq(0.5\out+0.5\outb)_{\actb_\player}$, assume $(0.5\out+0.5\outb)_{\act_\player}=(0.5\out+0.5\outb)_{\actb_\player}$. Since $\out_{\act_\player}\neq \outb_{\actb_\player}$, Lemma~\ref{lem:equal beliefs at a cvx comb of outcomes w same beliefs}-\eqref{lem:equal beliefs at a cvx comb of outcomes w same beliefs: equal beliefs} fails. Thus, Lemma~\ref{lem:equal beliefs at a cvx comb of outcomes w same beliefs}-\eqref{lem:equal beliefs at a cvx comb of outcomes w same beliefs: equal ratios} must hold: there exist $\lambda>0$ such that $\out(\act_\player)= \const \out(\actb_\player)$ and  $\outb(\act_\player)= \const \outb(\actb_\player)$; in particular, $\out(\actb_\player)>0$ and $\outb(\act_\player)>0$.

Since Proposition~\ref{pro:local equal-belief actions via extreme points}-\eqref{pro:local equal-belief actions via extreme points: equal ratios} fails, there must exist $\outc\in \ext(\BCEset)$ such that $\outc(\act_\player) \neq \const \outc(\actb_\player)$; in particular, $\outc(\act_\player)>0$ or $\outc(\actb_\player)>0$. Let $\actc_\player\in \{\act_\player, \actb_\player\}$ such that $\outc(\actc_\player)>0$. Since $\out_{\act_\player}\neq \outb_{\actb_\player}$, either $\outc_{\actc_\player}\neq \out_{\act_\player}$, or $\outc_{\actc_\player}\neq \out_{\actb_\player}$, or both. Thus, by Lemma~\ref{lem:equal beliefs at a cvx comb of outcomes w same beliefs}, either $(0.5\out+0.5\outc)_{\act_\player}\neq(0.5\out+0.5\outc)_{\actc_\player}$, or $(0.5\outb+0.5\outc)_{\actb_\player}\neq(0.5\outb+0.5\outc)_{\actc_\player}$, or both. In any case, we have found $\out\in\BCEset$ such that $\act_\player,\actb_\player \in \supp_\player(\out)$ and $\out_{\act_\player}\neq\out_{\actb_\player}$. 
\end{proof}

\section{Vanishing Cost Equilibria}

\subsection{Proof of Theorem \ref{thm:complete_info_limit}}\label{sec:proof_vanishing_cost}
The ``only if'' side of the theorem follows from Theorem \ref{thm:mon_tech}. Next we prove the ``if'' side.

Let $\out$ be a complete-information Nash equilibrium. Following the notation in the main text, let $\alpha_{\paystate,\player}\in \Delta(\Act_\player)$ be the distribution of $\player$'s action given $\paystate$. 

Let $(\out^n)_{n\in \mathbb{N}}$ be a sequence of sBCE that converges to $\out$. For every player $\player$ and every $n\in \mathbb{N}$, define $\PartA_\player^{\out^n}$ as in Section \ref{subsec:embedding}. Thus, $\PartA_\player^{\out^n}$ is the partition of $\Act_\player$ such that $\act_\player$ and $\actb_\player$ are in the same cell if and only if either $\act_\player,\actb_\player\in\supp_\player(\out^n)$ and $\out^n_{\act_\player}=\out^n_{\actb_\player}$, or $\act_\player,\actb_\player\notin\supp_\player(\out^n)$. Without loss of generality, we assume that $\PartA_\player^{\out^n}=\PartA_\player^{\out^m}$ for all $\player\in\Player$ and $m,n\in\mathbb{N}$ (pass to a subsequence if necessary). To ease the exposition, we write $\PartA_\player$ instead of $\PartA_\player^{\out^n}$; we also write
\begin{align*}
\PartA_{-\player} = \left\{\prod_{\playerb\neq\player} B_\playerb : B_\playerb\in \PartA_\playerb\text{ for all } \playerb\neq\player\right\},\\
\PartA = \{B_\player\times B_{-\player}:B_\player\in\PartA_\player\text{ and }B_{-\player}\in\PartA_{-\player}\}.
\end{align*}

Next, we construct one canonical representation $(\mathcal{S}^n,\aplan^n)$ for each $\out^n$ (see Section \ref{subsec:embedding} for the general definition of canonical representation). The information structure $\mathcal{S}^n=(\Corstate,\corprior^n,(\Signal_\player,\exper_\player)_{\player\in\Player})$ is specified as follows:
\begin{itemize}
\item To construct $\Corstate\subseteq\prod_{\player\in\Player}\Delta(\PartA_\player)$, let $\delta_{B_\player}\in \Delta(\PartA_\player)$ be the Dirac measure concentrated on  $B_\player\in\PartA_\player$. Moreover, let $\bar{\mact}_{\theta,\player}$ be the measure on $\PartA_\player$ induced by $\alpha_{\paystate,\player}$: for $B_\player\in \Act_\player$,
\[
\bar{\alpha}_{\theta,\player}(B_\player)=\sum_{\act_\player\in B_\player}\mact_{\theta,\player}(\act_\player).
\]
We set $\Corstate=\prod_{\player}\Corstate_\player$ where for every player $\player$,
\[
\Corstate_\player=\left\{\delta_{B_\player}:B_\player\in\Act_\player\right\}\bigcup\left\{\bar{\alpha}_{\paystate,\player}:\paystate\in\Paystate\right\}.
\]
\item To construct $\corprior^n:\Paystate\rightarrow\Delta(\Corstate)$, first we take a sequence $(t^n)_{n\in \mathbb{N}}$ in $(0,1)$ such that (i) $t^n\rightarrow 1$, and (ii) for every $n$, $t^n \out < \out^n$ (such a sequence exists because $\out^n\rightarrow\out$). For every $n$, we define the outcome $r^n\in\Delta(\Act\times\Paystate)$ by
\[
r^n(\act,\paystate) = \frac{p^n(\act,\paystate)-t^n p(\act,\paystate)}{1-t^n},
\]
and the Markov kernel $\corprior^{r,n}:\Paystate\rightarrow\Delta(\Corstate)$ by
\[
\corprior^{r,n}(\corstate\vert\paystate)=\begin{cases}
\sum_{\act\in B}\frac{r^n(\act,\paystate)}{\payprior(\paystate)} &\text{if }B\in \PartA\text{ and }\corstate=\left(\delta_{B_\player}\right)_{\player\in\Player},\\
0 &\text{otherwise.}
\end{cases}
\]
We also denote by $\corprior^\out:\Paystate\rightarrow\Delta(\Corstate)$ the Markov kernel given by
\[
\corprior^{\out}(\corstate\vert\paystate)=\begin{cases}
1 &\text{if }\corstate=\left(\bar{\alpha}_{\paystate,\player}\right)_{\player\in\Player},\\
0 &\text{otherwise.}
\end{cases}
\]
Finally, we construct $\corprior^{n}:\Paystate\rightarrow\Delta(\Corstate)$ as follows:
\[
\corprior^{n}(\corstate\vert\paystate)=(1-t^n)\corprior^{r,n}(\corstate\vert\paystate)+t^n \corprior^{\out}(\corstate\vert\paystate).
\]

\item For every player $\player$, we take $\Signal_\player $ sufficiently rich so that $\PartA_\player\subseteq \Signal_\player$.

\item For every $\player\in \Player$, $\signal_\player\in \Signal_\player$, $\corstate\in \Corstate$, and $\theta\in\Theta$, we define the experiment $\exper_\player$ by
\begin{equation*}
\exper_\player(\signal_\player|\corstate,\paystate)=
\begin{cases}
\corstate_\player(\signal_\player) &\text{if } \signal_\player\in \PartA_\player,\\
0 &\text{otherwise}.
\end{cases}
\end{equation*}
\end{itemize}
The profile of action plans $\aplan^n=(\aplan_\player^n)_{\player\in\Player}$ is given by, for all $\player\in\Player$, $\act_\player\in\Act_\player$, and $\signal_\player\in\PartA_\player$,
\begin{equation*}
\aplan_\player^n(\act_\player|\signal_\player) = 
\begin{cases}
    \frac{\out^n(\act_\player)}{\sum_{\actb_\player \in \signal_\player}\out^n(\actb_\player)} & \text{if }\act_\player \in \signal_\player\text{ and }\signal_\player\subseteq \supp_\player(\out^n),\\
    \frac{1}{\vert\Act_\player\vert-\vert\supp_\player(\out^n)\vert} & \text{if }\act_\player \in \signal_\player\text{ and }\signal_\player=\Act_\player\setminus \supp_\player(\out^n),\\
    0 & \text{otherwise.}
\end{cases}
\end{equation*}

\begin{lemma}
The pair $(\mathcal{S}^n,\aplan^n)$ is a canonical representation of $\out^n$.
\end{lemma}

\begin{proof}
We need to verify that $\out^n$ is the measure induced by $(\mathcal{S}^n,\aplan^n)$ on $\Act\times\Paystate$. We will use the following claim:
\begin{claim}
For all $B\in\PartA$ and $\paystate\in\Paystate$,
\begin{equation}\label{eq:tart_cherry}
\sum_{\act\in B}\out^n(\act,\paystate)=\sum_{\corstate}\left[\prod_{\player}\corstate_\player(B_\player)\right]\corprior^n(\corstate\vert\paystate)\payprior(\paystate)
\end{equation}
\end{claim}
\begin{proof}[Proof of the claim]
By definition of $\corprior^n$, the right-hand side of (\ref{eq:tart_cherry}) is equal to
\begin{equation}\label{eq:kombu}
(1-t^n)\sum_{\corstate}\left[\prod_{\player}\corstate_\player(B_\player)\right]\corprior^{r,n}(\corstate\vert\paystate)\payprior(\paystate)+ t^n \sum_{\corstate}\left[\prod_{\player}\corstate_\player(B_\player)\right]\corprior^{\out}(\corstate\vert\paystate)\payprior(\paystate).
\end{equation}
We observe that, by definition of $\corprior^{r,n}$, 
\begin{equation}\label{eq:kombu1}
\sum_{\corstate}\left[\prod_{\player}\corstate_\player(B_\player)\right]\corprior^{r,n}(\corstate\vert\paystate)\payprior(\paystate)=\sum_{\act\in B}r^n(\act,\paystate). 
\end{equation}
Moreover, by definition of $\corprior^{\out}$,
\begin{equation}\label{eq:kombu2}
\sum_{\corstate}\left[\prod_{\player}\corstate_\player(B_\player)\right]\corprior^{\out}(\corstate\vert\paystate)\payprior(\paystate)=\left[\prod_{\player}\bar{\mact}_{\paystate,\player}(B_\player)\right]\payprior(\paystate)=\sum_{\act\in B}\out(\act,\paystate).
\end{equation}
Combining (\ref{eq:kombu})-(\ref{eq:kombu2}), we deduce that the right-hand side of (\ref{eq:tart_cherry}) is equal to 
\[
(1-t^n) \sum_{\act\in B}r^n(\act,\paystate)+t^n \sum_{\act \in B}\out(\act,\paystate)= \sum_{\act\in B}\left[(1-t^n)r^n(\act,\paystate)+t^n\out(\act,\paystate)\right].
\]
Since $\out^n=(1-t^n)r^n+t^n\out$, we conclude that (\ref{eq:tart_cherry}) holds.
\end{proof}

By Lemma \ref{lem:p-decomposition}, for all $\act\in \Act$ and $\theta\in\Theta$, we can decompose $\out^n(\act,\paystate)$ as
\begin{equation}\label{eq:mango}
\out^n(\act,\paystate)  =
\left[\prod_{\player}\aplan^n_\player(\act_\player|\PartA_\player(\act_\player))\right]\sum_{\actb\in \PartA(\act)}\out^n(\actb,\paystate),
\end{equation}
where $\PartA_\player(\act_\player)$ is the cell of the partition that contains $\act_\player$, and $\PartA(\act)=\prod_{\player}\PartA_i(\act_\player)$.
Putting together (\ref{eq:tart_cherry}) and (\ref{eq:mango}), we obtain that
\begin{align*}
\out^n(\act,\paystate) & = \left[\prod_{\player}\aplan^n_\player(\act_\player|\PartA_\player(\act_\player))\right]\sum_{\corstate}\left[\prod_{\player}\corstate_\player(\PartA_\player(\act_\player))\right]\corprior^n(\corstate\vert\paystate)\payprior(\paystate)\\
& = \sum_{\corstate}\left[\prod_{\player}\aplan^n_\player(\act_\player|\PartA_\player(\act_\player))\corstate_\player(\PartA_\player(\act_\player))\right]\corprior^n(\corstate\vert\paystate)\payprior(\paystate)\\
& = \sum_{\corstate,\signal}\left[\prod_{\player}\aplan^n_\player(\act_\player|\signal_\player)\corstate_\player(\signal_\player)\right]\corprior^n(\corstate\vert\paystate)\payprior(\paystate)\\
& = \sum_{\corstate,\signal}\left[\prod_{\player}\aplan^n_\player(\act_\player|\signal_\player)\exper_\player(\signal_\player\vert\corstate,\paystate)\right]\corprior^n(\corstate\vert\paystate)\payprior(\paystate)
\end{align*}
where the first equality follows from (\ref{eq:tart_cherry}) and (\ref{eq:mango}), the second equality is just algebra, the third equality holds because $\aplan^n(\act_\player\vert\signal_\player)>0$ if and only if $\act_\player\in\signal_\player$, and the last equality by definition of $\exper_\player$. We conclude that $\out^n$ is the measure induced by $(\mathcal{S}^n,\aplan^n)$ on $\Act\times\Paystate$.\end{proof}

Since $\out^n$ is a separated BCE and $(\mathcal{S}^n,\aplan^n)$ is a canonical representation of $\out^n$, it follows from Lemma \ref{lem:p-canonical} that for every player $\player$, there exists a monotone $\icost^n_\player:\Delta(\Signal_\player)^{\Corstate\times\Paystate}\rightarrow\mathbb{R}_+$ such that $(\exper,\aplan^n)$ is an equilibrium of $(\BGame,\IT^n)$ with $\IT^n=(\Corstate,\corprior^n,(\Signal_\player,\Delta(\Signal_\player)^{\Corstate\times\Paystate},\icost^n_\player)_{\player\in\Player})$. Moreover, we can choose $\icost_\player^n$ such that 
\[
\max_{\experb_\player\in \Delta(\Signal_\player)^{\Corstate\times\Paystate}} \icost_{\player}^n(\experb_\player) \leq \frac{1}{n} + \hat{v}_\player\left(\mathcal{S}^n,\aplan^n\right) - \left[\frac{n-1}{n}\grossval_\player\left(\out^n\right)+\frac{1}{n} \noinfoval_\player\left(\out^n\right)\right].
\]
As $n\rightarrow\infty$, the upper bound on $\player$'s costs converges to 
\begin{equation}\label{eq:bold}
\sum_{\paystate}\payprior(\paystate) \left[\max_{\act_\player\in\Act_\player}\sum_{B_{-\player}\in\PartA_{-i}}\sum_{\act_{-\player}\in B_{-\player}}\util_\player(\act_\player,\act_{-\player},\paystate)\prod_{\playerb\neq \player}\frac{\out(\act_\playerb)\bar{\alpha}_{\paystate,\playerb}(B_\playerb)}{\sum_{\actb_\playerb\in B_\playerb}\out(\actb_\playerb)}\right]-\grossval_\player\left(\out\right),
\end{equation}
where we adopt the convention that $\frac{0}{0}=0$. Hence, if we show that (\ref{eq:bold}) is equal to zero, we can conclude that $\out$ is a vanishing cost equilibrium, as desired.

To prove that (\ref{eq:bold}) is equal to zero, we need the following intermediate result:
\begin{lemma}\label{lem:flx_table}
For all $\paystate\in\paystate$, $\player\in\Player$, $B_\player\in \PartA_\player$, and $\act_\player\in B_\player$,
\[
\frac{\out(\act_\player)\bar{\alpha}_{\paystate,\player}(B_\player)}{\sum_{\actb_\player\in B_\player}\out(\actb_\player)}= \alpha_{\paystate,\player}(\act_\player).
\]
\end{lemma}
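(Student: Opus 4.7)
The plan is to exploit the defining property of the partition $\PartA_\player = \PartA_\player^{\out^n}$: two actions $\actb_\player,\actc_\player$ lie in the same cell $B_\player \subseteq \supp_\player(\out^n)$ iff $\out^n_{\actb_\player} = \out^n_{\actc_\player}$. By passing this equality to the limit as $\out^n \to \out$ and using the product form of the complete-information Nash equilibrium $\out$, I will extract a constant of proportionality between $\alpha_{\paystate,\player}(\cdot)$ and $\out(\cdot)$ on each cell, from which the claim follows.

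I would first dispose of the degenerate cell $B_\player = \Act_\player \setminus \supp_\player(\out^n)$. Since the subsequence $(t^n)$ was chosen with $t^n \out \leq \out^n$, one has $\supp_\player(\out) \subseteq \supp_\player(\out^n)$, so any $\act_\player \in B_\player$ satisfies $\out(\act_\player) = 0$. Full support of $\payprior$ together with $\out(\act_\player) = \sum_\paystate \payprior(\paystate)\alpha_{\paystate,\player}(\act_\player)$ then forces $\alpha_{\paystate,\player}(\act_\player) = 0$ for all $\paystate$, and both sides of the claimed identity vanish (using the $0/0 = 0$ convention).

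For a cell $B_\player \subseteq \supp_\player(\out^n)$ and any $\actb_\player,\actc_\player \in B_\player$, the relation $\out^n_{\actb_\player} = \out^n_{\actc_\player}$ rewrites in cleared form as
\[
\out^n(\actc_\player)\,\out^n(\actb_\player,\act_{-\player},\paystate) = \out^n(\actb_\player)\,\out^n(\actc_\player,\act_{-\player},\paystate)\qquad\forall\act_{-\player},\paystate.
\]
Passing to the limit and substituting $\out(\act,\paystate) = \payprior(\paystate)\prod_\playerb \alpha_{\paystate,\playerb}(\act_\playerb)$ yields
\[
\out(\actc_\player)\,\alpha_{\paystate,\player}(\actb_\player)\prod_{\playerb\neq\player}\alpha_{\paystate,\playerb}(\act_\playerb) = \out(\actb_\player)\,\alpha_{\paystate,\player}(\actc_\player)\prod_{\playerb\neq\player}\alpha_{\paystate,\playerb}(\act_\playerb).
\]
Picking $\act_{-\player}$ in the (necessarily nonempty) support of $\prod_{\playerb\neq\player}\alpha_{\paystate,\playerb}$ to cancel the common factor leaves
\[
\out(\actc_\player)\,\alpha_{\paystate,\player}(\actb_\player) = \out(\actb_\player)\,\alpha_{\paystate,\player}(\actc_\player).
\]
Hence on $B_\player \cap \supp_\player(\out)$ the ratio $\alpha_{\paystate,\player}(\cdot)/\out(\cdot)$ is a constant $R_\paystate$ (depending only on $\paystate$ and $B_\player$), and on $B_\player \setminus \supp_\player(\out)$ both $\out$ and $\alpha_{\paystate,\player}$ vanish. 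Summing, $\bar{\alpha}_{\paystate,\player}(B_\player) = R_\paystate \sum_{\actc_\player \in B_\player}\out(\actc_\player)$, so
\[
\frac{\out(\act_\player)\,\bar{\alpha}_{\paystate,\player}(B_\player)}{\sum_{\actb_\player\in B_\player}\out(\actb_\player)} = R_\paystate\,\out(\act_\player) = \alpha_{\paystate,\player}(\act_\player)
\]
for $\act_\player \in B_\player \cap \supp_\player(\out)$, with both sides trivially zero otherwise. The only mildly delicate point is the careful bookkeeping around cells or intersections that can be empty, and verifying the convergence step can cancel the positive factor $\prod_{\playerb\neq\player}\alpha_{\paystate,\playerb}(\act_\playerb)$; everything else is direct algebra.
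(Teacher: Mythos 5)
Your proposal is correct and follows essentially the same route as the paper's proof: both exploit the defining property of the partition along the sequence $\out^n \to \out$, pass to the limit, use the conditional-independence (product) structure of the complete-information Nash equilibrium to cancel the common factor over $\act_{-\player}$, and deduce that $\alpha_{\paystate,\player}(\cdot)$ is proportional to $\out(\cdot)$ on each cell before summing. The only differences are cosmetic (your cleared-form identity versus the paper's direct limit of conditional distributions, and a slightly different case split for the degenerate cells), and the bookkeeping you flag is handled exactly as the paper's Cases (i)--(ii).
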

\begin{proof}
We divide the proof in three cases. Case (i): Assume $\sum_{\actb_\player\in B_\player}\out(\actb_\player)=0$. Then $\bar{\alpha}_{\paystate,\player}(B_\player)=0$ (because $\sum_{\actb_\player\in B_\player}\out(\actb_\player)=\sum_\paystate \payprior(\paystate)\bar{\alpha}_{\paystate,\player}(B_\player)$) and $\alpha_{\paystate,\player}(\act_\player)=0$ (because $\act_\player\in B_\player$). Thus,
\[
\frac{\out(\act_\player)\bar{\alpha}_{\paystate,\player}(B_\player)}{\sum_{\actb_\player\in B_\player}\out(\actb_\player)}= \frac{0}{0}=0=\alpha_{\paystate,\player}(\act_\player).
\]

Case (ii): Assume $\sum_{\actb_\player\in B_\player}\out(\actb_\player)>0$ and $\out(\act_\player)=0$. Then $\alpha_{\paystate,\player}(\act_\player)=0$ (because $\out(\act_\player)=\sum_\paystate\payprior(\paystate)\alpha_{\paystate,\player}(\act_\player)$). Thus,
\[
\frac{\out(\act_\player)\bar{\alpha}_{\paystate,\player}(B_\player)}{\sum_{\actb_\player\in B_\player}\out(\actb_\player)}=0=\alpha_{\paystate,\player}(\act_\player).
\]

Case (iii): Assume $\sum_{\actb_\player\in B_\player}\out(\actb_\player)>0$ and $\out(\act_\player)>0$. Take any $\actb_\player\in B_\player$ such that $\out(\actb_\player)>0$. Since $\out^n\rightarrow\out$, $\out^n(\act_\player)>0$ and $\out^n(\actb_\player)>0$ for all $n$ sufficiently large. Thus, by definition of $\PartA_\player$, $\out_{\act_\player}^n=\out_{\actb_\player}^n$ for all sufficiently large $n$ (in fact, for all $n$). We deduce that $\out_{\act_\player}=\out_{\actb_\player}$: for all $\act_{-\player}\in\Act_{-\player}$ and $\paystate\in\Paystate$,
\begin{equation}\label{eq:friday}
\frac{\out(\act_\player,\act_{-\player},\paystate)}{\out(\act_\player)}=\frac{\out(\actb_\player,\act_{-\player},\paystate)}{\out(\actb_\player)} 
\end{equation}
Since $\out$ is a complete-information Nash equilibrium, players' actions are conditionally independent given the payoff state. Thus, (\ref{eq:friday}) becomes
\[
\frac{\mact_{\paystate,\player}(\act_\player)}{\out(\act_\player)}=\frac{\mact_{\paystate,\player}(\actb_\player)}{\out(\actb_\player)},
\]
which is the same as 
\[
\frac{\mact_{\paystate,\player}(\act_\player)}{\out(\act_\player)}\out(\actb_\player)=\mact_{\paystate,\player}(\actb_\player).
\]
Clearly, the equality also holds for $\actb_\player\in B_\player$ such that $\out(\actb_\player)=0$. Thus, summing over all $\actb_\player\in B_\player$, we obtain that 
\[
\frac{\mact_{\paystate,\player}(\act_\player)}{\out(\act_\player)}\left(\sum_{\actb_\player\in B_\player}\out(\actb_\player)\right)=\overline{\mact}_{\paystate,\player}(B_\player).
\]
Rearranging the equality, we conclude that
\[
\frac{\out(\act_\player)\bar{\alpha}_{\paystate,\player}(B_\player)}{\sum_{\actb_\player\in B_\player}\out(\actb_\player)}= \alpha_{\paystate,\player}(\act_\player).
\]
\end{proof}

It follows from Lemma \ref{lem:flx_table} that (\ref{eq:bold}) is equal to 
\begin{equation*}
\sum_{\paystate}\payprior(\paystate) \left[\max_{\act_\player}\sum_{\act_{-\player}}\util_\player(\act_\player,\act_{-\player},\paystate)\prod_{\playerb\neq \player}\alpha_{\paystate,\playerb}(\act_\playerb)\right]-\grossval_\player\left(\out\right),
\end{equation*}
which, in turn, is equal to zero since $\out$ is complete-information Nash equilibrium:
\begin{align*}
\grossval_\player\left(\out\right)&=\sum_{\paystate}\payprior(\paystate)\left[\sum_{\act}\util_\player(\act,\paystate)\mact_{\paystate,\player}(\act_\player)\prod_{\playerb\neq\player}\mact_{\paystate,\playerb}(\act_\playerb)\right]\\
&=\sum_{\paystate}\payprior(\paystate) \left[\max_{\act_\player}\sum_{\act_{-\player}}\util_\player(\act_\player,\act_{-\player},\paystate)\prod_{\playerb\neq \player}\mact_{\paystate,\playerb}(\act_\playerb)\right],
\end{align*}
where the second equality holds because, given $\theta$, $\mact_{\paystate,\player}$ is a best response to $(\mact_{\paystate,\playerb})_{\playerb\neq\player}$. We conclude that $\out$ is a vanishing cost equilibrium.

\subsection{Full Characterization}\label{sec:proof_vanishing_cost_gen}

In this section we provide a characterization of \emph{all} vanishing cost equilibria. When information costs are negligible, both the payoff state and the correlation state become freely learnable. Thus, every vanishing cost equilibrium must be (i) a convex combination of complete-information Nash equilibria, and (ii) the limit of a sequence of separated BCEs (from Theorem \ref{thm:mon_tech}). It turns out that vanishing cost equilibria satisfy not only (i) and (ii), but also additional ``measurability'' conditions that we present next.

Fix a base game $\BGame$. For every player $\player$, let $\PartA_\player$ be a partition of $\Act_\player$. We denote by $B_\player$ a generic element of $\PartA_\player$. Define 
\begin{align*}
\PartA_{-\player} = \left\{\prod_{\playerb\neq\player} B_\playerb : B_\playerb\in \PartA_\playerb\text{ for all } \playerb\neq\player\right\},\\
\PartA = \{B_\player\times B_{-\player}:B_\player\in\PartA_\player\text{ and }B_{-\player}\in\PartA_{-\player}\}.
\end{align*}
We refer to $\PartA_{-\player}$ and $\PartA$ as \textbf{product partitions} of $\Act_{-\player}$ and $\Act$, respectively. 

Given a product partition $\PartA$ of $\Act$, we say an  outcome $\out$ is \textbf{$\PartA$-measurable} if for every $\player\in\Player$, $B_\player\in\PartA_\player$, and $\act_\player,\actb_\player\in B_\player\cap \supp_\player(\out)$, 
\[
\out_{\act_\player}=\out_{\actb_\player}.
\]
For an intuition, take the perspective of a mediator who wants to implement an outcome $\out$. If $\out$ is $\PartA$-measurable, then the mediator can implement it as follows: draw a payoff state $\paystate$  and an element $B$ of $\PartA$ with probability 
$
\sum_{\act\in B}\out(\theta,\act)
$; communicate to each player $\player$ the realized $B_\player$, and let them privately draw an action $\act_\player\in B_\player$ with probability $\out(\act_\player)/\sum_{\actb_\player\in B_i}\out(\actb_\player)$.

As in Section \ref{subsec:embedding}, let $\PartA^{\out}$ be the product partition of $\Act$ such that, for every player $\player$, actions $\act_\player$ and $\actb_\player$ are in the same cell if and only if either $\act_\player,\actb_\player\in\supp_\player(\out)$ and $\out_{\act_\player}=\out_{\actb_\player}$, or $\act_\player,\actb_\player\notin\supp_\player(\out)$. Note that $\out$ is measurable with respect to $\PartA_{\out}$; except for zero-probability actions, $\PartA_{\out}$ is the coarsest product partition for which $\out$ is measurable. 

Given an outcome $\out$ and a product partition $\PartA$ of $\Act$, we say an outcome $\outb$ is \textbf{$(\PartA,\out)$-decomposable}  if $\outb$ is $\PartA$-measurable, and for every $\player\in\Player$, $B_\player\in\PartA_\player$, and $\act_\player,\actb_\player\in B_\player$,
\begin{equation}\label{eq:decomposable}
\outb(\act_\player) \out(\actb_\player) = \out(\act_\player) \outb(\actb_\player).
\end{equation}
For an intuition, take the perspective of a mediator who wants to implement an outcome $\outb$. If $\outb$ is $(\PartA,\out)$-decomposable, then the mediator can implement it as follows: draw a payoff state $\paystate$  and an element $B$ of $\PartA$ with probability 
$
\sum_{\act\in B}\outb(\theta,\act)
$; communicate to each player $\player$ the realized $B_\player$, and let them privately draw an action $\act_\player\in B_\player$ with probability $\out(\act_\player)/\sum_{\actb_\player\in B_\player}\out(\actb_\player)=\outb(\act_\player)/\sum_{\actb_\player\in B_\player}\outb(\actb_\player)$.

Next we use $\PartA$-measurability and $(\PartA,\out)$-decomposability to characterize vanishing cost equilibrium:

\begin{theorem}\label{thm:vanishing_costs_all}
An outcome $\out$ is a vanishing cost equilibrium if and only if there exists a product partition $\PartA$ of $\Act$ such that
\begin{itemize}
\item[(i)] $\out$ is a convex combination of finitely many $(\PartA,\out)$-decomposable complete-information Nash equilibria, and 
\item[(ii)] $\out$ is the limit of a sequence $(\out^n)_{n=1}^\infty$ of separated BCEs, with $\PartA^{\out^n}=\PartA$ for all $n$.
\end{itemize}
\end{theorem}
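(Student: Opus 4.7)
The plan is to adapt the argument of Theorem \ref{thm:complete_info_limit}, replacing a single complete-information Nash equilibrium by a convex combination of such equilibria, with the decomposability condition controlling how within-cell randomization interacts with the CINE decomposition.

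For the ``only if'' direction, let $\out$ be a vanishing cost equilibrium, arising from a sequence of unconstrained rational-inattention technologies $\IT^n$ with $\max\icost^n_\player\to 0$ and equilibria $(\exper^n,\aplan^n)$ whose outcomes $\out^n$ converge to $\out$. By Theorem \ref{thm:mon_tech}, each $\out^n$ is a separated BCE; since there are finitely many product partitions of $\Act$, I can pass to a subsequence along which $\PartA^{\out^n}$ equals a fixed $\PartA$, yielding (ii). For (i), I would exploit that, as costs vanish, each player can deviate to a fully revealing experiment at negligible cost, so equilibrium behavior, conditional on each $(\corstate,\paystate)$, must be a Nash equilibrium $\mact^{\corstate,\paystate}$ of the complete-information game at $\paystate$. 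After passing to a further subsequence so that the joint distributions of $(\act,\corstate,\paystate)$ converge (embedding the $\Corstate^n$'s into a common finite space), I would rewrite $\out=\sum_f\lambda_f\outb^f$ over deterministic sections $f\colon\Paystate\to\Corstate$, with $\outb^f(\act,\paystate)=\payprior(\paystate)\prod_\player\mact^{f(\paystate),\paystate}_\player(\act_\player)$ a CINE and $\lambda_f=\prod_\paystateb\corprior(f(\paystateb)\vert\paystateb)$.

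The $(\PartA,\out)$-decomposability of each $\outb^f$ comes from the sBCE condition of $\out^n$ passed to the limit: actions sharing a cell $B_\player\in\PartA_\player$ share a common conditional distribution $\out^n_{B_\player}$ for every $n$, which forces the within-cell ratio $\mact^{f(\paystate),\paystate}_\player(\act_\player)/\bar{\mact}^{f(\paystate)}_{\paystate,\player}(B_\player)$ to equal $\out(\act_\player)/\sum_{\actb_\player\in B_\player}\out(\actb_\player)$ for every $\paystate$, where $\bar{\mact}^\corstate_{\paystate,\player}(B_\player):=\sum_{\act_\player\in B_\player}\mact^{\corstate,\paystate}_\player(\act_\player)$. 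This single identity yields both $\PartA$-measurability of $\outb^f$ and the proportionality condition \eqref{eq:decomposable}.

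For the ``if'' direction, I would generalize the canonical construction in the proof of Theorem \ref{thm:complete_info_limit}. Given the decomposition $\out=\sum_{m=1}^M\lambda_m\outb^m$ and sBCEs $\out^n\to\out$ with $\PartA^{\out^n}=\PartA$, enlarge each $\Corstate_\player$ to contain both the Dirac measures $\delta_{B_\player}$ on cells and the cell-projections $\bar{\mact}^m_{\paystate,\player}$ of each NE profile $\mact^m_{\paystate,\player}$. Take $\corprior^n$ to put mass $(1-t^n)$ on a ``recommendation mode'' generating the residual $(\out^n-t^n\out)/(1-t^n)$ as in Theorem \ref{thm:complete_info_limit}, and mass $t^n\lambda_m$ on each of the $M$ ``CINE modes,'' with $t^n\to 1$. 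Define action plans via the within-cell ratios of $\out^n$; by decomposability these ratios coincide with those of every $\outb^m$ in the limit, so the analog of Lemma \ref{lem:flx_table} holds and the gap $\hat{v}_\player(\mathcal{S}^n,\aplan^n)-\grossval_\player(\out^n)$ controlling the cost bound in Lemma \ref{lemma:single_agent_bound} collapses to zero, yielding vanishing costs. The main obstacle is justifying the decomposability claim in the ``only if'' direction, since the CINE decomposition extracted from the vanishing-cost limit depends on technology-specific kernels $\corprior^n$ under no canonicity restriction and with state spaces that may reshape with $n$; my plan is to first replace each $\out^n$ by its canonical representation (Lemma \ref{lem:p-canonical-existence}), bringing all correlation state spaces into the common finite space $\prod_\player\Delta(\PartA_\player)$, and then pass to a convergent subsequence of joint distributions and apply the separation constraint to pin down within-cell ratios independently of $(\corstate,\paystate)$.
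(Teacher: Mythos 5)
Your ``if'' direction follows the paper's own construction essentially step for step (canonical representation with Dirac-on-cell and CINE correlation states, weight $t^n\lambda_m$ on the CINE modes, decomposability forcing the within-cell ratios of $\out$ to agree with those of each $\outb^m$ so that the cost bound from Lemma \ref{lemma:single_agent_bound} collapses), so that half is fine in outline. The ``only if'' direction, however, has two genuine gaps. First, the claim that ``equilibrium behavior, conditional on each $(\corstate,\paystate)$, must be a Nash equilibrium of the complete-information game at $\paystate$'' is false at finite $n$: the deviation to a fully revealing experiment only shows that the \emph{average} gain $\sum_\paystate\payprior(\paystate)\int[\util^*_\player(\mact_{-\player},\paystate)-\util_\player(\mact,\paystate)]\,\dd\cord^n_\paystate(\mact)$ is bounded by $\icost^n_\player(\bar\exper^n_\player)-\icost^n_\player(\exper^n_\player)\leq 1/n$, so conditional behavior is only approximately Nash, and exactness is recovered only almost surely under a \emph{limiting} measure. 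Second, and more seriously, your device for taking that limit does not work. The spaces $\Corstate^n$ can grow without bound, so there is no common finite space of correlation states; and your fallback of replacing each $\out^n$ by its canonical representation is circular, because the canonical representation is an equilibrium of a \emph{different} technology whose cost bound (Lemma \ref{lem:p-canonical}) is $\lambda_\player+\hat v_\player(\mathcal S^n,\aplan^n)-[(1-\lambda_\player)\grossval_\player+\lambda_\player\noinfoval_\player]$ --- a quantity that vanishes precisely when conditional-on-cell behavior is already Nash, which is what you are trying to prove. The fully-revealing-deviation argument must be run in the \emph{original} game with the original $\corprior^n$ and $\icost^n_\player$.

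The paper's resolution is to push $\corprior^n(\cdot\vert\paystate)$ forward under $\corstate\mapsto\mact^{n,\corstate,\paystate}$ to a measure $\cord^n_\paystate$ on the fixed compact space $\Mact=\prod_\player\Delta(\Act_\player)$ of mixed-action profiles, extract a weak* limit $\cord_\paystate$, show $\supp(\cord_\paystate)\subseteq\NE_\paystate$ by the deviation bound, and show the within-cell ratio identity on $\supp(\cord^n_\paystate)$ (your separation-constraint idea, which is correct and is Lemma \ref{lem:May_10_30}) passes to the limit by lower hemicontinuity of the support correspondence. A finite decomposition then comes not from summing over sections $f:\Paystate\to\Corstate$ --- which presupposes the common finite space you do not have --- but from a barycenter/Carath\'eodory argument: $\out_\paystate$ is the barycenter of $\cord_\paystate$ supported on the compact set $M_\paystate$ of ratio-compatible Nash equilibria, hence lies in $\co(M_\paystate)$, after which a cake-cutting step makes the weights state-independent. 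Without some version of this compactification-in-$\Mact$ step, your argument does not close.
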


Theorem \ref{thm:vanishing_costs_all} generalizes Theorem \ref{thm:complete_info_limit}. To see the relationship between the two results, let $\out$ be a complete-information Nash equilibrium that is the limit of a sequence  $(\out^n)_{n=1}^\infty$ of separated BCEs, as in Theorem \ref{thm:complete_info_limit}. Passing to a subsequence, we can assume that $\PartA^{\out^n}=\PartA^{\out^1}$ for all $n$. Since $\out^n\rightarrow \out$, $\out$ is $\PartA^1$-measurable. Because (\ref{eq:decomposable}) trivially holds for $\outb=\out$, $\out$ is $(\PartA^1,\out)$-decomposable. Thus, $\out$ satisfies the hypotheses of Theorem \ref{thm:vanishing_costs_all} with $\PartA:=\PartA^1$. In particular, one can obtain Theorem \ref{thm:complete_info_limit} as a corollary of Theorem \ref{thm:vanishing_costs_all}.

We divide the proof of the theorem \ref{thm:vanishing_costs_all} in two parts.

\subsection{Proof of the ``if'' side of Theorem \ref{thm:vanishing_costs_all}}

Let $\PartA$ be a product partition of $\Act$, and let $\{\outb^1,\ldots,\outb^L\}$ be a finite set of complete-information Nash equilibria. For every $l=1,\ldots,L$, we denote by $\alpha^l_{\paystate,\player}\in \Delta(\Act_\player)$ the conditional distribution of $\player$'s action given $\paystate$. Let $\out$ be an outcome in the convex hull of  $\{\outb^1,\ldots,\outb^L\}$: 
\[
\out=\sum_{l=1}^L s^l \outb^l.
\]
Without loss of generality, suppose that $s^l>0$ for all $l=1,\ldots,L$. 

Assume that for all $l=1,\ldots,L$, $\outb^l$ is $(\PartA,\out)$-decomposable. Furthermore, assume that  $\out$ is the limit of a sequence $(\out^n)_{n=1}^\infty$ of separated BCEs such that $\PartA^{\out^n}=\PartA$ for all $n$. We want to prove that $\out$ is a vanishing cost equilibrium. 

We begin with constructing one canonical representation $(\mathcal{S}^n,\aplan^n)$ for each $\out^n$ (see Section \ref{subsec:embedding} for the general definition of canonical representation). The information structure 
$\mathcal{S}^n=(\Corstate,\corprior^n,(\Signal_\player,\exper_\player)_{\player\in\Player})$
is specified as follows:
\begin{itemize}
\item To construct $\Corstate\subseteq\prod_{\player\in\Player}\Delta(\PartA_\player)$, let $\delta_{B_\player}\in \Delta(\PartA_\player)$ be the Dirac measure concentrated on  $B_\player\in\PartA_\player$. Moreover, let $\bar{\mact}_{\theta,\player}^l$ be the measure on $\PartA_\player$ induced by $\alpha_{\paystate,\player}^l$: for $B_\player\in \Act_\player$,
\[
\bar{\mact}_{\theta,\player}^l(B_\player)=\sum_{\act_\player\in B_\player}\mact_{\theta,\player}^l(\act_\player).
\]
We set $\Corstate=\prod_{\player}\Corstate_\player$ where for every player $\player$,
\[
\Corstate_\player=\left\{\delta_{B_\player}:B_\player\in\Act_\player\right\}\bigcup\left\{\bar{\mact}^l_{\paystate,\player}:\paystate\in\Paystate\right\}.
\]

\item To construct $\corprior^n:\Paystate\rightarrow\Delta(\Corstate)$, first we take a sequence $(t^n)_{n=1}^\infty$ in $(0,1)$ such that 
\begin{itemize}
\item $t^n\rightarrow 1$, and 
\item $t^n \out < \out^n$ for all $n$. 
\end{itemize}
Such a sequence exists because $\out^n\rightarrow\out$. For each $n$, we define the outcome $r^n\in\Delta(\Act\times\Paystate)$ by
\[
r^n(\act,\paystate) = \frac{p^n(\act,\paystate)-t^n p(\act,\paystate)}{1-t^n},
\]
and the Markov kernel $\corprior^{r,n}:\Paystate\rightarrow\Delta(\Corstate)$ by
\[
\corprior^{r,n}(\corstate\vert\paystate)=\begin{cases}
\sum_{\act\in B}\frac{r^n(\act,\paystate)}{\payprior(\paystate)} &\text{if }B\in \PartA\text{ and }\corstate=\left(\delta_{B_\player}\right)_{\player\in\Player},\\
0 &\text{otherwise.}
\end{cases}
\]
We also denote by $\corprior^\out:\Paystate\rightarrow\Delta(\Corstate)$ the Markov kernel given by
\[
\corprior^{\out}(\corstate\vert\paystate)=\begin{cases}
s^l &\text{if }l=1,\ldots,L\text{ and }\corstate=\left(\bar{\alpha}^l_{\paystate,\player}\right)_{\player\in\Player},\\
0 &\text{otherwise.}
\end{cases}
\]
Finally, we construct $\corprior^{n}:\Paystate\rightarrow\Delta(\Corstate)$ as follows:
\[
\corprior^{n}(\corstate\vert\paystate)=(1-t^n)\corprior^{r,n}(\corstate\vert\paystate)+t^n \corprior^{\out}(\corstate\vert\paystate).
\]

\item For every player $\player$, we take $\Signal_\player $ sufficiently rich so that $\PartA_\player\subseteq \Signal_\player$.

\item For every $\player\in \Player$, $\signal_\player\in \Signal_\player$, $\corstate\in \Corstate$, and $\theta\in\Theta$, we define the experiment $\exper_\player$ by
\begin{equation*}
\exper_\player(\signal_\player|\corstate,\paystate)=
\begin{cases}
\corstate_\player(\signal_\player) &\text{if } \signal_\player\in \PartA_\player,\\
0 &\text{otherwise}.
\end{cases}
\end{equation*}
\end{itemize}
The profile of action plans $\aplan^n=(\aplan_\player^n)_{\player\in\Player}$ is given by, for all $\player\in\Player$, $\act_\player\in\Act_\player$, and $\signal_\player\in\PartA_\player$,
\begin{equation*}
\aplan_\player^n(\act_\player|\signal_\player) = 
\begin{cases}
    \frac{\out^n(\act_\player)}{\sum_{\actb_\player \in \signal_\player}\out^n(\actb_\player)} & \text{if }\act_\player \in \signal_\player\text{ and }\signal_\player\subseteq \supp_\player(\out^n),\\
    \frac{1}{\vert\Act_\player\vert-\vert\supp_\player(\out^n)\vert} & \text{if }\act_\player \in \signal_\player\text{ and }\signal_\player=\Act_\player\setminus \supp_\player(\out^n),\\
    0 & \text{otherwise.}
\end{cases}
\end{equation*}

\begin{lemma}
The pair $(\mathcal{S}^n,\aplan^n)$ is a canonical representation of $\out^n$.
\end{lemma}
\begin{proof}
We need to verify that $\out^n$ is the measure induced by $(\mathcal{S}^n,\aplan^n)$ on $\Act\times\Paystate$. We will use the following claim:
\begin{claim}
For all $B\in\PartA$ and $\paystate\in\Paystate$,
\begin{equation}\label{eq:tart_cherry_oa}
\sum_{\act\in B}\out^n(\act,\paystate)=\sum_{\corstate}\left[\prod_{\player}\corstate_\player(B_\player)\right]\corprior^n(\corstate\vert\paystate)\payprior(\paystate)
\end{equation}
\end{claim}
\begin{proof}[Proof of the claim]
By definition of $\corprior^n$, the right-hand side of (\ref{eq:tart_cherry_oa}) is equal to
\begin{equation}\label{eq:kombu_oa}
(1-t^n)\sum_{\corstate}\left[\prod_{\player}\corstate_\player(B_\player)\right]\corprior^{r,n}(\corstate\vert\paystate)\payprior(\paystate)+ t^n \sum_{\corstate}\left[\prod_{\player}\corstate_\player(B_\player)\right]\corprior^{\out}(\corstate\vert\paystate)\payprior(\paystate).
\end{equation}
We observe that, by definition of $\corprior^{r,n}$, 
\begin{equation}\label{eq:kombu1_oa}
\sum_{\corstate}\left[\prod_{\player}\corstate_\player(B_\player)\right]\corprior^{r,n}(\corstate\vert\paystate)\payprior(\paystate)=\sum_{\act\in B}r^n(\act,\paystate). 
\end{equation}
Moreover, by definition of $\corprior^{\out}$,
\begin{align}\label{eq:kombu2_oa}
\sum_{\corstate}\left[\prod_{\player}\corstate_\player(B_\player)\right]\corprior^{\out}(\corstate\vert\paystate)\payprior(\paystate) & =\sum_{l}s^l\left[\prod_{\player}\bar{\mact}^l_{\paystate,\player}(B_\player)\right] \payprior(\paystate)\notag\\
& = \sum_{l}s^l\left[\sum_{\act\in B}\frac{\outb^l(\act,\paystate)}{\payprior(\paystate)}\right]\payprior(\paystate) =\sum_{\act\in B}\out(\act,\paystate).
\end{align}
Combining (\ref{eq:kombu_oa})-(\ref{eq:kombu2_oa}), we deduce that the right-hand side of (\ref{eq:tart_cherry_oa}) is equal to 
\[
(1-t^n) \sum_{\act\in B}r^n(\act,\paystate)+t^n \sum_{\act \in B}\out(\act,\paystate)= \sum_{\act\in B}\left[(1-t^n)r^n(\act,\paystate)+t^n\out(\act,\paystate)\right].
\]
Since $\out^n=(1-t^n)r^n+t^n\out$, we conclude that (\ref{eq:tart_cherry_oa}) holds.
\end{proof}
By Lemma \ref{lem:p-decomposition}, for all $\act\in \Act$ and $\theta\in\Theta$, we can decompose $\out^n(\act,\paystate)$ as
\begin{equation}\label{eq:mango_oa}
\out^n(\act,\paystate)  =
\left[\prod_{\player}\aplan^n_\player(\act_\player|\PartA_\player(\act_\player))\right]\sum_{\actb\in \PartA(\act)}\out^n(\actb,\paystate),
\end{equation}
where $\PartA_\player(\act_\player)$ is the cell of the partition that contains $\act_\player$, and $\PartA(\act)=\prod_{\player}\PartA_i(\act_\player)$.
Putting together (\ref{eq:tart_cherry_oa}) and (\ref{eq:mango_oa}), we obtain that
\begin{align*}
\out^n(\act,\paystate) & = \left[\prod_{\player}\aplan^n_\player(\act_\player|\PartA_\player(\act_\player))\right]\sum_{\corstate}\left[\prod_{\player}\corstate_\player(\PartA_\player(\act_\player))\right]\corprior^n(\corstate\vert\paystate)\payprior(\paystate)\\
& = \sum_{\corstate}\left[\prod_{\player}\aplan^n_\player(\act_\player|\PartA_\player(\act_\player))\corstate_\player(\PartA_\player(\act_\player))\right]\corprior^n(\corstate\vert\paystate)\payprior(\paystate)\\
& = \sum_{\corstate,\signal}\left[\prod_{\player}\aplan^n_\player(\act_\player|\signal_\player)\corstate_\player(\signal_\player)\right]\corprior^n(\corstate\vert\paystate)\payprior(\paystate)\\
& = \sum_{\corstate,\signal}\left[\prod_{\player}\aplan^n_\player(\act_\player|\signal_\player)\exper_\player(\signal_\player\vert\corstate,\paystate)\right]\corprior^n(\corstate\vert\paystate)\payprior(\paystate)
\end{align*}
where the first equality follows from (\ref{eq:tart_cherry_oa}) and (\ref{eq:mango_oa}), the second equality is just algebra, the third equality holds because $\aplan^n(\act_\player\vert\signal_\player)>0$ if and only if $\act_\player\in\signal_\player$, and the last equality by definition of $\exper_\player$. We conclude that $\out^n$ is the measure induced by $(\mathcal{S}^n,\aplan^n)$ on $\Act\times\Paystate$.\end{proof}

Since $\out^n$ is a separated BCE and $(\mathcal{S}^n,\aplan^n)$ is a canonical representation of $\out^n$, it follows from Lemma \ref{lem:p-canonical} that for every player $\player$, there exists a monotone $\icost^n_\player:\Delta(\Signal_\player)^{\Corstate\times\Paystate}\rightarrow\mathbb{R}_+$ such that $(\exper,\aplan^n)$ is an equilibrium of $(\BGame,\IT^n)$ with $\IT^n=(\Corstate,\corprior^n,(\Signal_\player,\Delta(\Signal_\player)^{\Corstate\times\Paystate},\icost^n_\player)_{\player\in\Player})$. Moreover, we can choose $\icost_\player^n$ such that 
\[
\max_{\experb_\player\in \Delta(\Signal_\player)^{\Corstate\times\Paystate}} \icost_{\player}^n(\experb_\player) \leq \frac{1}{n} + \hat{v}_\player\left(\mathcal{S}^n,\aplan^n\right) - \left[\frac{n-1}{n}\grossval_\player\left(\out^n\right)+\frac{1}{n} \noinfoval_\player\left(\out^n\right)\right].
\]
As $n\rightarrow\infty$, the upper bound on $\player$'s costs converges to 
\begin{equation}\label{eq:bold_oa}
\sum_{\paystate}\payprior(\paystate)\sum_l s^l\left[\max_{\act_\player\in\Act_\player}\sum_{B_{-\player}\in\PartA_{-i}}\sum_{\act_{-\player}\in B_{-\player}}\util_\player(\act_\player,\act_{-\player},\paystate)\prod_{\playerb\neq \player}\frac{\out(\act_\playerb)\bar{\alpha}^l_{\paystate,\playerb}(B_\playerb)}{\sum_{\actb_\playerb\in B_\playerb}\out(\actb_\playerb)}\right]-\grossval_\player\left(\out\right),
\end{equation}
where we adopt the convention that $\frac{0}{0}=0$. Hence, if we show that (\ref{eq:bold_oa}) is equal to zero, we can conclude that $\out$ is a vanishing cost equilibrium, as desired.

To prove that (\ref{eq:bold_oa}) is equal to zero, we need the following intermediate results:
\begin{lemma}\label{lem:flx_table_oa}
For all $\paystate\in\paystate$,  $l\in\{1,\ldots,L\}$, $\player\in\Player$, $B_\player\in \PartA_\player$, and $\act_\player\in B_\player$,
\[
\frac{\out(\act_\player)\bar{\alpha}^l_{\paystate,\player}(B_\player)}{\sum_{\actb_\player\in B_\player}\out(\actb_\player)}= \frac{\outb^l(\act_\player)\bar{\alpha}^l_{\paystate,\player}(B_\player)}{\sum_{\actb_\player\in B_\player}\outb^l(\actb_\player)}
\]
where on both sides of the equation we adopt the convention that $\frac{0}{0}=0$.
\end{lemma}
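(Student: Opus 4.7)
The plan is to reduce the claimed identity directly to the $(\PartA,\out)$-decomposability hypothesis. The core of the proof will be entirely algebraic; the only subtlety is managing the two kinds of degenerate cases in which a denominator vanishes, for which the convention $0/0=0$ has been declared.

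First I would start from the decomposability condition
\[
\outb^l(\act_\player)\out(\actb_\player) = \out(\act_\player)\outb^l(\actb_\player)
\quad \text{for all } \act_\player,\actb_\player \in B_\player,
\]
and sum both sides over $\actb_\player \in B_\player$ to obtain the ``cross-multiplication'' identity
\[
\outb^l(\act_\player)\sum_{\actb_\player\in B_\player}\out(\actb_\player) = \out(\act_\player)\sum_{\actb_\player\in B_\player}\outb^l(\actb_\player).
\]
This is already the main work. In the generic case where both sums are strictly positive, dividing gives the equality of ratios $\out(\act_\player)/\sum_{\actb_\player\in B_\player}\out(\actb_\player) = \outb^l(\act_\player)/\sum_{\actb_\player\in B_\player}\outb^l(\actb_\player)$, and multiplying through by $\bar{\alpha}^l_{\paystate,\player}(B_\player)$ yields the lemma.

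Next I would handle the degenerate cases. If $\sum_{\actb_\player\in B_\player}\out(\actb_\player)=0$, then since $\out=\sum_l s^l\outb^l$ with each $s^l>0$, every $\outb^l$ also vanishes on $B_\player$; in particular both $\out(\act_\player)$ and $\outb^l(\act_\player)$ are $0$, so both sides of the claimed identity equal $0/0=0$ by convention. If instead $\sum_{\actb_\player\in B_\player}\out(\actb_\player)>0$ but $\sum_{\actb_\player\in B_\player}\outb^l(\actb_\player)=0$, then $\outb^l$ vanishes on $B_\player$; because $\payprior$ has full support and $\sum_{\actb_\player\in B_\player}\outb^l(\actb_\player)=\sum_\paystate \payprior(\paystate)\bar{\alpha}^l_{\paystate,\player}(B_\player)$, this forces $\bar{\alpha}^l_{\paystate,\player}(B_\player)=0$ for every $\paystate$, so again both sides are $0$.

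I do not expect a real obstacle here; the hypothesis (\ref{eq:decomposable}) was tailor-made for this conclusion. The only care required is to be pedantic about the $0/0$ convention and to invoke the full support of $\payprior$ to pass information between unconditional and conditional marginals in the degenerate case. Once the three cases (both positive; both zero; only one zero) are dispatched, the lemma follows.
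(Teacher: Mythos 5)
Your proof is correct and follows essentially the same route as the paper's: both reduce the identity to the $(\PartA,\out)$-decomposability condition \eqref{eq:decomposable}, divide when the denominators are positive, and dispatch the degenerate cases via the $0/0=0$ convention (the paper cases on whether $\bar{\alpha}^l_{\paystate,\player}(B_\player)$ vanishes, you case on the denominators, but the underlying facts invoked---including the absolute continuity of $\outb^l$ with respect to $\out$ coming from $\out=\sum_l s^l\outb^l$ with $s^l>0$---are the same). No gaps.
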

\begin{proof}
If $\bar{\alpha}^l_{\paystate,\player}(B_\player)=0$, then (trivially) 
\[
\frac{\out(\act_\player)\bar{\alpha}^l_{\paystate,\player}(B_\player)}{\sum_{\actb_\player\in B_\player}\out(\actb_\player)}= 0=\frac{\outb^l(\act_\player)\bar{\alpha}^l_{\paystate,\player}(B_\player)}{\sum_{\actb_\player\in B_\player}\outb^l(\actb_\player)}
\]
Suppose now that $\bar{\alpha}^l_{\paystate,\player}(B_\player)>0$. Then, we have
\[
\sum_{\actb_\player\in B_\player}\outb^l(\actb_\player)=\sum_{\paystate}\payprior(\paystate)\bar{\alpha}^l_{\paystate,\player}(B_\player)>0,
\]
which in turn implies that $\sum_{\actb_\player\in B_\player}\out(\actb_\player)>0$ (since $\outb^l$ is absolutely continuous with respect to $\out$). Since $\outb^l$ is $(\PartA,\out)$-decomposable, It follows from (\ref{eq:decomposable}) that 
\[
\frac{\out(\act_\player)}{\sum_{\actb_\player\in B_\player}\out(\actb_\player)}=\frac{\outb^l(\act_\player)}{\sum_{\actb_\player\in B_\player}\outb^l(\actb_\player)}.
\]
Multiplying both sides of the equation by $\bar{\alpha}^l_{\paystate,\player}(B_\player)$, we obtain the desired result.
\end{proof}
\begin{lemma}\label{lem:bx_table_oa}
For all $\paystate\in\paystate$,  $l\in\{1,\ldots,L\}$, $\player\in\Player$, $B_\player\in \PartA_\player$, and $\act_\player\in B_\player$,
\[
\frac{\outb^l(\act_\player)\bar{\alpha}^l_{\paystate,\player}(B_\player)}{\sum_{\actb_\player\in B_\player}\outb^l(\actb_\player)}= \alpha^l_{\paystate,\player}(\act_\player).
\]
\end{lemma}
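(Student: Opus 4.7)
The plan is to combine two structural properties of $\outb^l$: first, that $\outb^l$ is a complete-information Nash equilibrium, so actions are conditionally independent given $\paystate$ with marginals $\mact^l_{\paystate,\player}$; and second, that $\outb^l$ is $(\PartA,\out)$-decomposable and therefore $\PartA$-measurable. From the first property I get $\outb^l(\act_\player) = \sum_{\paystate}\payprior(\paystate)\mact^l_{\paystate,\player}(\act_\player)$ and $\bar{\alpha}^l_{\paystate,\player}(B_\player) = \sum_{\act_\player \in B_\player}\mact^l_{\paystate,\player}(\act_\player)$. The second property says that $\outb^l_{\act_\player} = \outb^l_{\actb_\player}$ whenever $\act_\player,\actb_\player \in B_\player \cap \supp_\player(\outb^l)$.

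First I will dispose of the degenerate cases. If $\outb^l(\act_\player) = 0$, then $\sum_{\paystate}\payprior(\paystate)\mact^l_{\paystate,\player}(\act_\player) = 0$ and full support of $\payprior$ forces $\mact^l_{\paystate,\player}(\act_\player) = 0$ for every $\paystate$, so both sides of the claimed equation vanish. If $\sum_{\actb_\player \in B_\player}\outb^l(\actb_\player) = 0$, then every $\actb_\player \in B_\player$ has $\outb^l(\actb_\player) = 0$, and by the argument just given $\mact^l_{\paystate,\player}(\actb_\player) = 0$ for every such $\actb_\player$; hence $\bar{\alpha}^l_{\paystate,\player}(B_\player) = 0$, the left-hand side of the equation is $0/0 = 0$ by the stated convention, and the right-hand side is also zero.

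In the remaining case $\outb^l(\act_\player)>0$, I will use $\PartA$-measurability. For $\actb_\player \in B_\player \cap \supp_\player(\outb^l)$, the equality $\outb^l_{\act_\player} = \outb^l_{\actb_\player}$ can be summed over $\act_{-\player}$ to yield equal $\paystate$-marginals; plugging in the Nash product structure of $\outb^l$ gives
\[
\frac{\mact^l_{\paystate,\player}(\act_\player)}{\outb^l(\act_\player)} \;=\; \frac{\mact^l_{\paystate,\player}(\actb_\player)}{\outb^l(\actb_\player)},
\]
which, after cross-multiplication, becomes $\mact^l_{\paystate,\player}(\act_\player)\outb^l(\actb_\player) = \outb^l(\act_\player)\mact^l_{\paystate,\player}(\actb_\player)$; this identity holds trivially (both sides zero) for $\actb_\player \in B_\player \setminus \supp_\player(\outb^l)$. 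Summing the identity over all $\actb_\player \in B_\player$ and dividing by the positive quantity $\sum_{\actb_\player \in B_\player}\outb^l(\actb_\player)$ yields
\[
\mact^l_{\paystate,\player}(\act_\player) \;=\; \frac{\outb^l(\act_\player)\bar{\alpha}^l_{\paystate,\player}(B_\player)}{\sum_{\actb_\player \in B_\player}\outb^l(\actb_\player)},
\]
which is the desired equality since $\alpha^l_{\paystate,\player}(\act_\player) = \mact^l_{\paystate,\player}(\act_\player)$.

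I do not anticipate any real obstacle: once the Nash product decomposition and the $\PartA$-measurability clause of $(\PartA,\out)$-decomposability are written out, the result falls out by a single cross-multiply-and-sum maneuver. The only care needed is bookkeeping around the zero cases, which the $0/0 = 0$ convention handles cleanly.
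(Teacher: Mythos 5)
Your proof is correct and follows essentially the same route as the paper's: dispose of the zero-probability cases via the convention, then in the main case combine the $\PartA$-measurability of $\outb^l$ with the product structure of a complete-information Nash equilibrium to obtain the proportionality $\mact^l_{\paystate,\player}(\act_\player)/\outb^l(\act_\player)=\mact^l_{\paystate,\player}(\actb_\player)/\outb^l(\actb_\player)$ on the cell, and sum over $B_\player$. The only cosmetic difference is that you organize the degenerate cases by whether $\outb^l(\act_\player)=0$ (using full support of $\payprior$) rather than by whether the denominator vanishes first, which changes nothing of substance.
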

\begin{proof}

We divide the proof in three cases. Case (i): Assume $\sum_{\actb_\player\in B_\player}\outb^l(\actb_\player)=0$. Then $\bar{\alpha}^l_{\paystate,\player}(B_\player)=0$ and $\alpha^l_{\paystate,\player}(\act_\player)=0$. Thus,
\[
\frac{\outb^l(\act_\player)\bar{\alpha}^l_{\paystate,\player}(B_\player)}{\sum_{\actb_\player\in B_\player}\outb^l(\actb_\player)}= \frac{0}{0}=0=\alpha^l_{\paystate,\player}(\act_\player).
\]

Case (ii): Assume $\sum_{\actb_\player\in B_\player}\outb^l(\actb_\player)>0$ and $\outb^l(\act_\player)=0$. Then $\alpha^l_{\paystate,\player}(\act_\player)=0$. Thus,
\[
\frac{\outb^l(\act_\player)\bar{\alpha}^l_{\paystate,\player}(B_\player)}{\sum_{\actb_\player\in B_\player}\outb^l(\actb_\player)}=0=\alpha^l_{\paystate,\player}(\act_\player).
\]

Case (iii): Assume $\sum_{\actb_\player\in B_\player}\outb^l(\actb_\player)>0$ and $\outb^l(\act_\player)>0$. Take any $\actb_\player\in B_\player$ such that $\outb^l(\actb_\player)>0$. Since $\outb^l$ is $\PartA$-measurable, $\outb^l_{\act_\player}=\outb^l_{\actb_\player}$: for all $\act_{-\player}\in\Act_{-\player}$ and $\paystate\in\Paystate$,
\begin{equation}\label{eq:friday_oa}
\frac{\outb^l(\act_\player,\act_{-\player},\paystate)}{\outb^l(\act_\player)}=\frac{\outb^l(\actb_\player,\act_{-\player},\paystate)}{\outb^l(\actb_\player)} 
\end{equation}
Since $\outb^l$ is a complete-information Nash equilibrium, players' actions are conditionally independent given the payoff state. Thus, (\ref{eq:friday_oa}) becomes
\[
\frac{\mact^l_{\paystate,\player}(\act_\player)}{\outb^l(\act_\player)}=\frac{\mact^l_{\paystate,\player}(\actb_\player)}{\outb^l(\actb_\player)},
\]
which is the same as 
\[
\frac{\mact^l_{\paystate,\player}(\act_\player)}{\outb^l(\act_\player)}\outb^l(\actb_\player)=\mact^l_{\paystate,\player}(\actb_\player).
\]
Clearly, the equality also holds for $\actb_\player\in B_\player$ such that $\outb^l(\actb_\player)=0$. Thus, summing over all $\actb_\player\in B_\player$, we obtain that 
\[
\frac{\mact^l_{\paystate,\player}(\act_\player)}{\outb^l(\act_\player)}\left(\sum_{\actb_\player\in B_\player}\outb^l(\actb_\player)\right)=\overline{\mact}^l_{\paystate,\player}(B_\player).
\]
Rearranging the equality, we conclude that
\[
\frac{\outb^l(\act_\player)\bar{\alpha}^l_{\paystate,\player}(B_\player)}{\sum_{\actb_\player\in B_\player}\outb^l(\actb_\player)}= \alpha^l_{\paystate,\player}(\act_\player).
\]

\end{proof}

It follows from Lemmas \ref{lem:flx_table_oa} and \ref{lem:bx_table_oa} that (\ref{eq:bold_oa}) is equal to 
\begin{equation*}
\sum_{\paystate}\payprior(\paystate) \sum_l s^l\left[\max_{\act_\player}\sum_{\act_{-\player}}\util_\player(\act_\player,\act_{-\player},\paystate)\prod_{\playerb\neq \player}\mact^l_{\paystate,\playerb}(\act_\playerb)\right]-\grossval_\player\left(\out\right),
\end{equation*}
which, in turn, is equal to zero since each $\outb^l$ is complete-information Nash equilibrium:
\begin{align*}
\grossval_\player\left(\out\right)&=\sum_{\paystate}\payprior(\paystate)\sum_l s^l\left[\sum_{\act}\util_\player(\act,\paystate)\mact^l_{\paystate,\player}(\act_\player)\prod_{\playerb\neq\player}\mact^l_{\paystate,\playerb}(\act_\playerb)\right]\\
&=\sum_{\paystate}\payprior(\paystate) \sum_l s^l\left[\max_{\act_\player}\sum_{\act_{-\player}}\util_\player(\act_\player,\act_{-\player},\paystate)\prod_{\playerb\neq \player}\mact^l_{\paystate,\playerb}(\act_\playerb)\right],
\end{align*}
where the second equality holds because, given $\theta$, $\mact^l_{\paystate,\player}$ is a best response to $(\mact^l_{\paystate,\playerb})_{\playerb\neq\player}$. We conclude that $\out$ is a vanishing cost equilibrium.

\subsection{Proof of the ``only if'' side of Theorem \ref{thm:vanishing_costs_all}}

Let $\out$ be a vanishing cost equilibrium: we want to show that there exists a product partition $\PartA$ of $\Act$ such that Theorem \ref{thm:vanishing_costs_all}-(i) and Theorem \ref{thm:vanishing_costs_all}-(ii) hold. 

By the definition of vanishing cost equilibrium, for every $n\in\{1,2,\ldots\}$, we can find an unconstrained rational-inattention technology $\itech^n=(\Corstate^n,\corprior^n,(\Signal_\player^n,\Exper^n_\player,\icost_\player^n)_{\player\in\Player})$ and an equilibrium $(\exper^n,\aplan^n)$ of $(\BGame,\itech^n)$ such that, denoting by $\out^n$ the outcome of $(\exper^n,\aplan^n)$, 
\begin{equation*}
\max_{\exper_\player^\prime\in \Delta(\Signal_\player^n)^{\Corstate^n\times\Paystate}} \icost_\player^n(\exper_\player^\prime) \leq \frac{1}{n}
\quad \text{and} \quad 
\left\vert \out(\act,\paystate) - \out^n(\act,\paystate) \right\vert  
\leq \frac{1}{n}
\end{equation*}
for all $\player \in \Player$, $\act \in \Act$, and $\paystate \in \Paystate$. Possibly passing to a subsequence, we can assume that
\[
\PartA^{\out^{n}}=\PartA^{\out^{n+1}}
\]
for all $n$. Let $\PartA:=\PartA^{\out^{n}}$ be this fixed product partition. Since each $\out^n$ is a sBCE by Theorem~\ref{thm:mon_tech}, we obtain that Theorem \ref{thm:vanishing_costs_all}-(ii) holds.

In the rest of the proof, we show that Theorem \ref{thm:vanishing_costs_all}-(i) is satisfied. We begin with introducing some notation. Let $\Mact_\player$ be the set of player $\player$'s mixed actions: $\Mact_\player=\Delta(\Act_\player)$; we define  the Cartesian products $\Mact_{-\player}=\prod_{\playerb\neq \player}\Mact_\playerb$ and $\Mact=\Mact_{-\player}\times\Mact_\player$. For  $n\in\mathbb{N}$, $\corstate\in \Corstate^n$, and $\paystate \in \Paystate$, we denote by $\mact^{n,\corstate,\paystate}\in \Mact$ the induced mixed-action profile: for all $\player$ and $\act_\player$,
\[
\mact^{n,\corstate,\paystate}_{\player}(\act_\player) = \sum_{\signal_\player} \aplan^{n}_{\player} (\act_\player|\signal_\player) \exper^{n}_{\player}(\signal_\player|\corstate,\paystate).
\]
In addition, we define the transition kernel $\Paystate\ni\paystate\mapsto \cord^{n}_\paystate\in\Delta\left(\Mact\right)$ by
\[
\cord^{n}_\paystate\left(M\right) = 
\corprior^{n}\left( 
\left\{ 
\corstate \in \Corstate^{n} : \mact^{n,\corstate,\paystate} \in M \right\}
|\paystate\right)
\]
for all Borel sets $M \subseteq \Mact$.
Direct computation shows that 
\begin{equation}\label{eq:may_9_23}
\out^{n}(\act,\paystate) = \payprior(\paystate) \int_{\Mact} \prod_\player \mact_\player(\act_\player) \,\dd\cord^{n}_\paystate(\mact)
\end{equation}
for all $\act$ and $\paystate$. By weak* compactness of $\Delta\left(\Mact\right)$ (\citealp{Aliprantis2006}, Theorem 15.11), it is without loss (potentially by passing to a subsequence) to assume that, for every $\paystate$, $\cord^{n}_\paystate$ converges in the weak* topology to some limit $\cord_\paystate$. Since $\out^n\rightarrow\out$, we obtain that
\begin{equation}\label{eq:barycenter}
\out (\act,\paystate) = \payprior(\paystate) \int_{\Mact} \prod_\player \mact_\player(\act_\player) \,\dd\cord_\paystate(\mact)
\end{equation}
for all $\act$ and $\paystate$. 

The next lemma relates the support of $\cord^n_\paystate$ (which is finite) to $\out^n$:

\begin{lemma}\label{lem:May_10_30}
For all $n\in\mathbb{N}$, $\paystate\in\Paystate$, $\mact \in \supp (\cord^n_\paystate)$, $\player\in\Player$, $B_\player\in\PartA_\player$, and $\act_\player,\actb_\player\in B_\player$,
\begin{equation}\label{eq:may_9_23_59}
\mact_\player(\act_\player)\out^n(\actb_\player)=\mact_\player(\actb_\player)\out^n(\act_\player).
\end{equation}
\end{lemma}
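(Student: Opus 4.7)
The plan is to use the separation and obedience structure of $\out^n$ (a separated BCE realized by the equilibrium $(\exper^n,\aplan^n)$) together with Bayes rule to show the desired proportionality. First I would dispose of the trivial case: by the definition of $\PartA^{\out^n}_\player$, if $\act_\player,\actb_\player\in B_\player$ and $B_\player\cap\supp_\player(\out^n)=\varnothing$, then $\out^n(\act_\player)=\out^n(\actb_\player)=0$, and from $\out^n(\act_\player)=\sum_{\corstateb,\paystateb}\mact^{n,\corstateb,\paystateb}_\player(\act_\player)\corprior^n(\corstateb|\paystateb)\payprior(\paystateb)$ I also get $\mact^{n,\corstate,\paystate}_\player(\act_\player)=0$ (and similarly for $\actb_\player$) for every $(\corstate,\paystate)$ with $\corprior^n(\corstate|\paystate)>0$. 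So \eqref{eq:may_9_23_59} holds trivially.

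For the main case, assume $\act_\player,\actb_\player\in B_\player\subseteq\supp_\player(\out^n)$, so $\out^n_{\act_\player}=\out^n_{\actb_\player}$. Denote by $\nu^n$ the joint distribution on $\Act\times\Signal^n\times\Corstate^n\times\Paystate$ induced by $(\exper^n,\aplan^n)$, and let $X^n_{\act_\player},X^n_{\actb_\player}$ be the sets of positive-probability signals leading to $\act_\player$ and $\actb_\player$, respectively (as in Section 2 of the Appendix). The key step is to show that all posteriors $\nu^n_{\signal_\player}$ for $\signal_\player\in X^n_{\act_\player}\cup X^n_{\actb_\player}$ coincide. By Lemma \ref{lem:multi_agent}-(i), $\act_\player\in BR(\nu^n_{\signal_\player})$ for every $\signal_\player\in X^n_{\act_\player}$. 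By Lemma \ref{lem:BR in info-game contains BR from outcome}, $BR(\out^n_{\actb_\player})\subseteq BR(\nu^n_{\signalb_\player})$ for every $\signalb_\player\in X^n_{\actb_\player}$, and $\act_\player\in BR(\out^n_{\act_\player})=BR(\out^n_{\actb_\player})$. Hence $\act_\player\in BR(\nu^n_{\signal_\player})\cap BR(\nu^n_{\signalb_\player})$, and Lemma \ref{lem:multi_agent}-(ii) forces $\nu^n_{\signal_\player}=\nu^n_{\signalb_\player}$. Applying the same argument to any two signals in $X^n_{\act_\player}$ (respectively $X^n_{\actb_\player}$) yields a common posterior, call it $\nu^*$; let $\mu^*$ be its marginal on $(\corstate,\paystate)$.

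Next I would invoke Bayes rule: for any $\signal_\player\in X^n_{\act_\player}\cup X^n_{\actb_\player}$ and any $(\corstate,\paystate)$ with $\corprior^n(\corstate|\paystate)>0$,
\[
\exper^n_\player(\signal_\player|\corstate,\paystate)\,\corprior^n(\corstate|\paystate)\,\payprior(\paystate)=\nu^n(\signal_\player)\,\mu^*(\corstate,\paystate).
\]
Substituting this into the definition of $\mact^{n,\corstate,\paystate}_\player(\act_\player)$ yields
\[
\mact^{n,\corstate,\paystate}_\player(\act_\player)=\frac{\mu^*(\corstate,\paystate)}{\corprior^n(\corstate|\paystate)\payprior(\paystate)}\sum_{\signal_\player\in X^n_{\act_\player}}\aplan^n_\player(\act_\player|\signal_\player)\nu^n(\signal_\player)=\frac{\mu^*(\corstate,\paystate)\,\out^n(\act_\player)}{\corprior^n(\corstate|\paystate)\payprior(\paystate)},
\]
and the analogous identity with $\actb_\player$ in place of $\act_\player$. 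Dividing (or cross-multiplying) gives \eqref{eq:may_9_23_59}. Finally, since every $\mact\in\supp(\cord^n_\paystate)$ is of the form $\mact^{n,\corstate,\paystate}$ for some $\corstate$ with $\corprior^n(\corstate|\paystate)>0$, the claim follows.

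The main obstacle is the identification of posteriors across $X^n_{\act_\player}\cup X^n_{\actb_\player}$; once that is done, the Bayes-rule calculation is routine. I expect this identification to require carefully combining the obedience constraint with separation and the payoff-equality consequence captured by Lemma \ref{lem:BR in info-game contains BR from outcome}, exactly as outlined above.
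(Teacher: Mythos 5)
Your proposal is correct and follows essentially the same route as the paper's proof: both establish that all posteriors $\nu^n_{\signal_\player}$ for signals leading to $\act_\player$ or $\actb_\player$ coincide by combining Lemma~\ref{lem:multi_agent} with Lemma~\ref{lem:BR in info-game contains BR from outcome} and the fact that $\out^n_{\act_\player}=\out^n_{\actb_\player}$ within a cell, and then derive the proportionality via Bayes rule. The only cosmetic difference is that you compute $\mact_\player(\act_\player)$ explicitly in terms of $\mu^*$ and $\out^n(\act_\player)$, whereas the paper equates the two ratios through a fixed reference signal.
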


\begin{proof}
Since $\mact \in \supp (\cord^n_\paystate)$, $\mact$ is absolutely continuous with respect to $\out^n$---see (\ref{eq:may_9_23}). Thus, (\ref{eq:may_9_23_59}) trivially holds if $\out^n(\act_\player)=0$ or $\out^n(\actb_\player)=0$. 

Suppose now that $\out^n(\act_\player)>0$ and $\out^n(\actb_\player)>0$. Paralleling the notation in Section \ref{subsec:embedding},  we denote by $\nu^n\in\Delta(\Act\times\Signal^n\times\Corstate^n\times\Paystate)$ the probability measure over actions, signals, and states induced by the information structure $(\Corstate^n,\corprior^n,(\Signal^n_\player,\exper^n_\player)_{\player\in\Player})$ and the profile of action plans $\aplan^n$. Let $\Signal^n_{\act_\player}$ and $\Signal^n_{\actb_\player}$ be the set of positive-probability signals that make player $\player$ take actions $\act_\player$ and $\actb_\player$:
\begin{align*}
\Signal^n_{\act_\player} & =\{\signal_\player:\nu^n(\signal_\player)>0\text{ and }\aplan^n_\player(\act_\player|\signal_\player)>0\},\\
\Signal^n_{\actb_\player} & =\{\signal_\player:\nu^n(\signal_\player)>0\text{ and }\aplan^n_\player(\actb_\player|\signal_\player)>0\}.
\end{align*}
By Lemma \ref{lem:BR in info-game contains BR from outcome}, for all $\signal_\player\in\Signal^n_{\act_\player}$ and $\signal_\player^\prime\in\Signal^n_{\actb_\player}$,
\[
\BR(\out_{\act_\player})\subseteq \BR(\nu^n_{\signal_\player})\quad\text{and}\quad\BR(\out_{\actb_\player})\subseteq \BR(\nu^n_{\signalb_\player}),
\]
where $\nu^n_{\signal_\player}\in \Delta(\Act_{-\player}\times\Signal^n_{-\player}\times\Corstate^n\times\Paystate)$ and $\nu^n_{\signalb_\player}\in \Delta(\Act_{-\player}\times\Signal^n_{-\player}\times\Corstate^n\times\Paystate)$ are the posterior beliefs generated by $\signal_\player$ and $\signalb_\player$. Since $\act_\player,\actb_\player \in B_\player$, we have $\out_{\act_\player}=\out_{\actb_\player}$, which in turn implies $\BR(\out_{\act_\player})=\BR(\out_{\actb_\player})$. Thus, 
\[
\BR(\nu^n_{\signal_\player})\cap \BR(\nu^n_{\signalb_\player})\neq\varnothing.
\]
It follows from Lemma \ref{lem:multi_agent}-(ii) that $\nu^n_{\signal_\player}=\nu^n_{\signalb_\player}$.

Since $\mact_\player\in \supp (\cord^n_\paystate)$, there must be $\corstate\in\Corstate^n$ and $\paystate\in\Paystate$ such that $\corprior^n(\corstate\vert \paystate)>0$ and $\mact_\player=\mact^{n,\corstate,\paystate}_{\player}$. Then, given a fixed $\signal_\player^{\prime}\in\Signal_{\actb_\player}^n$,
\begin{align*}
\frac{\mact_\player(\act_\player)}{\out^n(\act_\player)}& = \sum_{\signal_\player\in \Signal_\player^n} \frac{\aplan^{n}_{\player} (\act_\player|\signal_\player) \exper^{n}_{\player}(\signal_\player|\corstate,\paystate)}{\out^n(\act_\player)}
= \sum_{\signal_\player\in \Signal_{\act_\player}^n} \frac{\aplan^{n}_{\player} (\act_\player|\signal_\player) \exper^{n}_{\player}(\signal_\player|\corstate,\paystate)}{\out^n(\act_\player)}\\
& = \sum_{\signal_\player\in \Signal_{\act_\player}^n} \frac{\aplan^{n}_{\player} (\act_\player|\signal_\player) \exper^{n}_{\player}(\signal_\player|\corstate,\paystate)\nu^n(\signal_\player)}{\out^n(\act_\player)\nu^n(\signal_\player)}=\sum_{\signal_\player\in \Signal_{\act_\player}^n} \frac{\aplan^{n}_{\player} (\act_\player|\signal_\player) \exper^{n}_{\player}(\signal^\prime_\player|\corstate,\paystate)\nu^n(\signal_\player)}{\out^n(\act_\player)\nu^n(\signal_\player^\prime)}= \frac{\exper^{n}_{\player}(\signal^\prime_\player|\corstate,\paystate)}{\nu^n(\signal_\player^\prime)},
\end{align*}
where the fourth inequality follows from $\nu^n_{\signal_\player}=\nu^n_{\signalb_\player}$ for all $\signal_\player\in \Signal_{\act_\player}^n$. A similar argument delivers that, given a fixed $\signal_\player\in\Signal_{\act_\player}^n$, 
\[
\frac{\mact_\player(\actb_\player)}{\out^n(\actb_\player)}=\frac{\exper^{n}_{\player}(\signal_\player|\corstate,\paystate)}{\nu^n(\signal_\player)}.
\]
We conclude that 
\[
\frac{\mact_\player(\act_\player)}{\out^n(\act_\player)}=\frac{\exper^{n}_{\player}(\signal^\prime_\player|\corstate,\paystate)}{\nu^n(\signal_\player^\prime)}=\frac{\exper^{n}_{\player}(\signal_\player|\corstate,\paystate)}{\nu^n(\signal_\player)}=\frac{\mact_\player(\actb_\player)}{\out^n(\actb_\player)},
\]
where the third equality again follows from $\nu^n_{\signalb_\player}=\nu^n_{\signal_\player}$.\end{proof}

The next lemma relates the support of $\cord_\paystate$ (which may not be finite) to $\out$:

\begin{lemma}\label{lem:slope_day}
Let $\paystate\in\Paystate$ and $\mact \in \supp (\cord_\paystate)$. Then, for all $\player\in\Player$, $B_\player\in\PartA_\player$, and $\act_\player,\actb_\player\in B_\player$,
\begin{equation}\label{eq:may_10_23_10}
\mact_\player(\act_\player)\out(\actb_\player)=\mact_\player(\actb_\player)\out(\act_\player).
\end{equation}
\end{lemma}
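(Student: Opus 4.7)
The strategy is to pass Lemma \ref{lem:May_10_30} to the weak* limit. Fix $\paystate \in \Paystate$ and $\mact \in \supp(\cord_\paystate)$. Since each $\cord^n_\paystate$ is supported on the finite set $\{\mact^{n,\corstate,\paystate}:\corstate\in\Corstate^n\}$, Lemma \ref{lem:May_10_30} applies to every element of $\supp(\cord^n_\paystate)$. So if we can produce a sequence $\mact^n \in \supp(\cord^n_\paystate)$ with $\mact^n \to \mact$ in $\Mact$, then for all $\player$, $B_\player\in\PartA_\player$, and $\act_\player,\actb_\player\in B_\player$ we have
\[
\mact^n_\player(\act_\player)\,\out^n(\actb_\player)=\mact^n_\player(\actb_\player)\,\out^n(\act_\player),
\]
and passing to the limit (using $\mact^n_\player(\act_\player)\to\mact_\player(\act_\player)$ coordinatewise and $\out^n\to \out$) yields \eqref{eq:may_10_23_10}.

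The only real content, then, is the approximation step: for each $\mact \in \supp(\cord_\paystate)$, construct $\mact^n \in \supp(\cord^n_\paystate)$ with $\mact^n \to \mact$. This is a standard consequence of weak* convergence on the compact metric space $\Mact$. Concretely, fix a shrinking countable neighborhood base $U_1 \supseteq U_2 \supseteq \cdots$ of $\mact$ in $\Mact$ with $\bigcap_k U_k = \{\mact\}$. Each $U_k$ is open, and since $\mact \in \supp(\cord_\paystate)$, $\cord_\paystate(U_k)>0$. By the Portmanteau theorem,
\[
\liminf_{n\to\infty}\cord^n_\paystate(U_k) \geq \cord_\paystate(U_k) > 0,
\]
so there exists $N_k$ with $\cord^n_\paystate(U_k)>0$ for all $n \geq N_k$; in particular $U_k\cap\supp(\cord^n_\paystate)\neq\varnothing$. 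Without loss of generality, take $N_1<N_2<\cdots$. For $N_k \leq n < N_{k+1}$, pick $\mact^n\in U_k\cap\supp(\cord^n_\paystate)$ (and set $\mact^n$ arbitrarily in $\supp(\cord^n_\paystate)$ for $n<N_1$). By construction $\mact^n \to \mact$, which closes the argument.

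The main (mild) obstacle is simply invoking the support-approximation property above; after that, the conclusion follows by continuity because the equation in Lemma \ref{lem:May_10_30} is polynomial in $(\mact^n,\out^n)$ and both sequences converge.
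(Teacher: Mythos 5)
Your proof is correct and follows essentially the same route as the paper: both arguments produce a sequence $\mact^n\in\supp(\cord^n_\paystate)$ converging to $\mact$, apply Lemma \ref{lem:May_10_30} termwise, and pass to the limit using continuity of the polynomial identity together with $\out^n\to\out$. The only difference is that the paper obtains the approximating sequence by citing lower hemicontinuity of the support correspondence (\citealp{Aliprantis2006}, Theorem 17.14), whereas you derive the same fact directly from the Portmanteau theorem.
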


\begin{proof}
Since the support correspondence is lower hemicontinuous (\citealp{Aliprantis2006}, Theorem 17.14), there exists a sequence $(\mact_\player^n)_{n=1}^\infty$ such that $\mact_\player^n\rightarrow \mact_\player$, and $\mact_\player^n\in \supp (\cord^n_\paystate)$ for all $n$. By Lemma \ref{lem:May_10_30},
\[
\mact^n_\player(\act_\player)\out^n(\actb_\player)=\mact^n_\player(\actb_\player)\out^n(\act_\player)
\]
for all $n$. Taking the limit as $n\rightarrow\infty$, we obtain (\ref{eq:may_10_23_10}).
\end{proof}

For every state $\paystate$, let $\NE_\paystate$ be the set of Nash equilibria of the complete-information game corresponding to $\paystate$:
\[
\NE_\paystate = \bigcap_{\player\in\Player}\left\{  
\mact \in \Mact: 
\mact_\player \in\arg\max_{\mactb_\player\in\Mact_\player}\sum_{\act}\util_\player(\act,\paystate)\mactb_\player(\act_\player)\prod_{\playerb\neq\player}\mact_{\playerb}(\act_\playerb)
\right\}.
\]
Note that the set $\NE_\paystate$ is closed.

\begin{lemma}\label{lem:derby_day}
For all $\paystate\in\Paystate$, $\supp (\cord_\paystate)\subseteq \NE_\paystate$.
\end{lemma}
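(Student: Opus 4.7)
The plan is to exploit the vanishing information cost to show that each profile in $\supp(\cord_\paystate)$ is, in the limit, individually optimal for every player, and hence a Nash equilibrium at state $\paystate$.

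Fix a player $\player$ and index $n$. By the richness of $\Signal_\player$, there exists an experiment $\experb_\player\in \Delta(\Signal_\player^n)^{\Corstate^n\times\Paystate}$ that perfectly reveals $(\corstate,\paystate)$. By the definition of vanishing cost equilibrium, $\icost^n_\player(\experb_\player)\leq 1/n$. Paired with the best action plan against $(\exper^n_{-\player},\aplan^n_{-\player})$ after observing $(\corstate,\paystate)$, this deviation yields payoff
\[
\sum_{\corstate,\paystate}\corprior^n(\corstate|\paystate)\payprior(\paystate)\max_{\actb_\player}\sum_{\act_{-\player}}\util_\player(\actb_\player,\act_{-\player},\paystate)\prod_{\playerb\neq\player}\mact^{n,\corstate,\paystate}_\playerb(\act_\playerb)-\icost^n_\player(\experb_\player).
\]
Since $(\exper^n,\aplan^n)$ is an equilibrium and $\icost^n_\player(\exper^n_\player)\geq 0$, comparing the equilibrium payoff of $(\exper^n_\player,\aplan^n_\player)$ with this deviation gives, after rearranging,
\[
\sum_{\corstate,\paystate}\corprior^n(\corstate|\paystate)\payprior(\paystate)\,h_{\player,\paystate}\!\left(\mact^{n,\corstate,\paystate}\right)\leq\frac{1}{n},
\]
where
\[
h_{\player,\paystate}(\mact):=\max_{\actb_\player}\sum_{\act_{-\player}}\util_\player(\actb_\player,\act_{-\player},\paystate)\prod_{\playerb\neq\player}\mact_\playerb(\act_\playerb)-\sum_{\act}\util_\player(\act,\paystate)\prod_{\playerb}\mact_\playerb(\act_\playerb).
\]
The function $h_{\player,\paystate}:\Mact\to\mathbb{R}$ is continuous and everywhere nonnegative, and satisfies $h_{\player,\paystate}(\mact)=0$ iff $\mact_\player\in \arg\max_{\mactb_\player}\sum_\act\util_\player(\act,\paystate)\mactb_\player(\act_\player)\prod_{\playerb\neq\player}\mact_\playerb(\act_\playerb)$.

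By the definition of $\cord^n_\paystate$, the inequality above can be rewritten as
\[
\sum_\paystate \payprior(\paystate)\int_{\Mact}h_{\player,\paystate}(\mact)\,\dd\cord^n_\paystate(\mact)\leq \frac{1}{n}.
\]
Since $\cord^n_\paystate\to \cord_\paystate$ in the weak* topology and $h_{\player,\paystate}$ is bounded and continuous, passing to the limit yields
\[
\sum_\paystate \payprior(\paystate)\int_{\Mact}h_{\player,\paystate}(\mact)\,\dd\cord_\paystate(\mact)\leq 0.
\]
Because $h_{\player,\paystate}\geq 0$ and $\payprior(\paystate)>0$ for every $\paystate$, we deduce that $\int h_{\player,\paystate}\,\dd\cord_\paystate=0$ for each $\paystate$. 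By continuity of $h_{\player,\paystate}$ and nonnegativity, $h_{\player,\paystate}(\mact)=0$ for every $\mact\in\supp(\cord_\paystate)$, meaning $\mact_\player$ is a best response to $\mact_{-\player}$ in the complete-information game at state $\paystate$.

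Since the argument applies to every player $\player$, every $\mact\in\supp(\cord_\paystate)$ lies in $\NE_\paystate$. The only subtlety is the weak* convergence step, which goes through cleanly because $\Mact$ is compact and $h_{\player,\paystate}$ is continuous; the construction of the fully-revealing deviation relies only on the richness of $\Signal_\player$ (which dominates $\Corstate\times\Paystate$) and the uniform cost bound $\icost^n_\player\leq 1/n$ inherent in the definition of vanishing cost equilibrium.
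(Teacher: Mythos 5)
Your proof is correct and follows essentially the same route as the paper's: deviate to a fully revealing experiment (feasible since the technology is unconstrained and costs are bounded by $1/n$), express the resulting incentive inequality as an integral of the nonnegative continuous regret function against $\cord^n_\paystate$, pass to the weak* limit, and conclude the regret vanishes on the support. The only cosmetic difference is that you argue pointwise vanishing on the support directly from continuity, while the paper notes $\cord_\paystate(\NE_\paystate)=1$ and uses closedness of $\NE_\paystate$; these are equivalent.
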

\begin{proof}
For $\mact\in\Mact$, let $\util_\player(\mact,\paystate)$ be $\player$'s expected utility in state $\paystate$:
\[
\util_\player(\mact,\paystate)=\sum_{\act\in\Act}\util_\player(\act,\paystate)\prod_{\playerb\in \Player}\mact_\playerb(\act_\playerb).
\]
For $\mact_{-\player}\in\Mact_{-\player}$, let $\util^*_\player(\mact_{-\player},\paystate)$ be $\player$'s expected utility by best responding to $\mact_{-\player}$ in state $\theta$:
\[
\util^*_\player(\mact_{-\player},\paystate)=\max_{\act_\player\in\Act_\player}\sum_{\act_{-\player}\in\Act_{-\player}}\util_\player(\act_{\player},\act_{-\player},\paystate)\prod_{\playerb\neq\player}\mact_\playerb(\act_\playerb).
\]
Note that $\mact\in\NE_\paystate$ if and only if $\util^*_\player(\mact_{-\player},\paystate)=\util_\player(\mact,\paystate) $ for all $\player\in\Player$.

Consider now the information acquisition game $(\BGame,\itech^n)$, and take player $\player$'s perspective. Let $\bar{\exper}^{n}_{\player}$ be a fully-revealing experiment that tells the exact value of $(\corstate^{n},\paystate)$. Since information acquisition is unconstrained, the experiment $\bar{\exper}^{n}_{\player}$ is feasible. Since $(\exper^n,\aplan^n)$ is an equilibrium, player $\player$'s payoff from playing her equilibrium strategy should be larger than the payoff she obtains from deviating to $\bar{\exper}^{n}_{\player}$ and taking the optimal action given each state. Tedious but straightforward computation shows this inequality is equivalent to
\[
\sum_{\paystate} \payprior(\paystate) \int_{\Mact} \left[ 
\util^{*}_{\player}(\mact_{-\player},\paystate) - \util_\player(\mact,\paystate)
\right]
\dd\cord^{n}_{\paystate}(\mact) 
\leq \icost^{n}_{\player}(\bar{\exper}^{n}_{\player}) - \icost^{n}_{\player}(\exper^{n}_{\player}).
\]
Since information costs are bounded by $1/n$, we have $\icost^{n}_{\player}(\bar{\exper}^{n}_{\player}) - \icost^{n}_{\player}(\exper^{n}_{\player})\leq 1/n$, and so we obtain that 
\begin{align*}
& \sum_{\paystate} \payprior(\paystate) \int_{\Mact} \left[ 
\util^{*}_{\player}(\mact_{-\player},\paystate) - \util_\player(\mact,\paystate)
\right]
\dd\cord_\paystate(\mact) \\
= &\lim_{n \rightarrow \infty} \sum_{\paystate} \payprior(\paystate) \int_{\Mact} \left[ 
\util^{*}_{\player}(\mact_{-\player},\paystate) - \util_\player(\mact,\paystate)
\right]
\dd\cord^{n}_{\paystate}(\mact)\leq 0.
\end{align*}
Because $\util^{*}_\player(\mact_{-\player},\paystate) \geq \util_\player(\mact,\paystate)$ for all $\mact\in\Mact$, we obtain that $\util_\player(\mact,\paystate) = \util^{*}_{\player}(\mact,\paystate)$ for $\cord_\paystate$-almost all $\mact\in\Mact$. Since this is true for every player $\player$, we obtain that $\cord_\paystate(\NE_\paystate)=1$. Since $\NE_\paystate$ is closed, we conclude that $\supp (\cord_\paystate)\subseteq \NE_\paystate$.
\end{proof}

We are ready to show that Theorem \ref{thm:vanishing_costs_all}-(ii) holds. Fix a state $\paystate$, and denote by $\out_\paystate\in \Delta(\Act)$ the conditional distribution of $\act$:
\[
\out_\paystate(\act)=\frac{\out(\act,\paystate)}{\payprior(\paystate)}.
\]
Let $M_\paystate$ be the set of all $\mact\in\NE_\paystate$ that satisfy (\ref{eq:may_10_23_10}). Note that the set $M_\paystate$ is compact. Denote by $\co(M_\paystate)\subseteq\Delta(\Act)$ the convex hull of $M_\paystate$, that is, the set of all convex combinations of (finitely many) elements of $M_\paystate$ (here we identify $\alpha$ with the product measure induced on $\Act$). Since $M_\paystate$ is compact, $\co(M_\paystate)$ is compact. By Lemmas \ref{lem:slope_day} and \ref{lem:derby_day}, the probability measure $\cord_\paystate$ puts probability one on $M_\paystate$. Since $\out_\paystate$ is the barycenter of $\cord_\paystate$---see (\ref{eq:barycenter})---we obtain that  $\out_\paystate\in \co(M_\paystate)$ (\citealp{phelps2001lectures}, Proposition 1.2). 

Overall, for every state $\paystate$, there are $\mact_\paystate^l\in M_\paystate$, with $l=1,\ldots,L_\paystate$, such that 
\[
\out_\paystate(\act) = \sum_{l=1}^{L_\paystate} s^l_\paystate \left[\prod_{\player\in\Player}\mact_{\paystate,\player}^l (\act_\player)\right]
\]
with $s^l_\paystate \geq 0$ for all $l\in L_\paystate$, and $\sum_{l=1}^{L_\paystate} s^l_\paystate=1$. It follows from a standard cake-cutting argument that, without loss of generality, we can assume that $s^l_\paystate$ and $L_\paystate$ are independent of $\paystate$. Given $s^l:=s^l_\paystate$ and $L:=L_\paystate$, we define $\outb_l\in \Delta(\Act)$ by
\[
\outb_l (\act,\paystate) = \payprior(\paystate)\prod_{\player\in\Player}\mact_{\paystate,\player}^l (\act_\player),
\]
and we notice that 
\[
\out(\act,\paystate)\sum_{l=1}^L s^l \outb_l (\act,\paystate).
\]
Each $\outb_l$ is a $(\PartA,\out)$-decomposable complete-information Nash equilibrium. We conclude that Theorem \ref{thm:vanishing_costs_all}-(ii) holds.

\end{spacing}

\end{document}